\documentclass[sigconf, nonacm]{acmart}
\usepackage{amsmath,amsfonts,dsfont,multirow,multicol,epsfig,url,array,makecell,balance,color,epstopdf}
\usepackage{algorithmic}
\usepackage[ruled, vlined, linesnumbered]{algorithm2e}
\usepackage{hhline}
\usepackage{array}
\usepackage{enumerate}
\usepackage{enumitem}
\usepackage{subfigure}
\usepackage{booktabs}
\usepackage{xcolor,colortbl}
\usepackage{bm}
\usepackage{lipsum}
\usepackage{tabularx}
\usepackage{diagbox}
\setlist[itemize]{leftmargin=*}

\newtheorem{definition}{Definition}
\newtheorem{lemma}{Lemma}
\newtheorem{theorem}{Theorem}
\newtheorem{fact}{Fact}
\newtheorem{observation}{Observation}
\def\header{\vspace{1mm} \noindent}

\def\la{\langle}
\def\ra{\rangle}

\def\epi{\bm{\hat{\pi}}}
\def\r{\bm{r}}

\def\vpi{\bm{\pi}}

\def\p{\bm{p}}
\def\q{\bm{q}}
\def\L{\mathbf{L}}
\def\I{\mathbf{I}}
\def\P{\mathbf{P}}
\def\A{\mathbf{A}}
\def\D{\mathbf{D}}
\def\R{\mathbf{R}}
\def\Q{\mathbf{Q}}

\def\Dp{\mathbf{D}^{\frac{1}{2}}}
\def\Dn{\mathbf{D}^{-\frac{1}{2}}}

\def\C{\mathcal{C}}
\def\e{\varepsilon}
\def\p{\varphi}
\def\z{\bm{\zeta}}
\def\x{\bm{\chi}}
\def\k{\bm{\kappa}}

\def\hz{\color{black}}

\def\crc{\color{black}}

\def\localpush{{\em LocalPush}\xspace}
\def\edgepush{{\em EdgePush}\xspace}

\def\lpush{{\em push}\xspace}
\def\epush{{\em edge-based push}\xspace}

\def\FindMin{{\em FindMin}\xspace}
\def\ExtractMin{{\em ExtractMin}\xspace}
\def\DecreaseKey{{\em DecreaseKey}\xspace}

\def\nadd{{\em normalized MaxAddErr}\xspace}
\def\npre{{\em normalized precision@50}\xspace}
\def\add{{\em MaxAddErr}\xspace}
\def\pre{{\em precision@50}\xspace}
\def\lerr{{\em actual $\ell_1$-error}\xspace}

\newcommand\vldbdoi{10.14778/3523210.3523216}
\newcommand\vldbpages{1376-1389}
\newcommand\vldbvolume{15}
\newcommand\vldbissue{7}
\newcommand\vldbyear{2022}
\newcommand\vldbauthors{\authors}
\newcommand\vldbtitle{\shorttitle} 
\newcommand\vldbavailabilityurl{https://github.com/wanghzccls/EdgePush}
\newcommand\vldbpagestyle{plain}

\begin{document}
\title{Edge-based Local Push for Personalized PageRank}
\subtitle{[Technical Report]}

\author{Hanzhi Wang}
\affiliation{%
  \institution{Renmin University of China}
  \city{Beijing}
  \state{China}
}
\email{hanzhi\_wang@ruc.edu.cn}

\author{Zhewei Wei}
\authornote{Zhewei Wei is the corresponding author. \vspace*{1mm}}
\affiliation{%
  \institution{Renmin University of China}
  \streetaddress{1 Th{\o}rv{\"a}ld Circle}
  \city{Beijing}
  \country{China}
}
\email{zhewei@ruc.edu.cn}

\author{Junhao Gan}
\affiliation{%
  \institution{University of Melbourne}
  \city{Melbourne}
  \country{Australia}
}
\email{junhao.gan@unimelb.edu.au}

\author{Ye Yuan}
\affiliation{%
  \institution{Beijing Institute of Technology}
  \city{Beijing}
  \country{China}
}
\email{yuan-ye@bit.edu.cn}

\author{Xiaoyong Du}
\affiliation{%
  \institution{Renmin University of China}
  \city{Beijing}
  \country{China}
}
\email{duyong@ruc.edu.cn}

\author{Ji-Rong Wen}
\affiliation{%
  \institution{Renmin University of China}
  \city{Beijing}
  \country{China}
}
\email{jrwen@ruc.edu.cn}

\begin{abstract}
\label{abstract}
{\crc Personalized PageRank (PPR) is a popular node proximity metric in graph mining and network research. A single-source PPR (SSPPR) query asks for the PPR value of each node on the graph. Due to its importance and wide applications, decades of efforts have been devoted to the efficient processing of SSPPR queries. Among existing algorithms, \localpush is a fundamental method for SSPPR queries and serves as a cornerstone for subsequent algorithms.} In \localpush, a \lpush operation is a crucial primitive operation, which distributes the probability at a node $u$ to ALL $u$'s  neighbors via the corresponding edges. Although this \lpush operation works well on {\em unweighted} graphs, unfortunately, it can be rather inefficient on {\em weighted} graphs. In particular, on {\em unbalanced} weighted graphs where only a few of these edges take the majority of the total weight among them, the \lpush operation would have to distribute {``insignificant''} probabilities along those edges which just take the minor weights, resulting in expensive overhead. 

To resolve this issue, in this paper, we propose the \edgepush algorithm, a novel method for computing SSPPR queries on weighted graphs. \edgepush decomposes the aforementioned \lpush operations in \epush, allowing the algorithm to operate at the edge level granularity. As a result, it can flexibly distribute the probabilities according to edge weights. Furthermore, our \edgepush allows a fine-grained termination threshold for each individual edge, leading to a superior complexity over \localpush. Notably, we prove that \edgepush improves the theoretical query cost of  \localpush by an order of up to $O(n)$ when the graph's weights are {\em  unbalanced}. 
Our experimental results demonstrate that \edgepush significantly outperforms state-of-the-art baselines in terms of query efficiency {\hz on large motif-based and real-world weighted graphs.}

\end{abstract}


\maketitle

\vspace{-2mm}
\pagestyle{\vldbpagestyle}
\begingroup\small\noindent\raggedright\textbf{PVLDB Reference Format:}\\
\vldbauthors. \vldbtitle. PVLDB, \vldbvolume(\vldbissue): \vldbpages, \vldbyear. 
\href{https://doi.org/\vldbdoi}{doi:\vldbdoi}
\endgroup

\vspace{-1mm}
\begingroup
\renewcommand\thefootnote{}\footnote{
\noindent This work is licensed under the Creative Commons BY-NC-ND 4.0 International License. Visit \url{https://creativecommons.org/licenses/by-nc-nd/4.0/} to view a copy of this license. For any use beyond those covered by this license, obtain permission by emailing \href{mailto:info@vldb.org}{info@vldb.org}. Copyright is held by the owner/author(s). Publication rights licensed to the VLDB Endowment. \\
\raggedright Proceedings of the VLDB Endowment, Vol. \vldbvolume, No. \vldbissue\ %
ISSN 2150-8097. \\
}\addtocounter{footnote}{-1}
\endgroup

\ifdefempty{\vldbavailabilityurl}{}{
\begingroup\small\noindent\raggedright\textbf{PVLDB Artifact Availability:}\\
The source code, data, and/or other artifacts have been made available at \url{\vldbavailabilityurl}.
\endgroup
}

\vspace{-2mm}
\section{Introduction} 
\label{sec:intro}
Personalized PageRank (PPR), as a variant of PageRank~\cite{page1999pagerank}, has become a classic node proximity measure. It effectively captures the relative importance of all the nodes with respect to a source node in a graph. One particular interest is the single-source PPR (SSPPR) query. Given a source node $s$ in a graph $G=(V,E)$ with $n$ nodes and $m$ edges, the SSPPR query aims to return an SSPPR vector $\vpi \in \mathbb{R}^n$, where $\vpi(u)$ denotes the PPR value of node $u\in V$ with respect to the source node $s$. We can consider the SSPPR vector $\vpi$ as a probability distribution, with $\vpi(u)$ defined as the probability that an $\alpha$-random walk starting from the source node $s$ stops at node $u$. Specifically, the $\alpha$-random walk~\cite{page1999pagerank} represents a random walk process that at each step, the walk either moves to a random neighbor with probability $1-\alpha$, or stops at the current node with probability $\alpha$. The teleport probability $\alpha$ is a constant in $(0,1)$. 

SSPPR queries has been widely adopted in various graph mining and network analysis tasks. For example, the seminal local clustering paper~\cite{FOCS06_FS} and its variant~\cite{yin2017MAPPR, fountoulakis2019variational} identify clusters based on the SSPPR queries with the seed node as the source node. 
Additionally, the recommendations in social networks employ SSPPR values to evaluate the relative importance of other users regarding the target user, such as the Point-of-Interest recommendation~\cite{guo2017POI}, the connection prediction~\cite{backstrom2011supervised}, the topical experts finding application~\cite{lahoti2017expertfinding} and the Who-To-Follow recommendation in Twitter~\cite{gupta2013wtf}. 
Recently, several graph representation learning tasks~\cite{Bojchevski2020PPRGo,klicpera2019APPNP,chen2020GBP,zhou2004consistency} compute SSPPR queries to propagate initial node features in the graph. 

In this paper, we focus on efficient SSPPR queries on {\em weighted} graphs. Weighted graphs are extremely common in real life, where the weight of each edge indicates the distance, similarity or other strength measures of the relationships between two nodes. Various real applications are in dire need of the SSPPR results on weighted graphs. For instance, the personalized ranking results incorporating user preference or feedback embedded in the edge weight are highly valued in social network~\cite{feng2012incorporating,gao2011semi, xie2015edge}. Additionally, to rank web pages by SSPPR queries, taking into account the importance of pages' links shows increasingly significance for the performance of page ranking~\cite{xing2004weightedPageRank}. In the local clustering application, computing SSPPR queries on motif-based weighted graphs\footnote{A motif is defined as a small subgraph (e.g. a triangle). } can effectively capture the high-order information of network structure which is crucial to the clustering quality~\cite{yin2017MAPPR}. 

Despite the large-scale applications of SSPPR queries on weighted graphs, 
this topic are less studied in literature due to its hardness. 
The state-of-the-art algorithm is MAPPR~\cite{yin2017MAPPR}, which is a version of \localpush on weighted graphs. 
\localpush~\cite{FOCS06_FS} is a crucial and fundamental method for SSPPR queries, which has been regarded as a cornerstone method for advanced developments~\cite{wei2018topppr, wang2017fora, fountoulakis2019variational, wu2021SpeedPPR, hou2021MPC-PPR}. The main idea of \localpush is to approximate SSPPR results by deterministically {\em pushing} the probabilities on the graph. The {\lpush} operation in \localpush restricts the computation in a local manner, which achieves remarkable scalability on {\em unweighted graphs}. 
Unfortunately, although \localpush works well on unweighted graphs, it can be rather ineffective on weighted graphs, leading to excessive time consumption.

\begin{figure}[t]
\begin{center}
\centerline{\includegraphics[width=.85\linewidth,trim=10 105 10 100,clip]{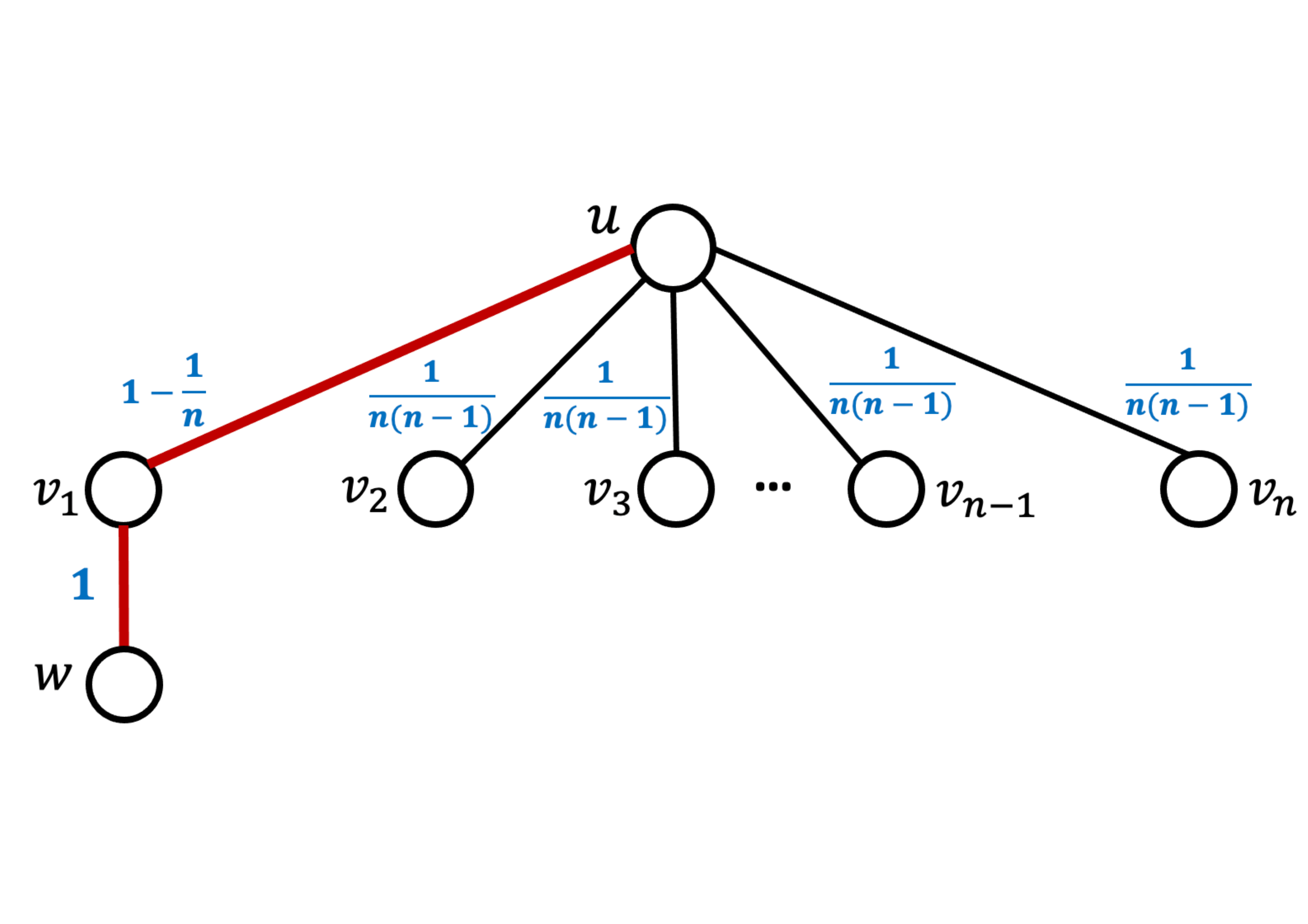}}
\vspace{-4.5mm}
\caption{A bad case for the \localpush. The number on each edge is the edge weight.}
\label{fig:weighted_graph}
\vspace{-4mm}
\end{center}
\vspace{-5mm}
\end{figure}

\header{\bf \localpush's limitation on weighted graphs.} 
As a crucial primitive operation in \localpush, the \lpush operation 
pushes probability mass from the current node to {\em all} its neighbors. Whenever the push operation on a node is invoked, it has to touch all the edges incident on the node. While this push strategy works fine on unweighted graphs, unfortunately, it has evident drawbacks on weighted graphs. 
When the weights of a node's edges are {\em unbalanced} where only a small number of edges taking a majority portion of the total weight among them, the push operation has to spend a significant cost on just pushing a tiny probability mass, resulting in severe overhead.

Figure~\ref{fig:weighted_graph} shows a toy example of the bad case for \localpush. Consider node $u$ whose total weight of edges is $1$. There is an edge $(u, v_1)$ taking a weight $1 - \frac{1}{n}$, merely the total weight, and all the others just share $\frac{1}{n}$ together. When a \lpush operation on $u$ is performed, it requires a cost of $n-1$ just on pushing an extremely tiny probability mass for those ``insignificant'' edges. {\hz As a result, the push operation is extremely inefficient on such severely unbalanced weighted graphs. }

{\crc
It's worth to mention that weighted graphs with severely unbalanced edge weights are common in many real-world applications. Let's take the affinity graph as an example. 
Affinity graphs are frequently used in a variety of practical tasks~\cite{zhou2004consistency, ng2002spectral,ye2020affinity,zhu2014constructing,wang2020affinity,yang2019learning,yadav2021cross,yang2019relationship} to model the affinities between pairwise data points. 
Nodes in affinity graphs represent high dimensional data points, i.e. $V=\{x_1, ..., x_n\}$, where $x_i\in \mathbb{R}^{\k}$. Edges are fully connected and weighted, and the weight of edge $(x_i, x_j)$ indicates the affinity between data points $x_i$ and $x_j$, defined as 
$\A_{ij}=\exp\left(-\|x_i-x_j\|^2/2\sigma^2\right)$. 
Here $\|x_i-x_j\|$ denotes the Euclidean distance between data points $x_i$ and $x_j$, and $\sigma^2$ denotes the variance of all data points in $V$. We note that the value of distance $\|x_i-x_j\|^2$ is exponential to the edge weight $\A_{ij}$. Thus, small differences among pairwise distances can lead to significantly-skewed edge weights distribution. On the other hand, computing PPR values on affinity graphs is a commonly adopted technique in various tasks, such as label propagation~\cite{zhou2004consistency}, spectral clustering~\cite{ye2020affinity}, image segmentation~\cite{yang2019learning} and relationship profiling~\cite{yang2019relationship}. Therefore, to apply \localpush for PPR computation on such heavily unbalanced weighted graphs can invoke expensive but unnecessary time cost. } 

\header{\bf Our contributions. } 
To remedy the above issue of \localpush on weighted graphs, we make the following contributions: 
\begin{itemize}[leftmargin = *]
\vspace{-3mm}
\item \underline{Edge-based Push Method. } We propose \edgepush, a novel edge-based push method for SSPPR queries. Our \edgepush further decomposes the aforementioned atomic \lpush operation of \localpush into {\em separate} \epush operations. As a result, \edgepush can flexibly select edges to push probability mass based on the edge weights. 
\item \underline{Theoretical Analysis.} 
	\edgepush admits a fine-grained {\em individual} termination threshold $\theta(u,v)$ for each edge. 
	With careful choices of $\theta(u,v)$, \edgepush achieves superior query efficiency in terms of the trade-offs between the approximation error and the expected overall running time. 
	In this paper, we analyze the time complexity of \edgepush and present the suggested choice of $\theta(u,v)$ with two specific error measurements:  the $\ell_1$-error and the normalized additive error. 
	In particular, when the edge weights are {unbalanced} (as shown in Figure~\ref{fig:weighted_graph}), with the optimal setting of $\theta(u,v)$, \edgepush can approximate PPR values in time $o(m)$, {\em sub-linear} to the number of edges, with specified $\ell_1$-error.
	In other words, in this case, we can solve the approximate PPR with $\ell_1$-error even without touching every edge in the graph.

\item \underline{Superiority Illustration.} 
We demonstrate that \edgepush achieves a superior expected time complexity over \localpush on arbitrary graphs as shown in Table~\ref{tbl:intro-compare}. 
For the ease of illustration,
here we present superior results for a relatively restricted case, where all the nodes in the graph are {\em $(a, b)$-unbalanced} (which is defined next). However, it should be noted that as proved in Section~\ref{sec:analysis}, the conditions for \edgepush strictly outperforming \localpush 
are actually more general and less restrictive.
Specifically, the notion of $(a,b)$-unbalanced node is used to quantify the unbalancedness of the weighted graph. 
A node is said to be $(a,b)$-unbalanced if $a$ fraction of its adjacency edges take $b$ fraction of its edge weights, where $0\le a \le b \le 1$. Based on the $(a,b)$-unbalanced definition, we summarize three theoretical implications in the following. Here we assume the source node is chosen according to the node degree distribution.
    \begin{itemize}
        \item {\hz The overall running time bound of \edgepush is no worse than that of \localpush even on unweighted graphs, regardless of whether the $\ell_1$-error or the normalized additive error. }
        \item {\hz When the edge weights are unbalanced, \edgepush achieves superior query efficiency over \localpush. And the superiority of \edgepush over \localpush can be quantified by the unbalancedness of the weighted graphs.}
	    \item When the graph $G$ is a complete graph with $n$ nodes, $a = 1 / n$ and $b = 1 - 1/n$, \edgepush outperforms \localpush by a $O(n)$ factor, both for the $\ell_1$ and normalized additive error. 
	    \item When $a = o(1)$ and $b = 1 - o(1)$, \edgepush compute SSPPR queries in time {\em sub-linear} to the number of edges in the graph with any specified $\ell_1$-error. 
    \end{itemize}
\item \underline{Extensive Experiments.} {\hz We conduct comprehensive experiments to show the effectiveness of our \edgepush  on both motif-based and real-world weighted graphs}. The experimental results demonstrate that when achieving the same approximation error (e.g. normalized additive error or $\ell_1$-error), \edgepush outperforms \localpush on large real-world graphs by orders of magnitude in terms of query efficiency. 
	{\hz Notably, even on the graphs with less unbalanced edge weights, \edgepush still shows superior performances over the state-of-the-art baselines. } 
\end{itemize}

\begin{table*} [t]
	\centering
	\renewcommand{\arraystretch}{1.5}
	\begin{small}
		\caption{The comparison between the complexities of \localpush and \edgepush. The ``Improvements" column quantifies the superiority of \edgepush over \localpush in terms of the expected time complexities when the source node is chosen according to the degree distribution. $\p$ and $\p_v$ denote specific angles on weighted graphs, which are formally illustrated in Section~\ref{sec:analysis}. } \label{tbl:intro-compare}
		\vspace{-3mm}
		\begin{tabular}{|@{\hspace{+1.5mm}}c@{\hspace{+1.5mm}}|@{\hspace{+1.5mm}}c@{\hspace{+1.5mm}}|@{\hspace{+1.5mm}}c@{\hspace{+1.5mm}}|@{\hspace{+1.5mm}}c@{\hspace{+1.5mm}}|} \hline
			~& \localpush & \edgepush & Improvements \\ \hline
			$\ell_1$-error $\e$ & $O\left(\frac{m}{\alpha \e}\right)$ & $O\left(\frac{(1-\alpha)}{\alpha \e \|\A\|_1}\cdot \left(\sum\limits_{\la u,v \ra\in E}\hspace{-2mm}\sqrt{\A_{uv}}\right)^2 \right)=O\left(\left((1-\alpha)\cos^2\p\right) \cdot \frac{m}{\alpha \e}\right)$   & $(1-\alpha)\cos^2\p$ \\ \hline
			normalized additive error $r_{\max}$& $O\left(\frac{m}{\alpha r_{\max} \cdot \|\A\|_1}\right)$ & $O\left(\frac{(1-\alpha)}{\alpha r_{\max} \|\A\|_1}\hspace{-0.5mm}\cdot \hspace{-0.5mm} \sum\limits_{v\in V}\hspace{-1mm}\frac{\left(\sum_{x\in N(v)}\hspace{-1mm}\sqrt{\A_{xv}}\right)^2 }{d(v)}\right)=O \left( \left(\frac{1-\alpha}{m}\hspace{-0.5mm}\cdot \hspace{-1mm} \sum\limits_{v\in V}\hspace{-1mm}n(v) \cos^2 \p_v \right) \cdot \frac{m}{\alpha r_{\max} \|\A\|_1}\right)$   & $\frac{(1-\alpha)}{2m}\hspace{-0.5mm}\cdot \hspace{-1mm} \sum\limits_{v\in V}\hspace{-0.5mm}n(v) \cos^2 \p_v$\\ \hline
		\end{tabular}
	\end{small}
	\vspace{-2mm}
 \end{table*}

\vspace{-2mm}
\section{Preliminaries} 
\label{sec:pre}
\header
{\bf Notations.}
Consider an {\em undirected} and {\em weighted} graph $G=(V,E)$ with $|V| = n$ nodes and $|E|=m$ edges. We define $\bar{E}$ as the set of bi-directional edges of $G$, that is, for every edge $(u,v) \in E$, there are two directed edges $\la u,v\ra$ and $\la v, u\ra$ in $\bar{E}$, and these two edges are treated differently. We use $\A$ to denote the {\em adjacency matrix} of graph $G$, and $\A_{uv}$ to denote the {\em weight} of edge $\la u, v\ra \in \bar{E}$. 
{\hz Furthermore, we assume that each $\A_{uv}$ is a non-negative real number. }
For $\forall \la u,v \ra \not\in \bar{E}$, we have $\A_{uv}=0$. As a result, $\|\A\|_1 = \sum_{\la u,v \ra \in \bar{E}} \A_{uv}$ denotes the total weights of all edges. For every edge $\la u,v \ra \in \bar{E}$, we say $v$ is a {\em neighbor} of $u$. For each node $u\in V$, we denote the set of all the {\em neighbors} of $u$ by $N(u)$, and $n(u)=|N(u)|$, the neighborhood size of $u$. Moreover, $d(u)=\sum_{v\in N(u)}\A_{uv}$ denotes the ({\em weighted}) {\em degree} of node $u$, and $\D$ denotes the diagonal degree matrix with $\D_{uu}=d(u)$. Finally, the {\em transition matrix} is denoted by $\P=\A \D^{-1}$.

In this paper, we use $\vpi_s\in \mathbb{R}^n$ to denote the SSPPR vector w.r.t node $s$ as the source node. The $u$-th coordinate $\vpi_s(u)$ records the PPR value of node $u\in V$ w.r.t $s$. Unless specified otherwise, we denote node $s$ as the source node by default and omit the subscript in $\vpi_s$ and $\vpi_s(u)$ for short (i.e. $\vpi$ and $\vpi(u)$). In Section~\ref{sec:related}, we use $\r\in \mathbb{R}^{n}$ and $\epi\in \mathbb{R}^{n}$ to denote the residue and reserve vectors in \localpush, respectively. In Section~\ref{sec:algorithm}, we define three variables: node income vector $\q\in \mathbb{R}^n$, edge expense matrix $\Q\in \mathbb{R}^{n \times n}$ and edge residue matrix $\R\in \mathbb{R}^{n\times n}$ for the \edgepush algorithm. 

\header{\bf The word RAM model. }
The word RAM model (word random-access machine) is first proposed by Fredman et al.~\cite{fredman1993surpassing} to simulate the actual executions of programs implemented by well-adopted programming languages, e.g., C and C++, in real-world computers. The word RAM model assumes a random-access machine can do bitwise operations on a single word of $w$ bits. The value of $w$ is related to the problem size. 
Thus, in the word RAM model, basic operations on words such as arithmetic, comparison and logical shifts can be performed in constant time. 

In this paper, we analyze all of the theoretical complexities under the word RAM model.  Following the aforementioned convention of the model,
we assume that every the numerical value, such as the edge weight $\A_{uv}$ or the constant teleport probability $\alpha$ in PPR computation, can fit into $O(1)$ words of $O(w)$ bits. 
As the problem size studied in this paper is upper bounded by the input graph size $O(n + m) = O(n^2)$, every numerical value can be thus represented by $O(\log n)$ bits.
We note that this assumption is mild and realistic because most (if not all) real-world computers can only support computations on floating-point numbers up to limited precision. 

\header{\bf Single-source Personalized PageRank (SSPPR).} PageRank~\cite{page1999pagerank} is first proposed by Google to rank the overall importance of nodes in the graph. Personalized PageRank (PPR) is a variant of PageRank, which evaluates each node's {\em relative} importance w.r.t a given source node. The single-source PPR (SSPPR) query is a type of PPR computations, which aims to return all the PPR values (w.r.t the source node) in the graph. More precisely, given node $s$ as the source node, the SSPPR query aims to derive an SSPPR vector $\vpi \in \mathbb{R}^n$, where the $u$-th coordinate $\vpi(u)$ represents the PPR value of node $u$. 

In the seminal paper of PPR~\cite{page1999pagerank}, the SSPPR vector $\vpi$ w.r.t source node $s$ is defined as the solution to the recursive equation: 
\begin{align}
\vspace{-3mm}
\label{eqn:ppr_it}
\vpi =  (1-\alpha) \P\vpi +\alpha\bm{e}_s,  
\vspace{-4mm}
\end{align}
where $\alpha\in (0,1)$ is a constant teleport probability, $\P$ is the transition matrix that $\P=\A \D^{-1}$, and $\bm{e}_s$ is an one-hot vector that $\bm{e}_s(s)=1$ and $\bm{e}_s(u)=0$ if $u\neq s$. 
By applying a power series expansion~\cite{avrachenkov2007monte}, the SSPPR vector $\vpi$ can be derived as: 
\begin{align}
\label{eqn:ppr_expansion}
\vspace{-3mm}
\vpi=\alpha \cdot \left(\I-(1-\alpha)\P\right)^{-1}\cdot \bm{e}_s=\sum_{i=0}^\infty \alpha(1-\alpha)^i \P\cdot \bm{e}_s.
\end{align}
{\crc Note that this expansion provides an alternative interpretation of PPR values: 
the SSPPR vector $\vpi$ can be regarded as a probability distribution, where $\vpi(u)$ denotes the probability that an $\alpha$-random walk from the given source node $s$ stops at $u$~\cite{page1999pagerank,lofgren2015PHDthesis}. Each step of an $\alpha$-random walk either stops at the current node with probability $\alpha$, or stays {\em alive} to move forward with $(1-\alpha)$ probability. Specifically, if an $\alpha$-random walk is currently alive at node $u$, then in the next step, the walk will move to one of $u$'s neighbors $v\in N(u)$ with the probability proportional to the edge weight $\A_{uv}$, i.e., with probability $\frac{(1 - \alpha)\cdot \A_{uv} }{d(u)}$. }

\header{\bf Problem definition.} 
As shown in Equation~\eqref{eqn:ppr_expansion}, the exact computation of SSPPR vector involves the inverse of the $n\times n$ matrix:  $(\mathbf{I}-(1-\alpha)\P)$, where $n$ is the number of graph nodes. This is infeasible on large graphs with millions of nodes. 
Thus, in this paper, we aim to approximate the SSPPR vector $\vpi$ on large graphs with specified approximation error. Specifically, we consider the following two problems:
\vspace{-1mm}
\begin{definition} [Approximate SSPPR with normalized additive error] 
	Given an undirected and weighted graph $G=(V,E)$, a source node $s\in V$ and a normalized additive error tolerance $r_{\max}\in (0,1)$, the goal of an approximate SSPPR query (w.r.t the source node $s$) with normalized additive error is to return an estimated SSPPR vector $\epi$ such that for each $u\in V$, $\left|\frac{\vpi(u)}{d(u)} - \frac{\epi(u)}{d(u)} \right| \le r_{\max}$.
\end{definition}

The normalized additive error is a commonly used evaluation metric in local clustering tasks. 
More precisely, the majority of existing local clustering algorithms~\cite{FOCS06_FS, fountoulakis2019variational,yin2017MAPPR, yang2019TEA, Teng2004Nibble, chung2018computing, chung2015distributed, kloster2014heat} operate in two stages: first, they treat the given seed node as the source node and calculate the approximate SSPPR vector $\epi$ (or other scores to rank nodes' relative importance w.r.t the source node). Then the they feed the vector $\epi$ in a {\em sweep} process to identify local cluster around the seed node. The detailed steps in the sweep process are given below: 
\begin{itemize}[leftmargin = *]
	\item (i) Put all the nodes with non-zero $\frac{\epi(u)}{d(u)}$ into a set $S$. 
	\item (ii) Sort each node $u\in S$ in the descending order by $\frac{\epi(u)}{d(u)}$, such that $S=\{v_1, v_2, ..., v_j\}$ and $\frac{\epi(v_1)}{d(v_1)}\ge \frac{\epi(v_2)}{d(v_2)} \ge ... \ge \frac{\epi(v_j)}{d(v_j)}$. 
	\item (iii) Scan the set $S$ from $v_1$ to $v_j$ and find the subset with minimum {\em conductance} among all the partial sets $S_i=\{v_1, v_2, ..., v_i\}$ for $i=1,2,...,j$. 
\end{itemize}

\noindent
In the third step, we calculate the {\em conductance} of all partial sets. Conductance is a popular measure to evaluate the cluster quality. More precisely, given a cluster set $S_i \subseteq V$, the conductance of set $S_i$ is defined as $\Phi(S_i)=\frac{cut(S_i)}{\min \{vol(S_i),vol(V \setminus S_i)\}}$, where $vol(S_i)=\sum_{u\in S_i}d(u)$ denotes the volume of set $S_i$, and $cut(S_i)$ denotes the sum of edge weights for those edges crossing $S_i$ and $V\setminus S_i$. Thus, the conductance values are the smaller, the better. 

Reviewing the sweep process, we note that the quality of the identified clusters heavily depends on the approximation accuracy of the normalized PPR values (i.e. $\frac{\vpi(u)}{d(u)}$ for each $u$). 
Therefore, in this paper, we introduce normalized additive error as one of the evaluation criteria.  
Additionally, we employ a classic evaluation metric, $\ell_1$-error, to assess the approximation quality of each algorithm: 
\vspace{-1mm}
\begin{definition} [Approximate SSPPR with $\ell_1$-error] 
	Given an undirected weighted graph $G=(V,E)$, a source node $s\in V$, a constant teleport probability $\alpha$, and an $\ell_1$-error tolerance $\e\in (0,1)$, 
	the goal of an approximate PPR query with respect to $s$ is to return an estimated PPR vector $\epi$ such that $\|\epi - \vpi\|_1 = \sum\nolimits_{u\in V}|\epi(u)-\vpi(u)|\le \e$. 
\end{definition}

\begin{algorithm}[t]
\caption{The \localpush Algorithm~\cite{yin2017MAPPR}}
\label{alg:APPR}
\BlankLine
\KwIn{undirected and weighted Graph $G=(V,E)$ with adjacency matrix $\A$, source node $s$, constant teleport probability $\alpha \in (0,1)$, termination threshold  $\theta$\\}
\KwOut{an estimation $\epi$ of SSPPR vector $\vpi$ w.r.t $s$\\}
   
$\bm{\epi}\gets \bm{0}$, $\bm{r}\gets \bm{e}_s$\; 
\While{there exists a node $u$ with $\r(u)\ge d(u) \cdot \theta$}{
    $\epi(u)\gets \epi(u)+\alpha\cdot \r(u)$;\\
	\For{every neighbors $v\in N(u)$}{
		$\r(v)\gets \r(v)+(1-\alpha)\r(u) \cdot \frac{\A_{uv}}{d(u)}$\;
	}
	$\r(u)\gets 0$;\\
}
\Return $\epi$ as the estimator for $\vpi$;
\end{algorithm}

\subsection{The \localpush Algorithm}\label{subsec: localpush_alg}
Among existing algorithms for SSPPR queries, the \localpush algorithm~\cite{FOCS06_FS,yin2017MAPPR} is a fundamental method which serves as a cornerstone in various subsequent algorithms~\cite{wang2017fora, wu2021SpeedPPR, fountoulakis2019variational, wei2018topppr, hou2021MPC-PPR}. The basic idea of \localpush is to ``{\em simulate}'' $\alpha$-random walks in a deterministic way by {\em pushing} the probability mass from a node to its neighbors. More specifically, given an undirected and weighted graph $G=(V,E)$, a source node $s$, a constant teleport probability $\alpha \in (0,1)$ and a global termination threshold $\theta$, \localpush maintains two variables for each node $u \in V$ during the executing process: 
\begin{itemize}[leftmargin = *]
\item \underline{{Residue} $\r(u)$}: the probability mass currently on $u$ and will be distributed to other nodes. Alternatively, in an $\alpha$-random walk process, $\r(u)$ records the probability mass of the $\alpha$-random walk from $s$ alive at $u$ at the current state. Note that if a walk has not yet stopped, we say the walk is {\em alive} at the current node; 
\item \underline{{Reserve} $\epi(u)$}: the probability mass that stays at node $u$. $\epi(u)$ is an underestimate of the real PPR value $\vpi(u)$. 
\end{itemize}


\header{\bf The \lpush operation.} 
\lpush is a critical primitive operation that is repeated throughout the \localpush process. A \lpush operation consists of three steps (see the left side of Figure~\ref{fig:comparison} for illustration): 
\begin{itemize}[leftmargin = *]
	\item convert $\alpha$ portion of $\r(u)$ to $\epi(u)$, i.e., $\epi(u) \leftarrow \epi(u) + \alpha \cdot \r(u)$, simulating the fact that $\alpha$ portion of the random walk alive at $u$ has stopped here at $u$;
	\item distribute the rest $(1 - \alpha)$ portion of $\r(u)$ to each neighbor $v$ proportional to the corresponding edge weight by increasing the residue of $v$, i.e., $\r(v) \leftarrow \r(v) + (1 - \alpha)\r(u) \cdot \frac{\A_{uv}}{d(u)}$; this essentially simulates that $(1-\alpha)$ portion of the random walk will move one step forward to each $u$'s neighbor with the probability proportional to the edge weights;
	\item reset the residue $\r(u)$ to $0$, indicating that, after the above two steps, there is no $\alpha$-random walk alive at $u$ at the moment.
\end{itemize}

\header{\bf Invariant.} The analysis of localpush algorithm is built upon an invariant between the residue and the reserve, which is formalized in the following lemma:

\begin{lemma}[Invariant by LocalPush]\label{lem:invariant_localpush}
For each node $t$ in the graph, the reserve $\vpi_t$ and residues satisfy the following invariant after each push operation: 
\begin{align}\label{eqn:invariant-nodepush}
    \vpi(t)=\epi(t)+\sum_{u\in V}\r(u) \cdot \vpi_u(t), 
\end{align}
where $\vpi(t)$ denotes the PPR value of node $t$ w.r.t the source node $s$ (by default) and $\vpi_u(t)$ denotes the PPR value of $t$ w.r.t node $u$. 
\end{lemma}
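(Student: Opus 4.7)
The plan is to prove the invariant by induction on the number of push operations performed so far. I will first establish that it holds at initialization, and then show that any single push operation preserves it. Since $\vpi(t)$ on the left-hand side never changes, the task reduces to showing that the net change on the right-hand side incurred by a push is zero.

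For the base case, at the start of the algorithm we have $\epi \equiv \bm{0}$ and $\r = \bm{e}_s$, so the right-hand side of~\eqref{eqn:invariant-nodepush} evaluates to $0 + \r(s)\cdot \vpi_s(t) = \vpi_s(t)$, which equals $\vpi(t)$ by our convention that $s$ is the source node. For the inductive step, suppose the invariant holds before a push is performed at some node $u$ with current residue $\r(u)$. The push changes $\epi(t)$ by $\alpha \r(u)$ if $t = u$ and by $0$ otherwise; it sets $\r(u)$ to $0$ (removing a term $\r(u)\,\vpi_u(t)$ from the sum); and it increases $\r(v)$ by $(1-\alpha)\r(u)\cdot \A_{uv}/d(u)$ for each $v \in N(u)$ (adding $(1-\alpha)\r(u)\cdot(\A_{uv}/d(u))\,\vpi_v(t)$ to the sum). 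So I need to show
\[
\alpha\,\r(u)\,\mathbf{1}[t = u] \;-\; \r(u)\,\vpi_u(t) \;+\; (1-\alpha)\,\r(u)\sum_{v \in N(u)} \frac{\A_{uv}}{d(u)}\,\vpi_v(t) \;=\; 0.
\]

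After dividing by $\r(u)$, this is equivalent to the identity
\[
\vpi_u(t) \;=\; \alpha\,\mathbf{1}[t = u] \;+\; (1-\alpha)\sum_{v \in N(u)} \frac{\A_{uv}}{d(u)}\,\vpi_v(t),
\]
which is the standard backward recurrence for PPR obtained by conditioning an $\alpha$-random walk from $u$ on its first step: with probability $\alpha$ the walk terminates immediately at $u$, and with probability $(1-\alpha)\A_{uv}/d(u)$ it moves to a neighbor $v$ and thereafter behaves as an $\alpha$-random walk from $v$. I will justify this identity either directly from the random-walk interpretation given after Equation~\eqref{eqn:ppr_expansion}, or by noting that $\vpi_u = \alpha\bm{e}_u + (1-\alpha)\P\vpi_u$ when applied with the walk interpreted in the reverse direction (equivalently, by symmetry of the PPR formulas under the transition $\A_{uv}/d(u)$ from $u$).

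I expect no major obstacles; the argument is a direct computation. The only subtle point is confirming the correct form of the backward recurrence for $\vpi_u(t)$ that mixes the starting node $u$ with its neighbors, since Equation~\eqref{eqn:ppr_it} is written as a forward recurrence in the target coordinate. Once this recurrence is stated and justified via the first-step conditioning of the $\alpha$-random walk, the cancellation is immediate and the induction closes.
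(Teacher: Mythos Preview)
Your proposal is correct and follows essentially the same approach as the paper: induction on push operations, verifying the base case from the initialization $\epi=\bm{0}$, $\r=\bm{e}_s$, and showing the net change in the right-hand side is zero by invoking the one-step recurrence $\vpi_u(t)=\alpha\,\mathbf{1}[t=u]+(1-\alpha)\sum_{v\in N(u)}\frac{\A_{uv}}{d(u)}\vpi_v(t)$. The paper cites this recurrence as Equation~(5) of~\cite{FOCS06_FS} and justifies it by the same first-step conditioning of the $\alpha$-random walk that you describe, so there is no substantive difference.
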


We defer the proof of Lemma~\ref{lem:invariant_localpush} to the appendix. 
Intuitively, $\epi(t)$ is the probability mass that stays at $t$. Aside from $\epi(t)$, $\vpi(t)$ also includes probability mass that is currently on other nodes and will be delivered to $t$. To calculate this part, recall that $\r(u)$ refers to the probability mass that is currently at $u$ but will be distributed to other nodes. Given that $\vpi_u(t)$ is the probability of a random walk starting at $u$ and ending at $t$, $\r(u)\cdot \vpi_u(t)$ is the probability mass now residing at $u$ and to be distributed to $t$. Summing over all nodes $u \in V$ and the lemma follows. For a illustration, see the left part of  Figure~\ref{fig:invariant}.

\begin{figure}[t]
\vspace{-3mm}
\begin{center}
\hspace{+0mm}\centerline{\includegraphics[width=80mm,trim=10 85 10 60,clip]{./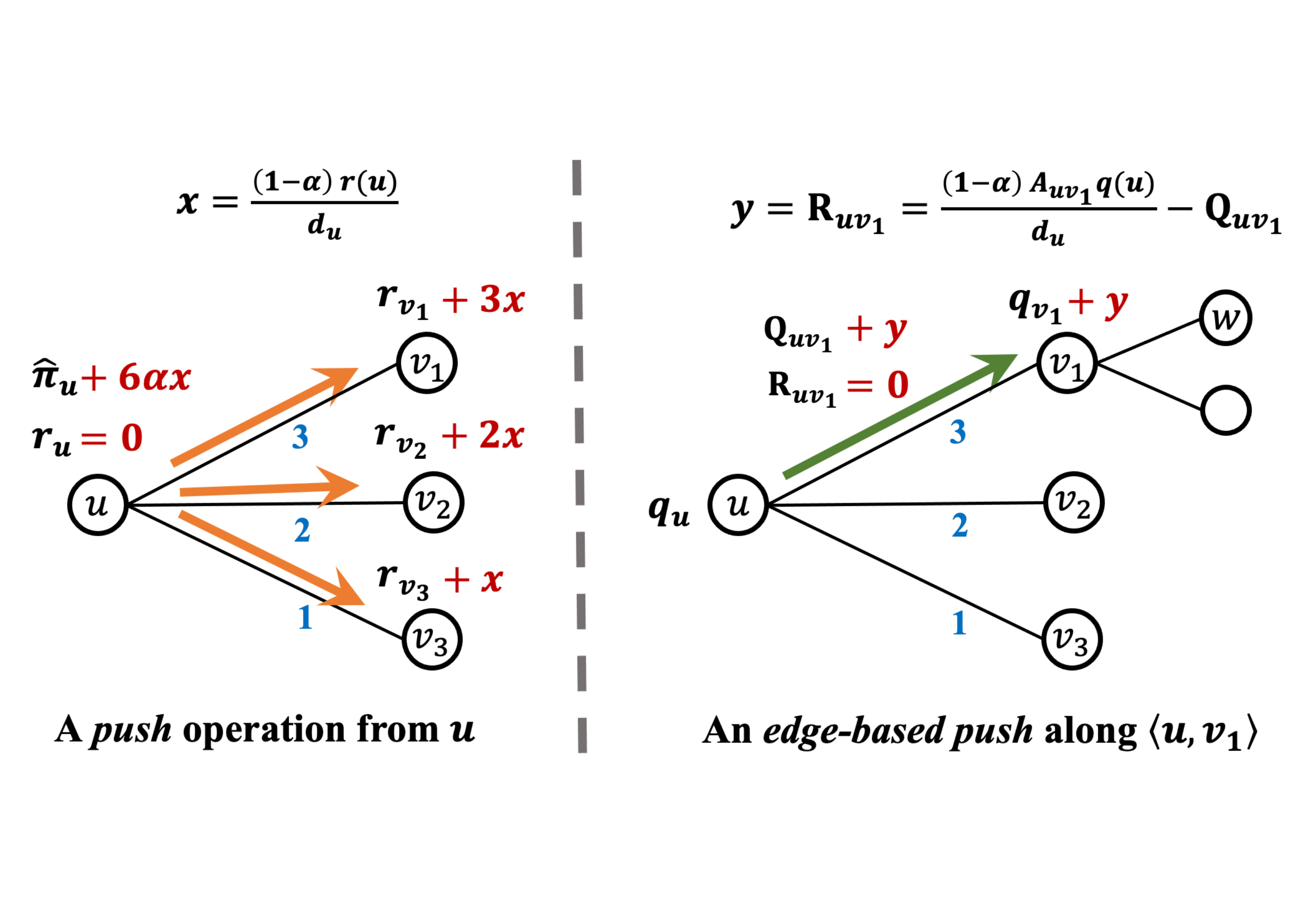}}
\vspace{-9mm}
\caption{A sketch for comparing \lpush and \epush. The number on each edge is the edge weight.}
\label{fig:comparison}
\end{center}
\vspace{-3mm}
\end{figure}

\begin{figure}[t]
\vspace{-3mm}
\begin{center}
\hspace{+0mm}\centerline{\includegraphics[width=80mm,trim=10 105 10 110,clip]{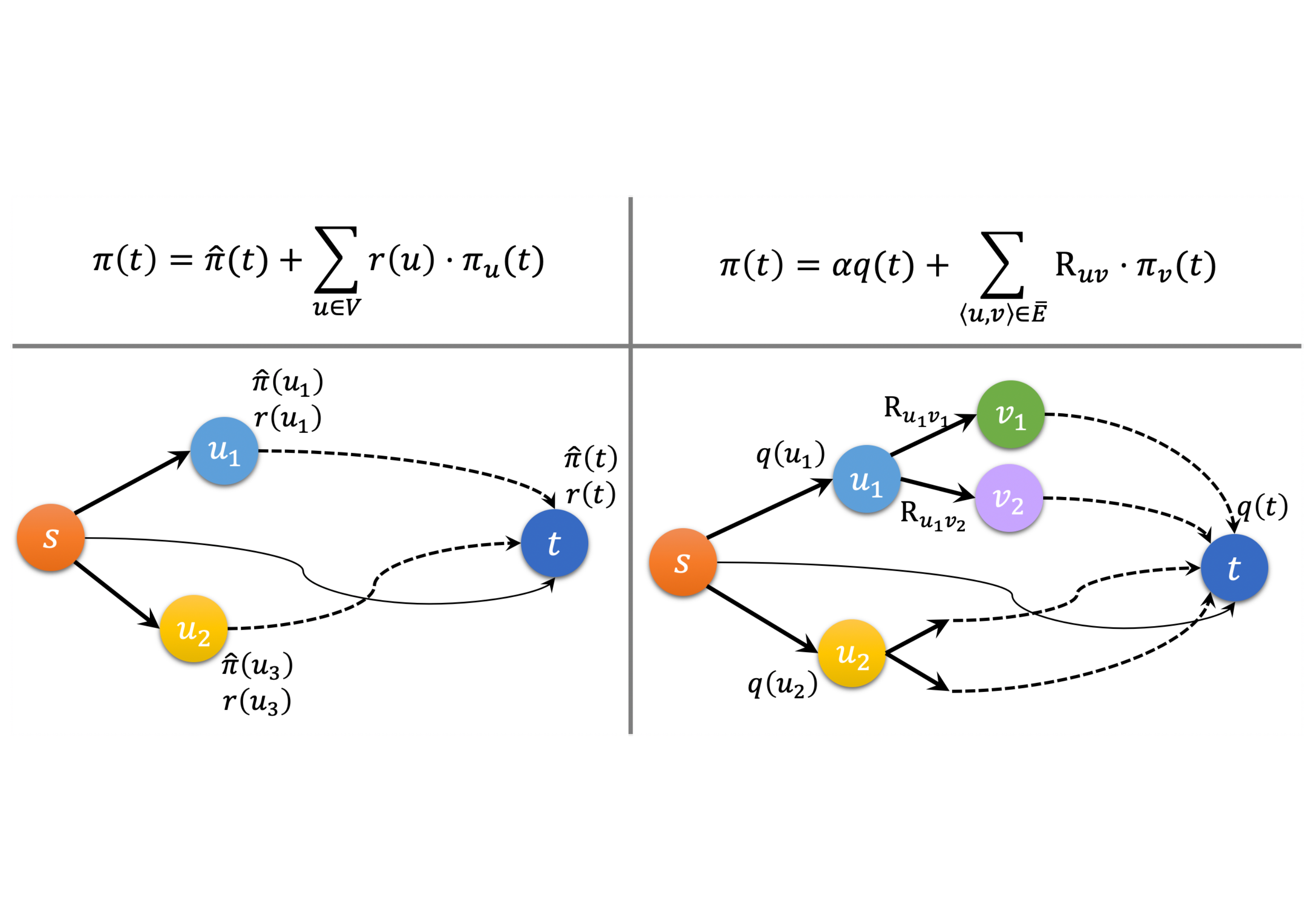}}
\vspace{-9mm}
\caption{A sketch for comparing the invariants of \localpush and \edgepush. }
\label{fig:invariant}
\vspace{-5mm}
\end{center}
\end{figure}


\header{\bf Algorithm descriptions of \localpush. } 
Initially, the node residue vector $\r$ is initialized as $\bm{r} = [\r(u)]_{u\in V} = \bm{e}_s$, indicating that, at the beginning, the walk is only alive at the source node $s$ (with probability $1$). Meanwhile, $\epi = [\epi(u)]_{u\in V} = \bm{0}$. 
During the \localpush process, \localpush repeatedly performs the \lpush operations until there is no node $u$ with residue $\r(u) \geq d(u) \cdot \theta$. After the termination, \localpush returns the reserve vector $\epi$ as the approximation of SSPPR vector $\vpi$. The pseudo code of the \localpush algorithm is illustrated in Algorithm~\ref{alg:APPR}.

\header{\bf Error analysis and time complexity of \localpush. } 
On unweighted graphs, it is known that \localpush costs $O\left(\frac{m}{\alpha \e}\right)$ and $O\left(\frac{1}{\alpha r_{\max}}\right)$ to answer the approximate SSPPR queries with $\ell_1$-error $\e$ and normalized additive error $r_{\max}$, respectively~\cite{FOCS06_FS, wang2017fora, wu2021SpeedPPR}.   
With a slight modification, we can derive the time complexities of \localpush on weighted graphs: 



\begin{fact} \label{thm:bound-local}
By setting the termination threshold $\theta$ in Algorithm~\ref{alg:APPR} as $\theta = \e/\|\A\|_1$, LocalPush answers the SSPPR query with an $\ell_1$-error $\varepsilon$. 
When the source node is randomly chosen according to the degree distribution, the expected cost of LocalPush is bounded by $O\left(\frac{m}{\alpha \e}\right)$. 
\end{fact}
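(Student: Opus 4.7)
The plan is to prove the two claims of the Fact in turn, both of which follow rather directly from the invariant of Lemma~\ref{lem:invariant_localpush} together with basic properties of PPR on undirected weighted graphs.

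For the correctness of the $\ell_1$-error bound, I would start from the invariant $\vpi(t) = \epi(t) + \sum_{u\in V} \r(u)\vpi_u(t)$ at termination. Summing this equality over $t\in V$, swapping the order of summation, and using that every SSPPR vector sums to one (so $\sum_t \vpi_u(t) = 1$), yields $\|\vpi - \epi\|_1 \le \|\r\|_1$. At termination of Algorithm~\ref{alg:APPR}, every node satisfies $\r(u) < d(u)\theta$, so $\|\r\|_1 \le \theta\sum_u d(u) = \theta\|\A\|_1$. Plugging $\theta = \e/\|\A\|_1$ gives $\|\vpi - \epi\|_1 \le \e$, as required.

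For the time complexity, I would count the number $k_u$ of \lpush operations executed on each node $u$. Every such operation costs $O(n(u))$ (it iterates over all neighbors of $u$) and converts at least $\alpha\cdot d(u)\theta$ of probability mass from $\r(u)$ into $\epi(u)$, because the push is triggered only when $\r(u)\ge d(u)\theta$. Since $\epi(u)$ remains an underestimate of $\vpi(u)$ throughout the algorithm, we get $k_u \le \vpi(u)/(\alpha\, d(u)\,\theta)$. Hence for a fixed source node $s$, the total cost is bounded by $T(s) = O\!\left(\sum_{u\in V} \frac{n(u)\,\vpi_s(u)}{\alpha\, d(u)\,\theta}\right)$.

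The last and only nontrivial step is to turn this per-source bound into an $O(m/(\alpha\e))$ bound in expectation over $s$ drawn with probability $d(s)/\|\A\|_1$. The key identity I would invoke here is the reversibility of the random walk on an undirected weighted graph, which yields the PPR symmetry $d(s)\vpi_s(u) = d(u)\vpi_u(s)$. This lets $\sum_s d(s)\vpi_s(u)$ collapse to $d(u)\sum_s \vpi_u(s) = d(u)$, so the $d(u)$ in the denominator cancels and one is left with $E_s[T(s)] = O\!\left(\frac{1}{\alpha\,\theta\,\|\A\|_1}\sum_{u\in V} n(u)\right) = O\!\left(\frac{m}{\alpha\,\theta\,\|\A\|_1}\right)$. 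Substituting $\theta = \e/\|\A\|_1$ gives the claimed $O(m/(\alpha\e))$ bound. The main obstacle is precisely this symmetry step: without it, one is stuck with a worst-case bound depending on $\max_u n(u)/d(u)$, which can be arbitrarily bad on unbalanced weighted graphs; the reversibility identity is what makes the averaging over the degree distribution erase that dependence.
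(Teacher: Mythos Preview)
Your proof is correct and follows essentially the same structure as the paper's: the $\ell_1$-error bound via the invariant and $\sum_t \vpi_u(t)=1$ is identical, and the per-source cost bound $O\!\left(\sum_u \frac{n(u)\vpi_s(u)}{\alpha d(u)\theta}\right)$ is derived in the same way. The one minor difference is how the expectation over $s$ is collapsed: the paper argues directly that $\frac{\D\bm{1}}{\|\A\|_1}$ is a fixed point of $\P$ and hence of the PPR power series, giving $\mathbb{E}_s[\vpi_s(u)]=d(u)/\|\A\|_1$ (its Fact~\ref{fact:pi}), whereas you obtain the same identity via the reversibility relation $d(s)\vpi_s(u)=d(u)\vpi_u(s)$ together with $\sum_s \vpi_u(s)=1$. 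Both routes are standard and equivalent here; the paper in fact uses your symmetry identity elsewhere (in the proof of Fact~\ref{thm:bound-local-addErr}).
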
 

\begin{fact} \label{thm:bound-local-addErr}
By setting the termination threshold $\theta$ in Algorithm~\ref{alg:APPR} as $\theta = r_{\max}$, LocalPush returns an approximation of SSPPR vector with normalized additive error $r_{\max}$.  
When the source node is randomly chosen according to the degree distribution, the expected cost of LocalPush is bounded by $O\left(\frac{m}{\alpha r_{\max} \cdot \|\A\|_1}\right)$. 
\end{fact}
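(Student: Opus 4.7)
\textbf{Proof plan for Fact~\ref{thm:bound-local-addErr}.}
The plan is to split the argument into two parts: (i) show that setting $\theta = r_{\max}$ in Algorithm~\ref{alg:APPR} guarantees a normalized additive error of $r_{\max}$ per node, and (ii) bound the expected total work via an amortized counting argument combined with the reversibility of PPR on undirected weighted graphs.

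\textbf{Correctness.} I would start from the invariant in Lemma~\ref{lem:invariant_localpush}: for every $t\in V$,
\begin{align*}
\vpi(t) - \epi(t) \;=\; \sum_{u\in V}\r(u)\cdot \vpi_u(t).
\end{align*}
Dividing by $d(t)$ and invoking the reversibility identity $\vpi_u(t)/d(t) = \vpi_t(u)/d(u)$ (which holds for PPR on undirected weighted graphs because the underlying $\alpha$-random walk is reversible w.r.t.\ the stationary distribution $d(\cdot)/\|\A\|_1$) yields
\begin{align*}
\frac{\vpi(t)-\epi(t)}{d(t)} \;=\; \sum_{u\in V}\r(u)\cdot \frac{\vpi_t(u)}{d(u)}.
\end{align*}
At termination the while-loop condition guarantees $\r(u) < d(u)\cdot\theta = d(u)\cdot r_{\max}$ for all $u$, hence the right-hand side is at most $r_{\max}\sum_{u}\vpi_t(u) = r_{\max}$. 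Combined with $\epi(t)\le \vpi(t)$ (which follows inductively because residues stay non-negative during the algorithm), this gives the claimed normalized additive error.

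\textbf{Time complexity.} For the running time I would amortize the work against the reserves produced. Every push at a node $u$ costs $O(n(u))$ (we must touch all of $u$'s neighbors) and, by the push condition $\r(u)\ge d(u)\theta$, increases $\epi(u)$ by at least $\alpha\cdot d(u)\cdot \theta$. Hence the number of pushes at $u$ is at most $\epi(u)/(\alpha\, d(u)\, \theta)$, so the total cost is bounded by
\begin{align*}
\sum_{u\in V} \frac{\epi(u)\cdot n(u)}{\alpha\, d(u)\, \theta} \;\le\; \sum_{u\in V} \frac{\vpi(u)\cdot n(u)}{\alpha\, d(u)\, \theta}.
\end{align*}
Now taking expectation over the random source $s$ drawn with probability $d(s)/\|\A\|_1$, and applying reversibility again in the form $d(s)\,\vpi_s(u) = d(u)\,\vpi_u(s)$, yields
\begin{align*}
\mathbb{E}_s[\vpi_s(u)] \;=\; \sum_{s} \frac{d(s)}{\|\A\|_1}\cdot \frac{d(u)}{d(s)}\,\vpi_u(s) \;=\; \frac{d(u)}{\|\A\|_1}.
\end{align*}
Substituting back collapses the $d(u)$ factors and gives $\mathbb{E}[\text{Cost}] \le \frac{1}{\alpha\theta\|\A\|_1}\sum_u n(u) = \frac{2m}{\alpha\theta\|\A\|_1}$, which is $O\!\left(\frac{m}{\alpha r_{\max}\|\A\|_1}\right)$ once we plug in $\theta=r_{\max}$.

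\textbf{Main obstacle.} The step I expect to require the most care is invoking the PPR reversibility identity cleanly on weighted undirected graphs, since it is used twice (once with the roles of source and target node swapped for the error bound, and once to compute the expectation of $\vpi_s(u)$ over a degree-weighted source). I would therefore first record the identity as a brief lemma derived from the detailed-balance property $d(v)\P_{uv} = d(u)\P_{vu}$ applied inductively to $\P^k$ and then to the power-series expansion in Equation~\eqref{eqn:ppr_expansion}. The rest of the argument is a routine adaptation of the classical unweighted LocalPush analysis, with $n(u)$ replacing $d(u)$ for the per-push cost and $d(u)$ appearing through the threshold.
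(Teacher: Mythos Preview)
Your proposal is correct and matches the paper's own proof essentially step for step: the same invariant from Lemma~\ref{lem:invariant_localpush}, the same use of the undirected PPR symmetry $d(u)\,\vpi_u(t)=d(t)\,\vpi_t(u)$ for the error bound, and the same amortized push-counting argument for the cost. The only cosmetic difference is that the paper computes $\mathbb{E}[\vpi(u)]=d(u)/\|\A\|_1$ via the stationary-eigenvector identity $\P\cdot \frac{\D\bm{1}}{\|\A\|_1}=\frac{\D\bm{1}}{\|\A\|_1}$ (their Fact~\ref{fact:pi}) rather than via a second application of reversibility, but the two derivations are equivalent.
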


Considering Fact~\ref{thm:bound-local-addErr}, if the number of edges $m$ equals the total weights $\|\A\|_1$, which always holds for unweigthed graphs, the overall running time bound can be simplified to $O(\frac{1}{\alpha r_{\max}})$.
For the sake of readability, we defer the proofs of Fact~\ref{thm:bound-local} and Fact~\ref{thm:bound-local-addErr} to the appendix. 
Note that in these two facts, we assume the source node is randomly chosen according to the degree distribution for the sake of simplicity. This is also a common assumption in the context of local clustering application (i.e. the seed node is sampled according to the node degree). 

\vspace{-1mm}
\section{Related Work} 
\label{sec:related}

\header{\bf Power Method.} {\crc Power Method~\cite{page1999pagerank} is an iterative method to compute the SSPPR vector $\vpi$. 
Recall that in Equation~\eqref{eqn:ppr_expansion}, we present a power series expansion to compute $\vpi$, which alternatively suggests an iterative algorithm: 
\begin{align}\label{eqn:powerdef}
\vspace{-4mm}
    \vpi^{(\ell+1)}=(1-\alpha) \P \vpi^{(\ell)} +\alpha\bm{e}_s, 
\vspace{-4mm}
\end{align}
where $\vpi^{(\ell)}$ denotes the estimated SSPPR vector after the $\ell$-th iteration. 
Power Method employs the recursive formula $L$ times and regards $\vpi^{(L)}$ as the approximation of $\vpi$, leading to an $O(mL)$ time complexity. By setting $L=\log{\frac{1}{\e}}$ and $L=\log{\frac{1}{r_{\max}}}$, Power Method can achieve an $\ell_1$ error of $\e$ and a normalized additive error of $r_{\max}$, respectively. 
Note that the settings of $L$ show logarithmic dependence on the error parameters (i.e. $\e$ or $r_{\max}$), which enables Power Method to answer high-precision SSPPR queries. However, in each iteration, Power Method needs to touch the every edge on the graph, resulting in a $\Theta(m)$ time cost per iteration. This can severely limit the scalability of Power Method on large graphs. 


\header{\bf Monte-Carlo sampling.} Monte-Carlo sampling~\cite{jeh2003scaling, fogaras2005MC} estimates SSPPR vector $\vpi$ by simulating the random walk process. More precisely, recall that the PPR value $\vpi(u)$ of node $u$  equals to the probability that an $\alpha$-random walk from the given source node $s$ stops at node $u$. Based on this interpretation, Monte-Carlo sampling first generates multiple $\alpha$-random walks from the source node $s$, then uses the fraction that the number of walks terminates at $u$ as an approximation of $\vpi(u)$. However, due to the uncertainty of random walks, 
Monte-Carlo method is inefficient to achieve small approximation error on large graphs~\cite{wang2017fora}. 

\header{\bf FORA.} Existing methods~\cite{lin2020ResAcc, lofgren2013personalized, wang2016hubppr, wang2017fora} adopt various approaches to improve the efficiency of Monte-Carlo Sampling. 
As a representative algorithm, FORA~\cite{wang2017fora} combines \localpush with Monte-Carlo Sampling to approximate the SSPPR vector. By theoretical analysis, FORA provides the optimal settings of the error parameters in \localpush and Monte-Carlo sampling to balance the two phases. 
Additionally, FORA introduces an index scheme, as well as a module for top-$k$ selection with high pruning power, which, however, is out of the scope of this paper. 

\header{\bf PowForPush and SpeedPPR. } 
Recently, Wu et al. ~\cite{wu2021SpeedPPR} proposes PowForPush to accelerate the efficiency of high-precision SSPPR queries. PowForPush is based on the Power Method and adopts two optimization techniques, sequential scanning and dynamic $\ell_1$-error threshold, for better performance from an engineering point of view. 
However, the time complexity of PowForPush is still the same as that of Power Method, which still has a linear dependency on the number of edges $m$. 

As a variant of PowForPush, SpeedPPR~\cite{wu2021SpeedPPR} combines PowForPush with Monte-Carlo Sampling for efficient approximation of SSPPR queries. As shown in~\cite{wu2021SpeedPPR}, SpeedPPR achieves superior time complexity over FORA under relative error constraints. 
The success of SpeedPPR demonstrates that \localpush serves as a cornerstone approach for SSPPR queries, and hence, any improvement on \localpush can be applied to the subsequent methods for advanced developments.  }

\header{\bf Other related work.} Except for SSPPR  queries, there are other lines of researches concerning single-target PPR queries~\cite{andersen2007contribution, wang2020RBS, lofgren2013personalized, andersen2008robust}, single-pair PPR queries~\cite{lofgren2014FastPPR, wang2016hubppr, lofgren2015bidirectional, lofgren2016BiPPR,fujiwara2012efficient}, distributed PPR queries~\cite{hou2021MPC-PPR,lin2019distributed, zhu2005distributed, luo2019distributed, sarma2013fast}, or SSPPR computation on dynamic graphs~\cite{lofgren2016approximate, yu2016random, ohsaka2015efficient}. However, these works are orthogonal to the focus of this paper. 

\vspace{-2mm}
\section{Edge-based Local Push} 
\label{sec:algorithm}
\header{\bf An overview.}
{\crc At a high level, \edgepush decomposes the atomic \lpush operation into edge level granularity, which enables to flexibly distribute probabilities according to the edge weights. To demonstrate the superiority of \epush, consider the toy graph shown in Figure~\ref{fig:weighted_graph}. We note that the atomic \lpush operation hinders \localpush to flexibly arrange the push order in the finer-grained edge granularity, leading to the $O(n)$ time cost even after the first \lpush step (i.e. from node $u$ to $v_1, v_2, \ldots, v_n$). 
However, due to the unbalanced edge weights, the probability mass distributed from $u$ to $v_2, \ldots, v_n$ are ``insignificant" compared to the probability distributed to $v_1$. Consequently, in this toy example, the optimal push strategy is to directly distribute the probability mass at node $u$ along edge $\la u, v_1 \ra$, then along edge $\la  v_1,w \ra$, and ignore the other edges, which is allowed in our \edgepush framework.}  

\vspace{-1mm}
\subsection{A primitive operation: \epush}
\header{\bf Notations. } 
Before presenting the algorithm structure of \edgepush, we first define three variables which are conceptually maintained in the \edgepush process. 
\begin{itemize}[leftmargin = *]
	\item A {\em node income} $\q(v)$ for each node $v\in V$: it records the total  probability mass received by $v$ so far. Therefore, $\alpha \q(v)$ is indeed the reserve $\epi(v)$ in the context of \localpush. 
	\item An {\em edge expense} $\Q_{uv}$ for each edge $\la u,v \ra \in \bar{E}$: this variable records the total probability mass that has been transferred from $u$ to $v$ via the edge $\la u,v \ra$. By definition, the node income of node $v$ is the sum of all the expenses of edges $\forall \la u,v \ra\in \bar{E}$, i.e., $\q(v) = \sum_{u\in N(v)} \Q_{uv}$.
	\item An {\em edge residue} $\R_{uv}$ for each edge $\la u,v\ra \in \bar{E}$: it records the probability mass that is to be transferred from $u$ to $v$ at the moment. 
	Thus, by definition, we have:
    \begin{small}
    \begin{align}
    \label{eqn:edge-relation}
    \vspace{-3mm}
    \R_{uv}=(1-\alpha) \q(u) \cdot \frac{\A_{uv}}{d(u)}-\Q_{uv},
    \vspace{-3mm}
    \end{align}
    \end{small}
    where $(1-\alpha)\q(u) \cdot \frac{\A_{uv}}{d(u)}$ indicates the total probability mass that should be transferred from $u$ to $v$ so far.
\end{itemize}

\header{\bf Correctness. } 
Similar to the invariant maintained by \localpush, we can prove an analogous invariant for \edgepush: 
\begin{lemma}[Invariant by EdgePush]\label{lem:further-inv}
For each node $t$ in the graph, the node income $\q(t)$ and edge residues satisfy the following invariant during the EdgePush process: 
\begin{align}\label{eqn:invariant_edgepush}
\vspace{-2mm}
    \vpi(t)=\alpha \q(t)+\sum_{\la u,v \ra \in \bar{E}}\R_{uv}\cdot \vpi_v(t)\,,
\vspace{-2mm}
\end{align}
where $\vpi_v(t)$ denotes the PPR value of node $t$ w.r.t node $v$ as the source node. When the source node is $s$, we use $\vpi(t)$ by default. 
\end{lemma}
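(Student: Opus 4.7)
The plan is to prove the invariant by induction on the number of \epush operations performed since initialization. At start-up, the natural initialization (consistent with $\epi(v)=\alpha\q(v)$ and the source being injected with mass $1$) sets $\q(s)=1$, $\q(v)=0$ for all $v\neq s$, and $\Q_{uv}=0$ for every $\la u,v\ra\in\bar E$, which by Equation~\eqref{eqn:edge-relation} induces $\R_{sv}=(1-\alpha)\A_{sv}/d(s)$ for $v\in N(s)$ and $\R_{uv}=0$ otherwise. Plugging these into the right-hand side of Equation~\eqref{eqn:invariant_edgepush} yields $\alpha\cdot\bm{e}_s(t)+(1-\alpha)\sum_{v\in N(s)}\frac{\A_{sv}}{d(s)}\vpi_v(t)$, which equals $\vpi(t)=\vpi_s(t)$ by the standard first-step decomposition of the $\alpha$-random walk from $s$; this establishes the base case.

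For the inductive step, I would consider an arbitrary \epush on an edge $\la u,v\ra$ that transfers $\Delta := \R_{uv}$ (the edge residue's value immediately before the operation). The push updates $\Q_{uv}\gets \Q_{uv}+\Delta$ and $\q(v)\gets \q(v)+\Delta$, which, via Equation~\eqref{eqn:edge-relation}, simultaneously (i) drives $\R_{uv}$ to $0$, and (ii) raises each $\R_{vw}$ with $w\in N(v)$ by $(1-\alpha)\Delta\cdot\A_{vw}/d(v)$. The resulting change in the right-hand side of Equation~\eqref{eqn:invariant_edgepush} is then
\[
\alpha\Delta\cdot\bm{e}_v(t)\;-\;\Delta\cdot\vpi_v(t)\;+\;(1-\alpha)\Delta\sum_{w\in N(v)}\frac{\A_{vw}}{d(v)}\vpi_w(t),
\]
which vanishes by the first-step PPR identity applied to source $v$, namely $\vpi_v(t)=\alpha\cdot\bm{e}_v(t)+(1-\alpha)\sum_{w\in N(v)}\frac{\A_{vw}}{d(v)}\vpi_w(t)$. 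Because the left-hand side $\vpi(t)$ depends only on the input graph and the source node $s$, and not on the algorithm's state, it is unchanged by the push; hence the invariant is preserved.

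The main subtlety, and the only place where the argument materially departs from its \localpush counterpart (Lemma~\ref{lem:invariant_localpush}), will be correctly attributing the ``implicit'' updates to the edge residues. Because the $\R_{uv}$'s are not stored as primary variables but are defined through Equation~\eqref{eqn:edge-relation} in terms of $\q$ and $\Q$, a single update to $\q(v)$ silently shifts all outgoing edge residues $\{\R_{vw}\}_{w\in N(v)}$ at once. The inductive accounting must therefore cleanly separate the one residue zeroed by the $\Q_{uv}$ update from the many residues grown by the $\q(v)$ update; once this bookkeeping is made explicit, cancellation against the $\alpha\q(t)$ contribution via the first-step PPR identity is essentially mechanical, so I anticipate no further difficulty.
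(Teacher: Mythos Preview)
Your proposal is correct and follows essentially the same approach as the paper: induction on the number of \epush operations, with the base case verified via the first-step PPR decomposition at $s$, and the inductive step verified by computing the change in the right-hand side after a push on $\la u,v\ra$ and canceling it via the first-step identity at $v$. The paper's proof uses $y$ where you use $\Delta$ and writes the indicator $I\{t=v\}$ where you write $\bm{e}_v(t)$, but the structure and key computations are identical.
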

A sketch for the invariant is shown in the right side of Figure~\ref{fig:invariant}. To see the correctness of this invariant, recall that in \localpush, the probability mass to be distributed from each node $u$ is recorded by the {\em node residue} $\r(u)$. In contrast, in our \edgepush, we further decompose the probability mass maintained by {\em node residue} into edge level granularity and use {\em edge residue} to record it. Hence, the PPR value of node $t$ is divided to two types of probabilities. $\alpha \q(t)$ maintains the probability mass that has been received by node $t$. Except that, $\sum_{\la u,v \ra \in \bar{E}} \R_{uv}\cdot \vpi_v(t)$ records the probability mass to be distributed to $t$. The formal proof of this invariant is deferred to the 
appendix for readability.


\header{\bf The {\em edge-based push} operation. }
In an \epush along edge $\la u,v_1\ra$ as shown in the right side of Figure~\ref{fig:comparison}, \edgepush transfers the probability mass at $\R_{u{v_1}}$ to node $v_1$ (i.e. to the edge residues of all edges $\la v_1, w \ra$), and reset $\R_{u{v_1}}$ to $0$. 

Notably, in the \edgepush process, we don't explicitly modify the edge residue. Instead, we update the node income and edge expense with the edge residue maintained implicitly according to Equation~\eqref{eqn:edge-relation}. For example, in the \epush along $\la u,v_1\ra$ as shown in Figure~\ref{fig:comparison}, we first calculate $\R_{uv_1}$ at current stage by $\R_{uv_1}=\frac{(1-\alpha)\A_{uv}\q(u)}{d(u)}-\Q_{uv_1}$. Then we increase $\q(v_1)\gets \q(v_1)+\R_{uv_1}$ and $\Q_{uv_1}\gets \Q_{uv_1}+\R_{uv_1}$. 
By this means, $\R_{uv_1}$ is implicitly set to $0$ and for $\forall \la v_1,w\ra$, $\R_{v_1w}$ is increased simultaneously because of the increment of $q(v_1)$. 

\vspace{-1mm}
\subsection{The \edgepush Algorithm}
In Algorithm~\ref{alg:edge-search}, we present the pseudo code of the \edgepush algorithm. 
More precisely, given a weighted and undirected graph $G=(V,E)$, a source node $s$, a constant teleport probability $\alpha\in (0,1)$ and the termination threshold $\theta(u,v)$ for $\forall \la u,v\ra \in \bar{E}$, \edgepush initializes the node income vector $\q$ as $\bm{e}_s$ and the edge expense matrix $\Q$ as $\mathbf{0}_{n \times n}$. During the \edgepush process,  the \edgepush algorithm {\em conceptually} maintains a set $\mathcal{C}$ for the candidate edges, which is defined as $\mathcal{C}=\left\{\la u,v \ra\in \bar{E} \mid \R_{uv}  \ge  \theta(u,v) \right\}$. 
\edgepush repeatedly picks edges from the candidate set $\mathcal{C}$ to perform \epush operations until $\mathcal{C} = \emptyset$. Specifically, \edgepush repeats the following process until the termination. 
\begin{itemize}[leftmargin = *]
	\item Pick an arbitrary edge $\la u,v \ra \in \mathcal{C}$ and let $y \leftarrow \R_{uv}$;
	\item Perform an \epush operation on $\la u,v \ra$ by ``pushing'' the edge residue $\R_{uv}$ along the edge $\la u,v \ra$ from $u$ to $v$; as a result, both the node income of $v$ and the edge expense of $\la u,v \ra$ are increased by an amount of $y$, i.e., $\q(v) \leftarrow \q(v) + y$ and $\Q_{uv} \leftarrow \Q_{uv} + y$, where $y=\R_{uv}=\frac{(1-\alpha)\A_{uv}\q(u)}{d(u)}-\Q_{uv}$. 
	\item Conceptually maintain the set $\mathcal{C}$ according to the increases of $\q(v)$ and $\Q_{uv}$.
\end{itemize}
When the above process terminates, \edgepush returns $\epi = \alpha \q$ as the estimator of the SSPPR vector $\vpi$. 

\begin{algorithm}[t]
\begin{small}
   \caption{The \edgepush Algorithm}
   \label{alg:edge-search}
   \KwIn{Graph $G\hspace{-0.5mm}=\hspace{-0.5mm}(V,E)$, source node $s\hspace{-0.5mm}\in \hspace{-0.5mm} V$, teleport probability $\alpha$, termination threshold $\theta(u,v)$ for $\forall \la u,v\ra\hspace{-0.5mm}\in \hspace{-0.5mm} \bar{E}$\\}
\KwOut{$\hat{\vpi}$ as the estimation of SSPPR vector $\bm{\pi}$\\}
$\q\gets \bm{e}_s$, $\Q\gets\mathbf{0}_{n \times n}$\;
$\mathcal{C}=\left\{\la u,v \ra \in \bar{E} \mid (1-\alpha)\q(u) \cdot\frac{\A_{uv}}{d(u)}-\Q_{uv}\ge  \theta(u,v) \right\}$\;
\While{$\mathcal{C}$ is not empty}{
	pick an edge $\la u,v \ra \in \mathcal{C}$\;
	$y\gets (1-\alpha)\q(u) \cdot \frac{\A_{uv}}{d(u)}-\Q_{uv}$\;
	$\Q_{uv}\gets \Q_{uv}+y$\;
	$\q(v) \gets \q(v)+y$\;
	Update the set $\mathcal{C}$\;
}
\Return {$\hat{\vpi} = \alpha \q$ as the estimation of $\vpi$}\;
\end{small}
\end{algorithm}


\vspace{-1mm}
\subsection{Maintaining Candidate Set $\mathcal{C}$}\label{subsec:candidate}
\vspace{-1mm}
It remains to show 
how to maintain the candidate set $\mathcal{C}$ efficiently in Algorithm~\ref{alg:edge-search}. Clearly, the cost of maintaining $\mathcal{C}$ explicitly can be expensive. To see this, consider the moment right after an \epush operation on edge $\la u,v \ra$ is performed. First, due to the increment of $\q(v)$, according to Equation~\eqref{eqn:edge-relation}, 
the edge residues of all $v$'s adjacency edges $\la v,w \ra \in \bar{E}$ are increased, and thus, can be possibly inserted to $\mathcal{C}$. This cost can be as large as $\Omega(n(v))$. Second, due to the increment of $\Q_{uv}$, the edge residue $\R_{uv}$ is equivalently reduced to $0$. 
As a result, $\la u,v \ra$ will be removed from $\mathcal{C}$. Therefore, maintaining $\mathcal{C}$ explicitly would lead to a $\Omega(n(v) + 1)$ cost, which is hardly considered as more efficient than a \lpush operation in \localpush. 

Fortunately, a key  observation is  that the \edgepush algorithm just needs to pick an {\em arbitrary} edge in $\mathcal{C}$ to perform an \epush operation.
There is actually no need to maintain $\mathcal{C}$ explicitly. Based on this observation, we propose a two-level structure to efficiently find {``eligible''} edges in $\mathcal{C}$ to push. As we shall show shortly, this two-level structure dramatically brings down the aforementioned $\Omega(n(v) + 1)$ cost to $O(1)$ amortized. 

\header{\bf A two-level structure.}
Recall that in the \edgepush algorithm, the candidate set $\mathcal{C}$ is defined as $\mathcal{C}=\{\la u,v \ra \in \bar{E} \mid \R_{uv}\ge \theta(u,v)\}$. According to Equation~\eqref{eqn:edge-relation} that $\R_{uv}=(1-\alpha)\q(u) \cdot \frac{\A_{uv}}{d(u)}-\Q_{uv} $, we can rewrite the definition of $\mathcal{C}$ as: 
\begin{align}\label{eqn:alterC}
\vspace{-2mm}
    \mathcal{C}=\left\{\la u,v \ra \in \bar{E} \mid \frac{(1-\alpha)\q(u) }{d(u)} \ge \frac{1}{\A_{uv}}\cdot \left(\Q_{uv}+\theta(u,v)\right)\right\}.  
\vspace{-2mm}
\end{align}
We note that in the above inequality, the right side focuses on the probability mass distributed to node $u$'s adjacency edges, at a local level. The left side of the inequality maintains the probability mass of each node $u$ in the graph, that is, at a global level. 
Thus, we propose a two-level structure to update the set $\mathcal{C}$ at the local and global level separately:  
\vspace{-1mm}
\begin{itemize}[leftmargin = *]
	\item At the local level, 
	for each node $u \in V$, we maintain a {\em priority queue} of all the neighbors of $u$, denoted by $\mathcal{Q}(u)$, where the priority of each neighbor $v$ is defined as:
	\begin{equation}\label{eqn:pqi}
	\vspace{-2mm}
	k_u(v) = \frac{1}{\A_{uv}} \cdot \left( \Q_{uv}+\theta(u,v)\right)\,.
	\end{equation}
	\vspace{-2mm}
	\item At the global level, 
	for $\forall u\in V$, we define the key of node $u$ as:
\begin{align}\label{eqn:pq}
\vspace{-2mm}
	K_u = -\frac{(1-\alpha)\q(u)}{d(u)} + \mathcal{Q}(u).\text{top}, 
\vspace{-1mm}
\end{align}
where $\mathcal{Q}(u).\text{top}$ is the smallest priority value in $\mathcal{Q}(u)$. We maintain a {\em linked list} $\mathcal{L}$ for storing all the nodes $u$ such that $K_u \leq 0$. 
\end{itemize}

According to the definition of the two-level structure, we have a crucial observation given below: 
\begin{observation}
	\label{obs:correctness}
 	Let $v$ be the neighbor of $u$ with the smallest priority in $\mathcal{Q}(u)$. Then the edge $\la u,v\ra \in \mathcal{C}$ {\em if and only if} $K_u \leq 0$.  
\vspace{-1mm}
\end{observation}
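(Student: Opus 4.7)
The plan is to establish the equivalence by straightforwardly unwrapping the definitions of $\mathcal{C}$, the local priority $k_u(\cdot)$, and the global key $K_u$, and then exploiting the fact that $v$ is chosen as the minimum-priority neighbor of $u$ in $\mathcal{Q}(u)$.

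First, I would invoke the alternative characterization of $\mathcal{C}$ given in Equation~\eqref{eqn:alterC}, which states that $\la u,v\ra \in \mathcal{C}$ iff $\frac{(1-\alpha)\q(u)}{d(u)} \geq \frac{1}{\A_{uv}}\bigl(\Q_{uv}+\theta(u,v)\bigr)$. By the definition of the local priority in Equation~\eqref{eqn:pqi}, the right-hand side is exactly $k_u(v)$. Hence edge $\la u,v\ra \in \mathcal{C}$ iff $\frac{(1-\alpha)\q(u)}{d(u)} \geq k_u(v)$.

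Next, since $v$ is by assumption the neighbor of $u$ with the smallest priority in $\mathcal{Q}(u)$, we have $k_u(v) = \mathcal{Q}(u).\text{top}$. Substituting this into the inequality above and rearranging gives the equivalence
\[
\la u,v\ra \in \mathcal{C} \;\Longleftrightarrow\; -\frac{(1-\alpha)\q(u)}{d(u)} + \mathcal{Q}(u).\text{top} \leq 0.
\]
By Equation~\eqref{eqn:pq}, the left-hand side of this inequality is precisely $K_u$, which completes the proof.

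I do not expect any genuine obstacle: the observation is a pure unwrapping of definitions, and the only substantive content is that the minimum over $\mathcal{Q}(u)$ appears on both sides of the correspondence (in $\mathcal{C}$ via the specific neighbor $v$ attaining the minimum, and in $K_u$ via $\mathcal{Q}(u).\text{top}$). One small point worth stating explicitly for clarity is that because $v$ attains the minimum priority, for any other neighbor $v' \in N(u)$ we have $k_u(v') \geq k_u(v)$, so the edge $\la u,v\ra$ is the ``easiest'' of $u$'s incident edges to belong to $\mathcal{C}$; this justifies why checking only $v$ (and not all of $u$'s neighbors) is sufficient to detect whether any edge incident to $u$ lies in $\mathcal{C}$, which is what motivates the two-level structure in subsequent use.
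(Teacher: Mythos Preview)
Your proposal is correct and follows essentially the same approach as the paper: both proofs simply unwrap the definitions of $k_u(v)$, $\mathcal{Q}(u).\text{top}$, and $K_u$, use that $v$ attains the minimum priority so $k_u(v)=\mathcal{Q}(u).\text{top}$, and then rearrange the inequality to match the characterization of $\mathcal{C}$ in Equation~\eqref{eqn:alterC}. Your additional closing remark about $v$ being the ``easiest'' edge is helpful motivation but not part of the paper's proof.
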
 

\begin{proof}
Since $v$ is the neighbor of $u$ with the smallest priority in $\mathcal{Q}(u)$, we have 
\begin{align}\nonumber
\vspace{-2mm}
K_u =-\frac{(1-\alpha)\q(u)}{d(u)}+\mathcal{Q}(u).\text{top}=-\frac{(1-\alpha)\q(u)}{d(u)}+\frac{\left(\Q_{uv}+\theta(u,v)\right)}{\A_{uv}}. 
\vspace{-1mm}
\end{align}
As a result, $K_u\le 0$ if and only if $\frac{(1-\alpha)\q(u)}{d(u)}-\frac{1}{\A_{uv}}\cdot \left(\Q_{uv}+\theta(u,v)\right)\ge 0$, which is concurs with the definition of $\mathcal{C}$ given in Equation~\eqref{eqn:alterC}. Thus, the observation follows. 
\end{proof}

Based on the two-level structure and Observation~\ref{obs:correctness}, we can pick edges from the candidate set $\mathcal{C}$ without updating the edge residues of all adjacency edges.

\header\underline{\em Conceptually Pick an Edge from $\mathcal{C}$.}
To pick an edge from $\mathcal{C}$, by Observation~\ref{obs:correctness}, it suffices to first pick an arbitrary node $u$ from the linked list $\mathcal{L}$, and then, take the edge $\la u,v \ra$ with $v$ having the smallest priority in $\mathcal{Q}(u)$. 
This can be implemented easily by taking a node from a linked list and by invoking the {\em find-min} operation of the priority queue. 

\header\underline{\em Conceptually Maintain $\mathcal{C}$.}
Let $\la u,v \ra$ be the edge picked. 
After the edge-based push operation on $\la u,v \ra$, 
our \edgepush performs the following steps:
(i) invoke an {\em increase-key} operation for $v$ in $\mathcal{Q}(u)$ (we discuss how to implement this with allowed priority queue operations shortly);
(ii) check the key $K_u$: if $K_u > 0$, remove $u$ from the linked list $\mathcal{L}$; and 
(iii) check the key $K_v$: if $K_v \leq 0$, add $v$ to $\mathcal{L}$.

\header{\bf Correctness of the two-level structure.}
The correctness of the two-level structure for maintaining the candidate set $\mathcal{C}$ can be proved based on two facts. First, we observe that the linked list $\mathcal{L}$ always keeps all the nodes $u$ with $K_u \leq 0$. Second, the edge expense $\Q_{uv}$ can be increased only and this happens only when a push is performed on the edge $\la u, v\ra$. According to Equation~\eqref{eqn:pqi}, if an edge of $u$ is eligible for a push, it will eventually appear at the top of $\mathcal{Q}(u)$, and then be captured by Observation~\ref{obs:correctness}.

\header{\bf Cost per \epush. }
Recall that we assume the word RAM model
where each edge weight value $\A_{uv}$ and each priority value $k_u(v)$ can be represented by $O(\log n)$ bits. In this model, we can sort 
all the edge weights $\A_{uv}$ and all the priorities $k_u(v)$ in $O(m)$ time with the standard Radix sort~\cite{knuth1998art}. The key idea is to perform Counting Sort on every $\log n$ bits rather than on every bit, leading to the number of passes as $O(1)$. 
More precisely, during the sorting process, the Radix Sort can first perform Counting Sort on the least significant $\log n$ bits, then by the next $\log n$ bits and so on so forth, up to the most significant $\log n$ bits. As a result, the Radix Sort requires only $O\left( \frac{\log{\text{poly}(n)}}{\log n}\right)=O(1)$ passes of Counting Sort. By ~\cite{cormen2009introduction}, the time cost of Counting Sort can be bounded by $O(m+n)=O(m)$ to sort $O(m)$ elements in the range from $0$ to $2^{\log n}-1=n-1$. Thus, the overall running time of Radix Sort is bounded by $O(m)$. 
Furthermore, we have the following fact:

\vspace{-1mm}
\begin{fact}[Theorem 1 in~\cite{thorup1995equivalence}]
	\label{fact:heap-cost}
	If all the $m$ priorities $k_u(v)$ can be sorted in $O(m)$ time, there exists a priority queue $\mathcal{Q}$ with capacity of $m$ which supports each:
	(i) {\em find-min} operation in $O(1)$ worst-case time, (ii) {\em delete} operation (removing an element from $\mathcal{Q}$) in $O(1)$ amortized time, and 
	(iii) {\em restricted insert} operation (inserting an element with priority $> \mathcal{Q}.\text{top}$) in $O(1)$ amortized time. 
\end{fact}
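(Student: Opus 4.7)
The plan is to invoke Thorup's equivalence theorem between sorting and priority queues~\cite{thorup1995equivalence} essentially as a black box. Thorup's result converts any $T(m)$-time sorting routine on $m$ keys into a priority queue supporting insertion and extract-min in $O(T(m)/m)$ amortized time per operation. The preceding discussion already shows that all $m$ priorities $k_u(v)$ can be sorted in $O(m)$ time by Radix Sort in the word RAM model, so the hypothesis of Thorup's theorem is satisfied with $T(m) = O(m)$, immediately yielding a priority queue with $O(1)$ amortized cost per standard operation.

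Given this base structure, I would argue the three specific bounds in turn. For \emph{find-min}, I would augment the underlying data structure with an explicit pointer to the element of currently smallest priority; this pointer is updated only when that element is removed, in which case Thorup's structure re-exposes the new minimum and the refresh cost can be charged to the deletion operation, yielding an $O(1)$ worst-case cost for each find-min lookup. For \emph{delete}, the $O(1)$ amortized bound is inherited directly from Thorup's construction. For the \emph{restricted insert}, the restriction that every inserted priority strictly exceeds $\mathcal{Q}.\text{top}$ is the crucial ingredient: it guarantees that the min-pointer is never invalidated by an insertion, so no additional comparison work is needed beyond the $O(1)$ amortized insertion already supplied by the base structure.

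The main obstacle, as I see it, is not the derivation itself (once Thorup's theorem is in hand, the three bounds reduce to bookkeeping), but rather verifying the compatibility of the assumptions in our setting with those in~\cite{thorup1995equivalence}. Specifically, I would need to confirm that the word size $w = \Theta(\log n)$ supplies enough bits to represent every priority $k_u(v)$ after the arithmetic in Equation~\eqref{eqn:pqi} without loss of precision, and that priorities can be mapped to integer keys on which Radix Sort genuinely runs in $O(m)$ passes. Once these compatibility checks are discharged, the fact follows as an immediate corollary of Thorup's theorem.
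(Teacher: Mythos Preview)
Your approach matches the paper's: this statement is presented as a cited fact from Thorup~\cite{thorup1995equivalence} and the paper gives no proof at all beyond the citation. Your elaboration (the min-pointer argument for worst-case find-min, and the observation that the restriction on inserted priorities preserves that pointer) is a reasonable unpacking of why the specific three bounds follow from Thorup's general reduction, but the paper itself simply treats the entire fact as a black-box import.

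One minor remark: your concern about word-size compatibility and integer encodability of the priorities $k_u(v)$ is exactly the assumption the paper discharges in the paragraph immediately preceding the fact (via the word RAM model with $O(\log n)$-bit values and Radix Sort), so in the paper's presentation that step is already handled before the fact is invoked rather than being part of its justification.
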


In the above implementation, each edge-based push operation only involves: one {\em find-min} and one {\em increase-key} in the priority queues, and $O(1)$ standard operations in the linked list $\mathcal{L}$.
As the {\em increase-key} operation can be implemented by a {\em restricted insert} followed by a {\em delete} operation in $\mathcal{Q}(u)$, the cost of {\em increase-key} is bounded by $O(1)$ amortized. Thus, we can derive the following theorem: 

\begin{theorem}\label{thm:cost-per-push}
\vspace{-2mm}
	The cost of each edge-based push operation is bounded by $O(1)$ amortized.	
\vspace{-1mm}
\end{theorem}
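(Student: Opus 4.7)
The plan is to establish a constant amortized cost per \epush by carefully counting every elementary operation triggered by the two-level structure, and by invoking Fact~\ref{fact:heap-cost} for the priority queue cost.

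First, I would verify the prerequisite of Fact~\ref{fact:heap-cost}, namely that all $O(m)$ priorities $k_u(v) = \frac{1}{\A_{uv}} \cdot \left( \Q_{uv} + \theta(u,v)\right)$ admit an $O(m)$-time sort. Under the word RAM model from Section~\ref{sec:pre}, every such priority fits in $O(\log n)$ bits, so the Radix Sort analysis immediately preceding the theorem yields the required $O(m)$ bound. Applying Fact~\ref{fact:heap-cost} then equips each local queue $\mathcal{Q}(u)$ with $O(1)$ worst-case find-min together with $O(1)$ amortized delete and restricted-insert.

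Next, I would enumerate the elementary work done by a single \epush on an edge $\la u, v \ra$. Identifying the edge to push uses one access to the head of the linked list $\mathcal{L}$ followed by one find-min on $\mathcal{Q}(u)$, each taking $O(1)$. Updating $\q(v)$, $\Q_{uv}$, and the implicitly maintained edge residue is $O(1)$ arithmetic. The priority of $v$ inside $\mathcal{Q}(u)$ changes from $k_u(v)^{\text{old}}$ to the strictly larger value $k_u(v)^{\text{new}}$; I would implement this increase-key by a restricted insert of the new entry followed by a delete of the stale one, each $O(1)$ amortized by Fact~\ref{fact:heap-cost}. Finally, recomputing $K_u$ and $K_v$ and adjusting the membership of $u$ and $v$ in $\mathcal{L}$ costs $O(1)$. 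Summing, the per-push cost is $O(1)$ amortized.

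The delicate step is justifying that the priority update on $v$ in $\mathcal{Q}(u)$ can be carried out using only the restricted-insert primitive of Fact~\ref{fact:heap-cost}, whose precondition requires the inserted priority to strictly exceed the current top. I would argue that at the moment the new, larger priority of $v$ is inserted, the stale entry of $v$ is still present in $\mathcal{Q}(u)$ with priority exactly equal to the current $\mathcal{Q}(u).\text{top}$ (it was just returned by find-min), so the new priority indeed strictly exceeds the current top; the stale entry is then removed by the subsequent delete. Aside from this detail and the word-RAM bookkeeping already addressed by Radix Sort, the remaining operations are routine pointer manipulations, and the $O(1)$ amortized bound follows.
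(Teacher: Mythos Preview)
Your proposal is correct and follows essentially the same approach as the paper: establish the $O(m)$ sortability of priorities via Radix Sort under the word RAM model, invoke Fact~\ref{fact:heap-cost}, and then observe that each \epush comprises one find-min, one increase-key (implemented as a restricted insert followed by a delete), and $O(1)$ linked-list operations. Your additional paragraph justifying why the restricted-insert precondition is met (the stale entry of $v$ remains at the top of $\mathcal{Q}(u)$ when the new, strictly larger priority is inserted) is a detail the paper leaves implicit, and it strengthens the argument.
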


\header{\bf Pre-processing.}
As the edge weights can be sorted in $O(m)$ time and by Fact~\ref{fact:heap-cost},
we can pre-process the input graph $G$ in $O(m)$ time, such that for each node $u \in V$:
(i) all the out-going edges of $u$ are sorted by their weights $\A_{uv}$, and 
(ii) the priority queue $\mathcal{Q}(u)$ is constructed.
Furthermore, we may also store certain aggregated information in memory such as $\|\A\|_1$ and $\sum_{\la u,v \ra \in \bar{E}} \sqrt{\A_{uv}}$ (which we shall see shortly in the analyses).
\section{Theoretical Analysis} 
\label{sec:analysis}

{\hz In this section, we analyze the theoretical error and time complexity of \edgepush. Additionally, we provide a novel notion, $\cos^2 \p$, to characterize the unbalancedness of weighted graphs and to assist in evaluating the theoretical advantage of \edgepush over \localpush.} {\hz For readability, we defer all proofs in this section to the appendix. }


\vspace{-1mm}
\subsection{Analysis for the \edgepush Algorithm}
\header
{\bf Overall time complexity.}
To bound the overall time cost of \edgepush, recall that  Theorem~\ref{thm:cost-per-push} states each edge push operation takes amortized constant time. Consequently, it suffices to bound the  total number of edge  push operations as the overall time complexity. 
\begin{lemma}[Time cost of EdgePush]\label{lem:EdgePushCost}
	The overall running time of EdgePush is bounded by $O\left(\sum_{\la u,v \ra\in \bar{E}}\frac{(1-\alpha)\vpi(u) \A_{uv}}{\alpha \cdot d(u) \cdot \theta(u,v)}\right)$. In particular, when the source node is randomly chosen according to the degree distribution, the expected overall running time of EdgePush is bounded by $O\left(\sum_{\la u,v \ra\in \bar{E}}\frac{(1-\alpha)\A_{uv}}{\alpha \cdot \|\A\|_1 \cdot \theta(u,v)}\right)$. 
\end{lemma}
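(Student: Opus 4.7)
The plan is to combine Theorem~\ref{thm:cost-per-push} with a charging argument: since every edge-based push costs $O(1)$ amortized time, it suffices to bound the total number of \epush operations. First I would bound the number of pushes performed on a single directed edge $\la u, v\ra \in \bar{E}$. Each push on $\la u,v\ra$ transfers the current $\R_{uv}$ mass and increases $\Q_{uv}$ by that same amount; by the condition $\la u,v\ra \in \mathcal{C}$ at the moment of the push, this increment is at least $\theta(u,v)$. Therefore, the number of pushes on $\la u,v\ra$ is at most $\Q_{uv}^{\text{final}}/\theta(u,v)$, where $\Q_{uv}^{\text{final}}$ denotes the edge expense at termination.

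Next I would bound $\Q_{uv}^{\text{final}}$ in terms of the true PPR values. Rearranging Equation~\eqref{eqn:edge-relation} gives $\Q_{uv} = (1-\alpha)\q(u)\cdot \A_{uv}/d(u) - \R_{uv}$. Observing that $\R_{uv} \ge 0$ throughout the process (the node income $\q(u)$ is monotonically non-decreasing, and each push merely resets $\R_{uv}$ to $0$), we obtain $\Q_{uv}^{\text{final}} \le (1-\alpha)\q(u)^{\text{final}}\cdot \A_{uv}/d(u)$. To control $\q(u)^{\text{final}}$, I would appeal to the invariant in Lemma~\ref{lem:further-inv}: since $\vpi_v(t) \ge 0$ and $\R_{uv} \ge 0$ for every edge, it follows that $\alpha \q(t) \le \vpi(t)$, so $\q(u)^{\text{final}} \le \vpi(u)/\alpha$. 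Combining these inequalities and summing over all edges gives the deterministic bound
\begin{align*}
\text{Total pushes} \;\le\; \sum_{\la u,v\ra \in \bar{E}} \frac{(1-\alpha)\vpi(u)\A_{uv}}{\alpha\cdot d(u)\cdot \theta(u,v)}.
\end{align*}

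Finally, for the expected bound I would take expectation over the source node $s$ drawn proportionally to degree. The only random quantity in the sum is $\vpi(u) = \vpi_s(u)$. Using the well-known reversibility fact that the degree distribution is stationary for the $\alpha$-random walk on an undirected weighted graph (so that $\mathbb{E}_{s\sim d/\|\A\|_1}[\vpi_s(u)] = d(u)/\|\A\|_1$), linearity of expectation cancels the $d(u)$ in the denominator and yields
\begin{align*}
\mathbb{E}\!\left[\text{Total pushes}\right] \;\le\; \sum_{\la u,v\ra \in \bar{E}} \frac{(1-\alpha)\A_{uv}}{\alpha\cdot \|\A\|_1 \cdot \theta(u,v)},
\end{align*}
as claimed. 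The main obstacle I anticipate is rigorously justifying the two monotonicity/non-negativity statements used above, namely that $\q$ never decreases and that $\R_{uv}\ge 0$ is maintained as an invariant across every \epush; once these are established, the rest reduces to straightforward algebra together with the stationarity identity for $\mathbb{E}_s[\vpi_s(u)]$.
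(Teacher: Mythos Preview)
Your proposal is correct and follows essentially the same route as the paper: bound the number of pushes on each edge by $\Q_{uv}^{\text{final}}/\theta(u,v)$, use Equation~\eqref{eqn:edge-relation} together with $\R_{uv}\ge 0$ to get $\Q_{uv}^{\text{final}}\le (1-\alpha)\q(u)\A_{uv}/d(u)$, invoke the invariant of Lemma~\ref{lem:further-inv} to conclude $\q(u)\le \vpi(u)/\alpha$, and finish with the stationarity identity $\mathbb{E}_s[\vpi_s(u)]=d(u)/\|\A\|_1$ (the paper records this as Fact~\ref{fact:pi}). The monotonicity of $\q$ and the non-negativity of $\R_{uv}$ that you flag as the main obstacle are indeed routine inductions on the sequence of \epush operations, and the paper simply asserts them without further comment.
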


\header{\bf Error analysis.}
Recap the Invariant~\eqref{eqn:invariant_edgepush} shown in Section~\ref{sec:algorithm}. By using $\alpha \q(t)$ as an approximate PPR value of $\vpi(t)$, we have two straightforward observations: 
(i) $\alpha \q(t)$ is an underestimate because all edge residuals are non-negative, 
and (ii) the additive error $\vpi(t)-\alpha \q(t)$ is bounded by $\sum_{\la u,v \ra \in \bar{E}} \R_{uv}\cdot \vpi_v(t)$, 
the edge residuals to be distributed to $t$. 
Summing over all possible target node $t$, we have the following lemma about the bound on the $\ell_1$-error of \edgepush.

\begin{lemma}[$\ell_1$-error]\label{lem:EdgePushErr} 
The EdgePush method shown in Algorithm~\ref{alg:edge-search} returns an approximate SSPPR vector within an $\ell_1$-error $\sum_{\la u,v \ra\in \bar{E}}\theta(u,v)$. 
\end{lemma}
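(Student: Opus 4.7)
The plan is to leverage the invariant established in Lemma~\ref{lem:further-inv} together with the termination condition of Algorithm~\ref{alg:edge-search}. Since the algorithm returns $\epi = \alpha \q$ at termination, the per-node error can be read off directly from the invariant as $\vpi(t) - \epi(t) = \sum_{\la u,v \ra \in \bar{E}} \R_{uv} \cdot \vpi_v(t)$, so the task reduces to controlling this sum.

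First, I would observe that $\R_{uv} \ge 0$ holds throughout the execution (argued briefly below) and that every PPR value $\vpi_v(t)$ is non-negative. This implies $\epi(t) \le \vpi(t)$ for every $t \in V$, so the absolute value can be dropped and $\|\epi - \vpi\|_1 = \sum_{t \in V}\bigl(\vpi(t) - \alpha\q(t)\bigr) = \sum_{t\in V}\sum_{\la u,v \ra \in \bar{E}} \R_{uv} \cdot \vpi_v(t)$. Next, I would swap the order of summation and apply the basic fact that $\sum_{t\in V} \vpi_v(t) = 1$ for every source $v$, since an SSPPR vector is a probability distribution. This collapses the double sum to $\|\epi - \vpi\|_1 = \sum_{\la u,v \ra \in \bar{E}} \R_{uv}$.

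Finally, I would invoke the termination condition: Algorithm~\ref{alg:edge-search} halts only when the candidate set $\mathcal{C}$ is empty, which by the definition of $\mathcal{C}$ means that every edge satisfies $\R_{uv} < \theta(u,v)$ (or $\le$, depending on the strict/non-strict convention in line 2). Substituting this bound edge-by-edge yields $\|\epi - \vpi\|_1 \le \sum_{\la u,v \ra \in \bar{E}} \theta(u,v)$, as claimed.

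The only non-routine point is justifying $\R_{uv} \ge 0$ at all times. A short induction handles this: initially $\q = \bm{e}_s$ and $\Q = \mathbf{0}$, so $\R_{uv} = (1-\alpha)\q(u)\A_{uv}/d(u) \ge 0$; each push on $\la u,v\ra$ simultaneously raises $\Q_{uv}$ by the current $\R_{uv}$, which zeroes that particular residue without making it negative; and all other edge residues are only affected through increases of some $\q(x)$, which can only enlarge the term $(1-\alpha)\q(x)\A_{xy}/d(x)$ in Equation~\eqref{eqn:edge-relation}. Thus non-negativity is preserved, and the rest is purely algebraic manipulation.
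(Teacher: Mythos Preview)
Your proposal is correct and follows essentially the same approach as the paper: invoke the invariant of Lemma~\ref{lem:further-inv}, sum over $t$, swap the order of summation, use $\sum_{t\in V}\vpi_v(t)=1$, and then apply the termination condition $\R_{uv}<\theta(u,v)$. The only difference is that you explicitly justify the non-negativity of the edge residues (and hence the removal of the absolute value), a point the paper leaves implicit.
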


Moreover, in fact, the proof of Lemma~\ref{lem:EdgePushErr} also derives the additive error bound for \edgepush.  
We have the following Lemma. 
\begin{lemma}[Normalized Additive Error]\label{lem:EdgePushErr-ad} 
For each node $t\in V$ in the graph, the EdgePush method answers the SSPPR queries within a normalized additive error 
$\frac{1}{d(t)}\cdot \sum_{\la u,v \ra\in \bar{E}}\theta(u,v) \cdot \vpi_v(t)$ for each node $t\in V$. 
\end{lemma}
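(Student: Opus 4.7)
}
The plan is to piggyback on the invariant in Lemma~\ref{lem:further-inv} and essentially re-use the argument behind Lemma~\ref{lem:EdgePushErr}, except that instead of summing the per-target bound across all $t\in V$ (which collapses into an $\ell_1$-error statement), I would keep the bound target-specific and then normalize by $d(t)$. Concretely, I would start from the identity $\vpi(t)=\alpha\q(t)+\sum_{\la u,v\ra\in\bar{E}}\R_{uv}\cdot\vpi_v(t)$ that holds at every moment during the EdgePush process. Since EdgePush returns $\hat{\vpi}(t)=\alpha\q(t)$, this identity directly expresses the signed error $\vpi(t)-\hat{\vpi}(t)$ as the weighted sum $\sum_{\la u,v\ra\in\bar{E}}\R_{uv}\cdot\vpi_v(t)$.

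Next, I would establish two simple sign/magnitude facts about the edge residues at termination. First, every edge residue stays non-negative throughout the execution: this follows by induction on the sequence of \epush operations, since each push only increases $\q(v)$ (hence can only increase residues $\R_{vw}$ for $w\in N(v)$ via Equation~\eqref{eqn:edge-relation}) and resets $\R_{uv}$ by increasing $\Q_{uv}$ by exactly the current non-negative value of $\R_{uv}$. Second, at termination the candidate set $\mathcal{C}$ is empty, so by the definition of $\mathcal{C}$ every edge satisfies $\R_{uv}<\theta(u,v)$. Combining these two observations with $\vpi_v(t)\ge 0$ gives
\begin{align*}
0\ \le\ \vpi(t)-\hat{\vpi}(t)\ =\ \sum_{\la u,v\ra\in\bar{E}}\R_{uv}\cdot\vpi_v(t)\ \le\ \sum_{\la u,v\ra\in\bar{E}}\theta(u,v)\cdot\vpi_v(t).
\end{align*}

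Finally, dividing both sides by $d(t)$ yields $\bigl|\tfrac{\vpi(t)}{d(t)}-\tfrac{\hat{\vpi}(t)}{d(t)}\bigr|\le\tfrac{1}{d(t)}\sum_{\la u,v\ra\in\bar{E}}\theta(u,v)\cdot\vpi_v(t)$, as claimed. I do not foresee a real obstacle here: once the invariant from Lemma~\ref{lem:further-inv} and the non-negativity of edge residues are in hand, the lemma is essentially an ``un-summed'' version of Lemma~\ref{lem:EdgePushErr}. The only small subtlety worth double-checking is the non-negativity induction on $\R_{uv}$, which must be verified carefully for edges incident on the just-updated node $v$ (where $\q(v)$ increases) and for the edge $\la u,v\ra$ itself (where both $\q(u)$ has not changed and $\Q_{uv}$ has just been raised by exactly the old $\R_{uv}$, driving the new $\R_{uv}$ to $0$).
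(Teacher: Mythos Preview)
Your proposal is correct and follows essentially the same approach as the paper: start from the invariant of Lemma~\ref{lem:further-inv}, bound each terminal edge residue by its threshold $\theta(u,v)$, and divide by $d(t)$ instead of summing over $t$. If anything, your version is slightly more careful, since you explicitly verify the non-negativity of the edge residues by induction, which the paper takes for granted.
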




\subsection{Settings of the Termination Threshold}
As shown in Algorithm~\ref{alg:edge-search}, our \edgepush admits an individual termination threshold $\theta(u,v)$ for each edge $\la u,v \ra\in \bar{E}$. As we shall show in the following, by setting $\theta(u,v)$ carefully, the  \edgepush achieves superior query efficiency over \localpush. 

Before illustrating the setting of $\theta(u,v)$, we first present an important inequality: Cauchy-Schwarz Inequality~\cite{steele2004cauchy}, which serves as the basis of the following analysis. 
\begin{fact}[Cauchy-Schwarz Inequality~\cite{steele2004cauchy}]\label{fact:cauchy}
	Given two vectors $\z=\{\z(1),\z(2),...,\z(m)\}\in \mathbb{R}^{m}$, $\x=\{\x(1),\x(2),...,\x(m)\}\in \mathbb{R}^{m}$, the Cauchy-Schwarz Inequality states that:
	\begin{align}\label{eqn:cauchy-entry}
		\left(\sum_{i=1}^m \z(i)\cdot \x(i)\right)^2 \le \left(\sum_{i=1}^m \z^2(i)\right)\left( \sum_{i = 1}^m \x^2(i)\right), 
	\end{align} 
	where the equality holds when $\frac{\z(1)}{\x(1)}=\frac{\z(2)}{\x(2)}=...=\frac{\z(m)}{\x(m)}$. 
\end{fact}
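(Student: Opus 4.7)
The plan is to prove the Cauchy--Schwarz inequality by the standard ``auxiliary quadratic'' technique, which simultaneously yields both the inequality and the equality condition. The key idea is to package the two sums and the cross term into a single polynomial in one real parameter $t$ whose non-negativity is manifest, and then extract the inequality from an algebraic constraint on its coefficients.

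First, I would handle the trivial case $\x = \bm{0}$ separately: both sides of~\eqref{eqn:cauchy-entry} are zero and equality holds vacuously (with any scalar multiple). Assuming henceforth that $\x \neq \bm{0}$, define
\begin{equation*}
f(t) \;=\; \sum_{i=1}^m \bigl(\z(i) - t\,\x(i)\bigr)^2 \qquad \text{for } t \in \mathbb{R}.
\end{equation*}
Since $f(t)$ is a sum of squares of real numbers, we have $f(t) \ge 0$ for every $t$. Expanding the square yields
\begin{equation*}
f(t) \;=\; \Bigl(\textstyle\sum_{i=1}^m \x^2(i)\Bigr)\, t^2 \;-\; 2\Bigl(\textstyle\sum_{i=1}^m \z(i)\x(i)\Bigr)\, t \;+\; \Bigl(\textstyle\sum_{i=1}^m \z^2(i)\Bigr),
\end{equation*}
which is a quadratic polynomial in $t$ with positive leading coefficient. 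A non-negative real quadratic $at^2 + bt + c$ with $a > 0$ must have non-positive discriminant $b^2 - 4ac \le 0$. Plugging in the coefficients above gives
\begin{equation*}
4\Bigl(\textstyle\sum_{i=1}^m \z(i)\x(i)\Bigr)^2 \;-\; 4\Bigl(\textstyle\sum_{i=1}^m \z^2(i)\Bigr)\Bigl(\textstyle\sum_{i=1}^m \x^2(i)\Bigr) \;\le\; 0,
\end{equation*}
which rearranges to the claimed inequality~\eqref{eqn:cauchy-entry} after dividing by $4$.

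For the equality case, I would argue that equality in~\eqref{eqn:cauchy-entry} is equivalent to the discriminant being exactly zero, which in turn is equivalent to the existence of a real root $t_0$ with $f(t_0) = 0$. Since $f(t_0)$ is a sum of squares, this forces $\z(i) = t_0 \x(i)$ for every $i \in \{1,\dots,m\}$, i.e.\ $\z$ is a scalar multiple of $\x$ (equivalently, the stated proportionality of coordinates when all $\x(i) \neq 0$). I do not anticipate a significant obstacle here: the only subtlety is being careful about the statement of the equality condition when some coordinates of $\x$ vanish, which is handled cleanly by phrasing it as ``$\z = t_0 \x$ for some $t_0 \in \mathbb{R}$'' rather than via coordinate-wise ratios. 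No step requires any machinery beyond elementary algebra, so the argument above should suffice in full rigor with only modest expansion into a formal proof.
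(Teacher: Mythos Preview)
Your proof is correct and complete: the discriminant-of-a-nonnegative-quadratic argument is the standard textbook proof and handles both the inequality and the equality case cleanly, including your careful remark about vanishing coordinates of $\x$.

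Note, however, that the paper does not actually prove this statement. It is recorded as a \emph{Fact} with a citation to Steele's book and is simply invoked as a black box in the subsequent analysis (e.g., in the proofs of Theorem~\ref{thm:edge-efficiency-l1}, Theorem~\ref{thm:edge-efficiency-add}, and Lemma~\ref{lem:super-l1}). So there is no ``paper's own proof'' to compare against; your argument stands on its own as a correct justification of a result the authors treat as background.
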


\header{\bf \hz The choice of $\boldsymbol{\theta(u,v)}$. } 
{\hz
We first present the optimal choice of $\theta(u,v)$ for \edgepush with $\ell_1$-error $\e$. Let us recap two results.  
Firstly, Lemma~\ref{lem:EdgePushErr} shows that the $\ell_1$-error is bounded by $\sum_{\la u,v\ra \in \bar{E}}\theta(u,v)$. On the other hand, by Lemma~\ref{lem:EdgePushCost}, when the source node is chosen according to the degree distribution, the overall expected running time is bounded by $O\left( \sum_{\la u,v\ra \in \bar{E}}\frac{(1-\alpha)\A_{uv}}{\alpha\cdot \|\A\|_1 \cdot \theta(u,v)} \right)$. Clearly, there is a trade-off between the error and the running time cost via the values of $\theta(u,v)$ for all $\la u, v\ra \in \bar{E}$. 
As a result, it suffices to aim at a setting of all $\theta(u,v)$'s such that: (i) the overall $\ell_1$-error $\sum_{\la u, v\ra \in \bar{E}} \theta(u,v)\hspace{-0.5mm}=\hspace{-0.5mm} \e$, and (ii) the quantity $Cost \hspace{-1mm} \triangleq \hspace{-1mm} \sum_{\la u,v\ra \in \bar{E}}\frac{(1-\alpha)\A_{uv}}{\alpha \|\A\|_1 \cdot \theta(u,v)}$ is minimized. Consequently, we prove the following theorem:
\begin{theorem}\label{thm:edge-efficiency-l1} 
	By setting $\theta(u,v)=\frac{\e \cdot \sqrt{\A_{uv}}}{\sum_{\la x,y \ra \in \bar{E}}\sqrt{\A_{xy}}}$ for each $\la u,v \ra\in \bar{E}$, the EdgePush algorithm returns an approximate SSPPR vector within $\ell_1$-error at most $\e$. In particular, when the source node is randomly chosen according to the degree distribution, the expected overall running time is bounded by $O\left(\frac{(1-\alpha)}{\alpha \e \|\A\|_1}\cdot \left(\sum_{\la u,v \ra\in \bar{E}}\sqrt{\A_{uv}}\right)^2 \right)$.
\end{theorem}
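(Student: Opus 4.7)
The plan is to derive the result in two parts: first verify the $\ell_1$-error bound is $\varepsilon$ under the proposed choice of $\theta(u,v)$, and then plug the same choice into the cost bound of Lemma~\ref{lem:EdgePushCost} and simplify. Optimality of this particular choice can be motivated (and, if needed, formally argued) through a direct application of Cauchy-Schwarz. I expect the whole argument to be short; the only nontrivial step is recognizing the Cauchy-Schwarz structure that pins down the optimal $\theta(u,v)$.

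\textbf{Step 1 (error).} By Lemma~\ref{lem:EdgePushErr}, the $\ell_1$-error of the returned estimator is at most $\sum_{\la u,v\ra \in \bar{E}} \theta(u,v)$. Plugging in the stated choice, the denominator $\sum_{\la x,y\ra \in \bar{E}} \sqrt{\A_{xy}}$ is a global constant, so the sum collapses as
\[
\sum_{\la u,v\ra \in \bar{E}} \theta(u,v) \;=\; \frac{\varepsilon \cdot \sum_{\la u,v\ra \in \bar{E}} \sqrt{\A_{uv}}}{\sum_{\la x,y\ra \in \bar{E}} \sqrt{\A_{xy}}} \;=\; \varepsilon ,
\]
which proves the error guarantee.

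\textbf{Step 2 (expected cost).} By Lemma~\ref{lem:EdgePushCost}, when the source is drawn from the degree distribution the expected running time is $O\bigl(\sum_{\la u,v\ra\in\bar{E}} \frac{(1-\alpha)\A_{uv}}{\alpha \|\A\|_1 \theta(u,v)} \bigr)$. Substituting $\theta(u,v) = \varepsilon \sqrt{\A_{uv}} / \sum_{\la x,y\ra \in \bar{E}} \sqrt{\A_{xy}}$ and using $\A_{uv}/\sqrt{\A_{uv}} = \sqrt{\A_{uv}}$, the sum becomes
\[
\frac{(1-\alpha)}{\alpha \varepsilon \|\A\|_1} \cdot \Bigl(\sum_{\la x,y\ra\in \bar{E}} \sqrt{\A_{xy}}\Bigr) \cdot \sum_{\la u,v\ra \in \bar{E}} \sqrt{\A_{uv}} \;=\; \frac{(1-\alpha)}{\alpha \varepsilon \|\A\|_1} \cdot \Bigl(\sum_{\la u,v\ra \in \bar{E}} \sqrt{\A_{uv}}\Bigr)^{\!2},
\]
matching the claimed bound.

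\textbf{Step 3 (why this choice is optimal).} The main (and only real) obstacle is to justify that the chosen $\theta(u,v)$ is in fact the best among all settings obeying the error constraint $\sum_{\la u,v\ra \in \bar{E}} \theta(u,v) \le \varepsilon$. For this, set $c_{uv} = \frac{(1-\alpha)\A_{uv}}{\alpha\|\A\|_1}$ and apply the Cauchy-Schwarz inequality (Fact~\ref{fact:cauchy}) to the vectors $\bigl\{\sqrt{c_{uv}/\theta(u,v)}\bigr\}$ and $\bigl\{\sqrt{\theta(u,v)}\bigr\}$:
\[
\Bigl(\sum_{\la u,v\ra \in \bar{E}} \sqrt{c_{uv}}\Bigr)^{\!2} \;\le\; \Bigl(\sum_{\la u,v\ra \in \bar{E}} \frac{c_{uv}}{\theta(u,v)}\Bigr) \Bigl(\sum_{\la u,v\ra \in \bar{E}} \theta(u,v)\Bigr),
\]
so the cost is lower bounded by $\frac{1}{\varepsilon}\bigl(\sum \sqrt{c_{uv}}\bigr)^{2}$. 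The equality condition of Cauchy-Schwarz requires $\theta(u,v) \propto \sqrt{c_{uv}} \propto \sqrt{\A_{uv}}$, which together with the normalization $\sum \theta(u,v) = \varepsilon$ yields exactly the stated setting. This both explains why the particular formula arises and confirms that the resulting time bound is optimal under the framework of Lemmas~\ref{lem:EdgePushCost} and~\ref{lem:EdgePushErr}.
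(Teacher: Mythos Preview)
Your proposal is correct and follows essentially the same approach as the paper: both verify the $\ell_1$-error via Lemma~\ref{lem:EdgePushErr}, substitute the stated $\theta(u,v)$ into the cost bound of Lemma~\ref{lem:EdgePushCost}, and then invoke Cauchy--Schwarz to argue optimality of this choice. Your Step~3 spells out the Cauchy--Schwarz application slightly more explicitly than the paper does, but the underlying argument is identical.
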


Likewise, we can derive the optimal choice of $\theta(u,v)$ for \edgepush with normalized additive error $r_{\max}$, illustrated in Theorem~\ref{thm:edge-efficiency-add}. 

}

\begin{theorem}\label{thm:edge-efficiency-add} 
\vspace{-1mm}
By setting $\theta(u,v)=\frac{r_{\max} \cdot d(v)\sqrt{\A_{uv}}}{\sum_{x\in N(v)}\sqrt{\A_{xv}}}$ for each $\la u,v \ra\in \bar{E}$, 
the EdgePush algorithm returns an approximate SSPPR vector within normalized additive error at most $r_{max}$. 
When the source node is randomly chosen according to the degree distribution, 
the expected overall running time is bounded by $O\left(\hspace{-1mm}\frac{(1-\alpha)}{\alpha r_{\max} \|\A\|_1}\hspace{-0.5mm}\cdot \hspace{-0.5mm} \sum_{v\in V}\hspace{-1mm}\frac{\left(\sum_{x\in N(v)}\hspace{-1mm}\sqrt{\A_{xv}}\right)^2 }{d(v)}\hspace{-0.5mm}\right)$.
\end{theorem}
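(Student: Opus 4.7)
The plan is to verify the two claims---the normalized additive error bound and the expected running time bound---by directly substituting the prescribed $\theta(u,v) = \frac{r_{\max}\cdot d(v)\sqrt{\A_{uv}}}{\sum_{x\in N(v)}\sqrt{\A_{xv}}}$ into Lemma~\ref{lem:EdgePushErr-ad} and Lemma~\ref{lem:EdgePushCost}, respectively. The key structural observation motivating this choice is that $\theta(u,v)$ factors as $\sqrt{\A_{uv}}$ times a quantity depending only on the head endpoint $v$, which is precisely what is needed to exploit the PPR reversibility identity $d(v)\,\vpi_v(t) = d(t)\,\vpi_t(v)$ on undirected graphs.

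For the error bound, I would start from Lemma~\ref{lem:EdgePushErr-ad}, which states that the normalized additive error at node $t$ equals $\frac{1}{d(t)}\sum_{\la u,v\ra\in\bar{E}}\theta(u,v)\,\vpi_v(t)$. Grouping the sum by the head endpoint $v$ and plugging in the prescribed $\theta(u,v)$, the inner sum $\sum_{u\in N(v)}\sqrt{\A_{uv}}$ cancels exactly the denominator of $\theta(u,v)$, so the error at $t$ reduces to $\frac{r_{\max}}{d(t)}\sum_{v\in V} d(v)\,\vpi_v(t)$. Applying reversibility to each term gives $d(v)\,\vpi_v(t) = d(t)\,\vpi_t(v)$, so the inner sum equals $d(t)\sum_{v\in V}\vpi_t(v) = d(t)$, and the error collapses to $r_{\max}$ uniformly in $t$. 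Reversibility itself is a standard consequence of the similarity $(\I-(1-\alpha)\P)^{-1} = \D^{1/2}\,\tilde{M}\,\D^{-1/2}$ with $\tilde{M} = (\I - (1-\alpha)\D^{-1/2}\A\D^{-1/2})^{-1}$ symmetric; I would cite it or include a one-line derivation in the appendix.

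For the expected running time, I would substitute the same $\theta(u,v)$ into the cost bound of Lemma~\ref{lem:EdgePushCost}. After one factor of $\sqrt{\A_{uv}}$ cancels against $\A_{uv}$ in the numerator, the sum over directed edges regroups by $v$ as
\[
\sum_{v\in V}\frac{1}{d(v)}\Bigl(\sum_{x\in N(v)}\sqrt{\A_{xv}}\Bigr)\Bigl(\sum_{u\in N(v)}\sqrt{\A_{uv}}\Bigr) \;=\; \sum_{v\in V}\frac{\bigl(\sum_{x\in N(v)}\sqrt{\A_{xv}}\bigr)^2}{d(v)},
\]
and multiplying by the constant prefactor $\frac{1-\alpha}{\alpha\, r_{\max}\,\|\A\|_1}$ yields exactly the claimed bound.

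The main conceptual obstacle, which distinguishes this from the $\ell_1$-error case handled in Theorem~\ref{thm:edge-efficiency-l1}, is that the additive-error guarantee must hold simultaneously for every target $t\in V$, producing $n$ coupled constraints rather than a single scalar constraint. Reversibility is exactly the device that collapses these $n$ constraints into the single identity $\sum_v d(v)\,\vpi_v(t) = d(t)$, which in turn permits a clean closed-form $\theta(u,v)$ of the factored shape $\sqrt{\A_{uv}}\cdot f(v)$. Once this factoring is in place, the remaining calculations are routine, and a Cauchy--Schwarz minimization of $Cost\cdot r_{\max}$ in the spirit of the $\ell_1$ derivation further certifies that the $\sqrt{\A_{uv}}/\sum_{x\in N(v)}\sqrt{\A_{xv}}$ shape is optimal in the resulting per-$v$ error--cost trade-off.
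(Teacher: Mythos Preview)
Your proposal is correct and follows essentially the same approach as the paper: both substitute the prescribed $\theta(u,v)$ into Lemma~\ref{lem:EdgePushErr-ad} and Lemma~\ref{lem:EdgePushCost}, exploit the cancellation $\sum_{u\in N(v)}\theta(u,v)=r_{\max}\,d(v)$, and invoke the undirected-graph reversibility identity $d(v)\,\vpi_v(t)=d(t)\,\vpi_t(v)$ together with $\sum_v \vpi_t(v)=1$. The only cosmetic difference is that the paper packages the per-node condition $\sum_{u\in N(v)}\theta(u,v)\le r_{\max}\,d(v)$ as a standalone sufficient criterion (Lemma~\ref{lem:localerr}) before verifying it, whereas you perform the same computation inline.
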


\vspace{-3mm}
\subsection{Comparison to the \localpush Algorithm}\label{subsec:individual}
Next, we show the superiority of \edgepush over \localpush. 
To facilitate our analysis, we first define the following four {\em characteristic vectors} of a given undirected weighted graph:

\begin{definition}[characteristic vectors on weighted graphs]\label{def:vec}
\vspace{-1mm}
	Consider an undirected weighted graph $G=(V,E)$ with $n$ nodes, $m$ edges and $\bar{E}$ being the set of the bi-directional edges of every edge in $E$; clearly, $|\bar{E}| = 2 |E| = 2m$.
	Denote the (weighted) adjacency matrix by $\A$.
	We define four characteristic vectors $\z, \x, \z_v$ and $\x_v$ of $G$ as follows: 
\begin{itemize}
\vspace{-1mm}
	\item $\z \in \mathbb{R}^{2m}$: the vector whose the $i^\text{th}$ entry $\z(i) =\sqrt{\A_{uv}}$ 
		corresponds  to the $i^\text{th}$ edge $ \la u, v \ra \in \bar{E}$; 
	\item $\x \in \mathbb{R}^{2m}$: an all-one vector
		in the $2m$-dimensional space;
	\item $\z_v \in \mathbb{R}^{n_v}$ for each node $v \in V$: the vector whose the $j^\text{th}$ entry $\z_v(j)=\sqrt{\A_{uv}}$ corresponds to the $j^\text{th}$ neighbor node $u$ in $N_v$; 
	\item $\x_v \in \mathbb{R}^{n_v}$ for each node $v\in V$: an all-one vector in the $n_v$-dimensional space.
\end{itemize}
\end{definition}
Then the improvement of \edgepush over \localpush can be quantified by the the above characteristic vectors.
\begin{lemma}[Superiority of \edgepush with $\ell_1$-error] \label{lem:cos-l1} 
For the approximate SSPPR queries with $\ell_1$-error $\e$, 
we have
\begin{align}\label{eqn:impro-l1}
\vspace{-1mm}
	\frac{(1-\alpha)}{\alpha \e \|\A\|_1}\cdot \left(\sum_{\la u,v \ra \in \bar{E}} \sqrt{\A_{uv}}\right)^2= \left((1-\alpha) \cos^2 \p \right) \cdot \frac{2m}{\alpha \e}\,,
\end{align}
where $\p$ is the angle between the characteristic vectors $\z$ and $\x$. 
\end{lemma}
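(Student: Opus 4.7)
The plan is to reduce the claimed equality to a direct calculation of the three standard quantities $\langle \z, \x\rangle$, $\|\z\|^2$, and $\|\x\|^2$ attached to the characteristic vectors introduced in Definition~\ref{def:vec}, and then to use the definition of the angle $\p$ between $\z$ and $\x$, namely $\cos\p = \langle \z, \x\rangle / (\|\z\| \|\x\|)$.

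First, I would unfold the inner product. Since $\x$ is the all-one vector of dimension $2m = |\bar{E}|$ and the $i$-th entry of $\z$ is $\sqrt{\A_{uv}}$ for the $i$-th edge $\la u,v \ra \in \bar{E}$, we get $\langle \z, \x\rangle = \sum_{\la u,v\ra \in \bar{E}} \sqrt{\A_{uv}}$. Second, I would compute the two squared norms: $\|\z\|^2 = \sum_{\la u,v\ra \in \bar{E}} \A_{uv} = \|\A\|_1$ by the definition of $\|\A\|_1$ given in Section~\ref{sec:pre}, and $\|\x\|^2 = 2m$ since $\x$ is the all-one vector of length $|\bar{E}| = 2m$.

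Third, I would combine these pieces through $\cos^2\p = \langle \z, \x\rangle^2 / (\|\z\|^2 \|\x\|^2)$ to obtain the identity
\begin{equation*}
\left(\sum_{\la u,v\ra \in \bar{E}} \sqrt{\A_{uv}}\right)^2 = \cos^2\p \cdot \|\A\|_1 \cdot 2m.
\end{equation*}
Substituting the right-hand side into $\frac{(1-\alpha)}{\alpha \e \|\A\|_1}\cdot \left(\sum_{\la u,v \ra \in \bar{E}} \sqrt{\A_{uv}}\right)^2$ and cancelling the factor $\|\A\|_1$ yields exactly $\left((1-\alpha)\cos^2\p\right)\cdot \frac{2m}{\alpha \e}$, which is the claimed right-hand side.

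There is no substantive obstacle here; the lemma is really a bookkeeping identity that reinterprets Theorem~\ref{thm:edge-efficiency-l1}'s running time bound through the geometric quantity $\cos^2\p$. The only point to be careful about is the dimensionality convention, i.e.\ that the sum runs over the bi-directional edge set $\bar{E}$ with $|\bar{E}| = 2m$ (matching the length of $\x$) rather than over $E$, so that $\|\z\|^2$ coincides with $\|\A\|_1$ as defined in Section~\ref{sec:pre}.
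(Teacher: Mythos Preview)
Your proposal is correct and takes essentially the same approach as the paper: compute $\langle \z,\x\rangle$, $\|\z\|^2=\|\A\|_1$, $\|\x\|^2=2m$, and then invoke the definition $\cos^2\p = \langle \z,\x\rangle^2/(\|\z\|^2\|\x\|^2)$ to obtain the identity $\left(\sum_{\la u,v\ra\in\bar{E}}\sqrt{\A_{uv}}\right)^2 = 2m\cdot\|\A\|_1\cdot\cos^2\p$, from which the lemma follows by substitution.
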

We note that the left hand side of Equation~\eqref{eqn:impro-l1} is the overall expected running time of \edgepush and $\frac{2m}{\alpha \e}$ is that of \localpush, both expressed by ignoring the Big-Oh notation.
To see the correctness of Lemma~\ref{lem:cos-l1}, 
observe that $2m \cdot \|\A\|_1 \cdot \cos^2 \p  =\left( \sum_{\la u,v \ra \in \bar{E}}1 \right)\cdot \left( \sum_{\la u,v \ra \in \bar{E}}\A_{uv} \right) \cdot \cos^2 \p = \|\x\|^2 \cdot \|\z\|^2 \cdot \cos^2 \p = \la \x, \z \ra^2 =  \left(\sum_{\la u,v \ra \in \bar{E}} \sqrt{\A_{uv}}\right)^2$. 
A more detailed proof can be found in the appendix. 
 
Likewise, the superiority of \edgepush for the SSPPR queries with normalized additive error can be quantified as follows.
\begin{lemma} [Superiority of \edgepush with normalized additive error] \label{lem:cos-add}
For the approximate SSPPR queries with specified normalized additive error $r_{\max}$, 
the expected overall running time of EdgePush is at most a portion $\frac{(1-\alpha)}{2m}\cdot \left(\sum_{v\in V}n_v \cdot \cos^2 \p_v\right)$ of the LocalPush's running time cost, 
where $\p_v$ is the angle between vectors $\z_v$ and $\x_v$. 
Specifically, we have
\vspace{-1mm}
\begin{align}\nonumber
\vspace{-4mm}
	\frac{(1\hspace{-0.5mm}-\hspace{-0.5mm}\alpha)}{\alpha r_{\max} \|\A\|_1}\hspace{-0.5mm}\cdot \hspace{-1.5mm} \sum\limits_{v\in V}\hspace{-1.5mm}\frac{\left(\hspace{-0.5mm}\sum\limits_{x\in N(v)}\hspace{-4mm}\sqrt{\A_{xv}}\hspace{-0.5mm}\right)^2 }{d(v)}\hspace{-0.5mm}=\hspace{-0.5mm} \frac{(1\hspace{-0.5mm}-\hspace{-0.5mm}\alpha)\hspace{-0.5mm}\cdot \hspace{-0.5mm}\left(\hspace{-0.5mm}\sum\limits_{v\in V}\hspace{-1.5mm}n(v) \hspace{-0.5mm}\cdot \hspace{-0.5mm}\cos^2 \hspace{-0.5mm}\p_v\hspace{-0.5mm}\right)}{2m}\hspace{-0.5mm} \cdot \hspace{-0.5mm}\frac{2m}{\alpha r_{\max}\|\A\|_1}. 
\vspace{-2mm}
\end{align}
\end{lemma}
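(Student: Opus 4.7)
The plan is to reduce this lemma to the same Cauchy--Schwarz identity used in Lemma~\ref{lem:cos-l1}, but applied \emph{node-by-node} rather than globally over $\bar{E}$. The only new ingredient is the observation that $\|\z_v\|^2$ equals the weighted degree $d(v)$, which is what makes the factor $d(v)$ in the EdgePush cost cancel cleanly.

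First, I would recall the two running-time expressions that frame the lemma. By Theorem~\ref{thm:edge-efficiency-add}, with $\theta(u,v) = r_{\max} d(v)\sqrt{\A_{uv}} / \sum_{x\in N(v)}\sqrt{\A_{xv}}$, the expected cost of EdgePush is
\[
    O\!\left(\frac{1-\alpha}{\alpha r_{\max} \|\A\|_1}\sum_{v\in V}\frac{\left(\sum_{x\in N(v)}\sqrt{\A_{xv}}\right)^{2}}{d(v)}\right),
\]
while by Fact~\ref{thm:bound-local-addErr} the expected cost of LocalPush is $O\!\left(\tfrac{m}{\alpha r_{\max}\|\A\|_1}\right) = O\!\left(\tfrac{2m}{\alpha r_{\max}\|\A\|_1}\right)$. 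So the task reduces to algebraically rewriting the per-node summand $\left(\sum_{x\in N(v)}\sqrt{\A_{xv}}\right)^{2}/d(v)$ in terms of $n(v)\cos^2\p_v$.

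Next, I would invoke Definition~\ref{def:vec}: by construction $\sum_{x\in N(v)}\sqrt{\A_{xv}} = \langle \z_v, \x_v\rangle$, and hence by the geometric form of Cauchy--Schwarz,
\[
    \left(\sum_{x\in N(v)}\sqrt{\A_{xv}}\right)^{2} \;=\; \langle \z_v,\x_v\rangle^2 \;=\; \|\z_v\|^{2}\,\|\x_v\|^{2}\,\cos^{2}\p_v.
\]
Now $\|\z_v\|^{2} = \sum_{x\in N(v)}\A_{xv} = d(v)$ (matching the weighted-degree definition given in Section~\ref{sec:pre}) and $\|\x_v\|^{2} = n(v)$. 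Dividing by $d(v)$ therefore gives $\left(\sum_{x\in N(v)}\sqrt{\A_{xv}}\right)^{2}/d(v) = n(v)\cos^{2}\p_v$. Summing over $v\in V$ yields
\[
    \sum_{v\in V}\frac{\left(\sum_{x\in N(v)}\sqrt{\A_{xv}}\right)^{2}}{d(v)} \;=\; \sum_{v\in V} n(v)\cos^{2}\p_v.
\]

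Finally, I would substitute this identity into the EdgePush cost and factor out the LocalPush benchmark. Multiplying and dividing by $2m$ gives
\[
    \frac{1-\alpha}{\alpha r_{\max}\|\A\|_1}\sum_{v\in V} n(v)\cos^{2}\p_v \;=\; \frac{(1-\alpha)\!\sum_{v\in V} n(v)\cos^{2}\p_v}{2m}\cdot\frac{2m}{\alpha r_{\max}\|\A\|_1},
\]
which is exactly the claimed decomposition; the trailing factor is the LocalPush cost and the leading factor is the improvement ratio. The main obstacle is really just bookkeeping---matching the $|\bar{E}|=2m$ convention between the EdgePush and LocalPush expressions, and verifying that the $d(v)$ in the denominator of the per-node summand is precisely absorbed by $\|\z_v\|^2$ so that Cauchy--Schwarz produces the clean $n(v)\cos^2\p_v$ form; there is no analytic subtlety beyond Lemma~\ref{lem:cos-l1}.
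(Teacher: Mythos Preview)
Your proposal is correct and follows essentially the same approach as the paper: both reduce the identity to the per-node equality $\bigl(\sum_{x\in N(v)}\sqrt{\A_{xv}}\bigr)^2 = d(v)\,n(v)\cos^2\p_v$ via the geometric form of Cauchy--Schwarz, using $\|\z_v\|^2 = d(v)$ and $\|\x_v\|^2 = n(v)$, and then sum over $v$. The only cosmetic difference is that you frame the argument by first recalling the EdgePush and LocalPush cost expressions, whereas the paper proves the algebraic identity directly.
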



{\header {\bf Superiority of \edgepush over \localpush}. } Based on Lemma~\ref{lem:cos-l1} and Lemma~\ref{lem:cos-add}, we can derive several interesting observations: 
\begin{itemize}
	\item First, $\cos^2 \p \le 1$ holds for all values of $\p$ (also applies for $\forall \p_v$). 
		This implies that the overall expected running time bound of \edgepush is never worse than 
		that of \localpush, regardless of the SSPPR queries with either $\ell_1$-error or normalized additive error.
	\item Second, 	
		when $\cos^2 \p = \Theta(1/n)$ (resp., $\cos^2 \p_v = \Theta(1/n)$ for $\forall v$), 
		\edgepush outperforms \localpush in terms of efficiency by a $\Theta(n)$ factor for answering SSPPR queries with $\ell_1$-error (resp., normalized additive error).
		This case could happen (but not necessarily) when all the nodes in a complete graph $G$ are $(a, b)$-unbalanced with $a = 1/n$ and $b = 1 - 1/n$. As an example, one can consider the case that each node in $G$ shares the same structure as node $u$ shown in Figure~\ref{fig:weighted_graph}.
	\item Third, when $\cos^2 \p = o(1)$, e.g. $\cos^2 \p = 1 / \log m$, \edgepush can achieve a {\em sub-linear} expected time   
		complexity $o(\frac{m}{\alpha \e})$ for solving the approximate SSPPR problem with specified $\ell_1$-error $\e$.
		This is impressive because \edgepush can answer the SSPPR query even without ``touching'' all the edges of $G$.
		It can be verified that the aforementioned complete graph example satisfies this condition with $\cos^2 \p = 1/ n$.
\end{itemize}

\begin{table}[t]
	\centering
	\caption{Real-World Datasets.}
	\vspace{-4mm}
	\begin{small}
		\begin{tabular}{|@{\hspace{+1mm}}l@{\hspace{+0.8mm}}|@{\hspace{+0.8mm}}r@{\hspace{+0.8mm}}|@{\hspace{+0.8mm}}r@{\hspace{+0.8mm}}|r|r|r@{\hspace{+1mm}}|} 
			\hline
			\multirow{2}{*}{{\bf Dataset}}& \multirow{2}{*}{{\bf $\boldsymbol{n}$}} & \multirow{2}{*}{{\bf $\boldsymbol{m}$}} & \multicolumn{2}{c|}{{\bf \hz Edge weight}} & \multirow{2}{*}{{\bf {\hz $\boldsymbol{{\cos}^2 \p}$}}}	\\ \cline{4-5}
			~ & & & {\bf mean} & {\bf max} & \\ \hline
			YouTube (YT) &  1,138,499 & 2,795,228 & 6.6 & 4,034 & 0.65 \\
			LiveJournal (LJ) & 4,847,571 & 71,062,058 &24 & 4,445 & 0.51\\
		    IndoChina (IC)& 7,414,768  &295,191,370	& 1,221 & 178,448 & 0.31\\
			Orkut-Links (OL) & 3,072,441 & 202,392,682 & 18 & 9,145 & 0.69\\	
			{\hz Tags (TA)} & {\hz 49,945} & {\hz 8,294,604} & {\hz 13} & {\hz 469,258} & {\hz 0.27}\\ 
			{\hz Threads (TH)} & {\hz 2,321,767} & {\hz 42,012,344} & {\hz 1.1} & {\hz 546} & {\hz 0.97}\\ 
			{\hz blockchair (BC)} & {\hz 595,753} & {\hz 1,773,544} &  {\hz 5.2} & {\hz 17,165} & {\hz 0.5}\\ 
			{\hz Spotify (SP)} & {\hz 3,604,308} & {\hz 3,854,964,026} & {\hz 8.6} & {\hz 2,878,970} & {\hz 0.29}\\ 
			\hline
			
		\end{tabular}
	\end{small}
	\label{tbl:datasets}
	\vspace{-3mm}
\end{table}

\begin{figure*}[t]
	\begin{minipage}[t]{1\textwidth}
		\centering
		\begin{tabular}{cccc}
			\hspace{-4mm} \includegraphics[width=43mm]{./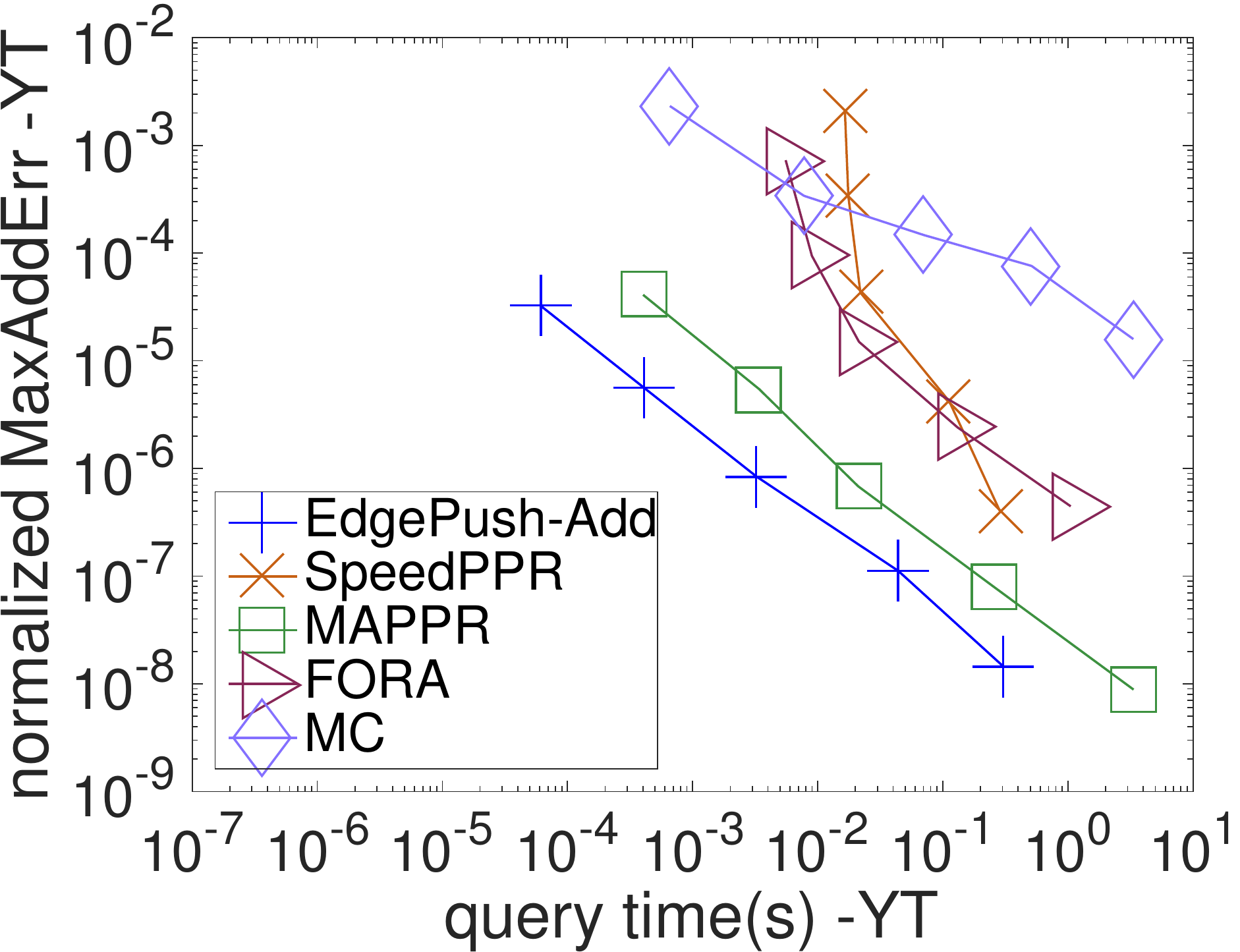} &
			\hspace{-3mm} \includegraphics[width=43mm]{./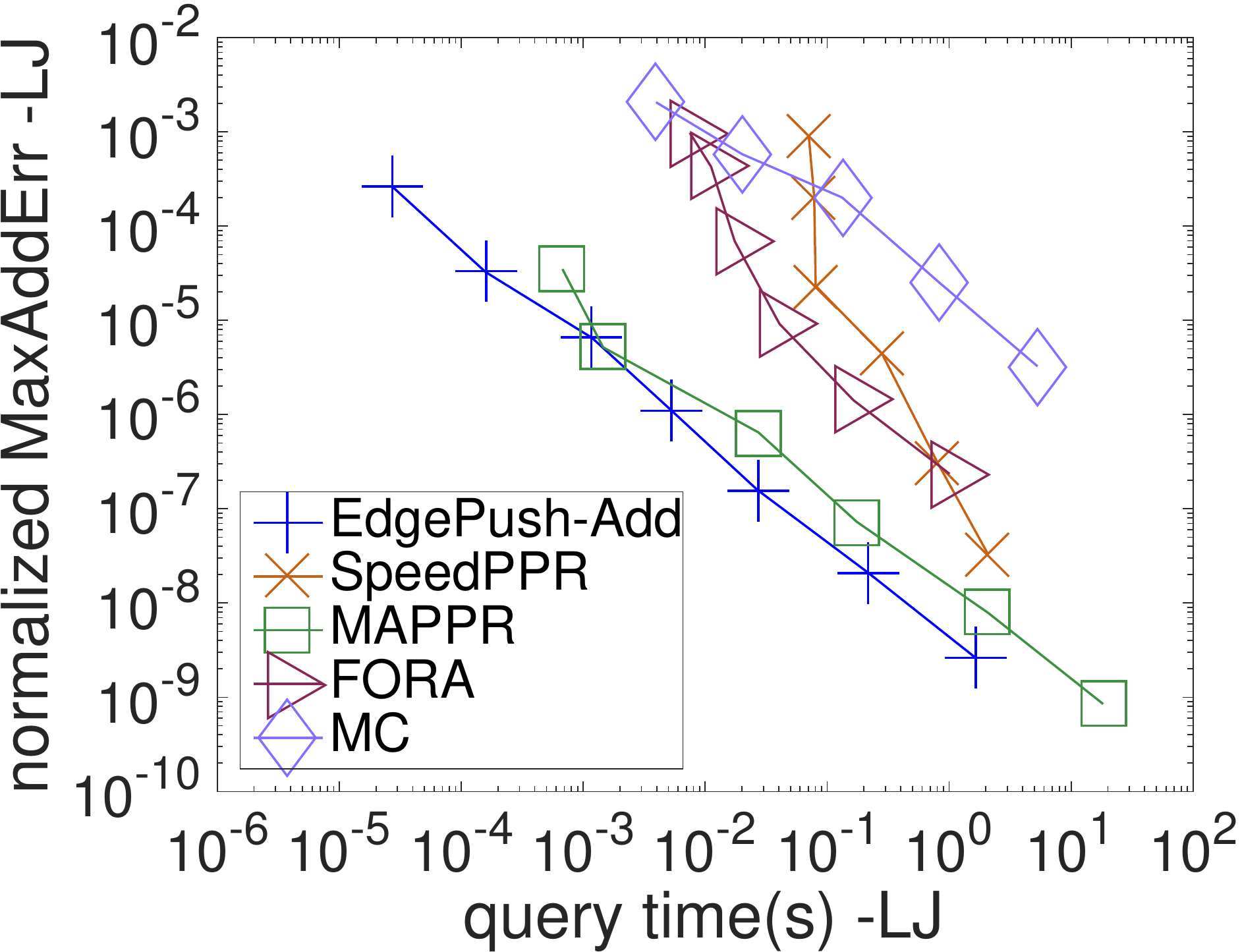} &
			\hspace{-3mm} \includegraphics[width=43mm]{./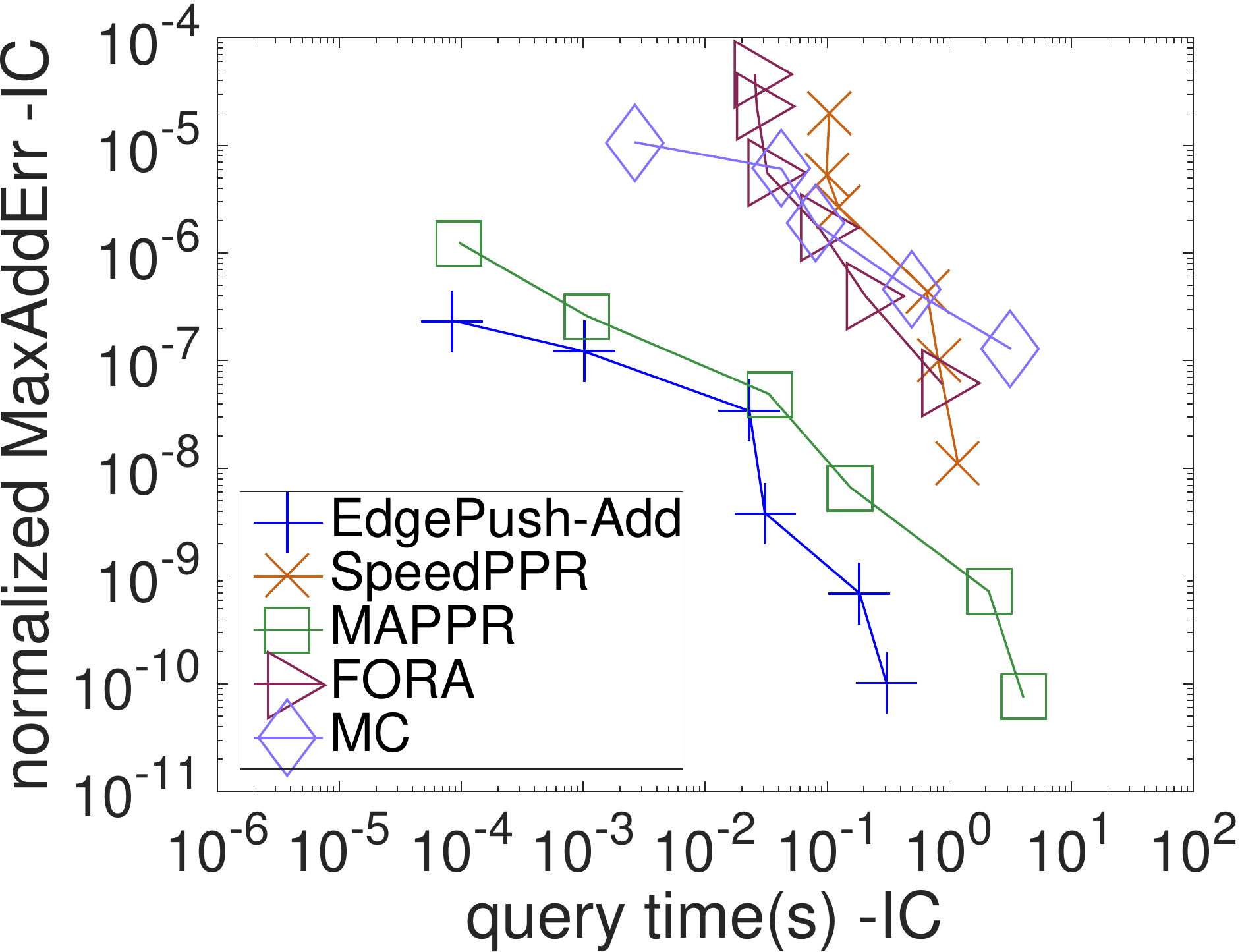} &
			\hspace{-3mm} \includegraphics[width=43mm]{./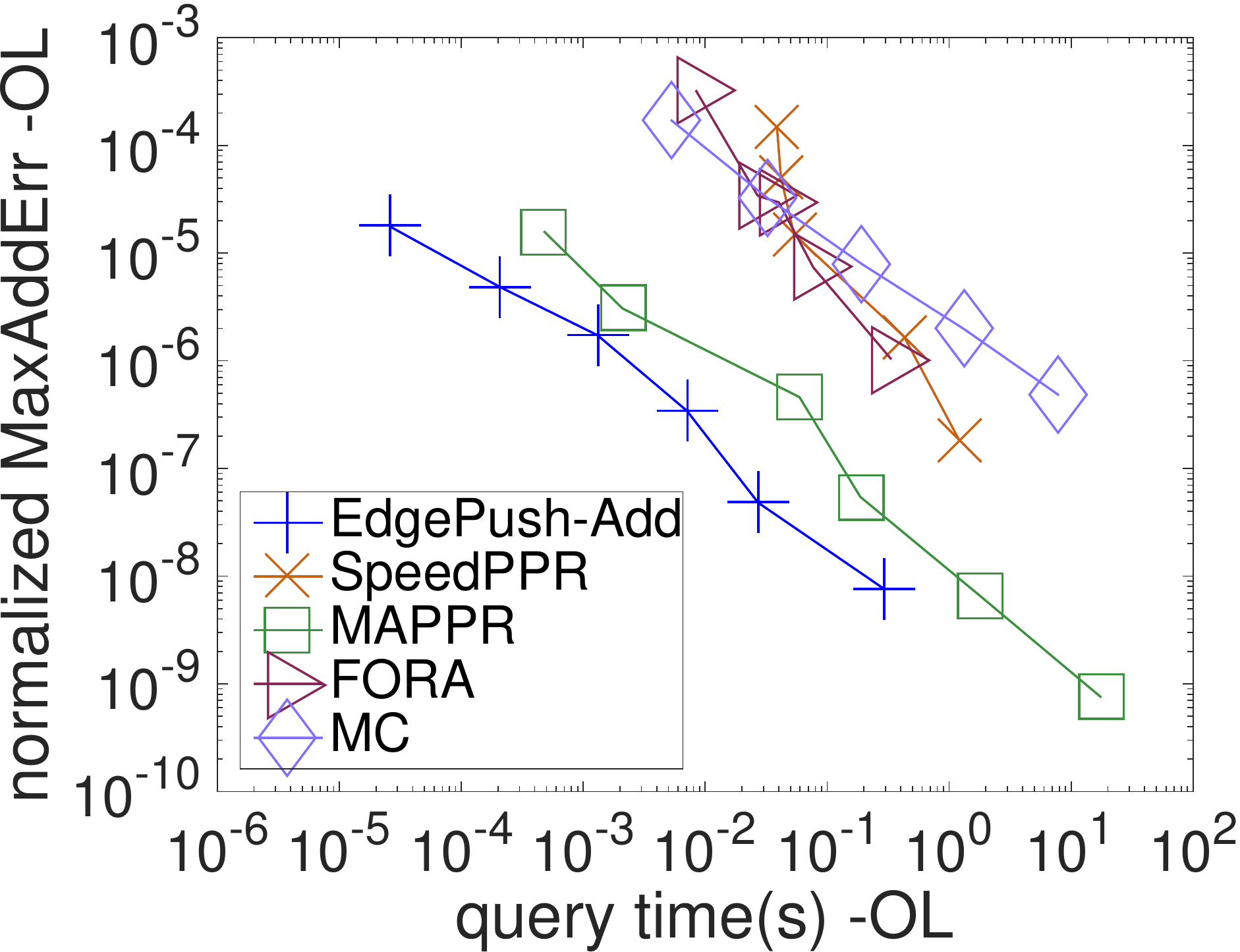} 
		\end{tabular}
		\vspace{-5mm}
		\caption{\hz {\em normalized MaxAddErr} v.s. query time on motif-based weighted graphs.}
		\label{fig:maxerror-query-real_pro}
		\vspace{-1mm}
    \end{minipage}


    \begin{minipage}[t]{1\textwidth}
		\centering
		\vspace{+0.5mm}
		\begin{tabular}{cccc}
			\hspace{-4mm} \includegraphics[width=43mm]{./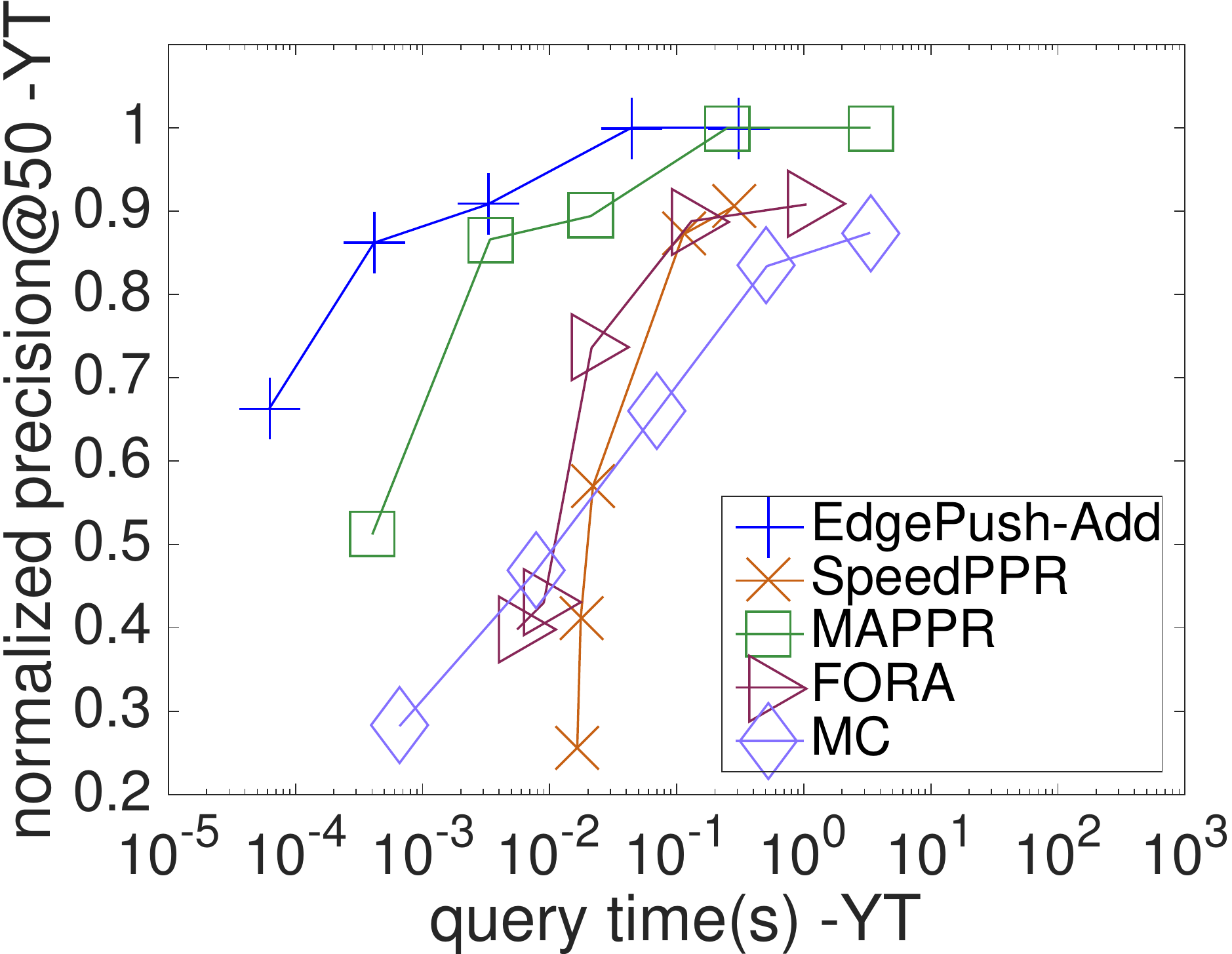} &
			\hspace{-3mm} \includegraphics[width=43mm]{./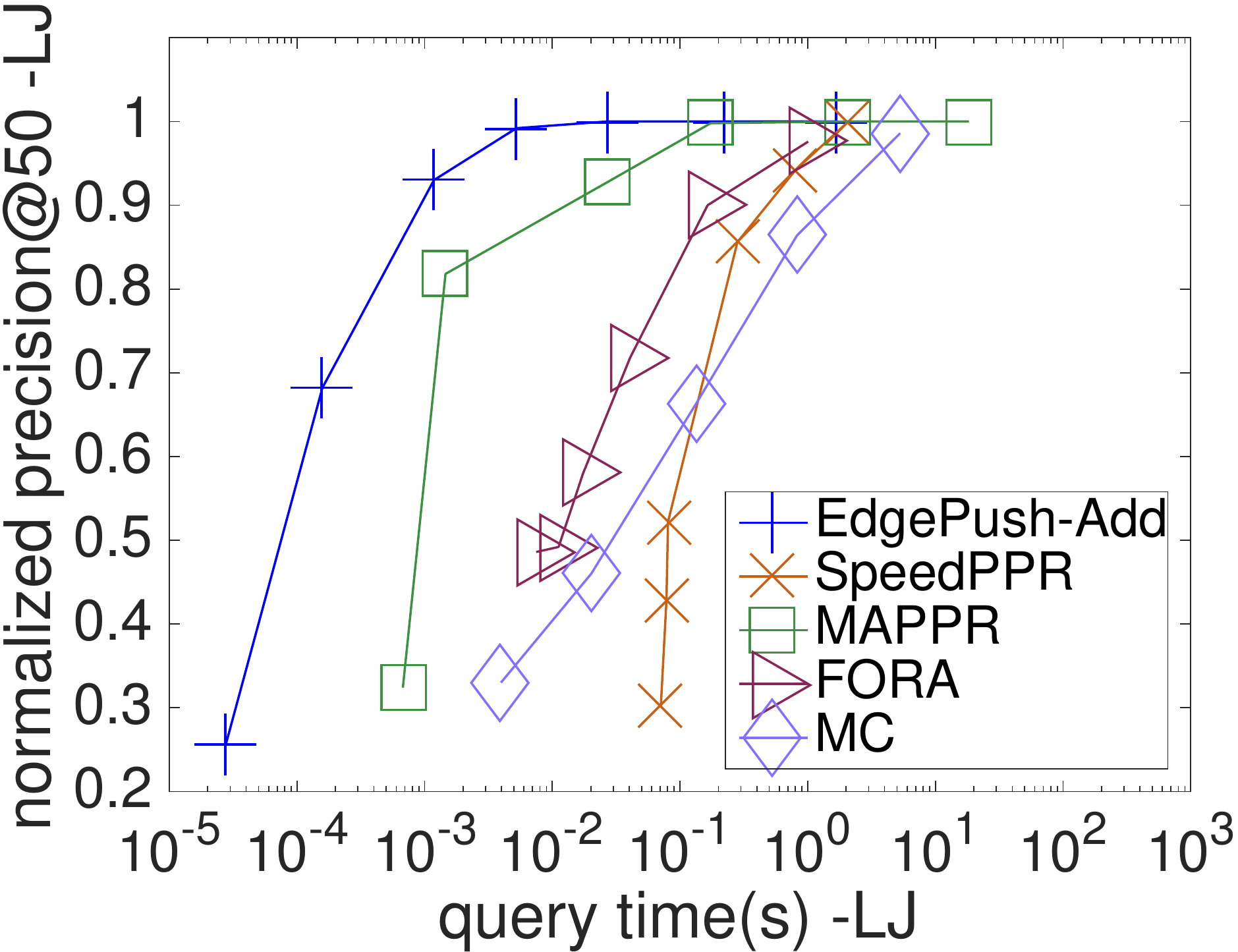} &
			\hspace{-3mm} \includegraphics[width=43mm]{./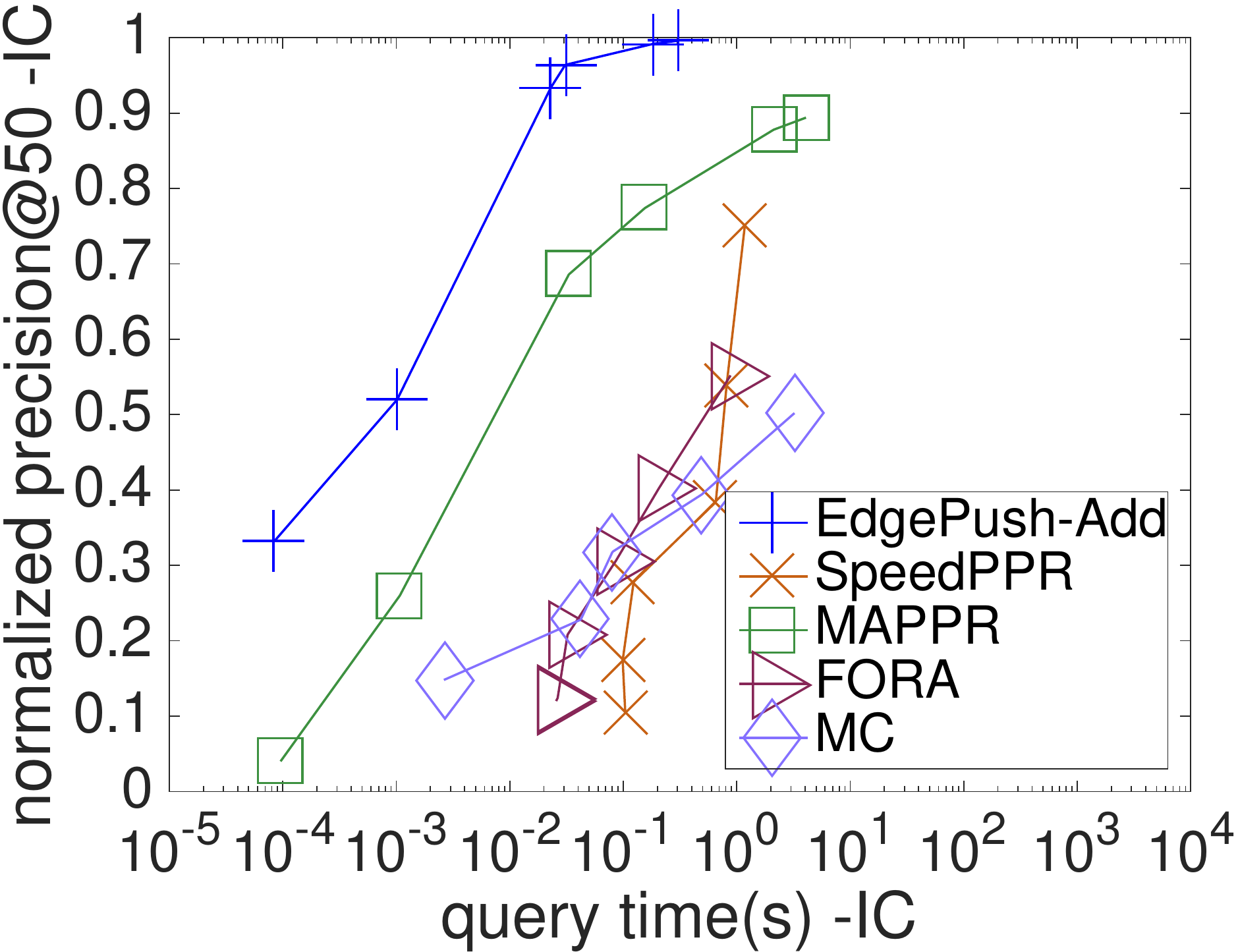} &
			\hspace{-3mm} \includegraphics[width=43mm]{./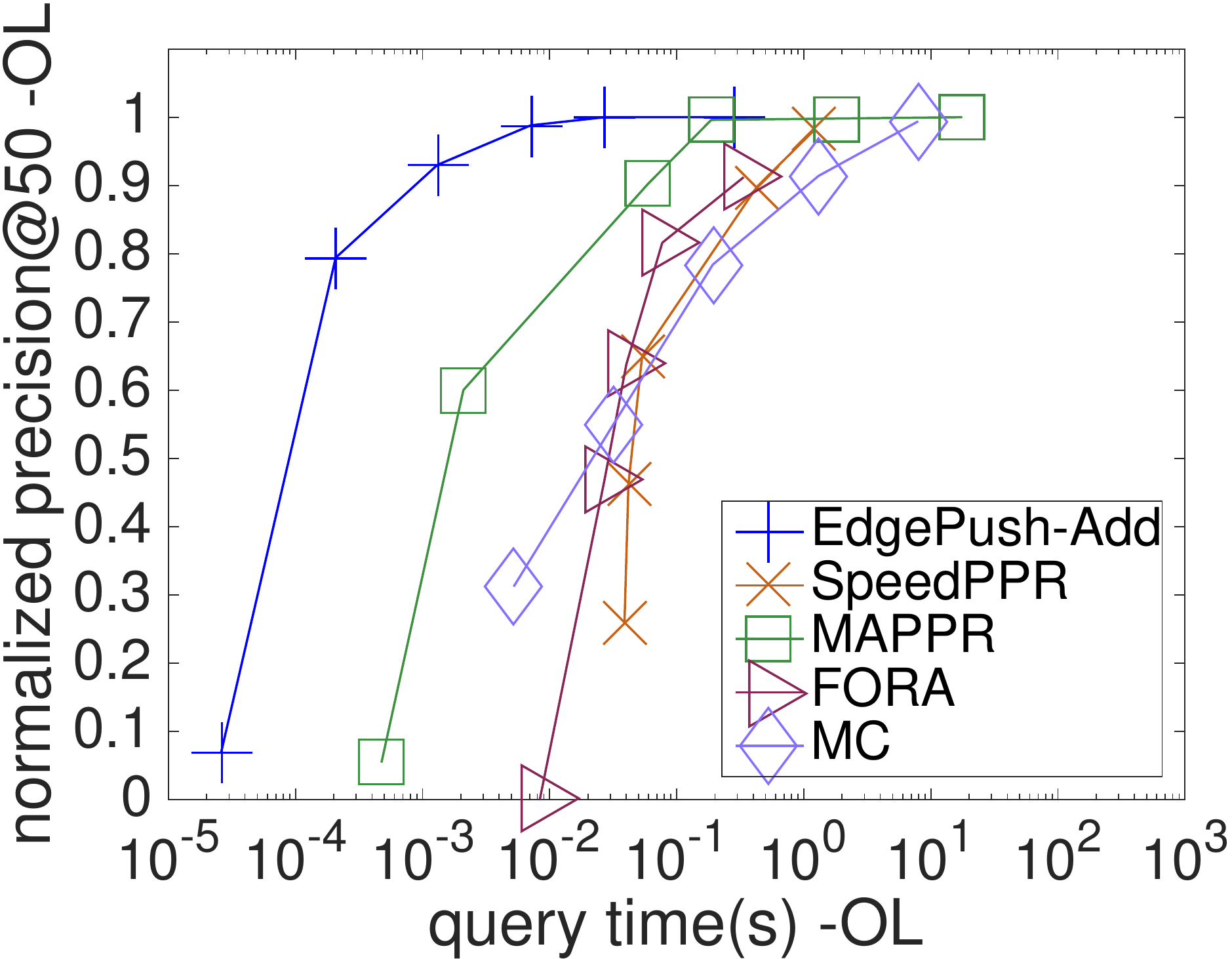} 
		\end{tabular}
		\vspace{-5mm}
		\caption{\hz {\em normalized precision@50} v.s. query time on motif-based weighted graphs.}
		\label{fig:precision-query-real_pro}
		\vspace{-1mm}
	\end{minipage}

	\begin{minipage}[t]{1\textwidth}
		\centering
		\vspace{+0.5mm}
		\begin{tabular}{cccc}
			\hspace{-4mm} \includegraphics[width=43mm]{./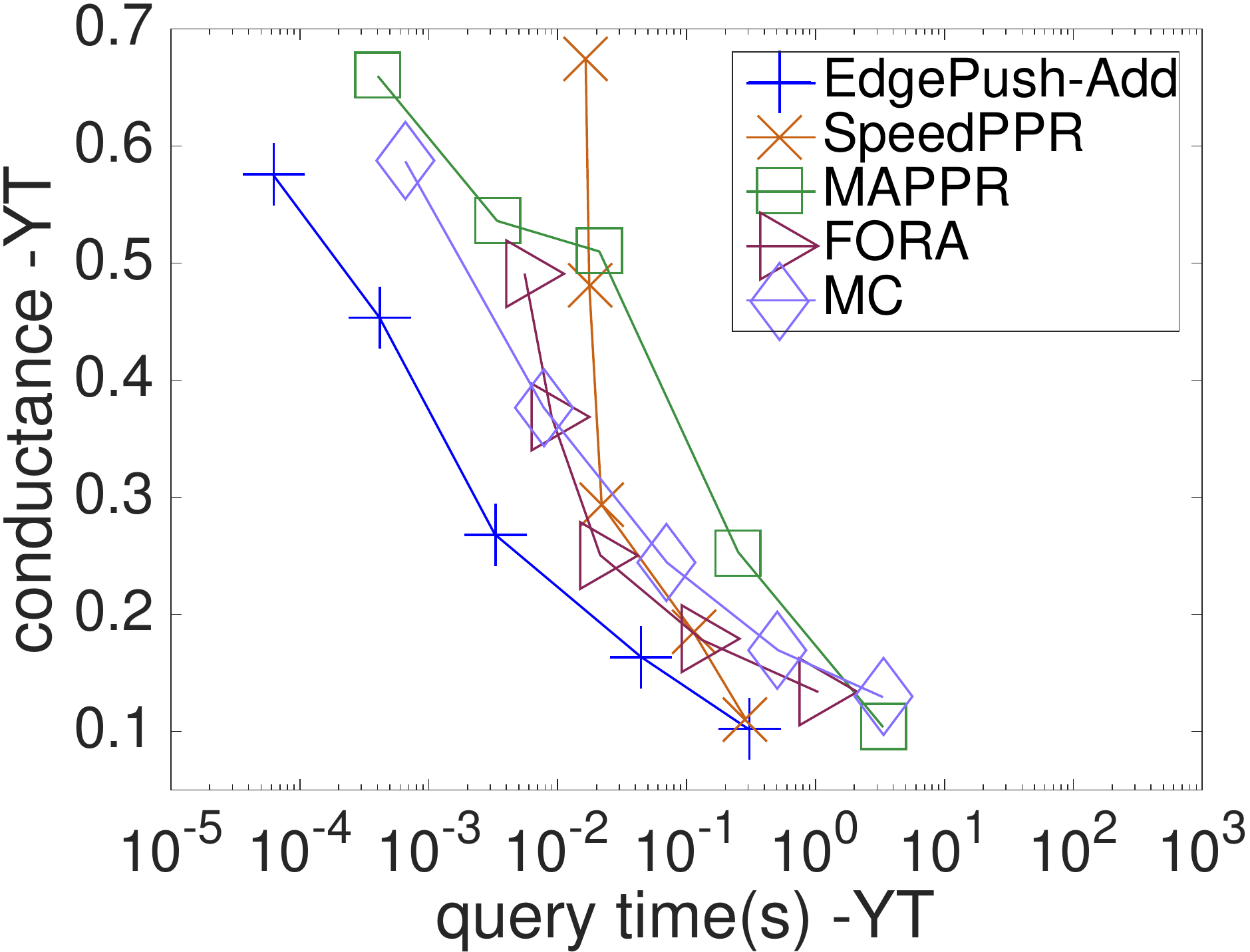} &
			\hspace{-3mm} \includegraphics[width=43mm]{./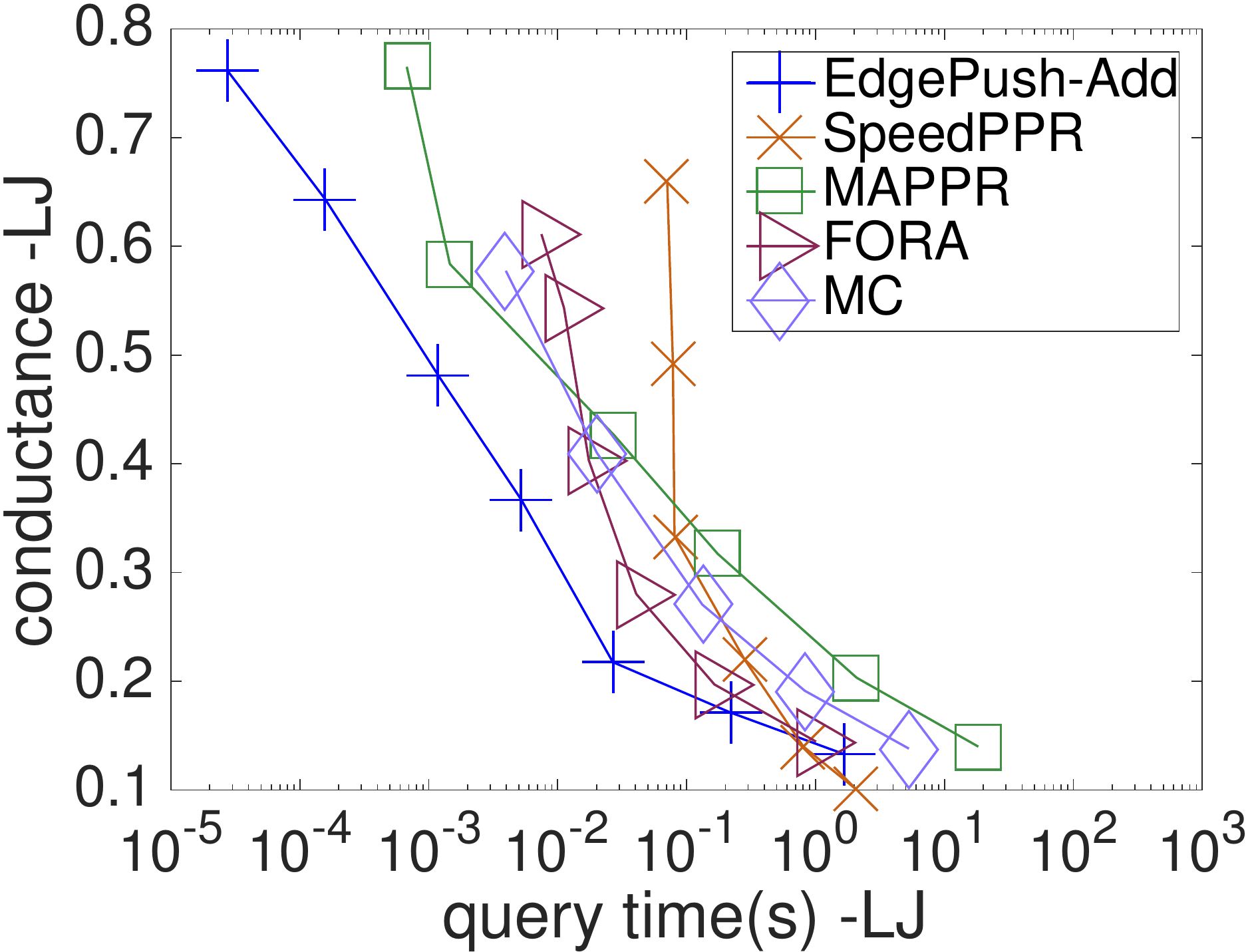} &
			\hspace{-2mm} \includegraphics[width=43mm]{./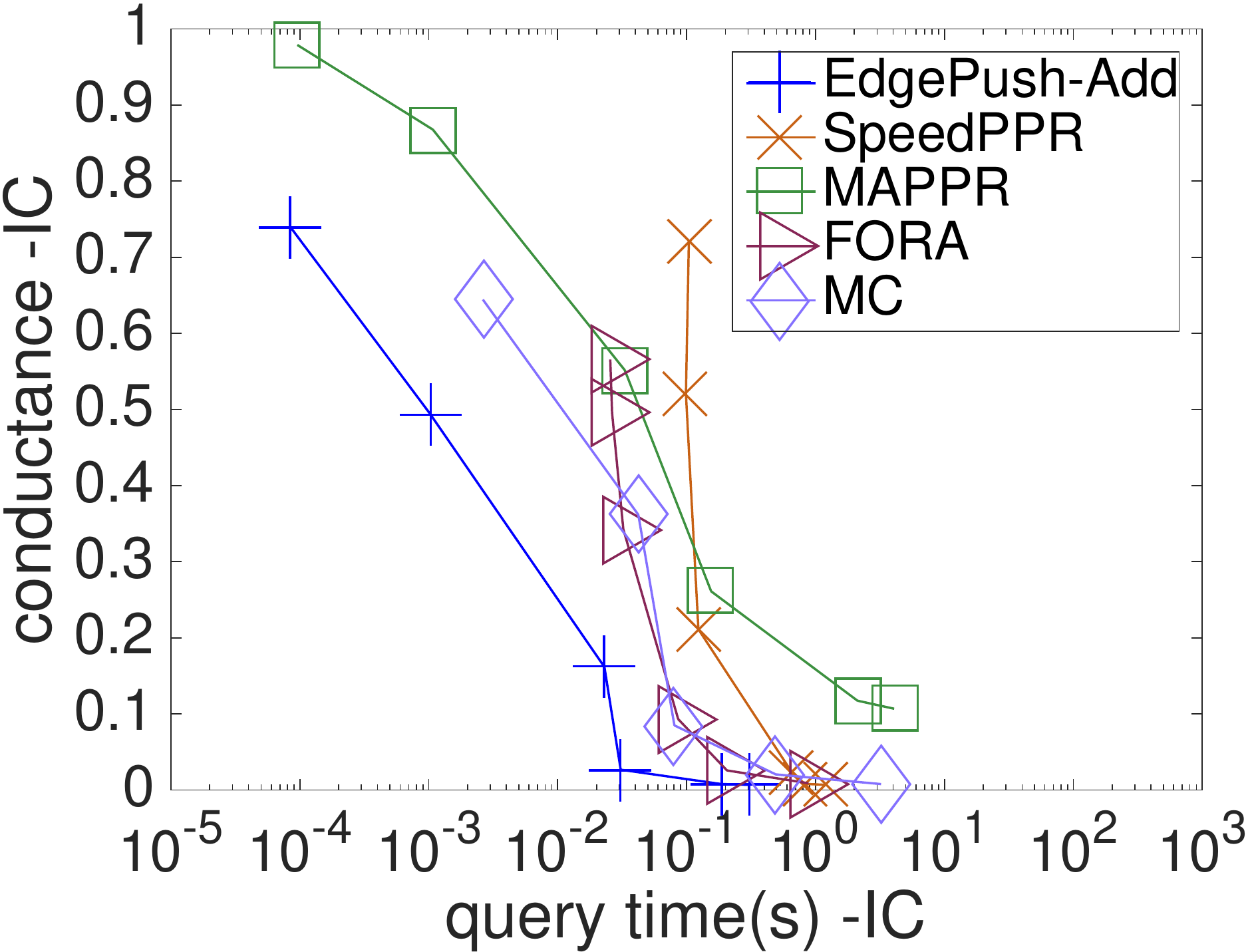} &
			\hspace{-4mm} \includegraphics[width=43mm]{./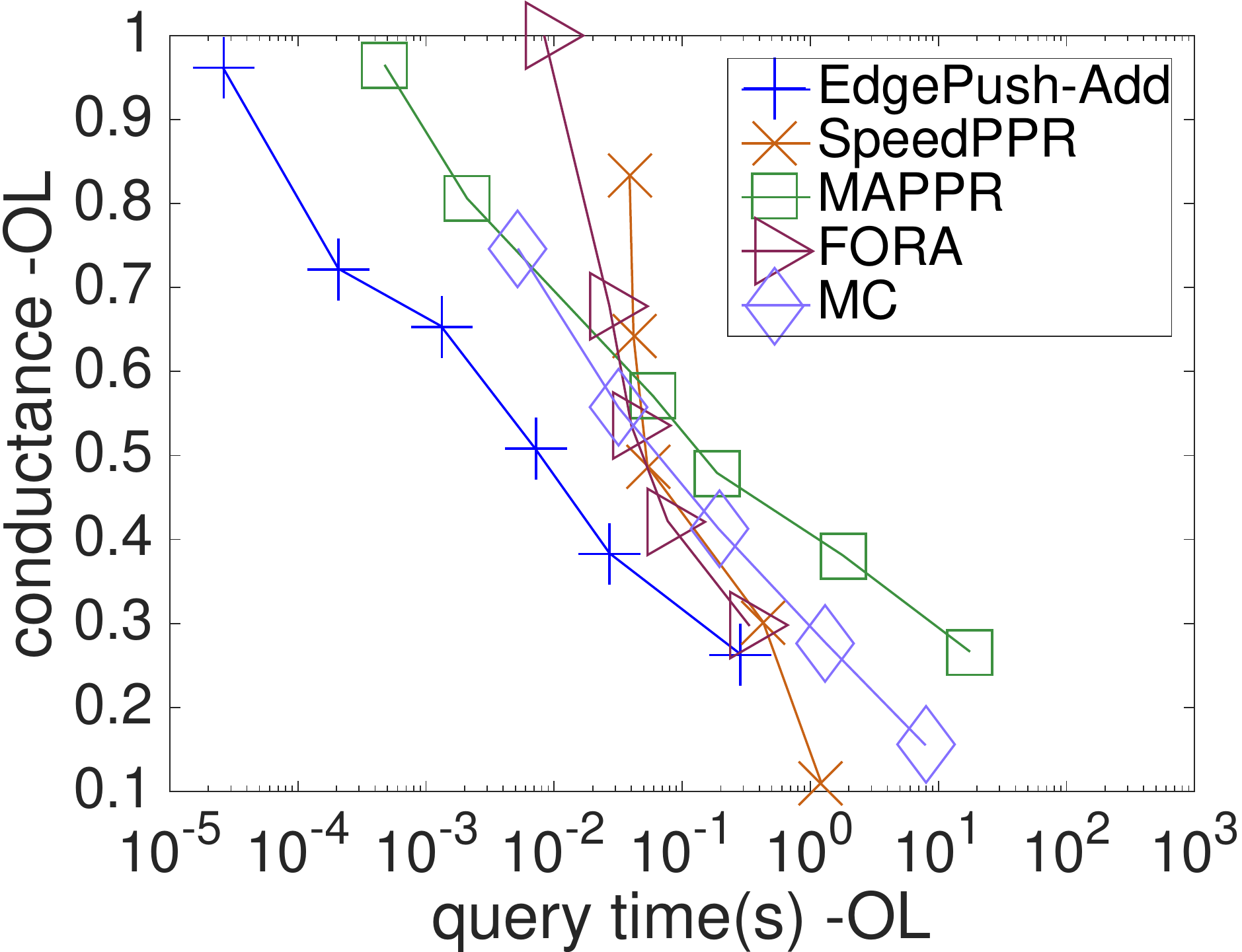} 
		\end{tabular}
		\vspace{-5mm}
		\caption{\hz {\em conductance} v.s. query time on motif-based weighted graphs.}
		\label{fig:conductance-query-real_pro}
		\vspace{-2mm}
	\end{minipage}
\end{figure*}

\subsection{$(a, b)$-Unbalanced Node}
Recall that in Section~\ref{sec:intro}, we also introduce a concept, {\em $(a, b)$-unbalanced} node, to quantify the unbalancedness of the edge weight distributions. 
In this subsection, we will present a further illustration on the definition and properties of {\em $(a, b)$-unbalanced} node. 

As stated in Section~\ref{sec:intro}, a node $v \in V$ is said to be {\em $(a, b)$-unbalanced} if $a\cdot n(v)$ of its edges taking $b \cdot d(v)$ edge weights, where $0 \leq a \leq b \leq 1$. This notion reflects the unbalancedness among the weights of $v$'s edges. For example, if $ a = \frac{1}{n(v)}$ and $b = 1$, it means that only one edge of $v$ taking all the weights of $v$'s edges. To be more specific, the definition of the  $(a,b)$-unbalanced node implies an inequality shown in the following lemma. 

\begin{lemma}\label{lmm:ab}
If all the nodes $v \in V$ are $(a, b)$-unbalanced, we have:
\begin{align}\nonumber
\vspace{-2mm}
\sum_{u \in N(v)}\sqrt{\A_{uv}} 
\le \left(\sqrt{ab} +\sqrt{(1-a)(1-b)}\right) \cdot \sqrt{n(v) d(v)}. 
\vspace{-2mm}
\end{align}
\end{lemma}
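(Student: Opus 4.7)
The plan is to exploit the definition of $(a,b)$-unbalancedness to partition the neighbors of each node $v$ into two groups, and then apply the Cauchy--Schwarz inequality (Fact~\ref{fact:cauchy}, already stated in the paper) separately to each group. Specifically, by definition, there is a subset $S_v \subseteq N(v)$ of size $a\cdot n(v)$ such that $\sum_{u \in S_v} \A_{uv} = b \cdot d(v)$, and consequently the complement $N(v) \setminus S_v$ has size $(1-a)\cdot n(v)$ and total edge weight $(1-b)\cdot d(v)$.

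Next, I would apply Cauchy--Schwarz on $S_v$ with $\z(i) = \sqrt{\A_{uv}}$ and $\x(i) = 1$ for each $u \in S_v$, which gives
\[
\sum_{u \in S_v}\sqrt{\A_{uv}} \;\le\; \sqrt{|S_v|}\cdot \sqrt{\sum_{u \in S_v}\A_{uv}} \;=\; \sqrt{a\cdot n(v)}\cdot \sqrt{b\cdot d(v)} \;=\; \sqrt{ab\cdot n(v)\,d(v)}.
\]
An identical application on $N(v)\setminus S_v$ yields
\[
\sum_{u \in N(v)\setminus S_v}\sqrt{\A_{uv}} \;\le\; \sqrt{(1-a)(1-b)\cdot n(v)\,d(v)}.
\]
Adding the two inequalities and factoring out $\sqrt{n(v)\,d(v)}$ gives exactly the bound claimed in Lemma~\ref{lmm:ab}.

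There is really no hard part to this argument: once the partitioning step is set up correctly, the proof reduces to two routine Cauchy--Schwarz applications. The only subtlety worth being careful about is making sure the $(a,b)$-unbalanced definition is interpreted as giving an \emph{exact} partition into two groups of sizes $a\cdot n(v)$ and $(1-a)\cdot n(v)$ with weight shares $b\cdot d(v)$ and $(1-b)\cdot d(v)$ (rather than, say, a one-sided bound); under that interpretation, the proof goes through cleanly as above. I would state this partitioning explicitly at the beginning of the proof to avoid any ambiguity, and then conclude with the two-line Cauchy--Schwarz computation.
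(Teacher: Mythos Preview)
Your proposal is correct and follows essentially the same approach as the paper: partition $N(v)$ into the two groups given by the $(a,b)$-unbalanced definition and bound $\sum\sqrt{\A_{uv}}$ on each group separately before adding. The only cosmetic difference is that the paper phrases the per-group bound as an ``AM--GM'' (really Jensen/power-mean) step, writing $\sum_{u\in N^L(v)}\sqrt{\A_{uv}}\le (a\,n(v))\cdot\sqrt{\tfrac{\sum_{u\in N^L(v)}\A_{uv}}{a\,n(v)}}$, which is exactly your Cauchy--Schwarz inequality with the all-ones vector in disguise.
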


{\header {\bf Superiority of \edgepush based on the $(a,b)$-unbalancedness}. }
Relying on the concept of $(a,b)$-unbalanced node, we can also analyze the superiority of \edgepush over \localpush. Specifically, to achieve an $\ell_1$-error $\e$, the following lemma states that the running time bound of \edgepush in Theorem~\ref{thm:edge-efficiency-l1} is superior to that of \localpush in Fact~\ref{thm:bound-local}. 

\begin{lemma}\label{lem:super-l1}
If all the nodes $u \in V$ are $(a, b)$-unbalanced, we have
\vspace{-2mm}
\begin{align}\label{eqn:super-l1-ab}
\frac{(1-\alpha)}{\alpha \e \|\A\|_1}\cdot \left(\sum_{\la u,v \ra \in \bar{E}}\sqrt{\A_{uv}}\right)^2 
\hspace{-2mm}\leq \hspace{-1mm}\left(\sqrt{ab} +\sqrt{(1-a)(1-b)}\right)^2\hspace{-2mm} \cdot \frac{2m}{\alpha \e}\,. 
\end{align}
\end{lemma}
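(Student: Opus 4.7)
The plan is to chain together Lemma~\ref{lmm:ab} with a Cauchy--Schwarz step, using the basic identities $\sum_{v\in V}n(v)=2m$ and $\sum_{v\in V}d(v)=\|\A\|_1$ (which follow directly from the definitions in Section~\ref{sec:pre}).

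First, I would rewrite the edge sum node-by-node via $\sum_{\la u,v \ra \in \bar{E}}\sqrt{\A_{uv}}=\sum_{v\in V}\sum_{u\in N(v)}\sqrt{\A_{uv}}$, which simply groups the bi-directional edges by their endpoint $v$. Then I would apply Lemma~\ref{lmm:ab} to each inner sum, yielding
\[
\sum_{\la u,v \ra \in \bar{E}}\sqrt{\A_{uv}} \;\le\; \left(\sqrt{ab}+\sqrt{(1-a)(1-b)}\right)\cdot \sum_{v\in V}\sqrt{n(v)\,d(v)}.
\]
Let me denote the constant factor by $C=\sqrt{ab}+\sqrt{(1-a)(1-b)}$ to keep expressions compact.

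Next, I would apply Cauchy--Schwarz (Fact~\ref{fact:cauchy}) to $\sum_{v\in V}\sqrt{n(v)}\cdot\sqrt{d(v)}$ to obtain
\[
\left(\sum_{v\in V}\sqrt{n(v)\,d(v)}\right)^{2} \;\le\; \left(\sum_{v\in V}n(v)\right)\left(\sum_{v\in V}d(v)\right) \;=\; 2m\cdot \|\A\|_1,
\]
where the final equality uses the two identities noted above. Squaring the previous bound and combining, I get $\bigl(\sum_{\la u,v \ra \in \bar{E}}\sqrt{\A_{uv}}\bigr)^{2}\le C^{2}\cdot 2m\cdot \|\A\|_1$.

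Finally, I would substitute this into the left-hand side of the desired inequality. The $\|\A\|_1$ factors cancel, leaving
\[
\frac{(1-\alpha)}{\alpha\varepsilon\|\A\|_1}\cdot C^{2}\cdot 2m\cdot \|\A\|_1 \;=\; (1-\alpha)\,C^{2}\cdot\frac{2m}{\alpha\varepsilon} \;\le\; C^{2}\cdot\frac{2m}{\alpha\varepsilon},
\]
using $(1-\alpha)\le 1$, which is exactly Inequality~\eqref{eqn:super-l1-ab}. There is no real obstacle in this proof; the only care needed is to make sure that the node-by-node decomposition correctly uses bi-directional edges (so each weight $\A_{uv}$ appears twice in the sum), which matches the conventions of $\bar{E}$, $n(v)$ and $d(v)$ so that the two counting identities $\sum_v n(v)=2m$ and $\sum_v d(v)=\|\A\|_1$ hold without any factor-of-two mismatch.
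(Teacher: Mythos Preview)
Your proof is correct and follows essentially the same approach as the paper: decompose the edge sum node-by-node, apply Lemma~\ref{lmm:ab} to each node's contribution, then use Cauchy--Schwarz on $\sum_{v}\sqrt{n(v)d(v)}$ together with $\sum_v n(v)=2m$ and $\sum_v d(v)=\|\A\|_1$. The only minor difference is that you make the final use of $(1-\alpha)\le 1$ explicit, whereas the paper absorbs it silently.
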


Recall that the expected overall running time of \edgepush is $\frac{(1-\alpha)}{\alpha \e \|\A\|_1}\cdot \left(\sum_{\la u,v \ra \in \bar{E}}\sqrt{\A_{uv}}\right)^2$, which equals to the left hand of Equation~\eqref{eqn:super-l1-ab}. Moreover, the expected running time of \localpush is $\frac{2m}{\alpha \e}$. Thus, Lemma~\ref{lem:super-l1} illustrates a $\left(\sqrt{ab} +\sqrt{(1-a)(1-b)}\right)^2$ superiority of \edgepush over \localpush. For the sake of simplicity, we denote $\gamma = \left(\sqrt{ab} +\sqrt{(1-a)(1-b)}\right)^2$. 

Likewise, the following lemma shows that \edgepush still outperforms \localpush by $\gamma$ in terms of the overall running time bound with normalized additive error $r_{max}$. 
\begin{lemma}\label{lem:super-add}
If all the nodes $u \in V$ are $(a, b)$-unbalanced, we have
\vspace{-2mm}
\begin{align}\nonumber
\frac{(1-\alpha)}{\alpha r_{\max}\|\A\|_1}\hspace{-0.5mm}\cdot \hspace{-1mm}\sum_{v\in V}\hspace{-0.5mm}\frac{1}{d_v}\hspace{-0.5mm}\left(\sum_{u\in N_v}\hspace{-2mm}\sqrt{\A_{uv}}\hspace{-0.5mm}\right)^2 
\hspace{-1.5mm}\leq \hspace{-0.5mm}\left(\hspace{-0.5mm}\sqrt{ab}\hspace{-0.5mm}+\hspace{-0.5mm} \sqrt{(\hspace{-0.5mm}1\hspace{-1mm}-\hspace{-1mm}a)(1\hspace{-1mm}-\hspace{-1mm}b)}\hspace{-0.5mm}\right)^2 \hspace{-1.5mm} \cdot \hspace{-0.5mm}\frac{2m}{\alpha r_{\max}\|\A\|_1}\,. 
\end{align}
\end{lemma}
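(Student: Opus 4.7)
The plan is to prove this lemma by applying Lemma~\ref{lmm:ab} on a per-node basis and then aggregating the resulting bounds over all nodes $v \in V$, in analogy with Lemma~\ref{lem:super-l1} but performed locally for each node rather than globally over all edges. The reason a per-node approach is natural here is that the left hand side is a sum of terms of the form $\frac{1}{d(v)}\bigl(\sum_{u\in N(v)}\sqrt{\A_{uv}}\bigr)^2$, each of which is essentially the inner ``local'' quantity already controlled by Lemma~\ref{lmm:ab}.

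First, I would fix an arbitrary $v \in V$ and square both sides of the inequality from Lemma~\ref{lmm:ab}, yielding
\begin{align*}
\left(\sum_{u\in N(v)}\sqrt{\A_{uv}}\right)^2 \;\le\; \bigl(\sqrt{ab}+\sqrt{(1-a)(1-b)}\bigr)^2 \cdot n(v)\, d(v).
\end{align*}
Dividing by $d(v)$ gives $\frac{1}{d(v)}\bigl(\sum_{u\in N(v)}\sqrt{\A_{uv}}\bigr)^2 \le \gamma \cdot n(v)$, where $\gamma = \bigl(\sqrt{ab}+\sqrt{(1-a)(1-b)}\bigr)^2$. Summing this inequality over all $v \in V$ and using the handshake identity $\sum_{v \in V} n(v) = 2m$ for undirected graphs, I obtain $\sum_{v\in V} \frac{1}{d(v)}\bigl(\sum_{u\in N(v)}\sqrt{\A_{uv}}\bigr)^2 \le 2m\gamma$.

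Finally, I would multiply both sides of the aggregated bound by $\frac{(1-\alpha)}{\alpha r_{\max}\|\A\|_1}$ and use $(1-\alpha) \le 1$ to discard that factor, producing exactly the right hand side $\gamma \cdot \frac{2m}{\alpha r_{\max}\|\A\|_1}$ claimed by the lemma.

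I do not expect any real obstacle: the entire argument is a straightforward application of Lemma~\ref{lmm:ab} followed by the handshake identity, mirroring the structure of Lemma~\ref{lem:super-l1} but localized to each vertex. The only subtlety worth stating carefully is that squaring the bound of Lemma~\ref{lmm:ab} is valid because both sides are non-negative, and that the handshake identity gives exactly $2m$ (not $m$) because $\bar{E}$ contains bi-directional copies of each edge, matching the $2m$ appearing on the right hand side of the lemma.
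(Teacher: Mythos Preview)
Your proposal is correct and follows essentially the same approach as the paper: square Lemma~\ref{lmm:ab} per node, divide by $d(v)$, sum over $v\in V$, and apply $\sum_{v\in V} n(v)=2m$. The only difference is that you make the final step of multiplying by $\frac{(1-\alpha)}{\alpha r_{\max}\|\A\|_1}$ and dropping $(1-\alpha)\le 1$ explicit, whereas the paper leaves this implicit.
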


The proof of Lemma~\ref{lem:super-l1} and Lemma~\ref{lem:super-add} are both deferred to the appendix for readability. Based on the two lemmas, we can prove the three implications shown in ``Our contributions" in Section~\ref{sec:intro}. That is, the overall running time bound of \edgepush is no worse than that of \localpush because $\gamma \leq 1$ holds for all values of $a$ and $b$. Secondly, the superiority of \edgepush over \localpush can indeed be quantified by the $(a,b)$-unbalancedness as shown in Lemma~\ref{lem:super-l1} and Lemma~\ref{lem:super-add}. Specifically, when $a = o(1)$ and $b = 1 - o(1)$, we have $\gamma = o(1)$, which implies that \edgepush can achieve a sub-linear time complexity $o\left(\frac{m}{\alpha \e}\right)$ for solving the approximate SSPPR problem with specified $\ell_1$-error $\e$. 
Additionally, in the extreme case, when the graph $G$ is a complete graph, $a = 1 / n$ and $b = 1 - 1/n$,  we have $\gamma =O(\frac{1}{n})$, implying that, in this case, \edgepush outperforms \localpush by a $O(n)$ factor. 


\header{\bf Comparisons between the notions of $(a,b)$-unbalanced node and $\boldsymbol{{\cos^2 \p}}$. } Intuitively, both the notions of $\cos^2 \p$ (resp., $\cos^2 \p_v)$ and the $(a,b)$-unbalanced node capture the {\em unbalance} of the undirected weighted graph $G$.
However, 
the former is actually more general than the latter.
In other words, the superior results obtained using the notion of $(a,b)$-unbalanced are based on a more restricted requirement. 
To see this, recall that in lemma~\ref{lem:super-l1}, we assume that all nodes $u$ in the graph are $(a,b)$-unbalanced. Or equivalently, we use the maximum of $a/b$ in the graph to bound the superiority of \edgepush. Whenever a graph has a node with degree $1$ (e.g. node $w$ in Figure~\ref{fig:weighted_graph}), we have $a=b=1$, leading to $\gamma=1$. Thus, in this case, Lemma~\ref{lem:super-l1} only suggests that the overall expected running time of \edgepush is no worse than that of \localpush. However, in this case, we observe $\cos^2 \p=O\left(\frac{1}{n}\right)$. By Lemma~\ref{lem:cos-l1}, this indicates \edgepush outperforms \localpush by a $O(n)$ factor. The intuition here is \edgepush is sufficient to significantly outperform \localpush as long as a subset of nodes in the graphs have unbalanced weight distributions, rather than all nodes.

\vspace{-2mm}
\section{Experiments} 
\label{sec:exp}
\begin{figure*}[t]
	\begin{minipage}[t]{1\textwidth}
		\centering
		\begin{tabular}{cccc}
			\hspace{-4mm} \includegraphics[width=43mm]{./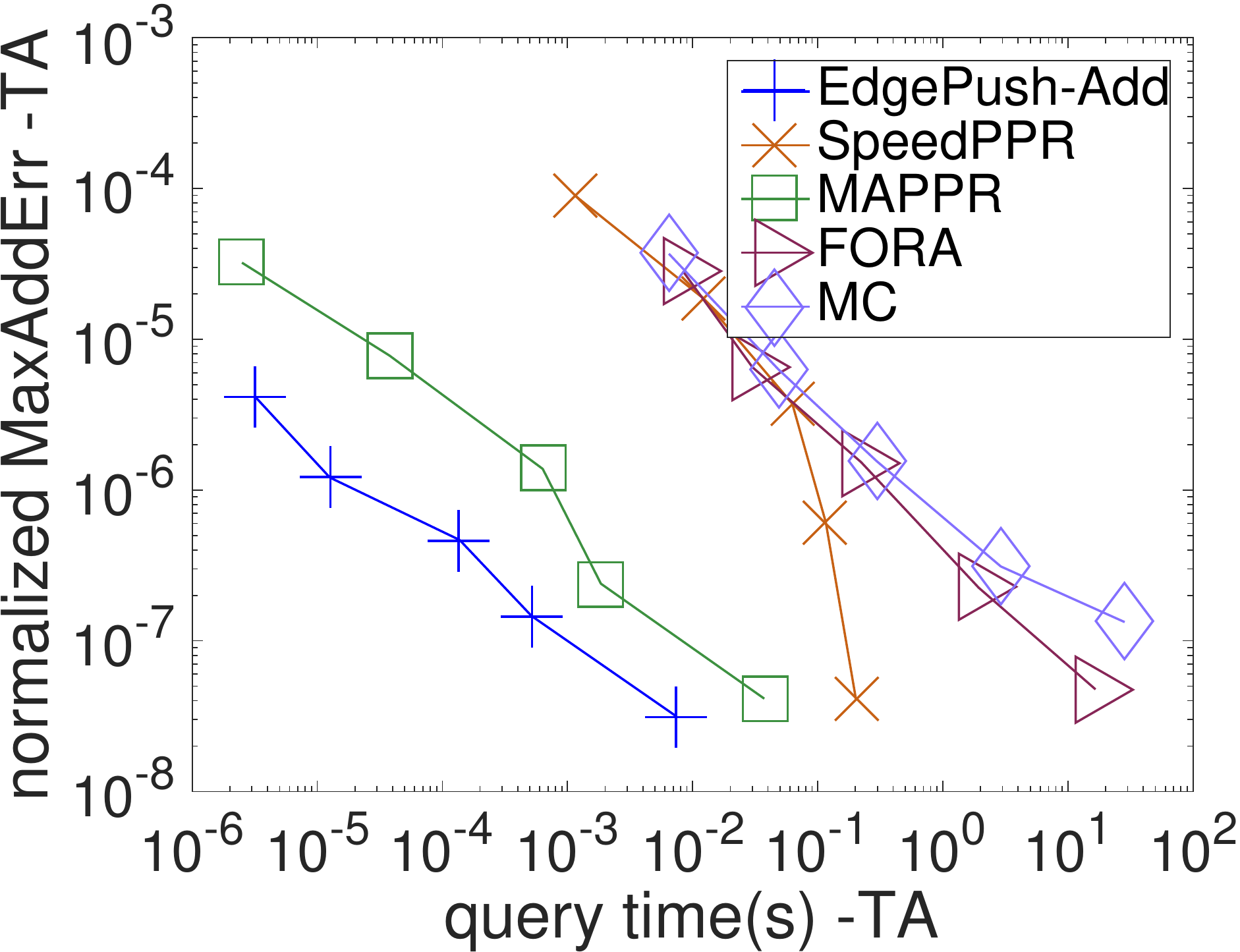} &
			\hspace{-3mm} \includegraphics[width=43mm]{./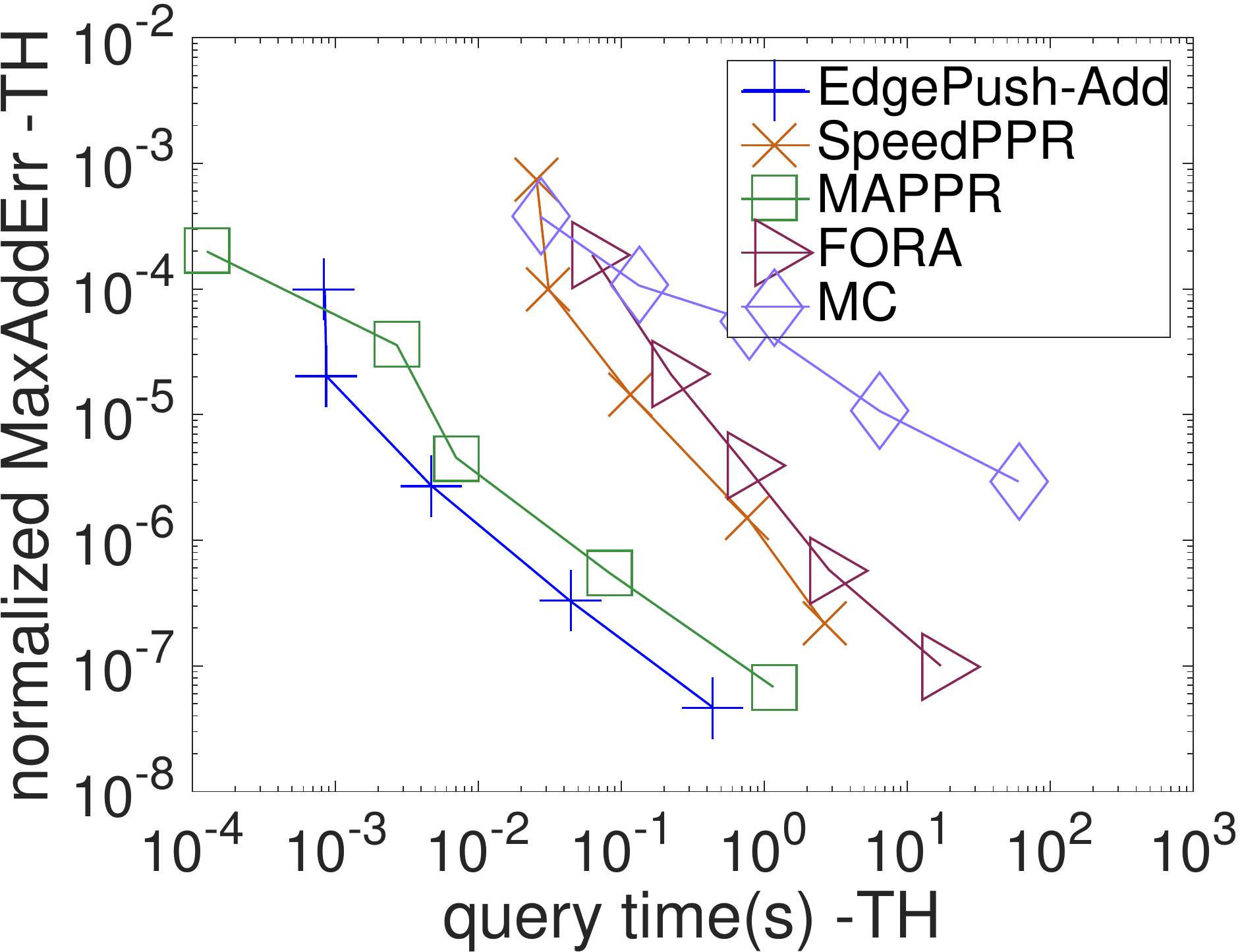} &
			\hspace{-3mm} \includegraphics[width=43mm]{./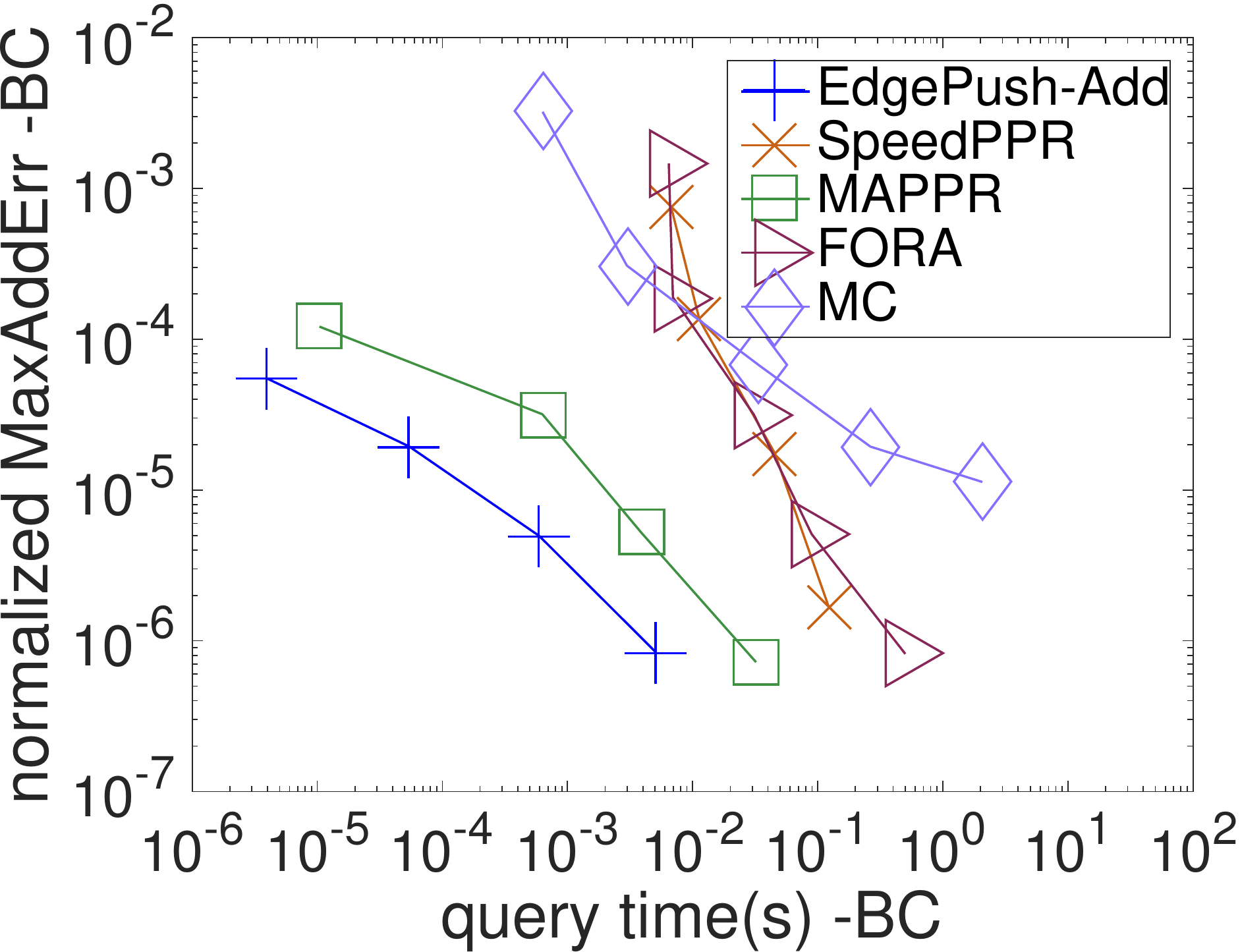}&
			\hspace{-3mm} \includegraphics[width=43mm]{./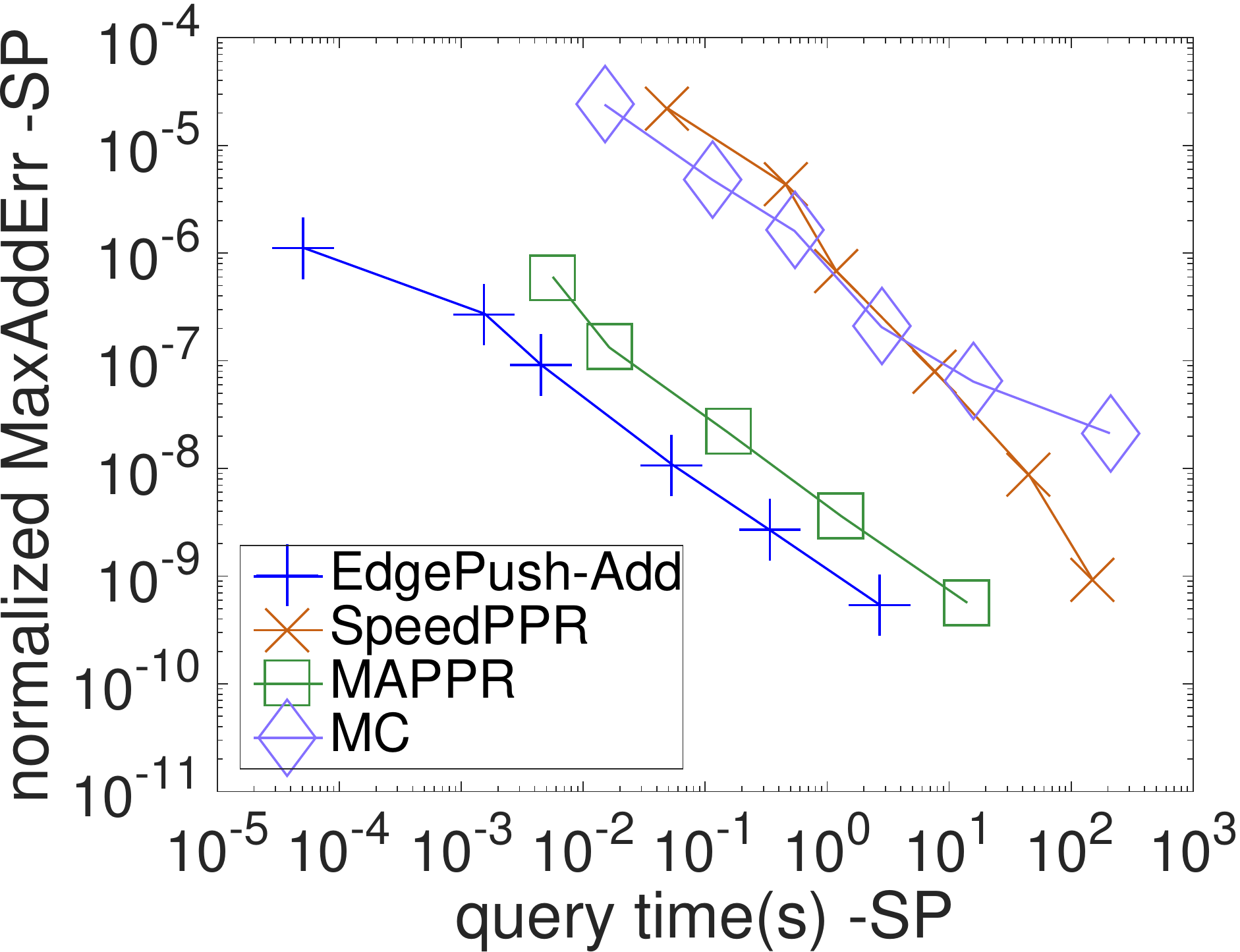}\\
		\end{tabular}
		\vspace{-5mm}
		\caption{{\hz {\em normalized MaxAddErr} v.s. query time on real weighted graphs.}}
		\label{fig:maxerror-query-real_pro-real}
		\vspace{-1mm}
    \end{minipage}

    \begin{minipage}[t]{1\textwidth}
		\centering
		\vspace{+0.5mm}
		\begin{tabular}{cccc}
			\hspace{-4mm} \includegraphics[width=43mm]{./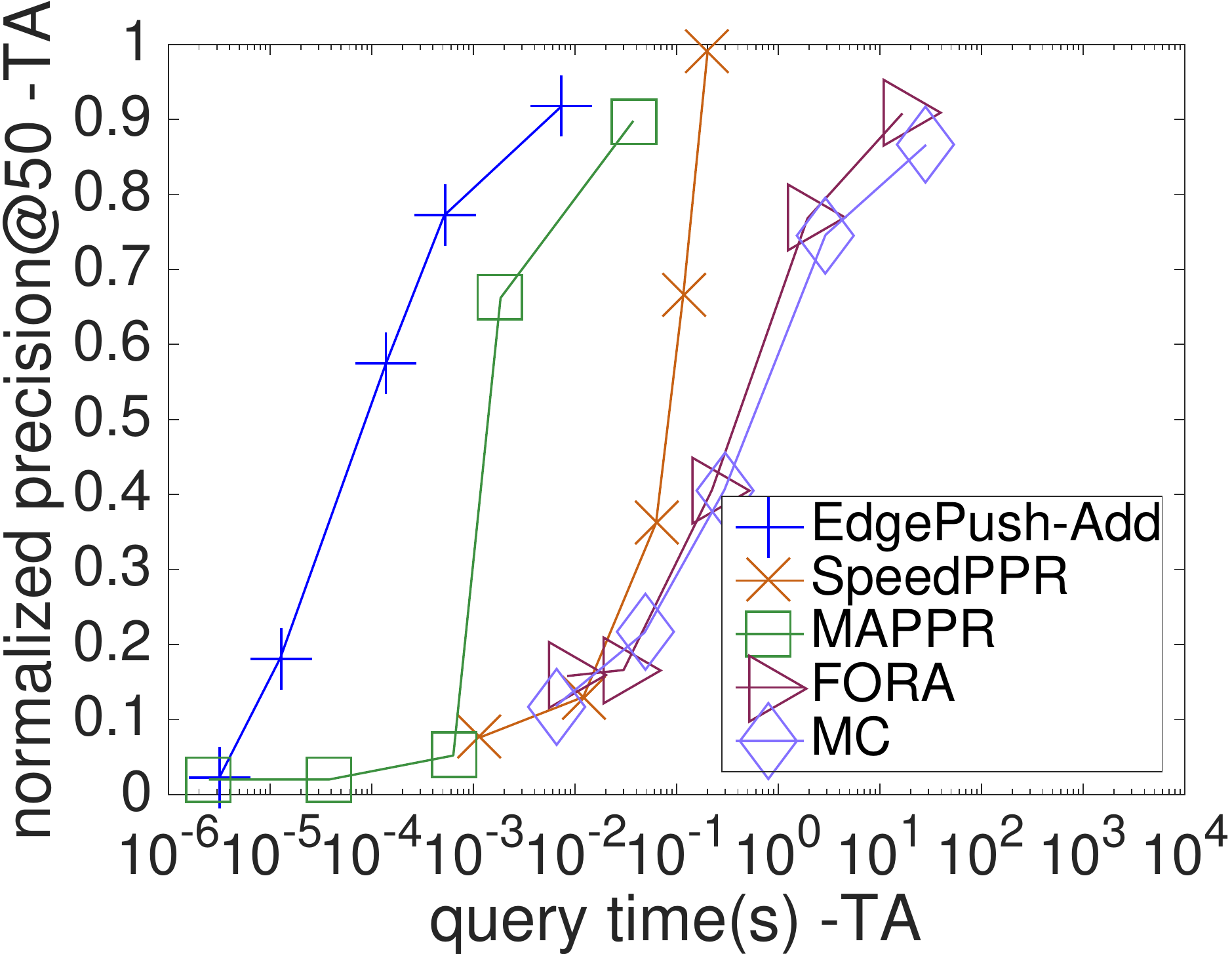} &
			\hspace{-3mm} \includegraphics[width=43mm]{./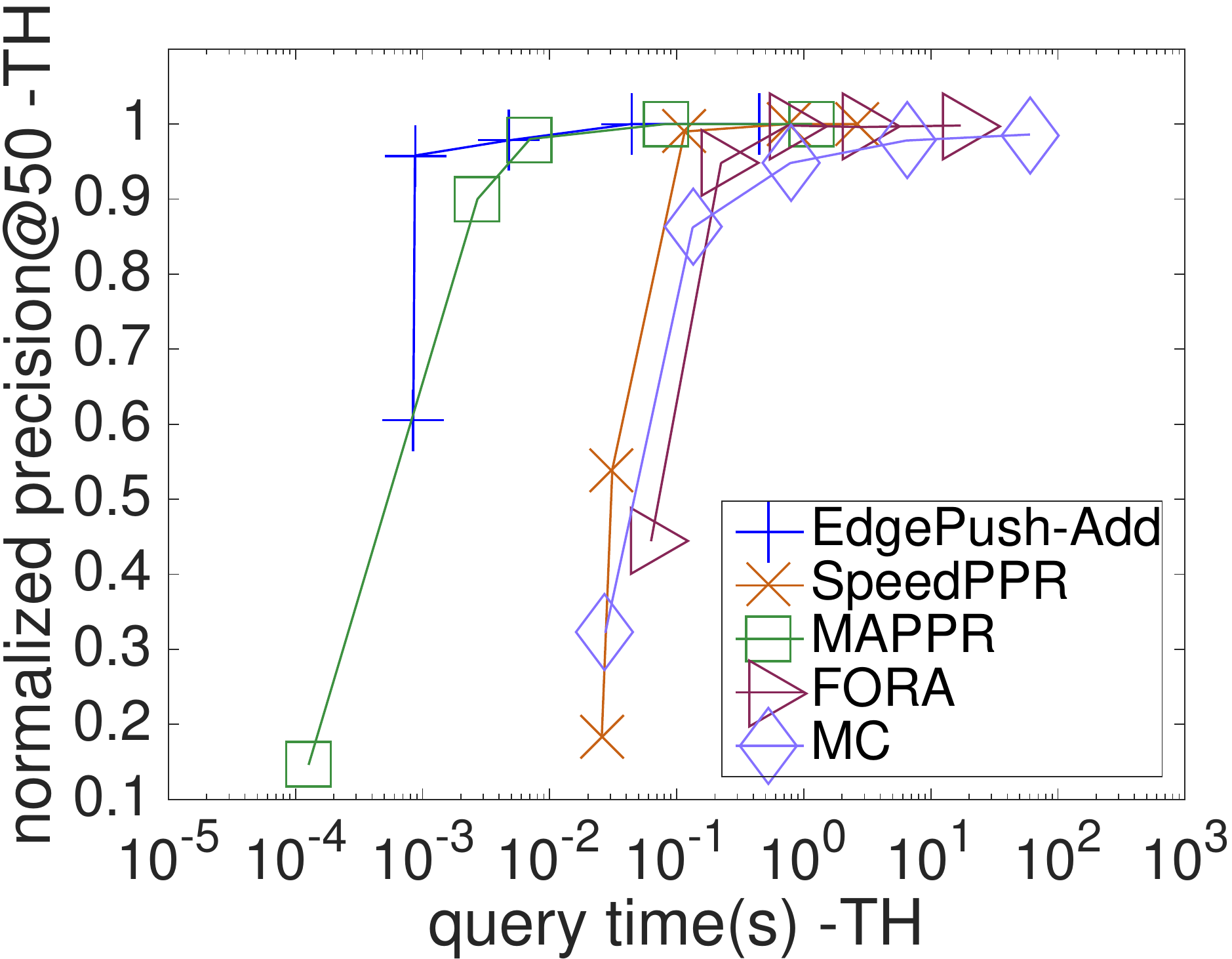} &
			\hspace{-3mm} \includegraphics[width=43mm]{./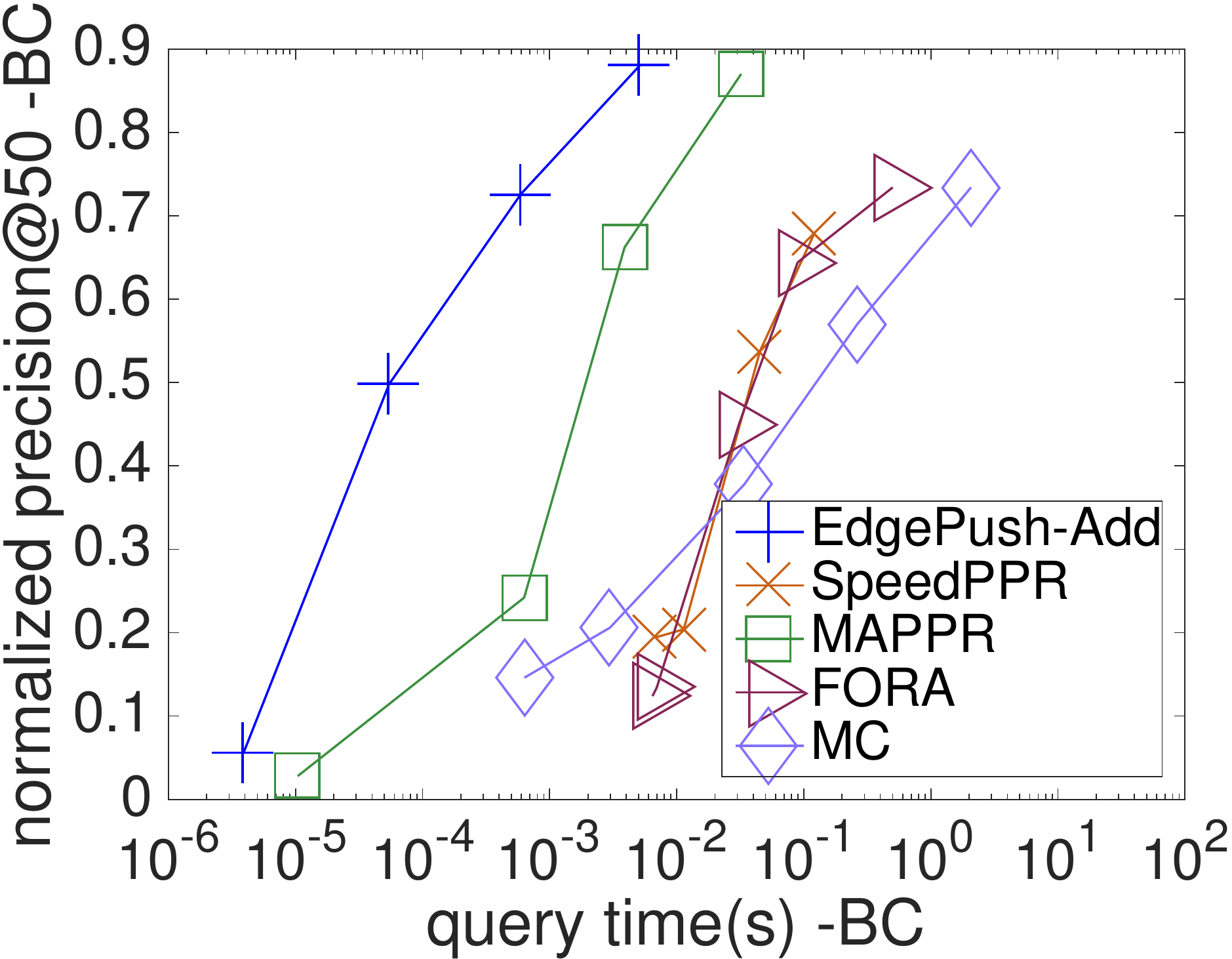} &
			\hspace{-3mm} \includegraphics[width=43mm]{./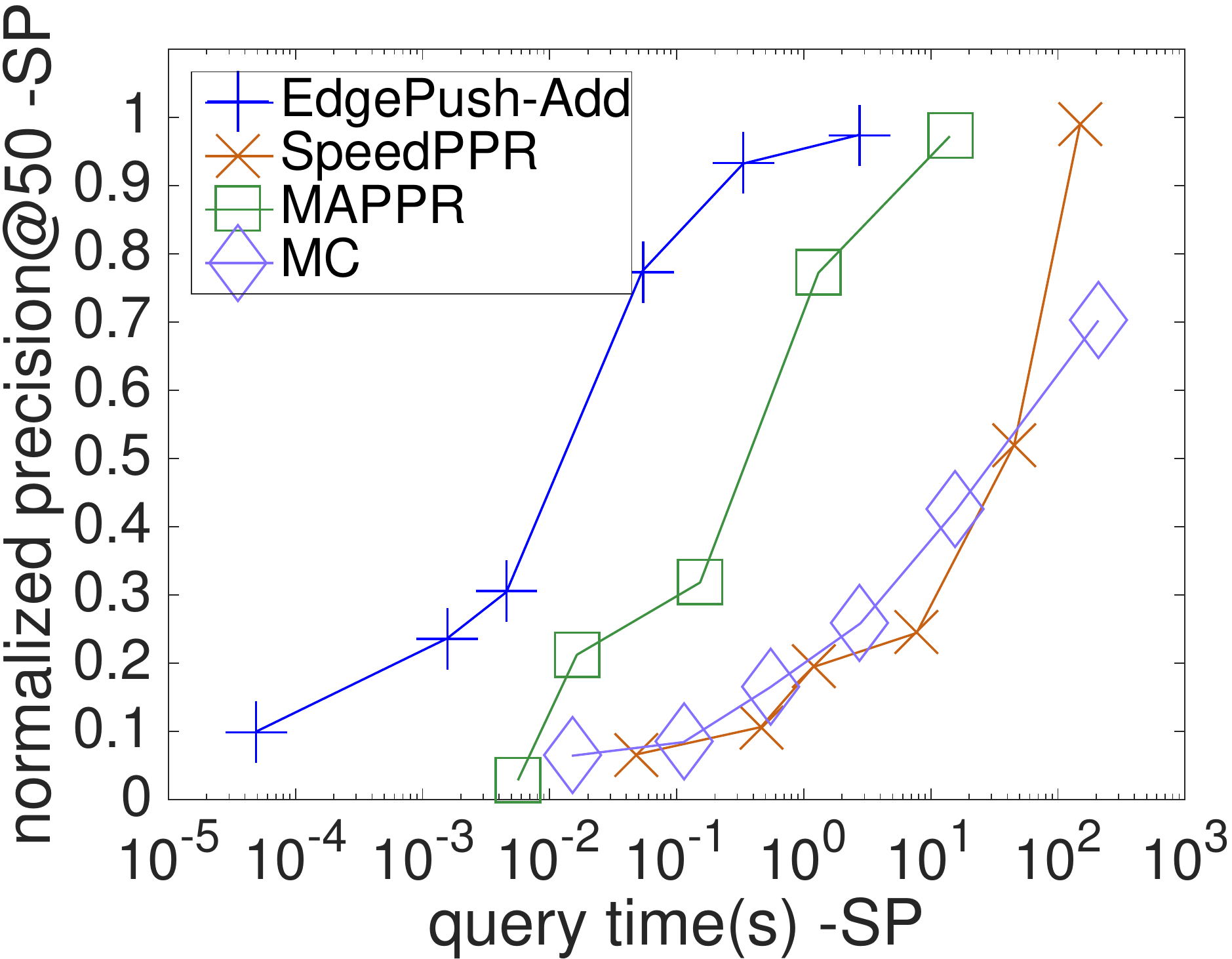}\\
		\end{tabular}
		\vspace{-5mm}
		\caption{\hz {\em normalized precision@50} v.s. query time on real weighted graphs.}
		\label{fig:precision-query-real_pro-real}
		\vspace{-1mm}
	\end{minipage}

	\begin{minipage}[t]{1\textwidth}
		\centering
		\vspace{+0.5mm}
		\begin{tabular}{cccc}
			\hspace{-4mm} \includegraphics[width=43mm]{./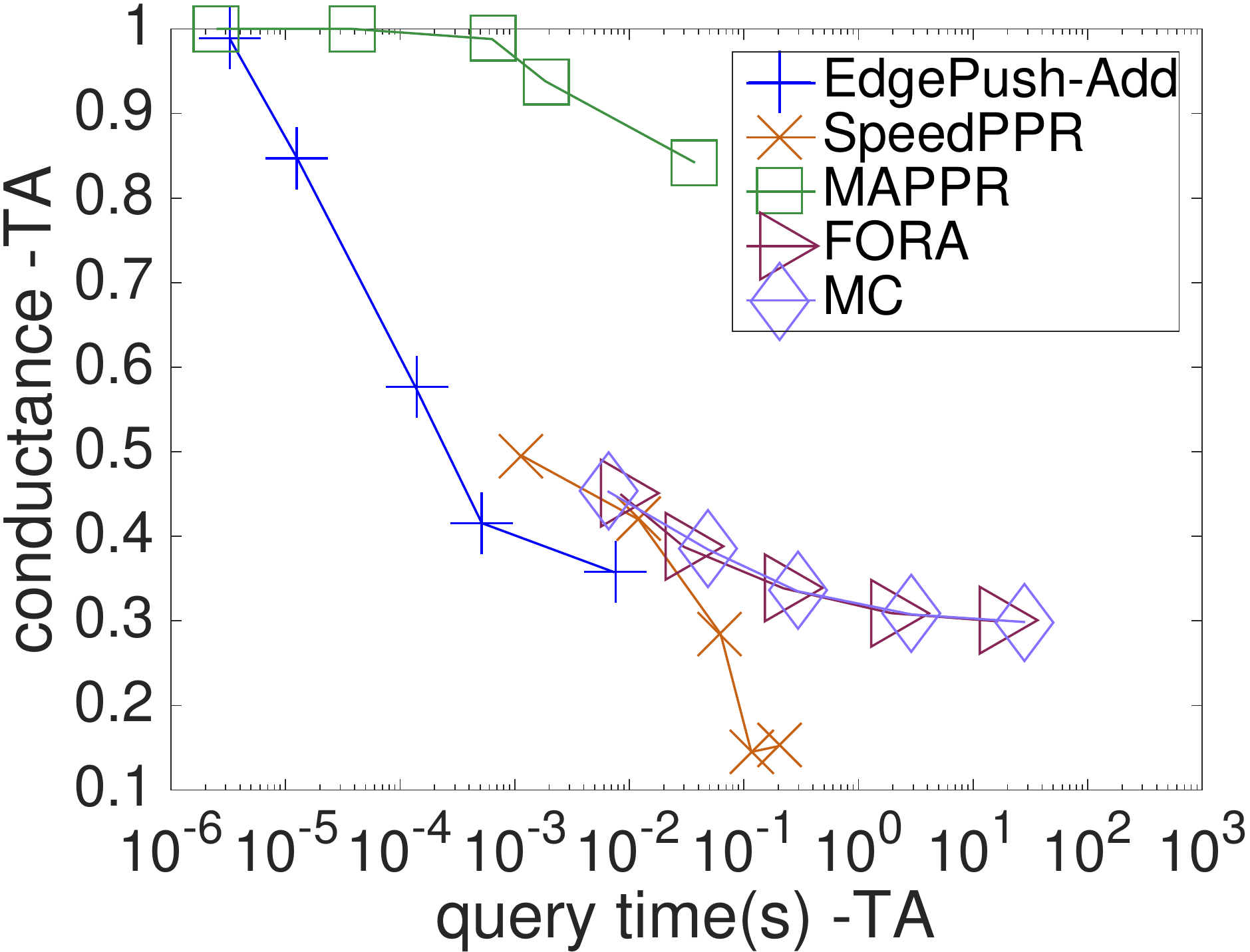} &
			\hspace{-3mm} \includegraphics[width=43mm]{./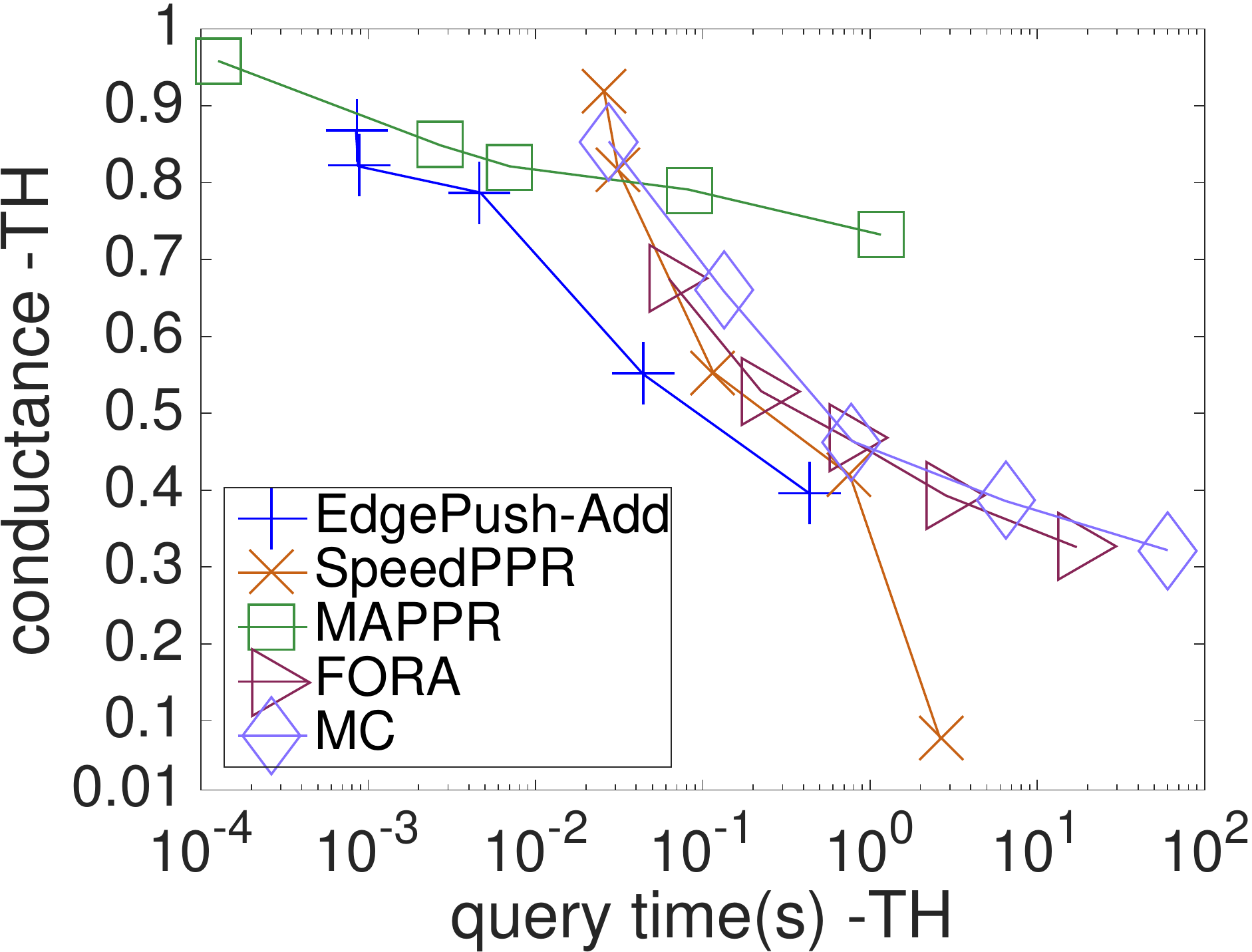} &
			\hspace{-3mm} \includegraphics[width=43mm]{./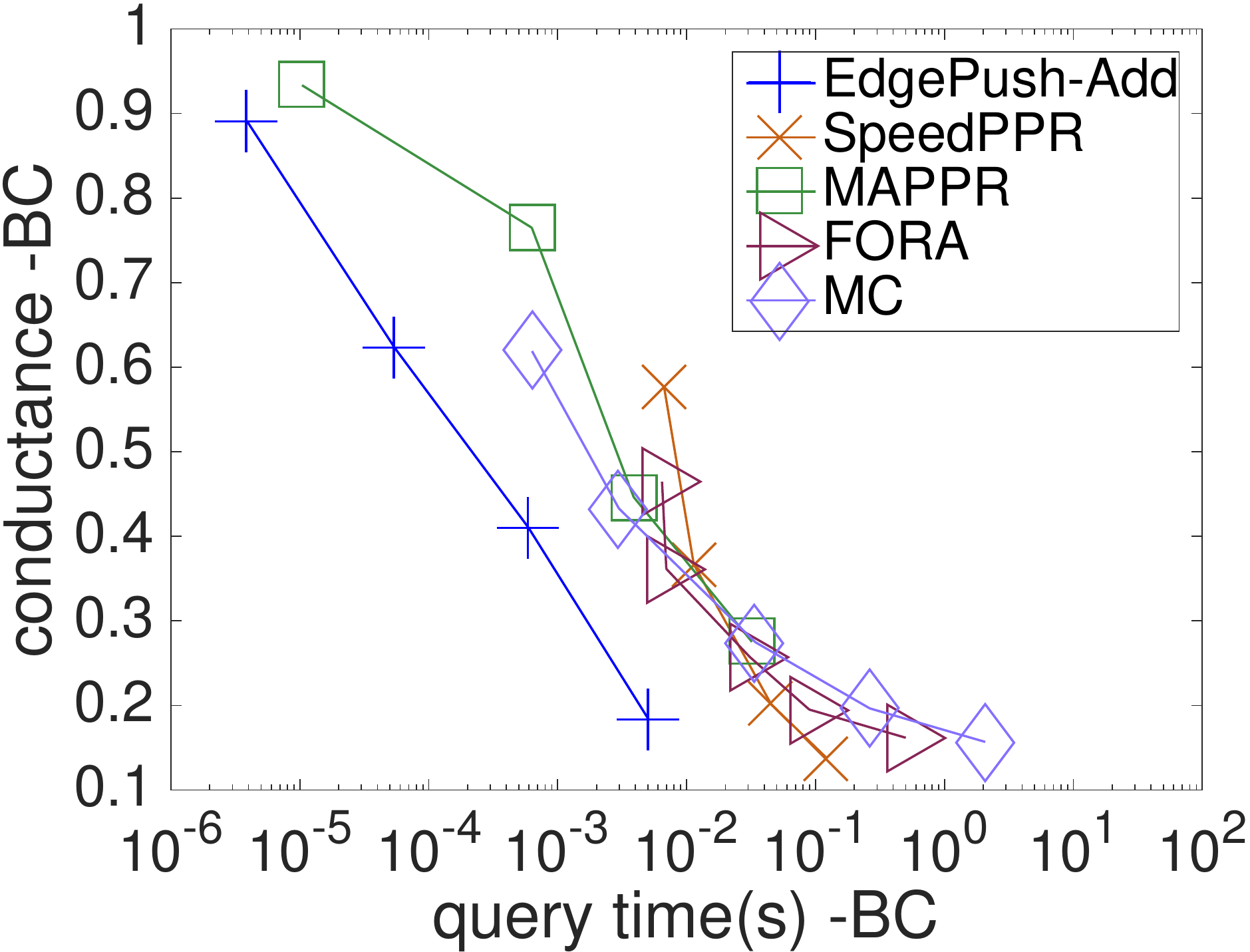} &
			\hspace{-3mm} \includegraphics[width=43mm]{./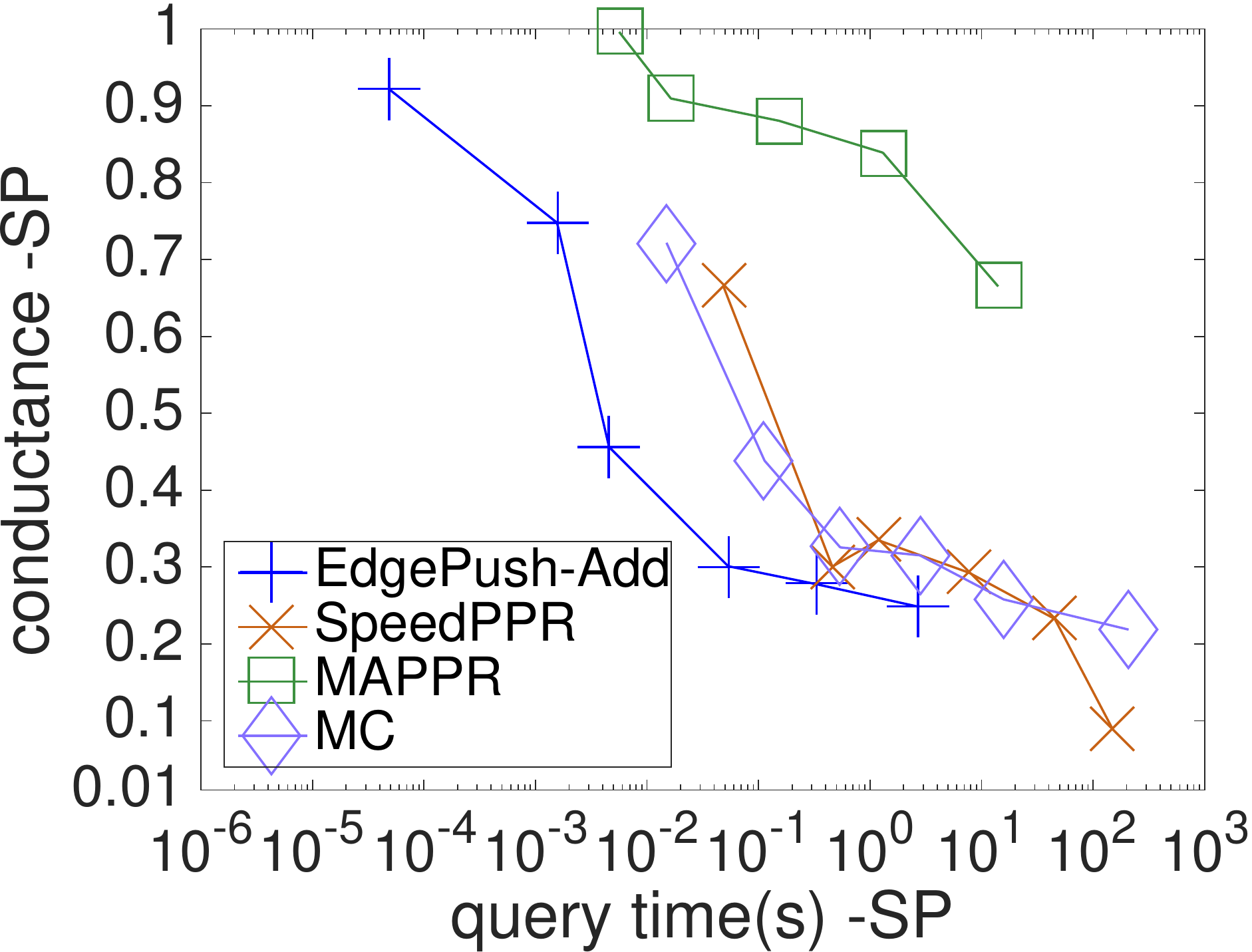} \\
		\end{tabular}
		\vspace{-5mm}
		\caption{\hz {\em conductance} v.s. query time on real weighted graphs.}
		\label{fig:conductance-query-real_pro-real}
		\vspace{-2mm}
	\end{minipage}
\end{figure*}

In this section, we conduct experiments to show the effectiveness of \edgepush on large real-world datasets. 

\header{\bf Experiment environment.}
We conduct all the experiments on a machine with an Intel(R) Xeon(R) Gold 6126@2.60GHz CPU and 500GB memory in Linux OS. All the methods are implemented in C++ compiled by g++ with O3 turned on. 

\header{\bf \hz Datasets.}
In the experiments, we use eight real-world datasets: Youtube~\cite{yang2015OL_YTdata}, LiveJournal~\cite{leskovec2009LJdata}, IndoChina~\cite{Paolo2004LAWdata1,Paolo2004LAWdata2}, Orkut-Links~\cite{yang2015OL_YTdata}, Tags~\cite{benson2018simplicial}, Threads~\cite{benson2018simplicial}, BlockChair~\cite{blockchair} and Spotify~\cite{kumar2020retrieving}. All the datasets are available at~\cite{snapnets,LWA,cornell,blockchair-transactions}. The first four datasets (i.e. YT, LJ, IC and OL) are unweighted and undirected graphs. Following~\cite{yin2017MAPPR},we convert the four unweighted graphs to motif-based weighted graphs by counting the number of motifs. More precisely, we first calculate the motif number $\phi(e)$ of each edge $e \in E$, which is defined as the number of motifs that $e$ participates in. Then we set the weight of $e$ as $\phi(e)$ to obtain a weighted graph. Note that $\phi(e)$ might be $0$ if $e$ doesn't participate in any motif. In the experiments, we set the type of motif as ``clique3'' defined in~\cite{motif}. The other four datasets (i.e. TA, TH, BC and SP) are real-world weighted graphs. Specifically, Tags (TA) and Threads (TH) are derived from Stack Overflow, a well-known Q\&A (Question-and-Answer) website for programmers. In Tags, nodes are tags of questions and edge weights are the number of questions jointly annotated by two tags. In comparison, nodes in Threads represent users and the edge weight indicates how frequently two users appear in the same question thread. BlockChair (BC) is a bitcoin transaction network, where nodes are addresses and edge weights are the number of transactions between two addresses. In our experiments, we use the transactions records on December 5, 2021. Spotify (SP) is a music streaming network. The nodes in Spotify represent songs and the edge weight is the number of times that two songs appear in one session. More detailed descriptions of these real weighted datasets can be founded in~\cite{kumar2020retrieving}.

In Table~\ref{tbl:datasets}, we list the meta data of all these datasets. 
{\crc We also report the average and maximum edge weight and the value of $\cos^2 \p$ to quantify the unbalancedness of graphs. Recall that the $\cos^2 \p$ notation is defined in Lemma~\ref{lem:cos-l1}. The smaller the $\cos^2 \p$ is, the more unbalanced the graph is. }

\vspace{-1mm}
\header{\bf Ground truths and query sets.}
In the experiments, we employ Power Method~\cite{page1999pagerank} to compute the ground truth results. 
More precisely, we compute Equation~\eqref{eqn:powerdef} for  $100$ iterations 
and regard the returned results as ground truths for comparison. For each dataset, we randomly generate $10$ source nodes for SSPPR queries according to the degree distribution. We issue one SSPPR query from each query node and report the average performances over the 10 query nodes for each method and each set of parameters. 

\begin{figure*}[t]
    \begin{minipage}[t]{1\textwidth}
		\centering
		\begin{tabular}{cccc}
			\hspace{-3mm} \includegraphics[width=41.5mm]{./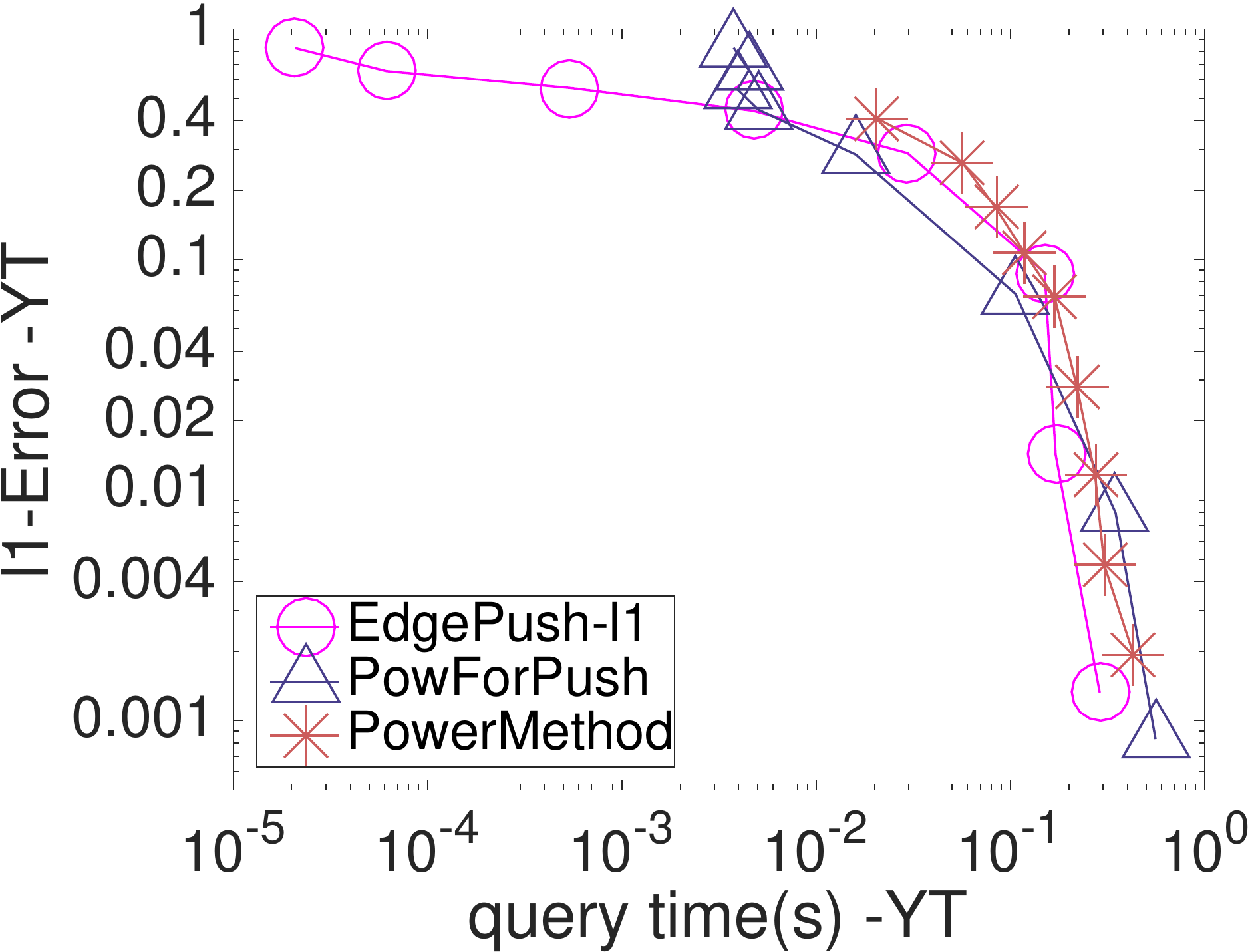} &
    		\hspace{-2mm} \includegraphics[width=41mm]{./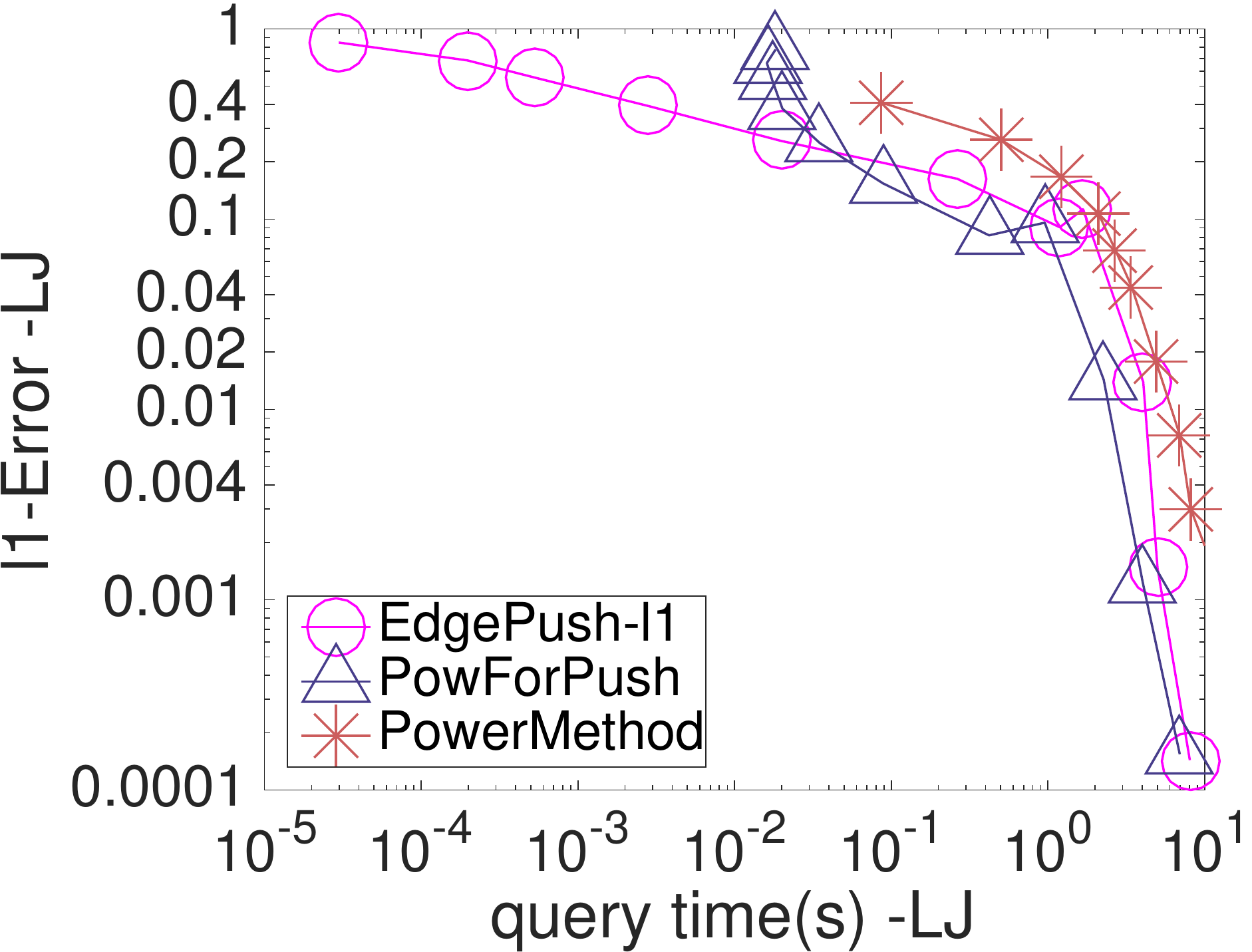} &
			\hspace{-2mm} \includegraphics[width=41mm]{./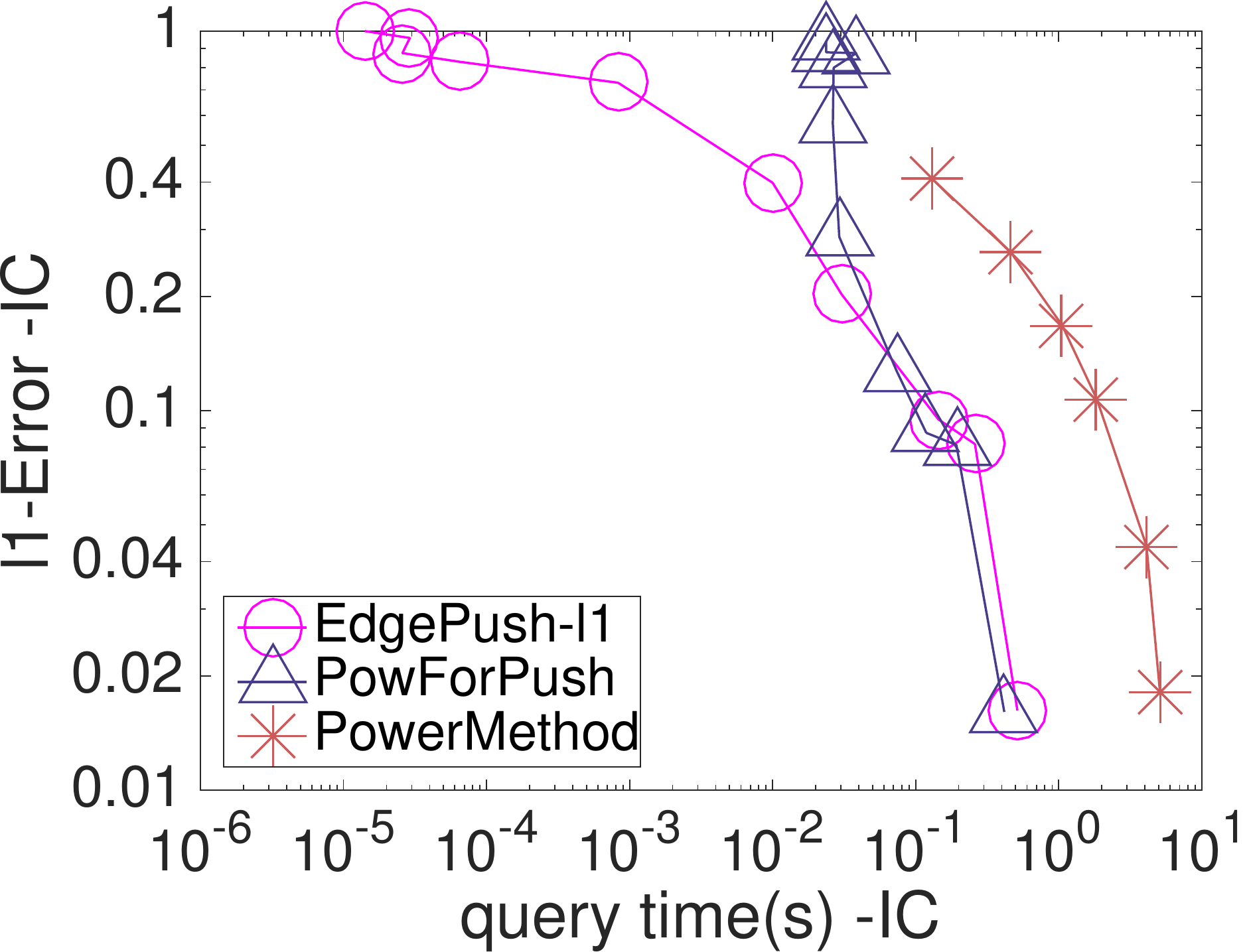} &
			\hspace{-2mm} \includegraphics[width=41mm]{./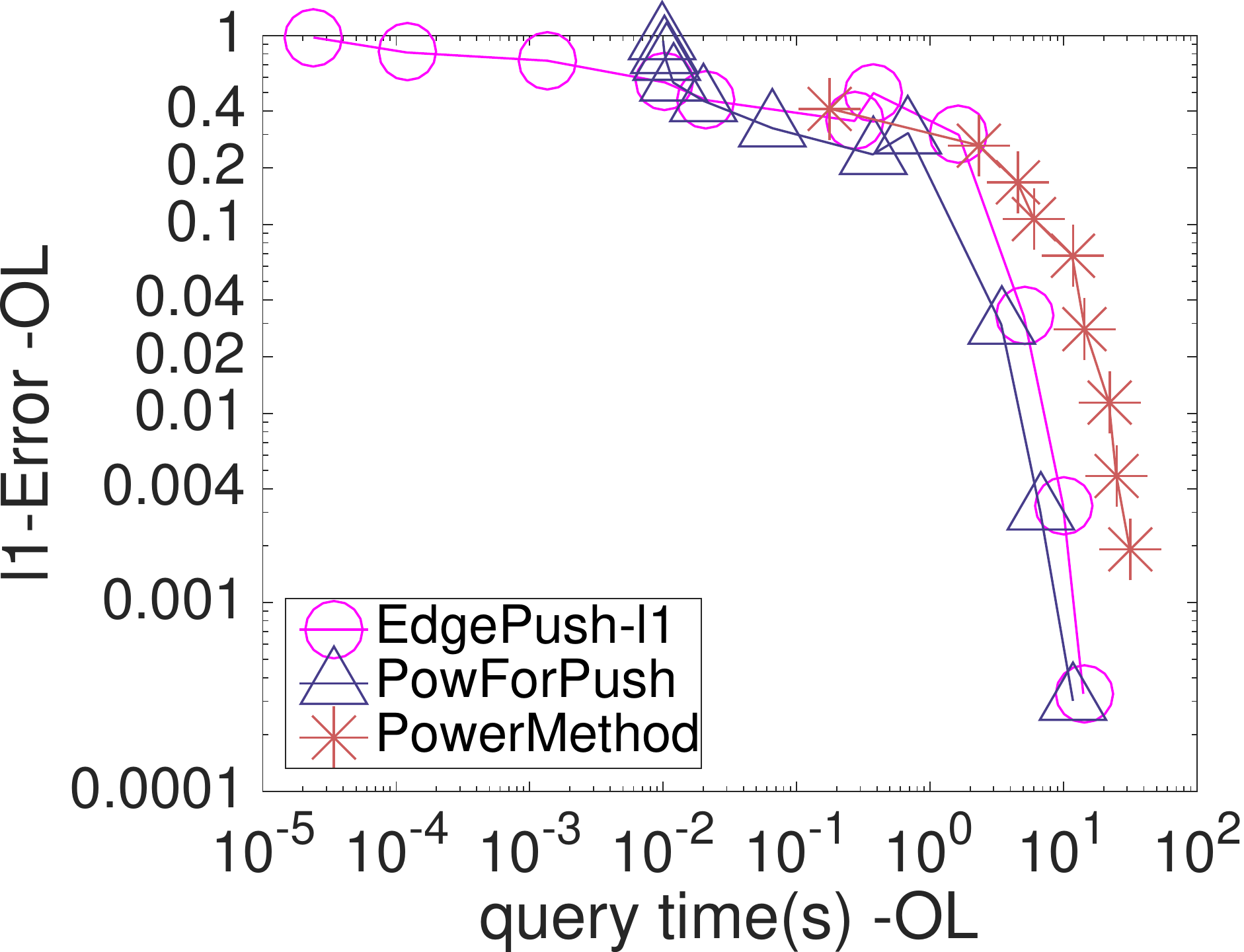} 
		\end{tabular}
		\vspace{-5mm}
		\caption{\hz {\em $\ell_1$-error} v.s. query time on motif-based weighted graphs.}
		\label{fig:l1error-query-real_l1}
		\vspace{-1mm}
    \end{minipage}
    
	\begin{minipage}[t]{1\textwidth}
		\centering
		\vspace{+0.5mm}
		\begin{tabular}{cccc}
			\hspace{-3mm} \includegraphics[width=41mm]{./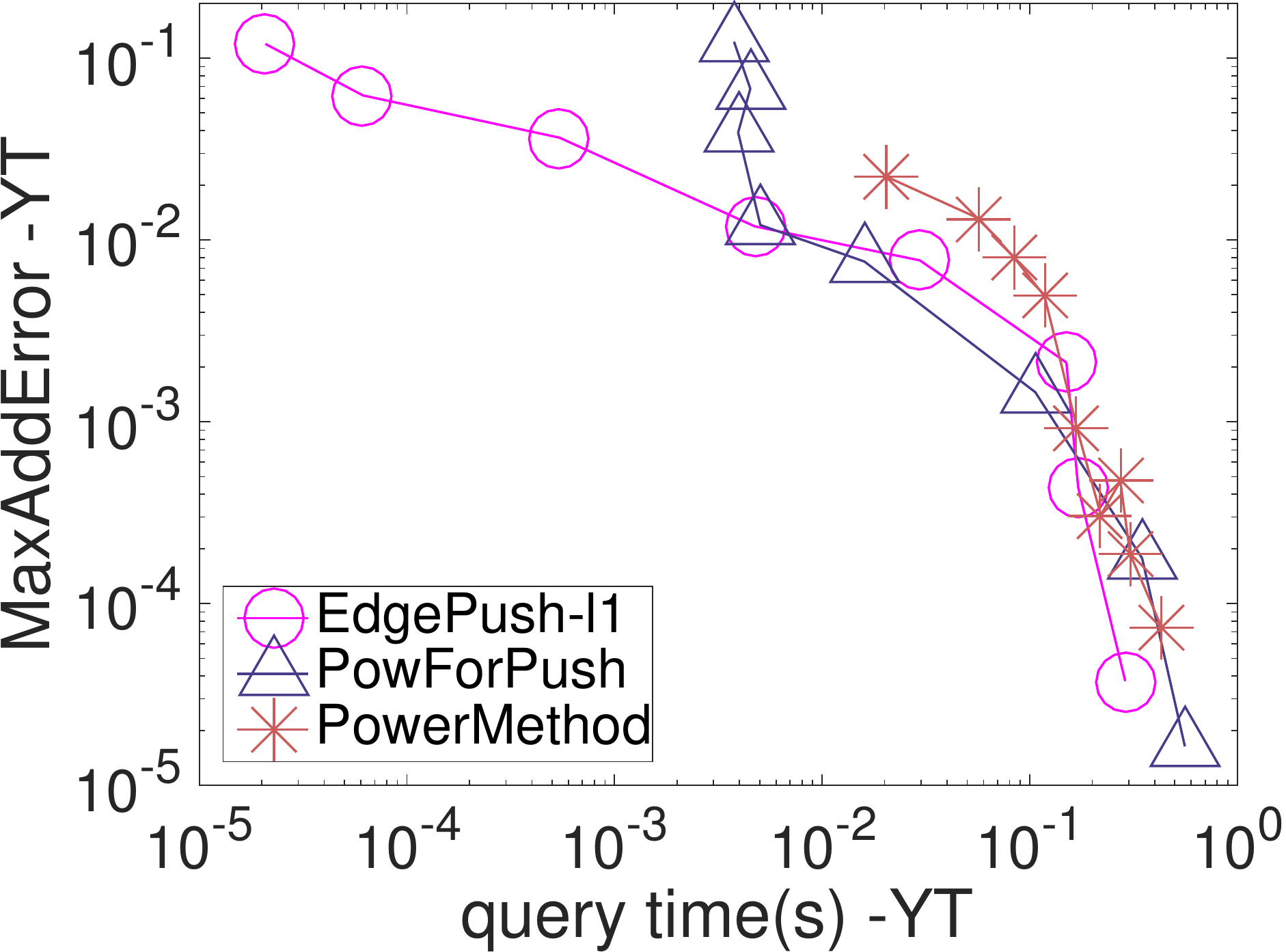} &
			\hspace{-2mm} \includegraphics[width=41mm]{./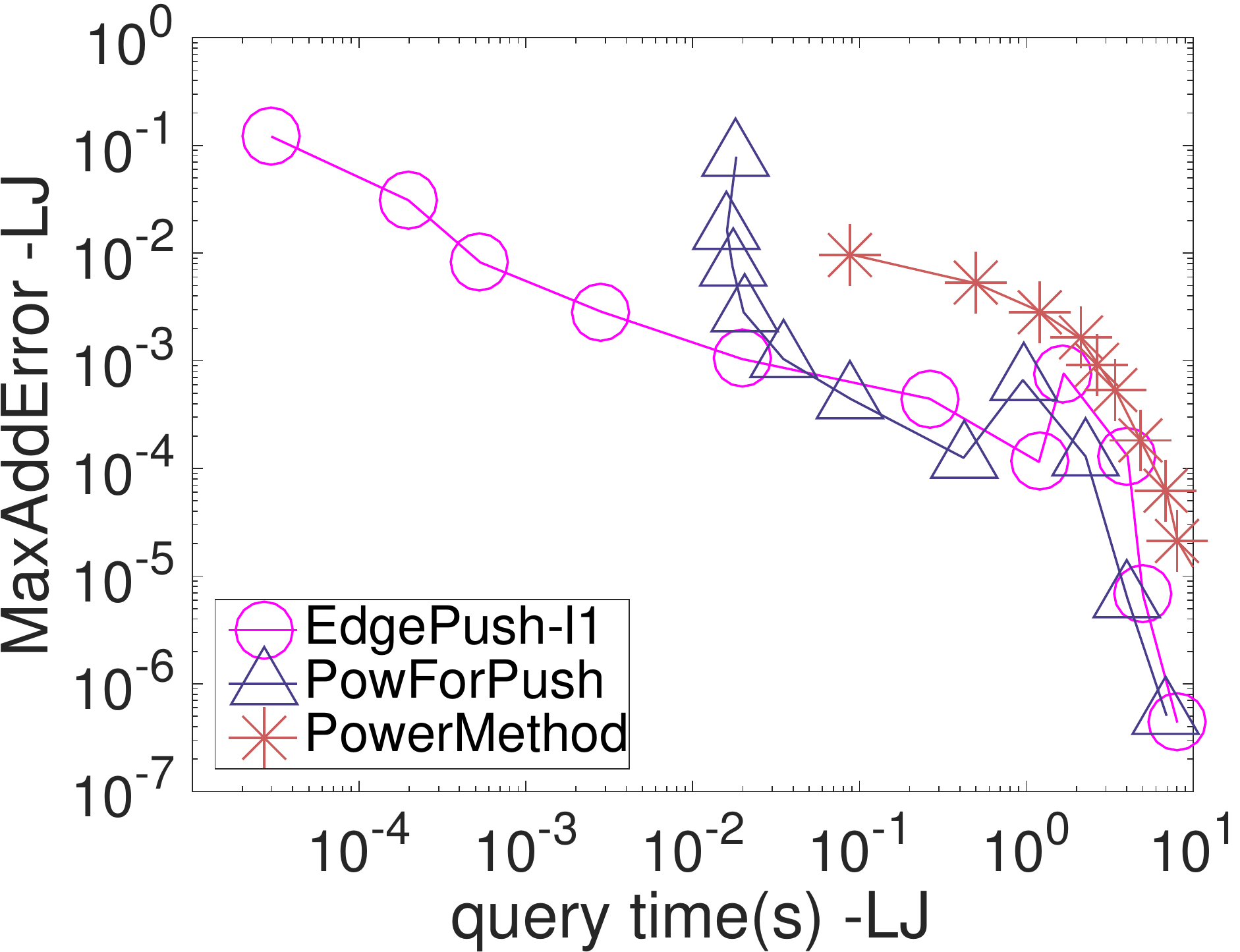} &
			\hspace{-2mm} \includegraphics[width=41mm]{./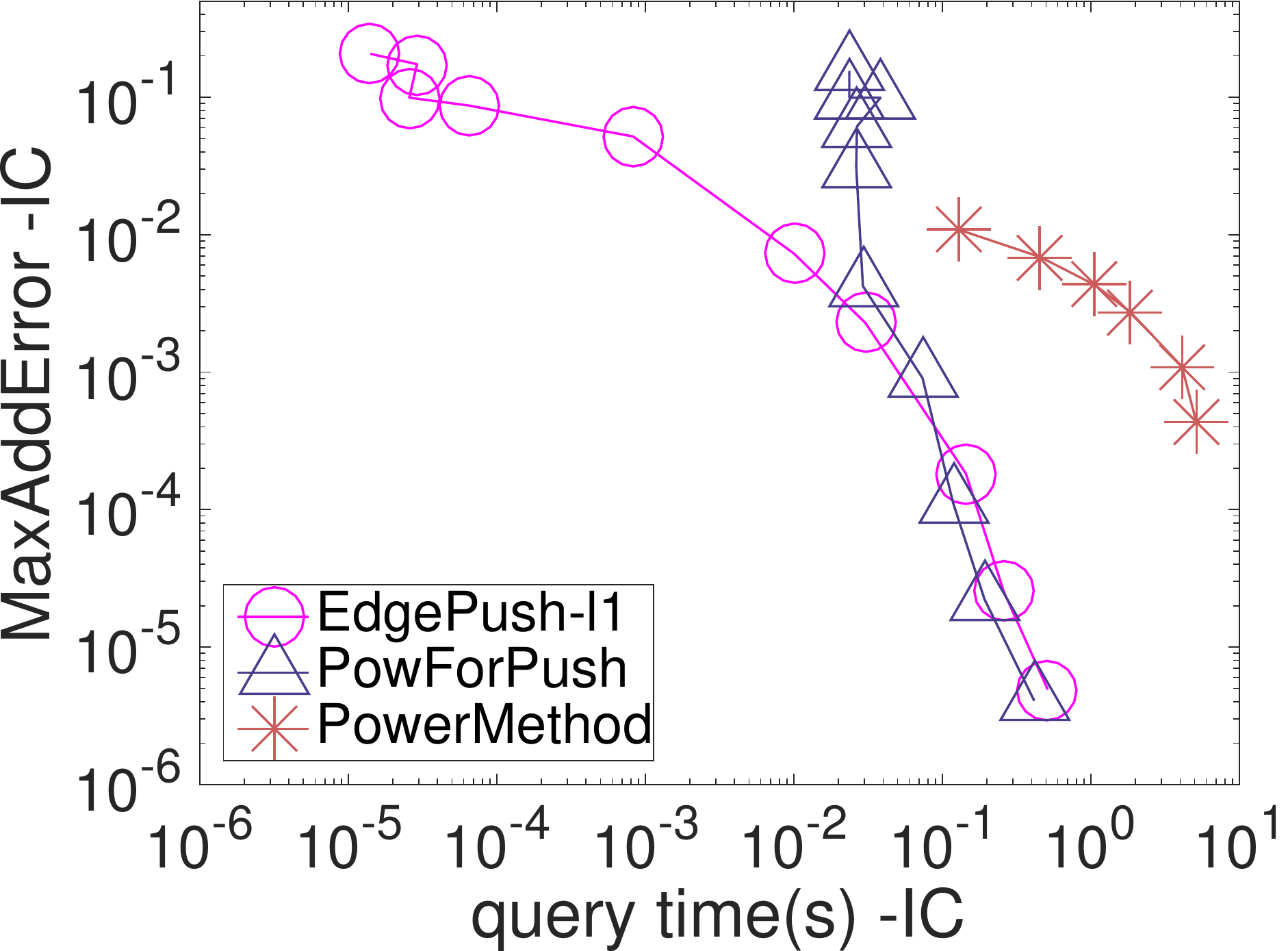} &
			\hspace{-2mm} \includegraphics[width=41mm]{./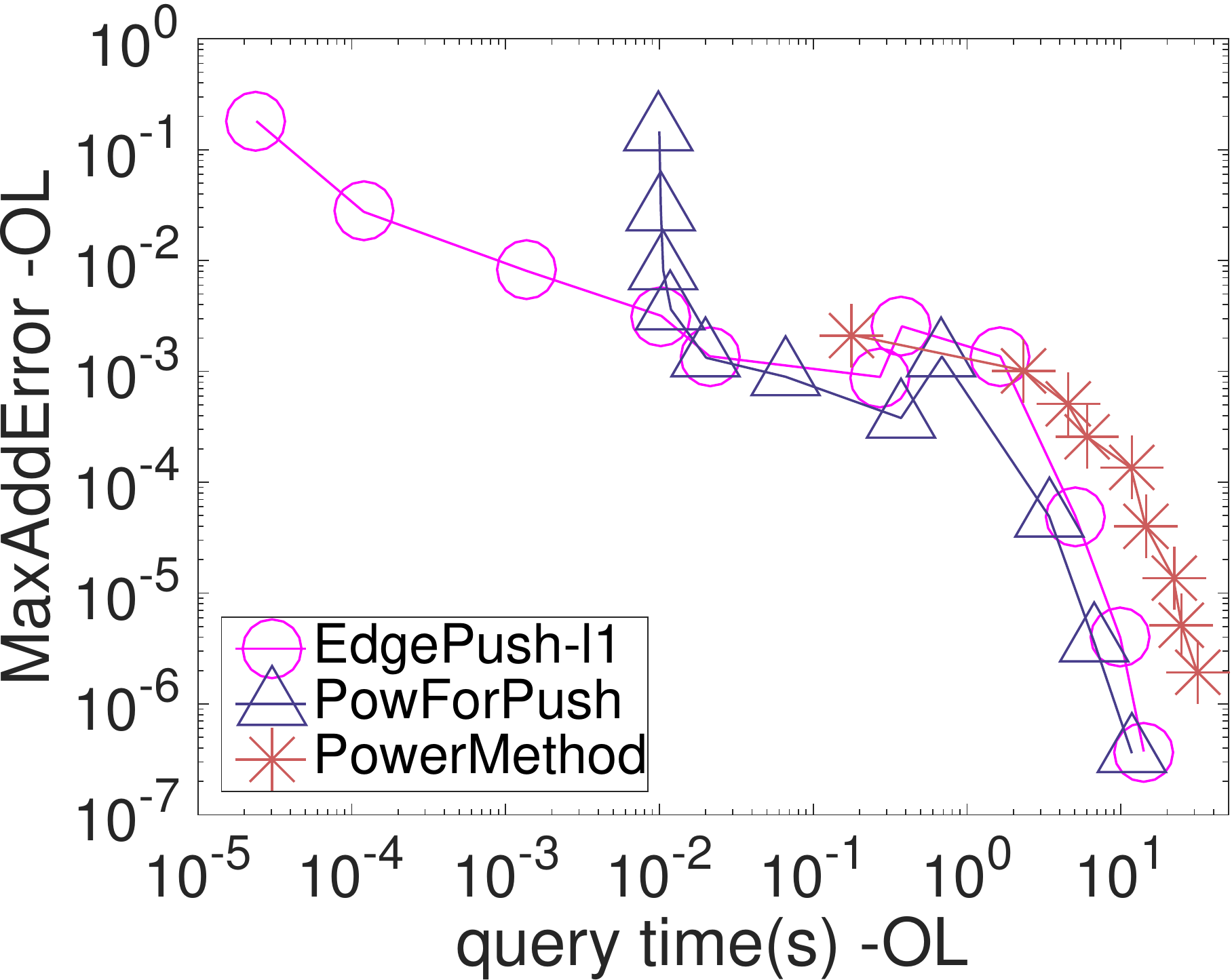} 
		\end{tabular}
		\vspace{-5mm}
		\caption{\hz {\em MaxAddErr} v.s. query time on motif-based weighted graphs.}
		\label{fig:maxerror-query-real_l1}
		\vspace{-1mm}
    \end{minipage}

    \begin{minipage}[t]{1\textwidth}
		\centering
		\vspace{+0.5mm}
		\begin{tabular}{cccc}
			\hspace{-3mm} \includegraphics[width=41mm]{./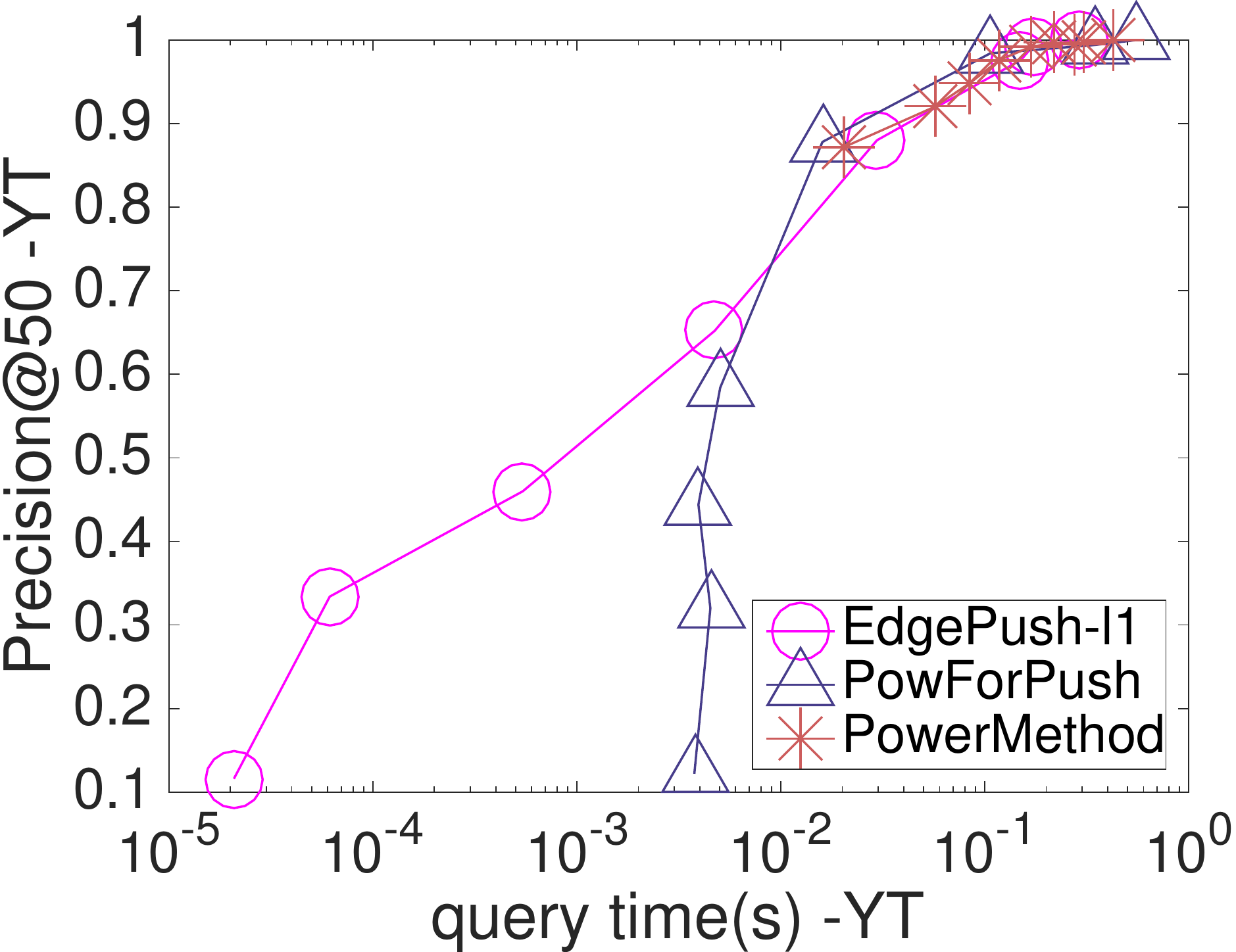} &
			\hspace{-2mm} \includegraphics[width=41mm]{./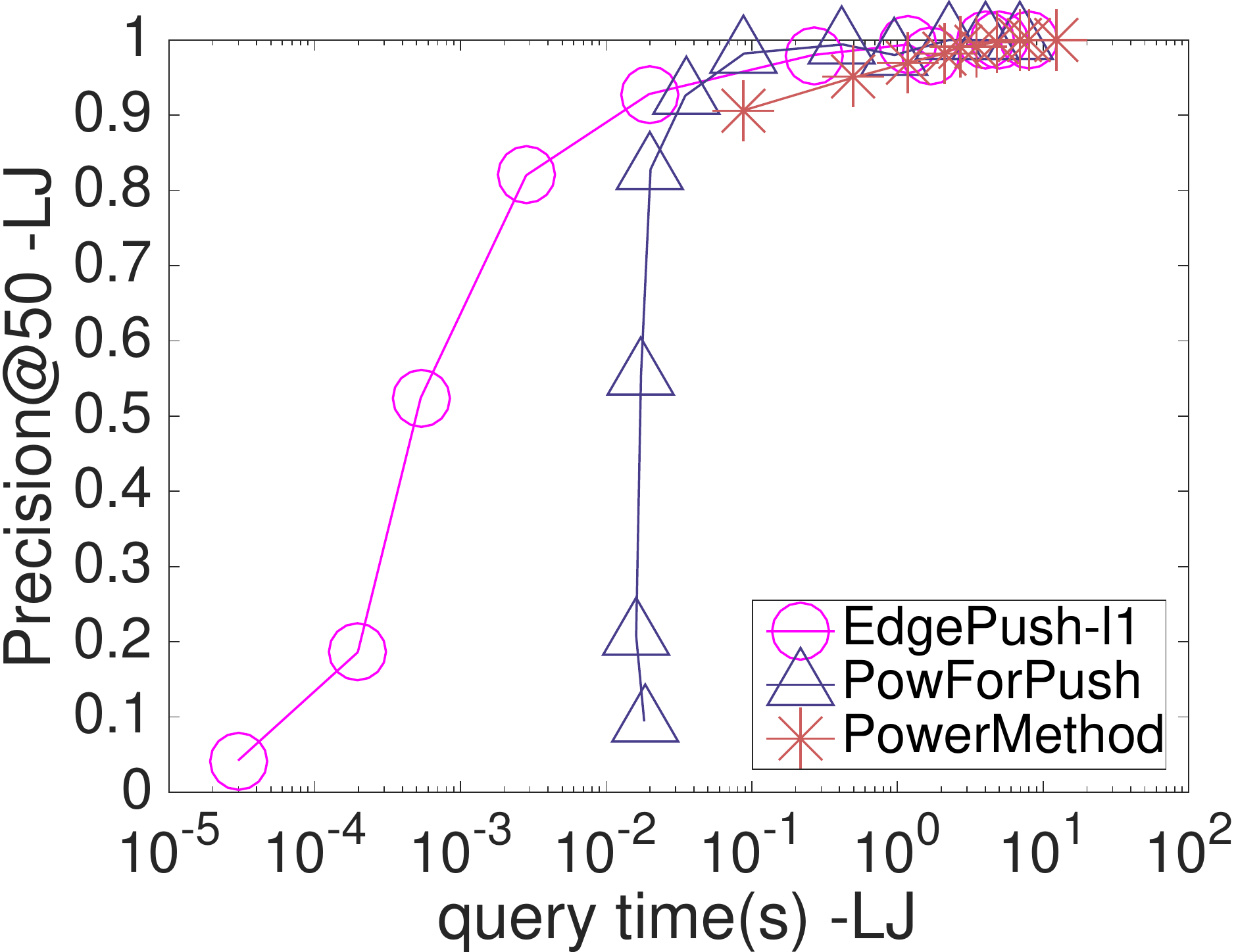} &
			\hspace{-2mm} \includegraphics[width=41mm]{./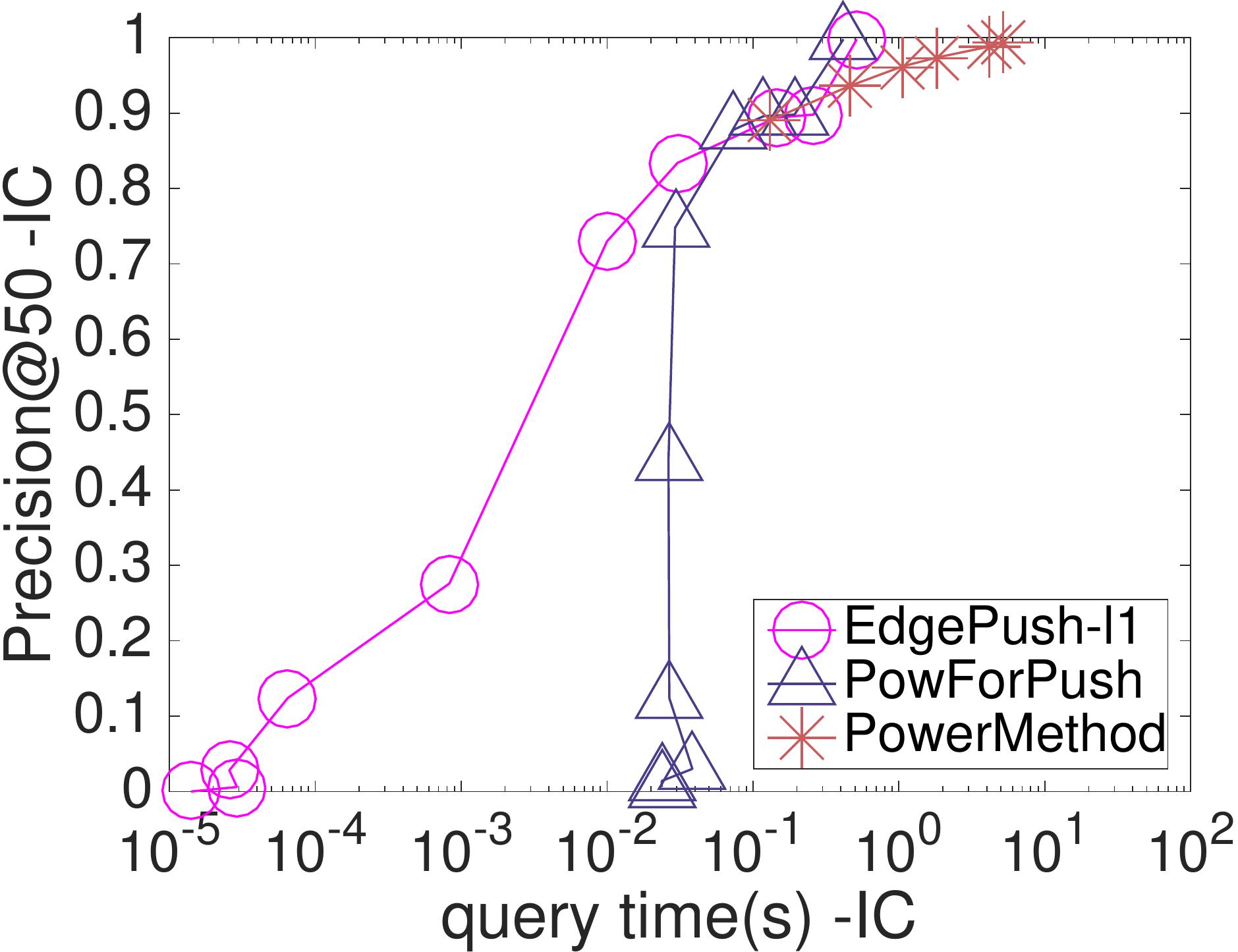} &
			\hspace{-2mm} \includegraphics[width=41mm]{./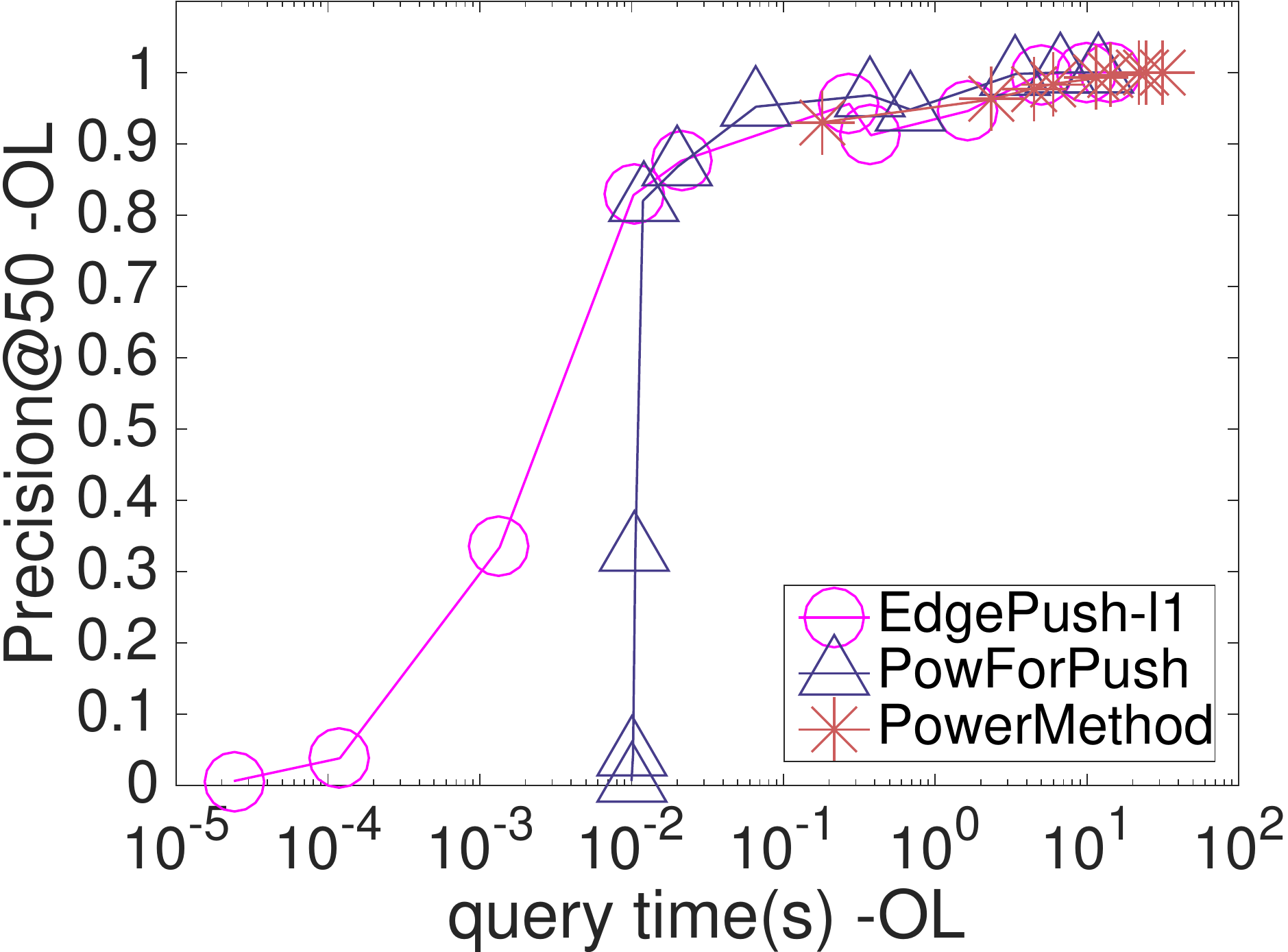} 
		\end{tabular}
		\vspace{-5mm}
		\caption{\hz {\em precision@50} v.s. query time on motif-based weighted graphs.}
		\label{fig:precision-query-real_l1}
		\vspace{-2mm}
	\end{minipage}
\end{figure*}

\begin{figure*}[t]
    \begin{minipage}[t]{1\textwidth}
		\centering
		\vspace{-5mm}
		\begin{tabular}{cccc}
			\hspace{-4mm} \includegraphics[width=43mm]{./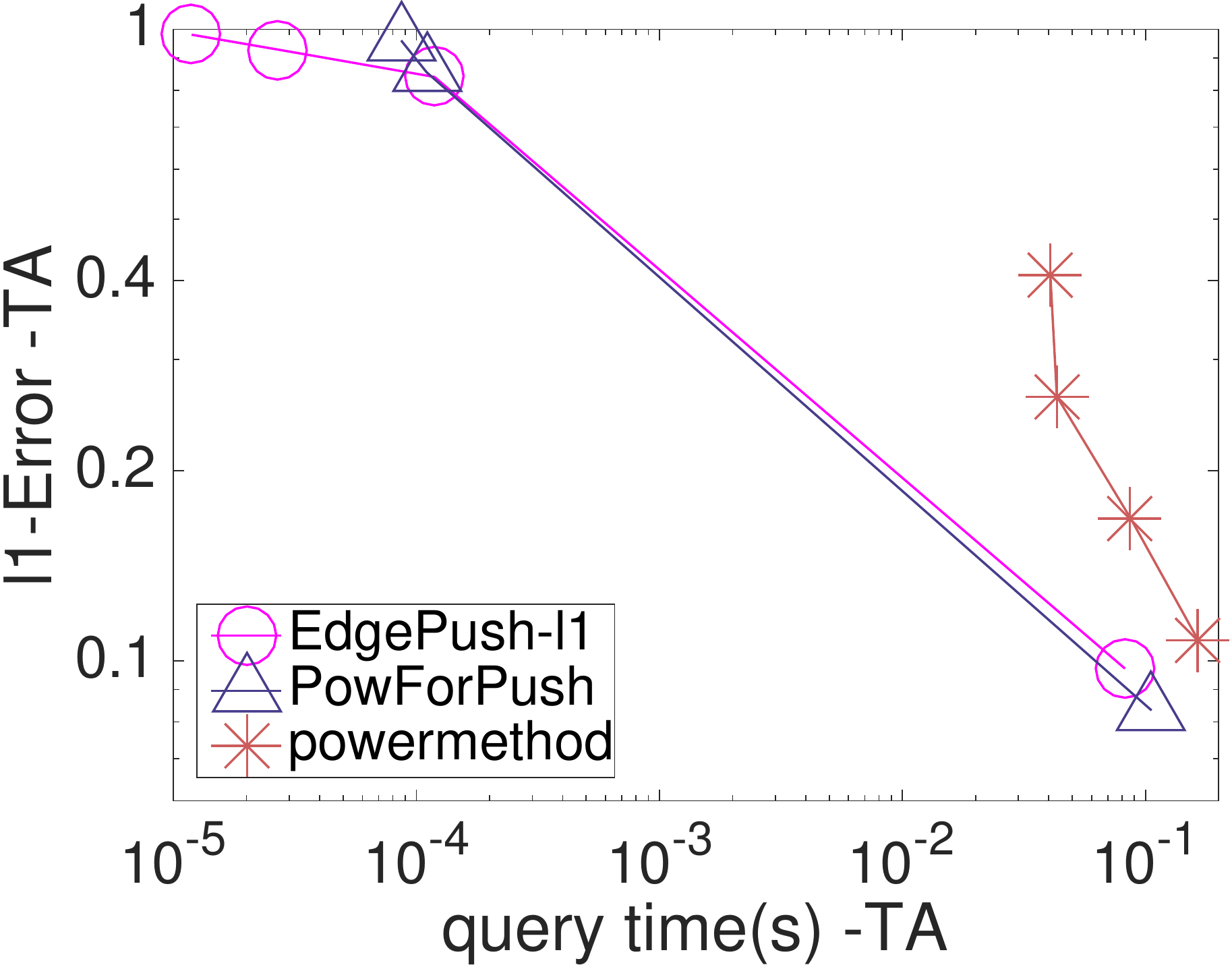} &
    		\hspace{-3mm} \includegraphics[width=43mm]{./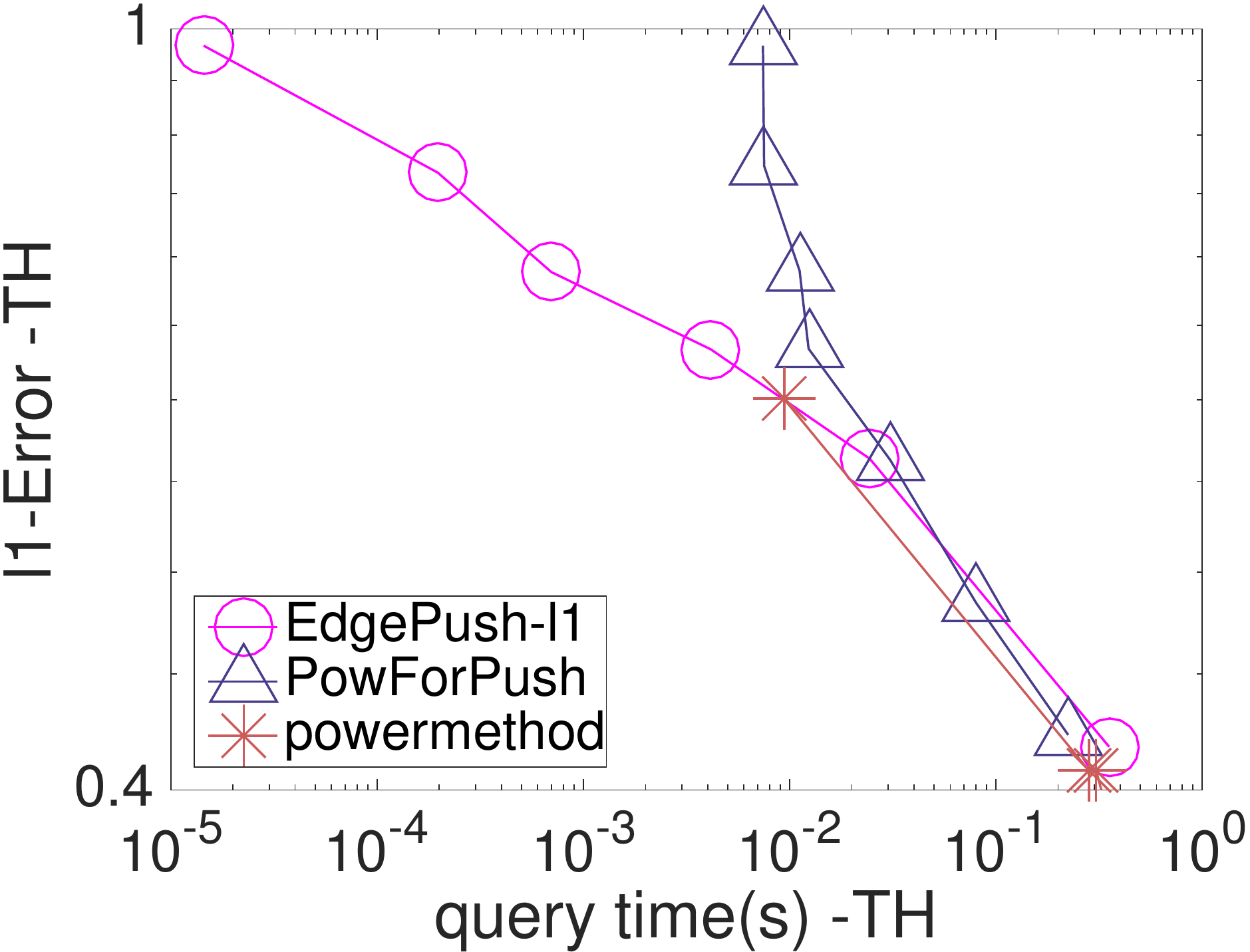} &
			\hspace{-3mm} \includegraphics[width=43mm]{./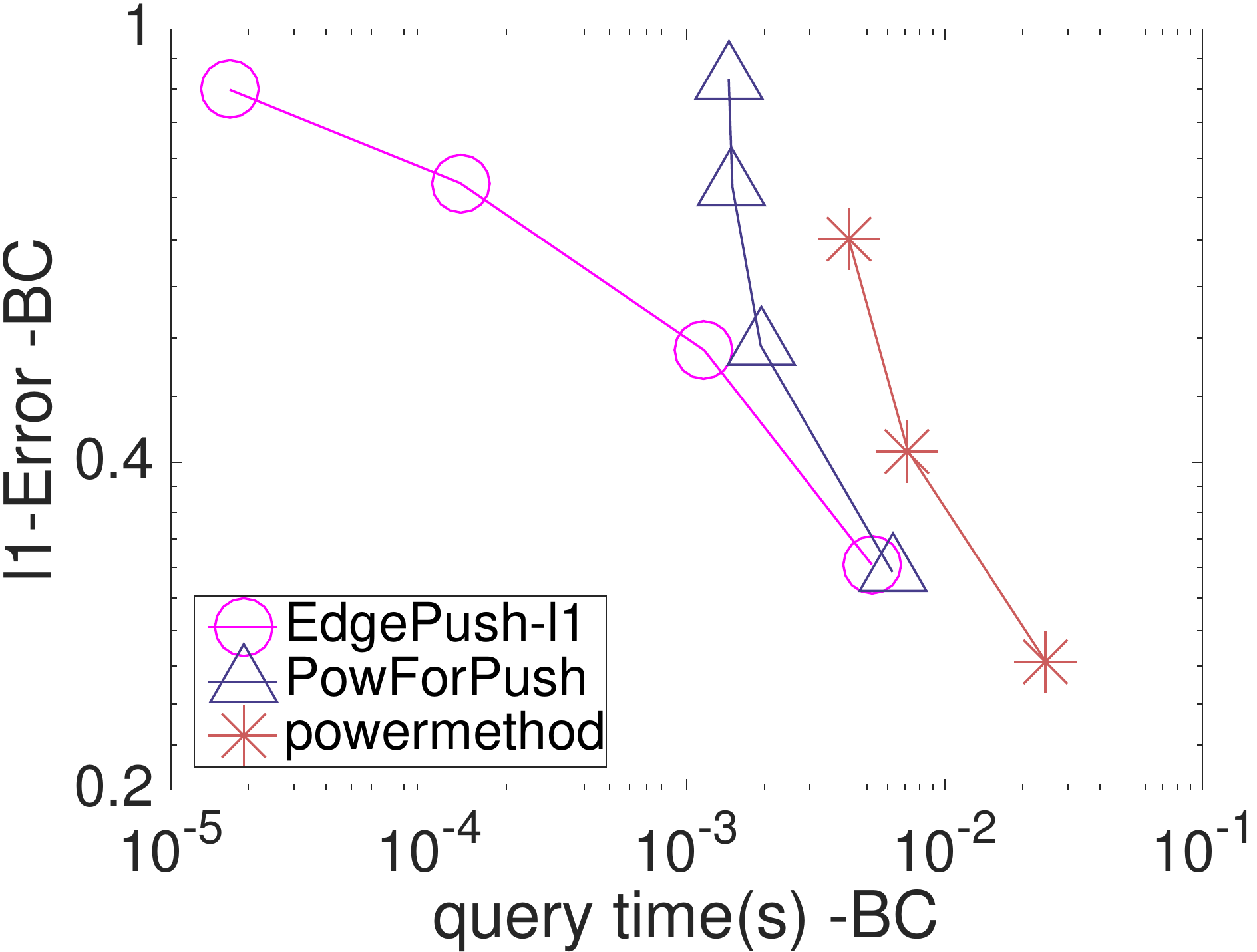} &
			\hspace{-3mm} \includegraphics[width=43mm]{./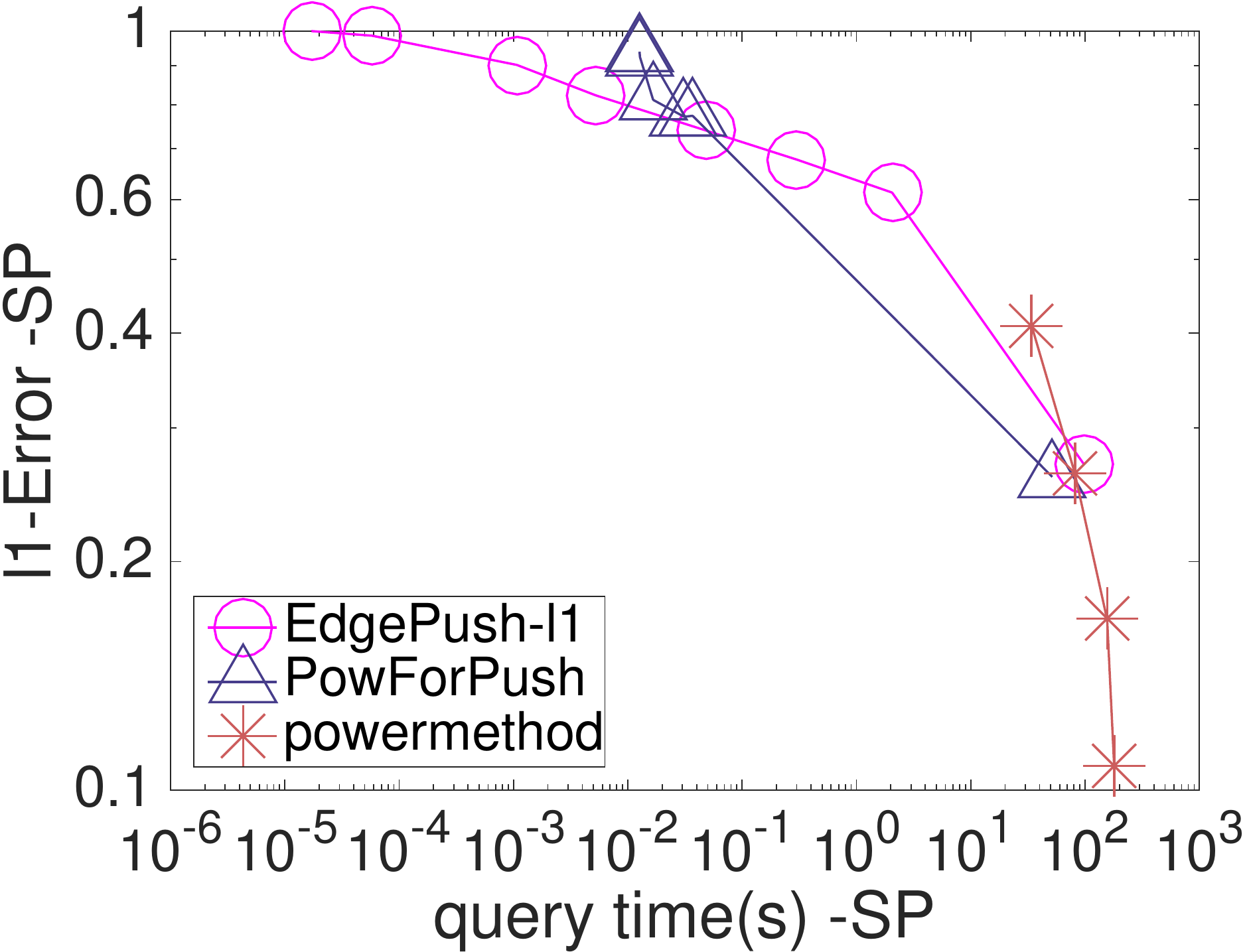}\\
		\end{tabular}
		\vspace{-5mm}
		\caption{{\em $\ell_1$-error} v.s. query time on real weighted graphs.}
		\label{fig:l1error-query-real_l1-real}
    \end{minipage}
    
	\begin{minipage}[t]{1\textwidth}
		\centering
		\vspace{+1mm}
		\begin{tabular}{cccc}
			\hspace{-4mm} \includegraphics[width=43mm]{./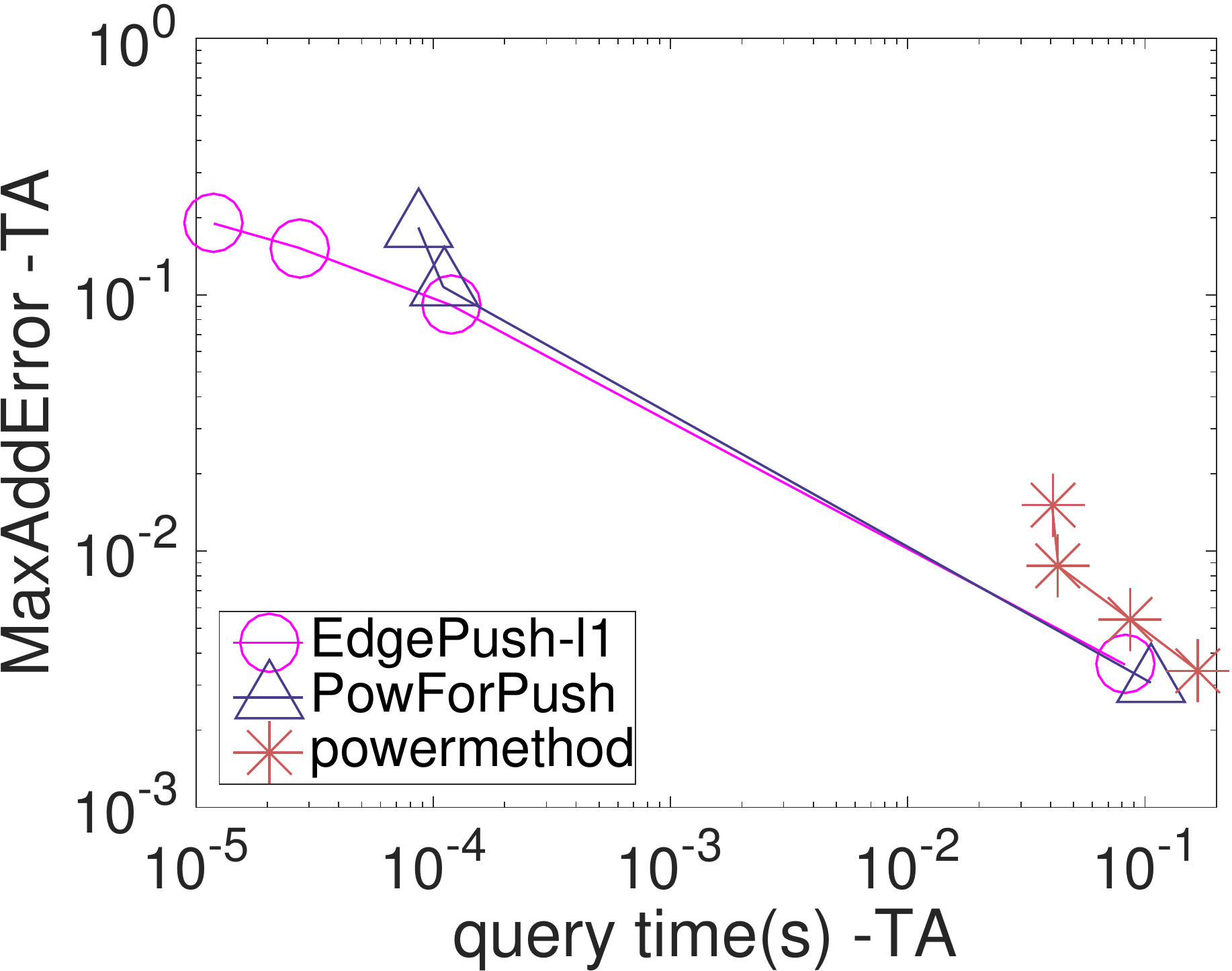} &
			\hspace{-3mm} \includegraphics[width=43mm]{./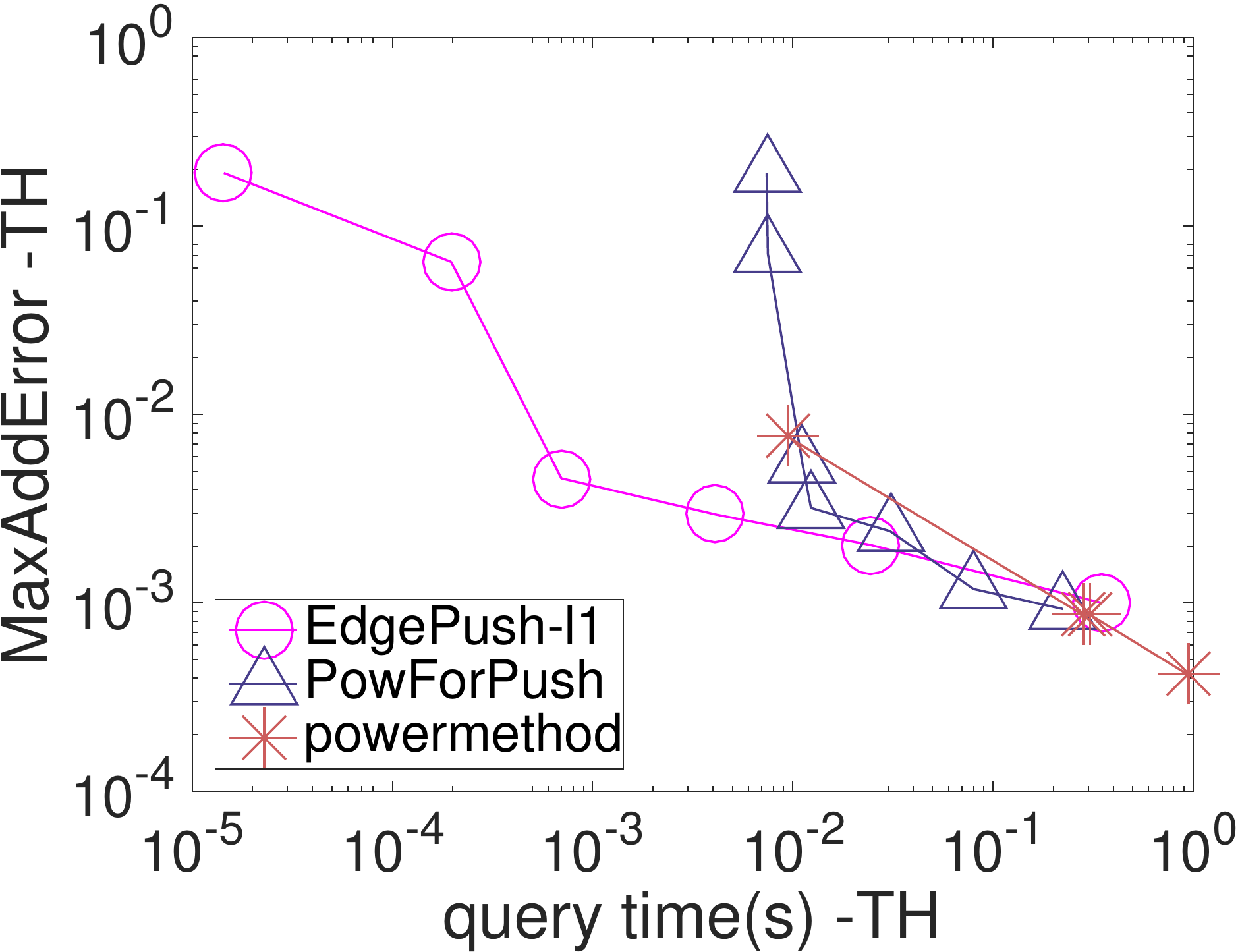} &
			\hspace{-3mm} \includegraphics[width=43mm]{./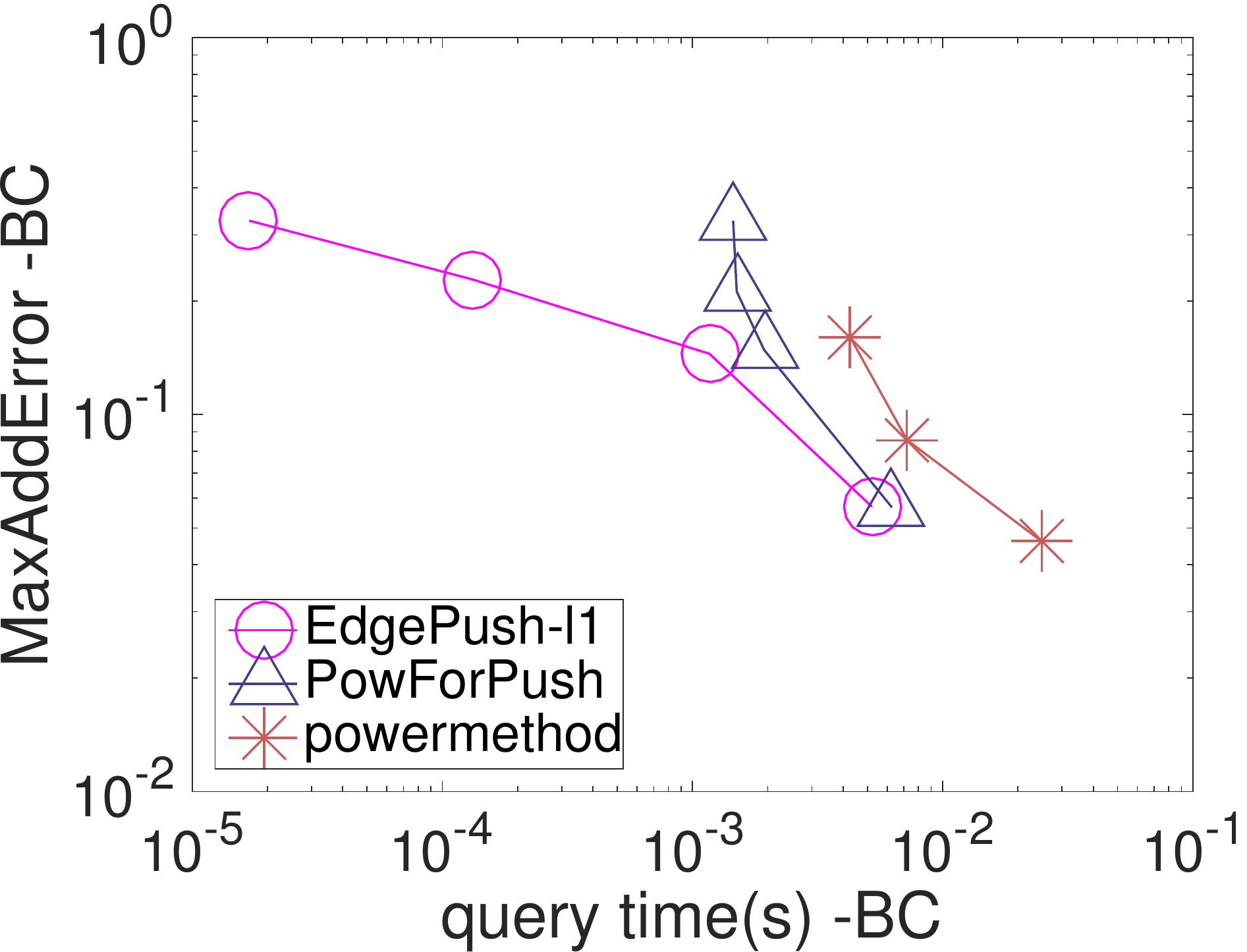} &
			\hspace{-3mm} \includegraphics[width=43mm]{./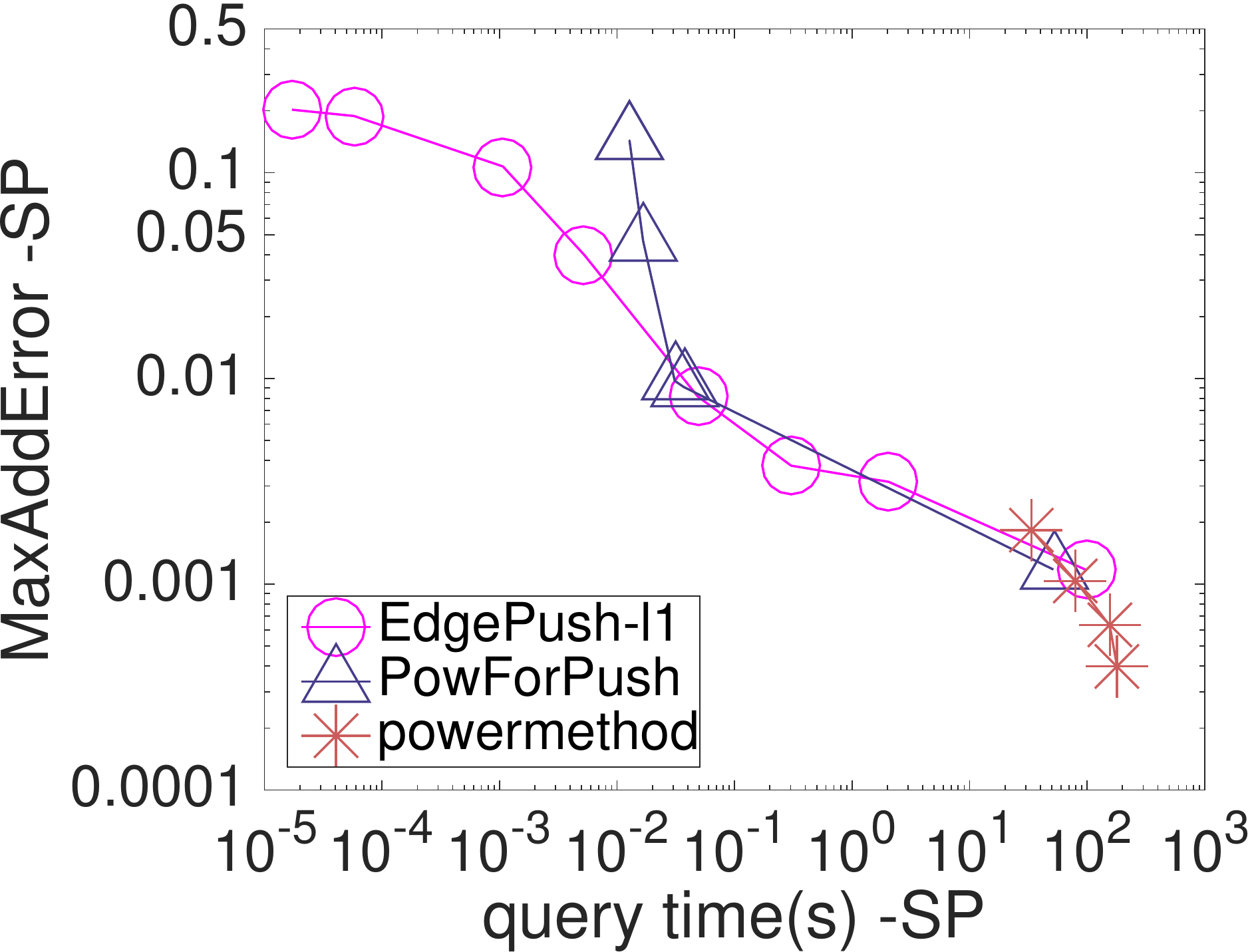} \\
		\end{tabular}
		\vspace{-5mm}
		\caption{{\em MaxAddErr} v.s. query time on real weighted graphs.}
		\label{fig:maxerror-query-real_l1-real}
    \end{minipage}

    \begin{minipage}[t]{1\textwidth}
		\centering
		\vspace{+1mm}
		\begin{tabular}{cccc}
			\hspace{-4mm} \includegraphics[width=43mm]{./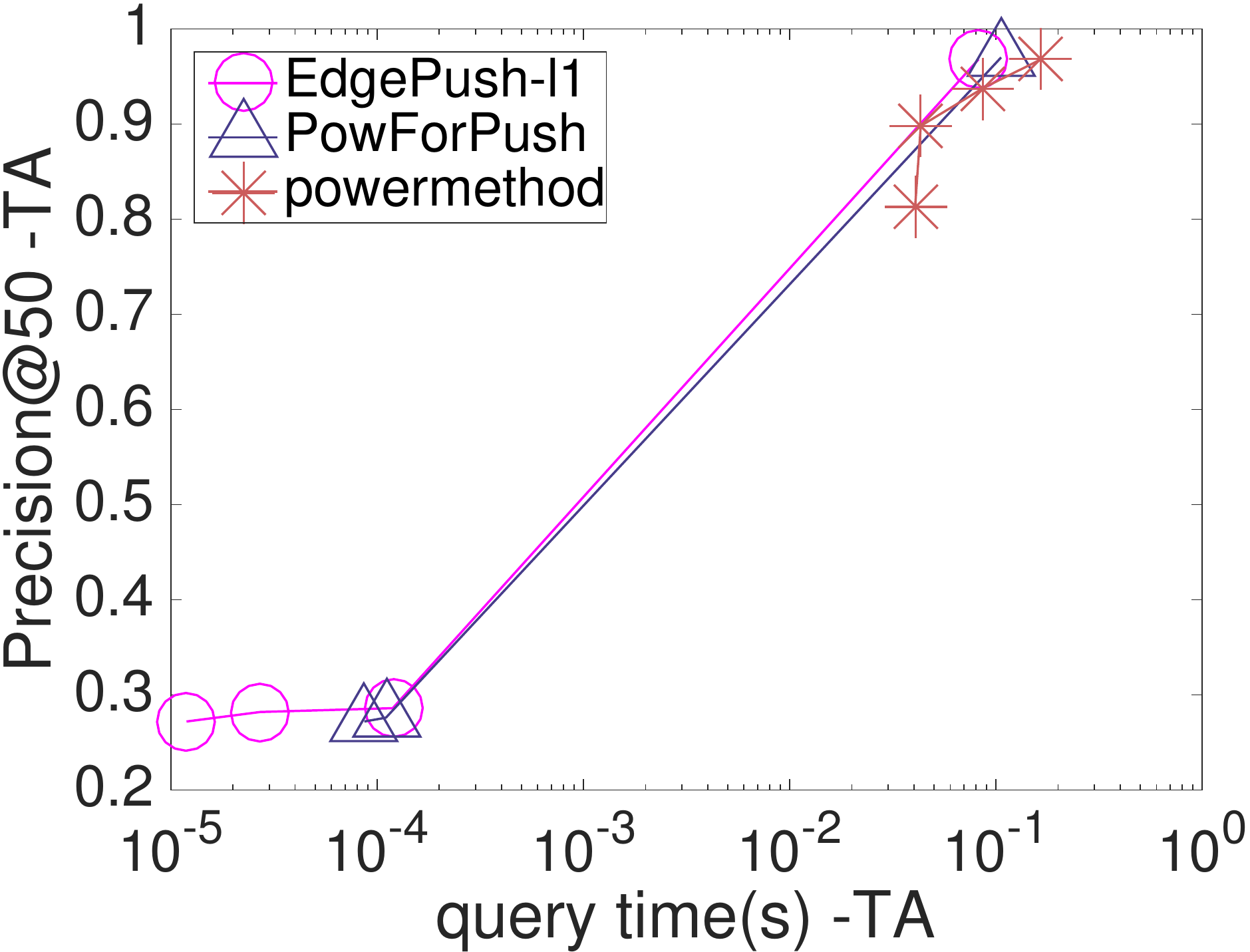} &
			\hspace{-3mm} \includegraphics[width=43mm]{./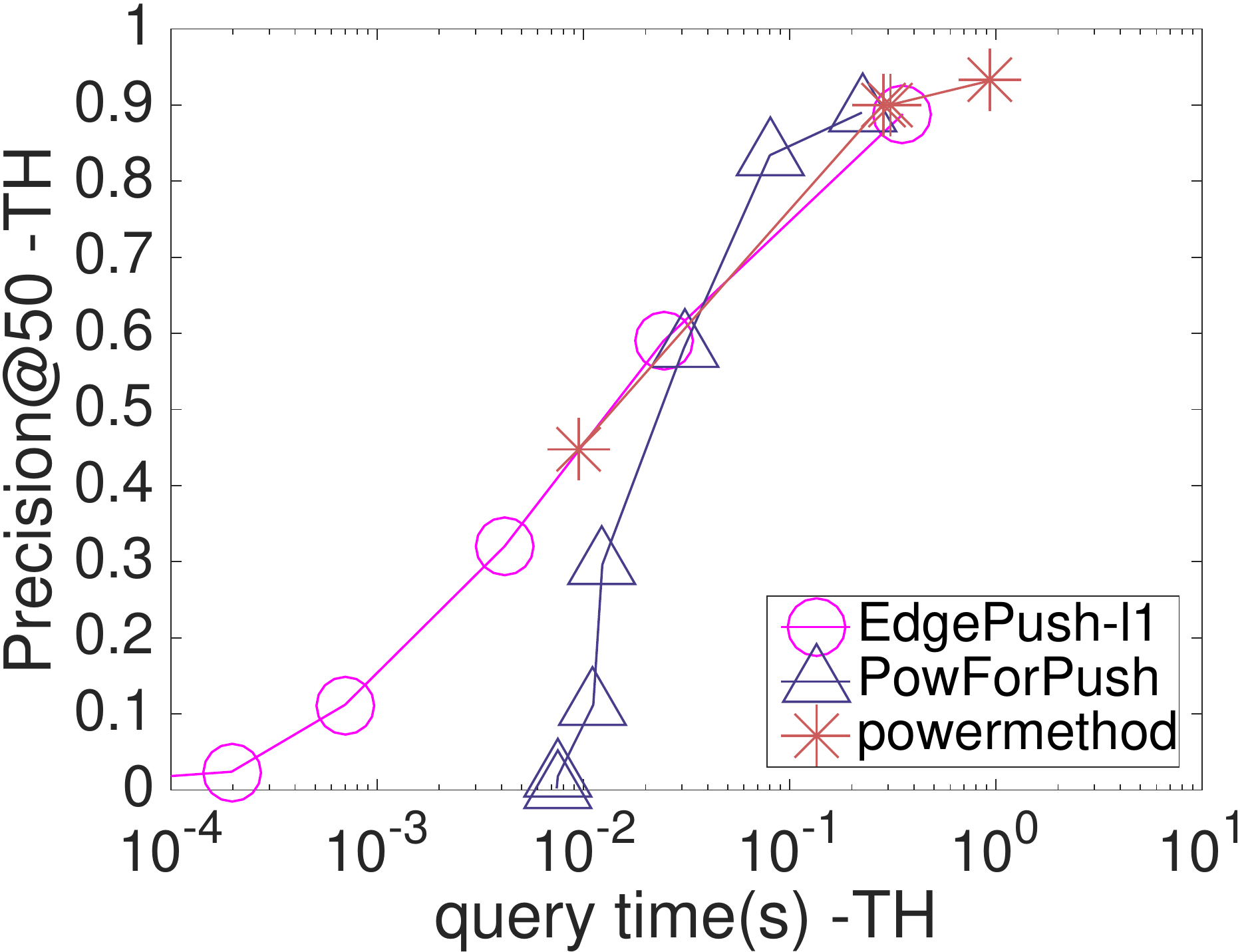} &
			\hspace{-3mm} \includegraphics[width=43mm]{./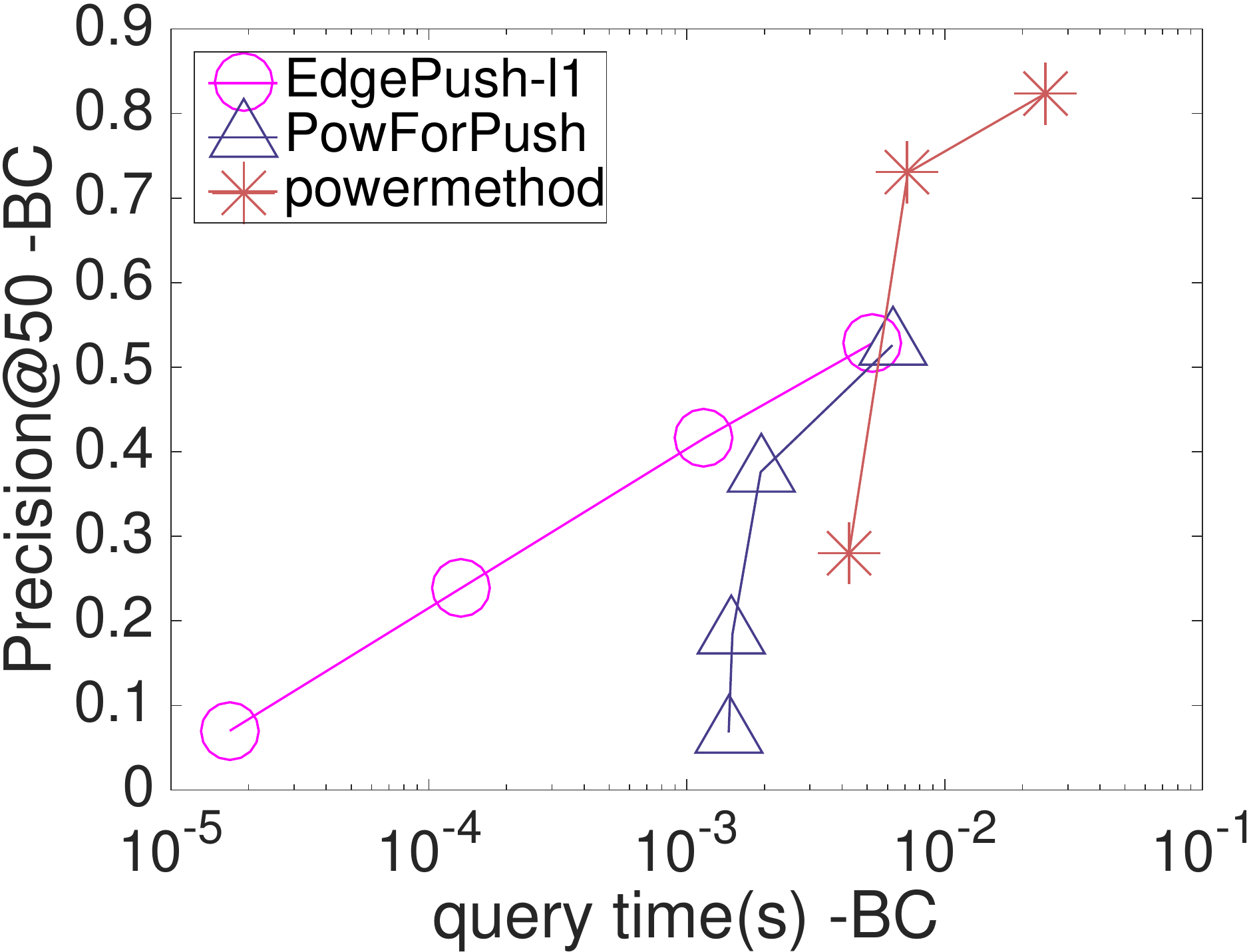} &
			\hspace{-3mm} \includegraphics[width=43mm]{./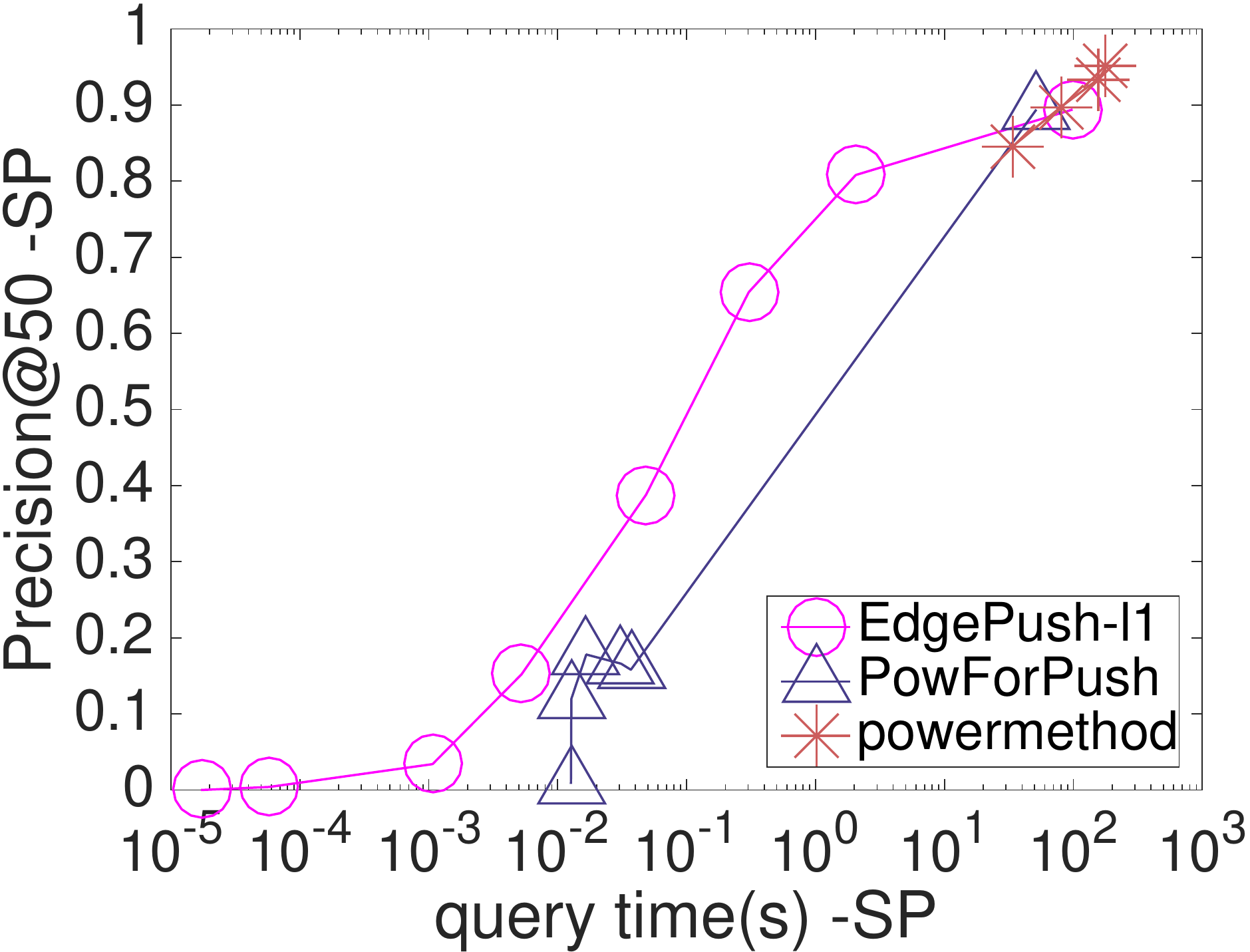}\\ 
		\end{tabular}
		\vspace{-2mm}
		\caption{{\em precision@50} v.s. query time on real weighted graphs.}
		\label{fig:precision-query-real_l1-real}
		\vspace{-2mm}
	\end{minipage}
\end{figure*}

\vspace{-2mm}
\subsection{SSPPR with Normalized Additive Error}\label{subsec:add}
In this subsection, we evaluate the effectiveness of \edgepush with normalized additive error. Furthermore, we apply \edgepush to the local clustering application to achieve better efficiency. 


\header{\bf Evaluation metrics.}
We adopt three metrics for evaluation.
\begin{itemize}[leftmargin = *]
	\item \underline{\nadd}: In the experiments, we calculate the maximum of the normalized additive error for each node to evaluate the approximation quality of the SSPPR queries. More precisely, we define \nadd as $\max_{u\in V} \left|\frac{\vpi(u)}{d(u)}-\frac{\epi(u)}{d(u)}\right|$,  
where $\vpi(u)$ and $\epi(u)$ denote the ground-truth and estimated PPR value of $u$ (w.r.t the source node $s$ by default), respectively. 
    \item \underline{{\em normalized precision@k}}: {\hz We use {\em normalized precision@k} to evaluate the performances for identifying top-$k$ results.} 
    Specifically, for an SSPPR vector $\vpi$, we define $V_k\left(\D^{-1}\vpi\right)$ (resp.  $V_k\left(\D^{-1}\epi\right)$) as the set of the top-$k$ nodes $u$ with the largest  $\frac{\vpi(u)}{d(u)}$ (resp. $\frac{\epi(u)}{d(u)}$) among all nodes in the graph. 
    The {\em normalized precision@$k$} is defined as the fraction of nodes in $V_k\left(\D^{-1}\vpi\right)$ that concurs with $V_k\left(\D^{-1}\epi\right)$. 
	We set $k = 50$ in the experiments. 
	\item \underline{Conductance}: 
	We employ {\em conductance} to measure the quality of the clusters, which is defined in Section~\ref{sec:pre}. 
\end{itemize}
\vspace{-2mm}
\header{\bf Methods.} For the SSPPR queries with normalized additive error, we compare our \edgepush (dubbed as \edgepush-Add) against four competitor methods: (i) MAPPR~\cite{yin2017MAPPR}: a version of the \localpush algorithm on weighted graphs; (ii) Monte-Carlo sampling~\cite{jeh2003scaling, fogaras2005MC}; (iii) FORA~\cite{wang2017fora}: an approximate SSPPR algorithm which combines the strength of \localpush and Monte-Carlo smapling; (iv) SpeedPPR~\cite{wu2021SpeedPPR}: an approximate SSPPR algorithm which combines PowForPush (an optimized version of Power Method) and Monte-Carlo sampling. 

{\hz
According to Algorithm~\ref{alg:APPR}, \localpush only has one parameter: the termination threshold $\theta$. {\hz We vary $\theta$ in $[10^{-3}, 10^{-9}]$ on both motif-based and real-world weighted graphs}. 
For Monte-Carlo sampling, FORA and SpeedPPR, they all have three parameters: the relative error threshold $\delta$, the relative error $\e_r$ and the failure probability $p_f$. Following~\cite{wang2017fora,wu2021SpeedPPR}, we fix $\e_r=0.5$ and $p_f=\frac{1}{n}$, where $n$ is the number of node in the graph. For Monte-Carlo sampling and FORA, we vary $\delta$ in $[10^{-1}, 10^{-5}]$. For SpeedPPR, we vary $\delta$ in $[5\times 10^{-1}, 5\times 10^{-5}]$ on motif-based weighted graphs, and in $[10^{-1}, 10^{-5}]$ on real weighted graphs. For our \edgepush, as shown in Algorithm~\ref{alg:edge-search}, each edge $\la u,v \ra \in \bar{E}$ has an individual termination threshold $\theta(u,v)$. According to Theorem~\ref{thm:edge-efficiency-add}, we set $\theta(u,v)=\frac{r_{\max}\cdot \sqrt{\A_{uv}}}{\sum_{\la x,y \ra \in \bar{E}}\sqrt{\A_{xy}}}$ for normalized additive error $r_{\max}$ and vary $r_{\max}$ from $10^{-3}$ to $10^{-9}$. All of the decay step is $0.1$. Additionally, we set the teleport probability $\alpha$ to $0.2$ in all the experiments. }


\header{\bf Results.} 
{\hz
In Figure~\ref{fig:maxerror-query-real_pro} and Figure~\ref{fig:maxerror-query-real_pro-real}, we draw the trade-off plots between the query time and the normalized maximum additive error (denoted as \nadd) on motif-based weighted graphs and real weighted graphs, respectively. Due to the out-of-memory problem, we omit the experimental results of FORA on SP. We observe that under the same \nadd, \edgepush costs the smallest query time among all these methods on all datasets. {\crc In particular, even on Threads (TH) whose $\cos^2(\p)=0.97$, \edgepush still outperforms all baselines in terms of query efficiency, which demonstrates the effectiveness of \edgepush. } Moreover, in Figure~\ref{fig:precision-query-real_pro} and Figure~\ref{fig:precision-query-real_pro-real}, we show the trade-offs between {\em normalized precision@50} and query time. For the eight datasets, an overall observation is that \edgepush outperforms all competitors by achieving higher precision results with less query time. Most notably, on the Orkut-Links (OL) dataset, \edgepush achieves a {\em normalized precision@50} of 0.8 using a query time of 0.0002 seconds, while the closest competitor, MAPPR, achieves a {\em normalized precision@50} of 0.6 using 0.026 seconds. Additionally, in Figure~\ref{fig:precision-query-real_pro-real}, we observe that compared to the performance of \edgepush on TH, the superiority of \edgepush over \localpush are more clear on TA, BC and SP. This concurs with the analysis that the superiority of \edgepush changes with the unbalancedness of edge weights. 

Furthermore, Figure~\ref{fig:conductance-query-real_pro} and Figure~\ref{fig:conductance-query-real_pro-real} show the trade-offs between {\em conductance} and the query time on motif-based and real weighted graphs. Again, our \edgepush outperforms other competitors by achieving smaller conductance values under the same query time. Additionally, we note that FORA and SpeedPPR gradually outperforms MAPPR in terms of the query efficiency for {\em conductance}. 
However, in the trade-off plots between query time and \nadd or \npre, MAPPR costs less query time compared to FORA or SpeedPPR under the same {\em normalized MaxAddErr} or {\em normalized precision@50}. Recall that FORA and SpeedPPR all combines \localpush with the Monte-Carlo sampling process. This suggests that the Monte-Carlo sampling method is in favor of the conductance criterion, while the \localpush process benefits from the \nadd and \npre. 
} 


\vspace{-6mm}
\subsection{SSPPR with $\ell_1$-Error}\label{subsec:l1}
Next, we demonstrate the effectiveness of \edgepush with $\ell_1$-error. 

\header{\bf Evaluation metrics.} To compare the query efficiency of \edgepush against other competitors, we adopt three metrics, \lerr, \add and \pre, for overall evaluation. 
\begin{itemize}[leftmargin = *]
\vspace{-1mm}
	\item \underline{\lerr}: As a classic evaluation metric, \lerr is defined as: $\|\epi - \vpi\|_1=\sum_{u\in V}|\epi_u-\vpi_u|$, where $\vpi$ and $\epi$ is the ground-truth and estimated SSPPR vectors, respectively. 
	\item \underline{\add}: To evaluate the maximum additive error of each SSPPR approximation, \add is defined as $\max_{u\in V} \left|\vpi_u-\epi_u\right|$.
	\item \underline{\pre}: 
	To evaluate the relative order of the estimated top-$k$ nodes with the highest SSPPR values, \pre is defined as the percentage of the nodes in $V_k(\epi)$ that coincides with the actual top-$k$ results $V_k(\vpi)$. Here $V_k(\vpi)$ and $V_k(\epi)$ denote the top-$k$ node sets for the ground-truth and estimated SSPPR values, respectively. Similarly, we set $k = 50$ in the experiments. 
\end{itemize}

\header{\bf Methods.} 
{\hz In this section, we compare the performance of \edgepush with $\ell_1$-error against two algorithms: Power Method~\cite{page1999pagerank} and PowForPush~\cite{wu2021SpeedPPR}. 
Recall that Power Method computes SSPPR queries by iteratively computing Equation~\eqref{eqn:powerdef}. 
Thus, we vary the number of iterations from $3$ to $15$ with an interval of $2$. }
PowForPush is the state-of-the-art algorithm for high-precision SSPPR queries, 
which gradually switches \localpush to Power Method with decreasing $\ell_1$-error. 
Specifically, in the first phase, PowForPush adopts \localpush to compute the SSPPR queries. 
When the current number of {\em active} nodes is greater than a specified {\em scanThreshold}, 
PowForPush switches to Power Method by performing a sequential scan technique to access active nodes for the \lpush operation. 
A node is called active if its residue is larger than the global termination threshold $\theta$.
The rationale behind this switching mechanism is that sequential scan is often more efficient  if the number of random access is relatively large.
{\hz In our experiments, we vary $\theta$ from $10^{-3}$ to $10^{-12}$ with 0.1 decay step. 
Inspired by PowForPush, we apply the same switching technique to our \edgepush for the fairness of comparison. 
Specifically, when the number of edges in the candidate set $\mathcal{C}$ is significantly great, 
we switch \edgepush to Power Method by performing sequential scanning to access active edges and stop maintaining the two-level structure for each node.
An edge $\la u,v \ra\in \bar{E}$ is active if its edge residue is larger than the termination threshold $\theta(u,v)$.
According to Theorem~\ref{thm:edge-efficiency-add}, by setting $\theta(u,v)=\frac{\e \cdot \sqrt{\A_{uv}}}{\sum_{\la x,y \ra \in \bar{E}}\sqrt{\A_{xy}}}$ for $\forall \la u,v \ra \in \bar{E}$, \edgepush achieves the minimum of the expected overall running time subjected to the $\ell_1$-error constraint. To align with the global termination threshold $\theta$ adopted in PowForPush, we vary $\theta(u,v)$ for $\forall \la u,v \ra \in \bar{E}$ from $\frac{10^{-3}\cdot\|\A\|_1 \cdot \sqrt{\A_{uv}}}{\sum_{\la  x,y\ra \in \bar{E}}\sqrt{\A_{xy}}}$ to $\frac{10^{-11}\cdot\|\A\|_1 \cdot \sqrt{\A_{uv}}}{\sum_{\la  x,y\ra \in \bar{E}}\sqrt{\A_{xy}}}$ with 0.1 decay step. }
{\hz To understand the variation interval of $\theta(u,v)$, note that on unweighted graphs where $\A_{uv}=1$, $\theta(u,v)$ is varied in $[10^{-3},10^{-12}]$, which concurs with the variation of termination threshold $\theta$ in PowForPush. }

\begin{figure*}[t]
	\begin{minipage}[t]{1\textwidth}
		\centering
		\begin{tabular}{cccc}
			\hspace{-6mm}
			\includegraphics[width=43mm]{./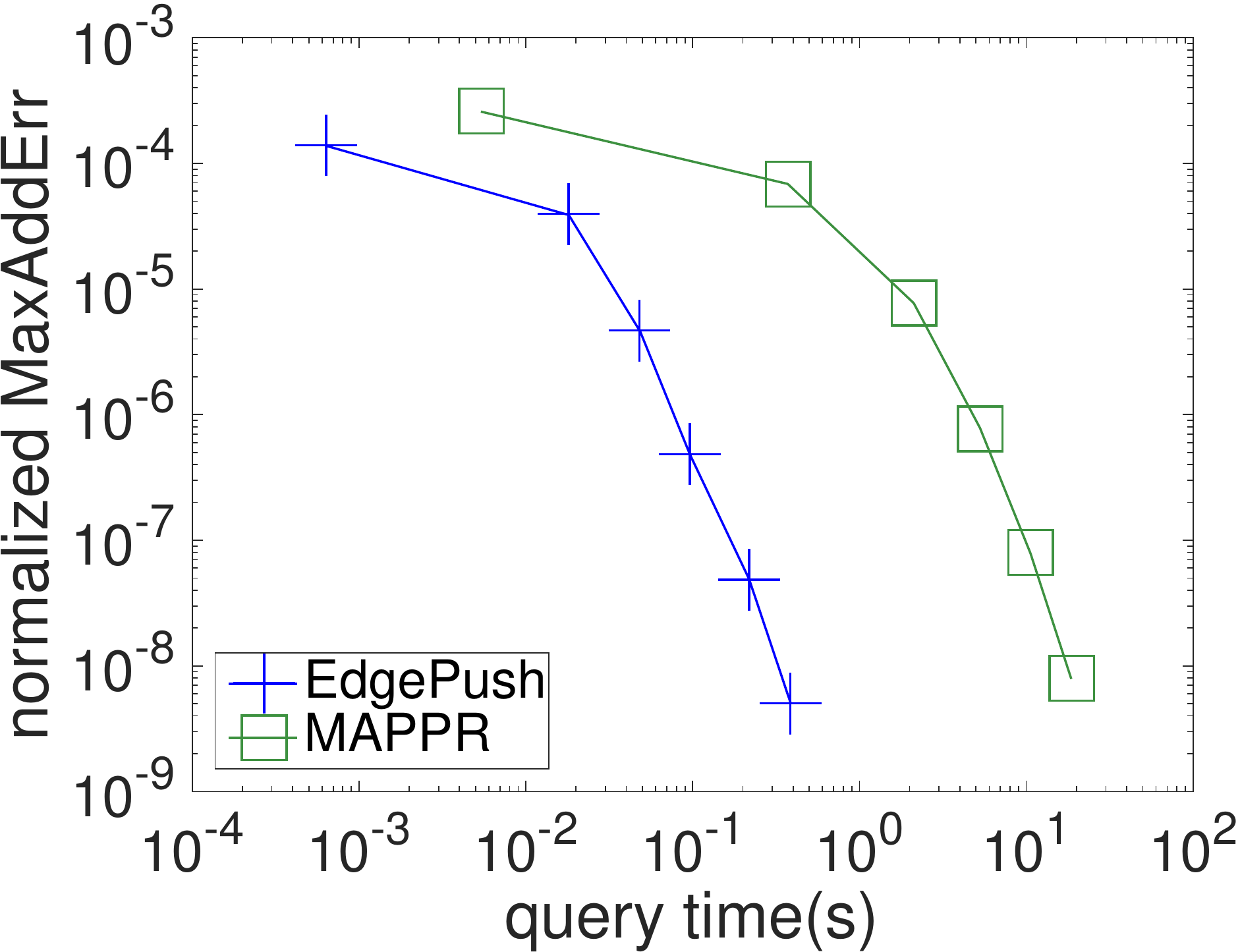} &
			\hspace{-3mm} \includegraphics[width=43mm]{./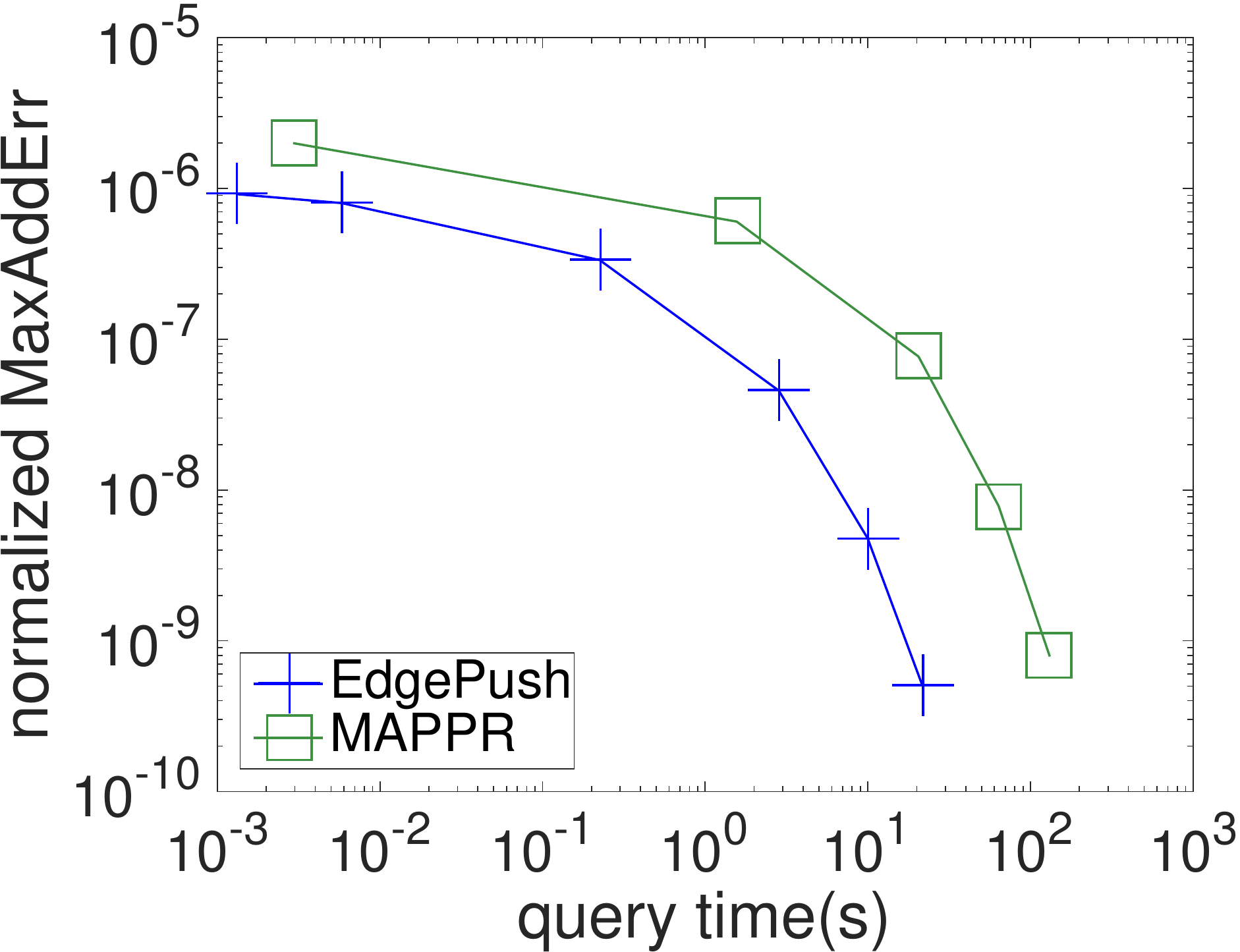} &
			\hspace{-3mm} \includegraphics[width=43mm]{./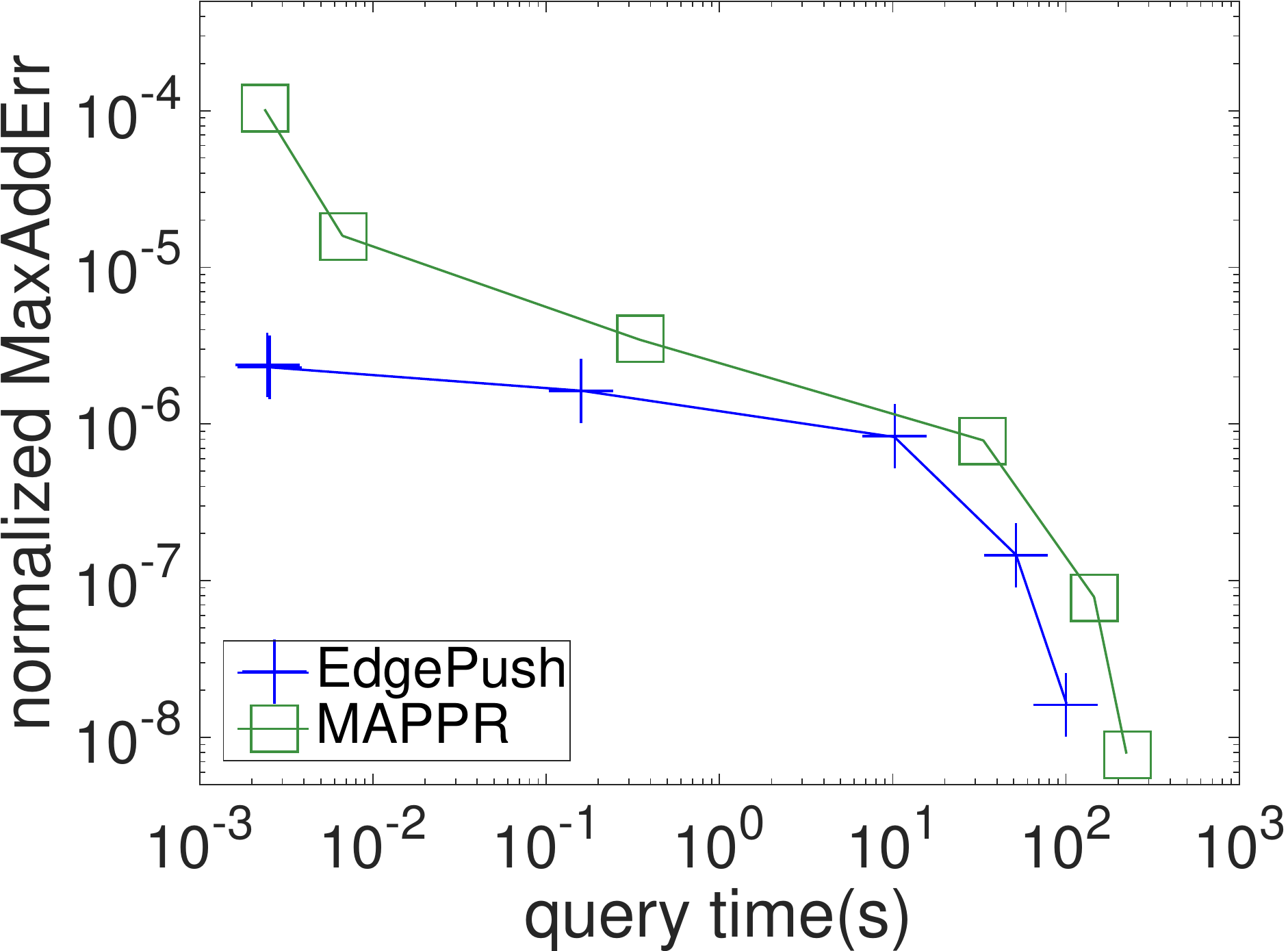} &
			\hspace{-3mm} \includegraphics[width=43mm]{./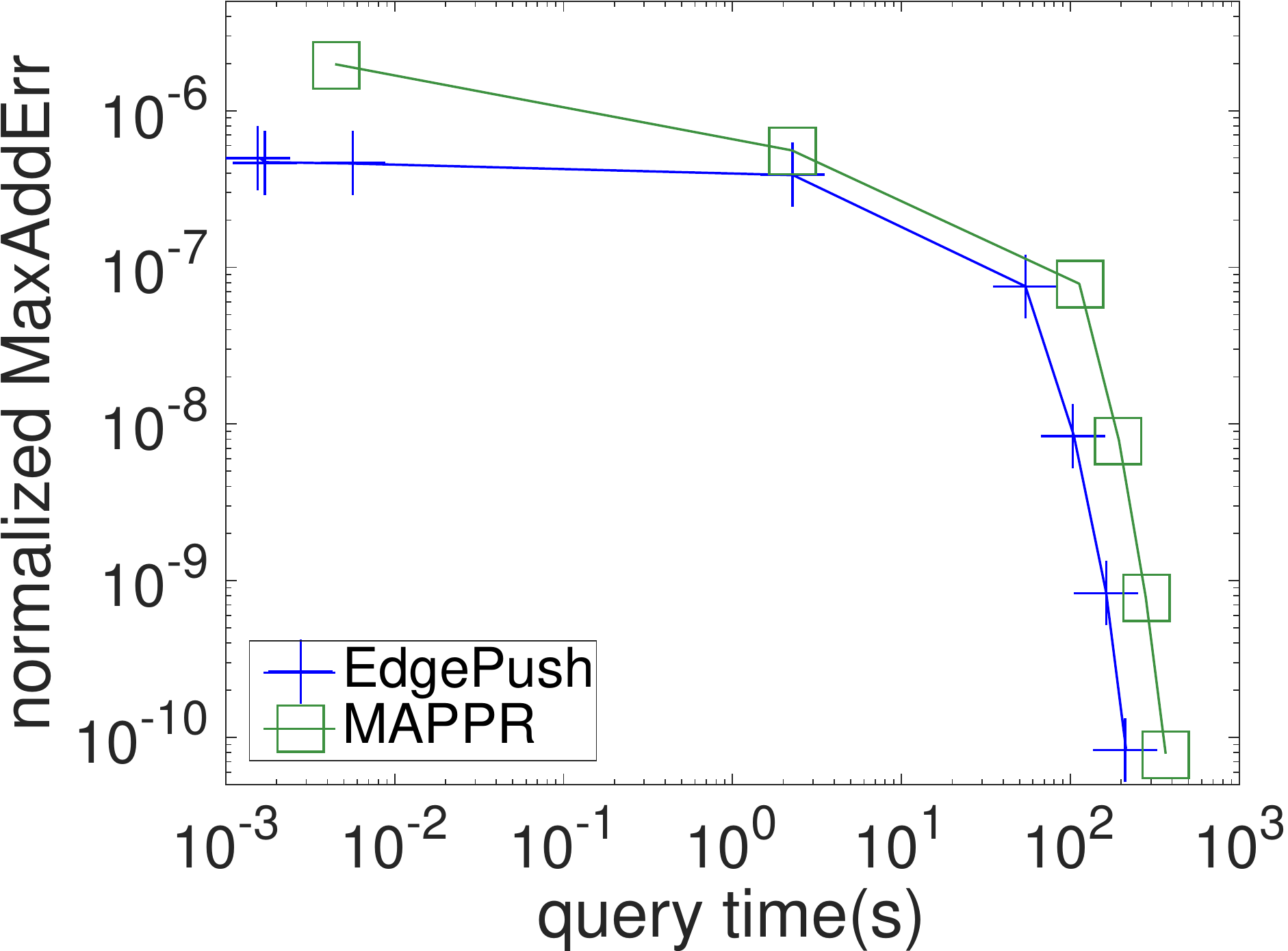} \\
			\hspace{-2mm} (a) {\small $\left(\sum_{v\in V}n_v\hspace{-0.5mm}\cdot \hspace{-0.5mm} \cos^2\p_v\right)/m=0.01$}  &
			\hspace{0mm} (b) {\small $\left(\sum_{v\in V}n_v\hspace{-0.5mm}\cdot \hspace{-0.5mm} \cos^2\p_v\right)/m=0.14$} &
			\hspace{0mm} (c) {\small $\left(\sum_{v\in V}n_v\hspace{-0.5mm}\cdot \hspace{-0.5mm} \cos^2\p_v\right)/m=0.41$} &
			\hspace{0mm} (d) {\small $\left(\sum_{v\in V}n_v\hspace{-0.5mm}\cdot \hspace{-0.5mm} \cos^2\p_v\right)/m=0.7$} \\
		\end{tabular}
		\vspace{-3mm}
		\caption{{\em normalized MaxAddErr} v.s. query time for unbalancedness analysis}
		\label{fig:maxerror-query-add-sensitivity}
		\vspace{-1mm}
	\end{minipage}
	
	\begin{minipage}[t]{1\textwidth}
		\centering
		\vspace{+1mm}
		\begin{tabular}{cccc}
			\hspace{-4mm} \includegraphics[width=43mm]{./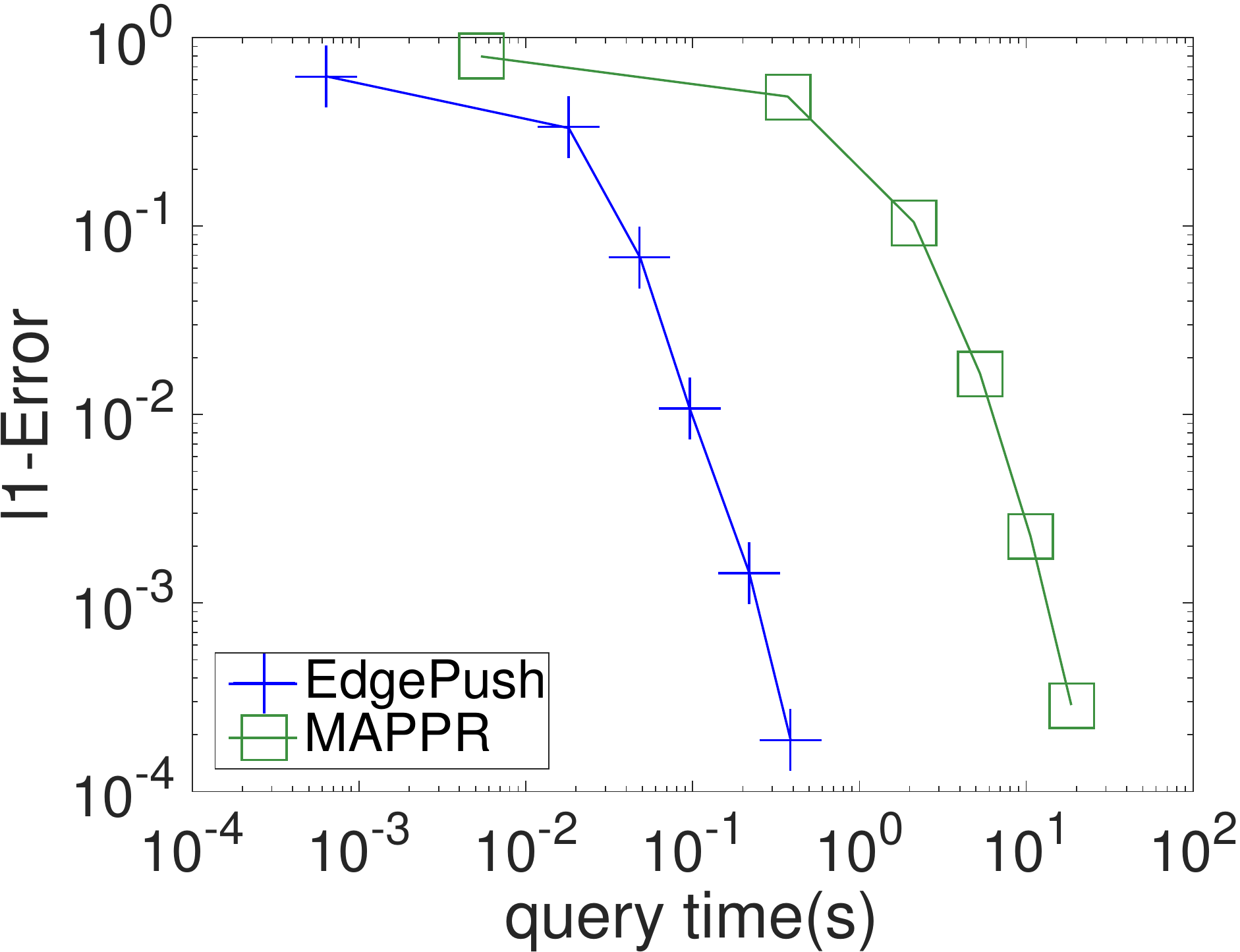} &
			\hspace{-3mm} \includegraphics[width=43mm]{./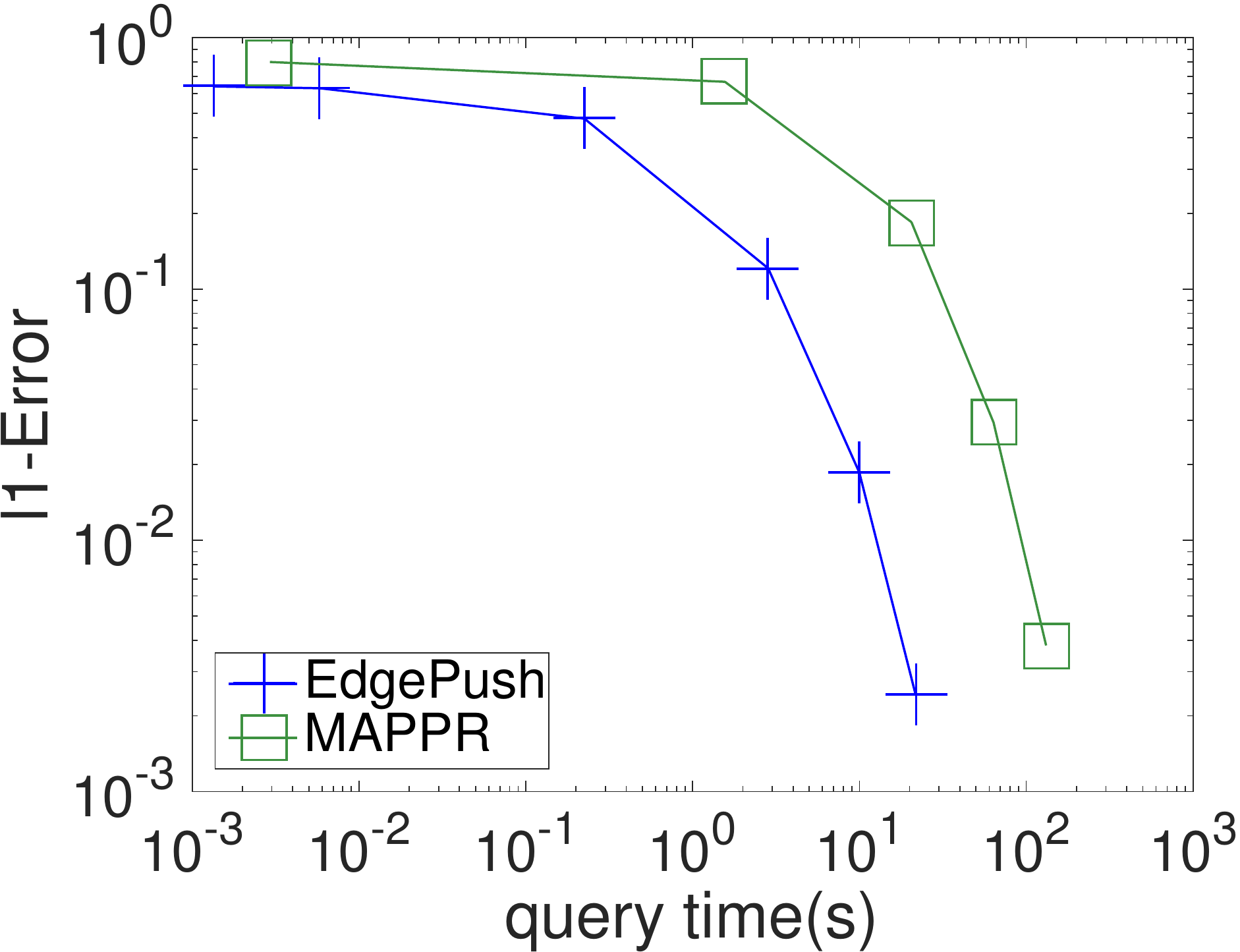} &
			\hspace{-3mm} \includegraphics[width=43mm]{./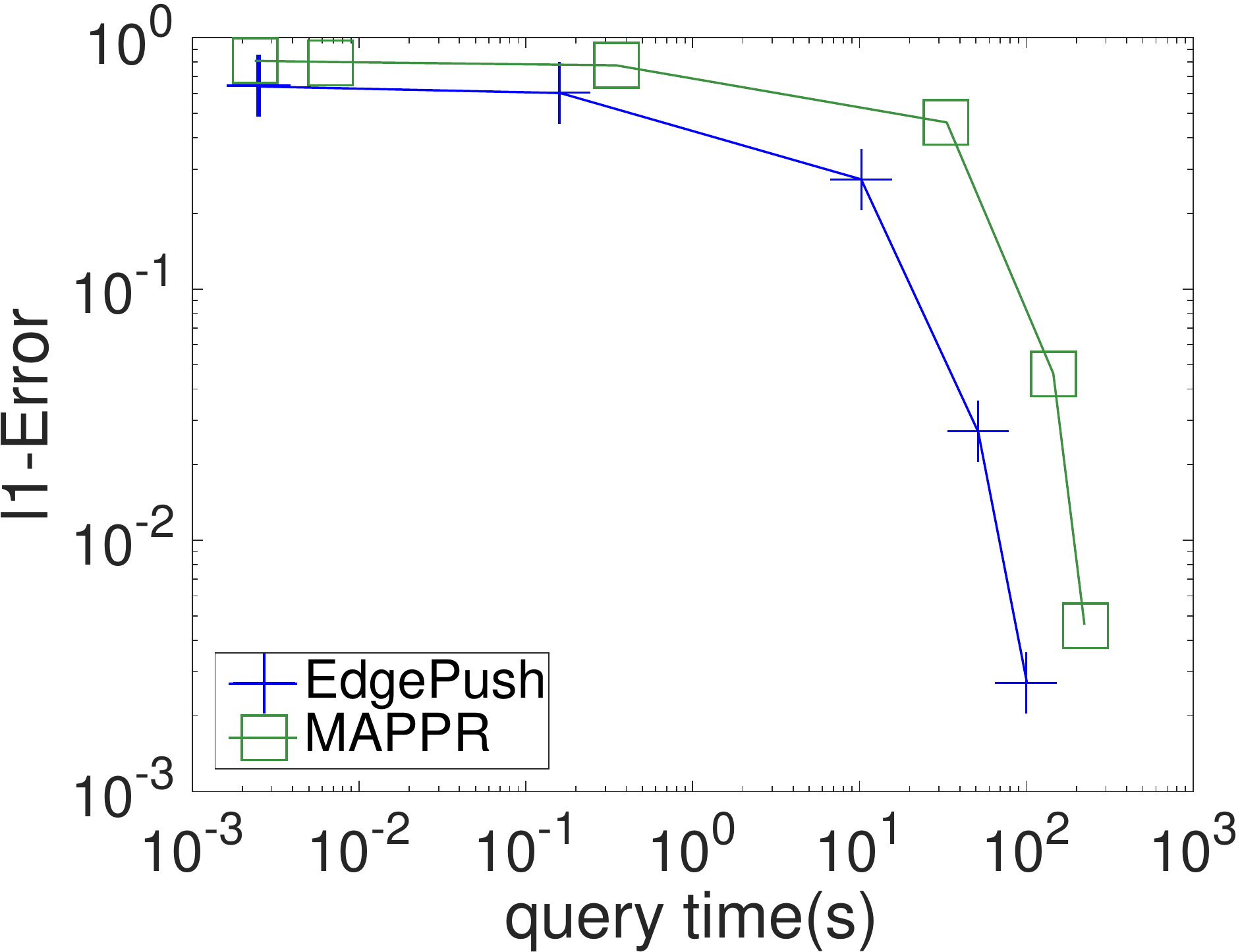} &
			\hspace{-3mm} \includegraphics[width=43mm]{./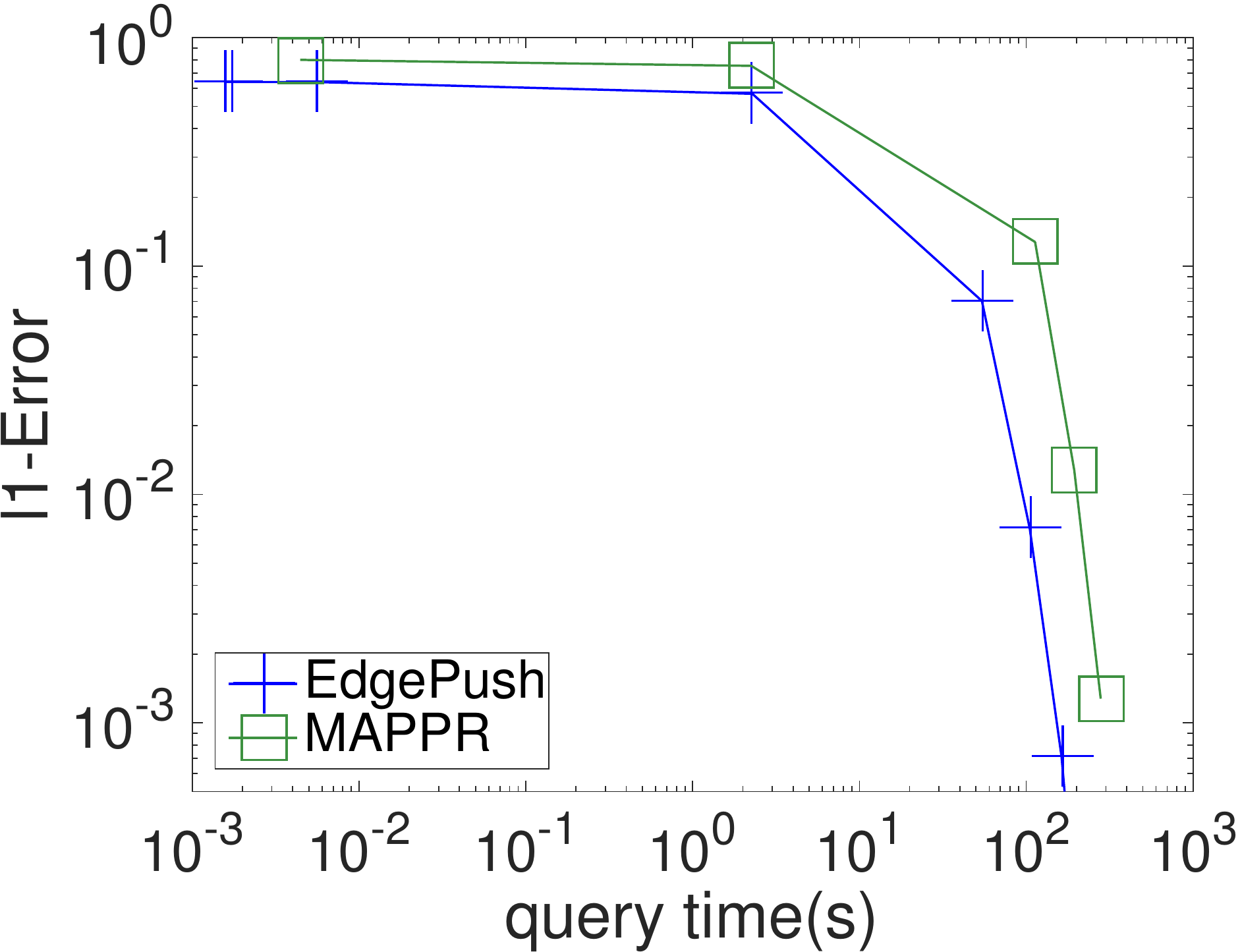} \\
			\hspace{+2mm} (a) { $\cos^2\p=0.01$}  &
			\hspace{+2mm} (b) { $\cos^2\p=0.14$} &
			\hspace{+2mm} (c) { $\cos^2\p=0.38$} &
			\hspace{+2mm} (d) { $\cos^2\p=0.66$} \\
		\end{tabular}
		\vspace{-3mm}
		\caption{{\em $\ell_1$-error} v.s. query time for unbalancedness analysis}
		\label{fig:l1error-query-sensitivity}
		\vspace{-2mm}
	\end{minipage}
\end{figure*}

\header
{\bf Results.} 
{\hz In Figure~\ref{fig:l1error-query-real_l1}, we plot the trade-offs between query time and actual $\ell_1$-error on motif-based weighted graphs. We observe that under relatively large $\ell_1$-error, \edgepush costs the smallest query time on all datasets. 
Additionally, we observe that the plot curves of the three methods gradually overlap with the decreasing of $\ell_1$-error. 
This is because with strict $\ell_1$-error constraints, \edgepush has to touch most of edges in the graph, which have actually become PowForPush. 
On the other hand, recall that the $\ell_1$-error is defined as the sum of additive error over all nodes in the graph (i.e. $\sum_{u\in V}|\epi(u)-\vpi(u)|$). Hence, even with some relatively large $\ell_1$-error, the (normalized) additive error can be small enough for real applications (e.g., local clustering~\cite{FOCS06_FS})
}
Moreover, in Figure~\ref{fig:l1error-query-real_l1-real}, we plot the trade-offs between query time and actual $\ell_1$-error on real-world weighed graphs. Analogous to the performances on motif-based weighted graphs, we observe that \edgepush costs the smallest query time under relatively large $\ell_1$-error. With the decreasing of $\ell_1$-error constraints, \edgepush has to touch most of egdes and the curves of \edgepush, PowForPush and Power Method gradually overlap. 
In Figure~\ref{fig:maxerror-query-real_l1-real} and ~\ref{fig:precision-query-real_l1-real}, we present the trade-off plots between query time and {\em MaxAddErr} and {\em conductance}, respectively. Likewise, we observe that to achieve relatively large $\ell_1$-error, \edgepush outperforms other baselines on all datasets. 

\vspace{-3mm}
\subsection{Unbalancedness Analysis}\label{subsec:sensitivity}
In this section, we evaluate the sensitivity of \edgepush to the unbalancedness of edge weight distribution. Specifically, we generate four fully-connected affinity graphs with $10^5$ nodes. Each node represent a data point in $\k$ dimensional space, where $\k$ is set as $(1, 1, 13, 20)$ to generate the four synthetic graphs. In particular, the coordinate of each node in each dimension is randomly sampled following the normal distribution $N(0,\sigma_N^2)$. And we set $\sigma_N^2$ as $(10^3, 50, 50, 50)$ for the four graphs. Additionally, in the four fully-connected affinity graphs, we assign the weight of edge $\forall \la x_i, x_j\ra \in \bar{E}$ as $\A_{ij}=\exp\left(-\|x_i-x_j\|^2/2\sigma^2\right)$, where $\sigma^2$ is the variance of all data points and $\|x_i-x_j\|$ is the Euclidean distance between nodes $x_i$ and $x_j$. More precisely, we set $\sigma^2=c\cdot d^2\cdot \sigma_N^2$, where $c$ is a tunable constant. We vary $c$ as $(0.1, 1, 1, 1)$ for the four affinity graphs. 

Consequently, as shown in Figure~\ref{fig:maxerror-query-add-sensitivity} and Figure~\ref{fig:l1error-query-sensitivity}, from left to right, the four affinity graphs vary $\cos^2 \p$ in (0.01,0.14,0.38, 0.66), and $\left(\sum_{v\in V}n_v\hspace{-0.5mm}\cdot \hspace{-0.5mm} \cos^2\p_v\right)/m$ in $(0.01, 0.14, 0.41, 0.77)$. Recall that in Lemma~\ref{lem:cos-l1} and ~\ref{lem:cos-add}, we introduce $\cos^2 \p$ and $\left(\sum_{v\in V}n_v\hspace{-0.5mm}\cdot \hspace{-0.5mm} \cos^2\p_v\right)/m$ to quantify the superiority of \edgepush over \localpush with $\ell_1$-error and normalized additive error. The smaller $\cos^2 \p$ or $\cos^2 \p_v$ is, the more unbalanced the graph is. That is, the four affinity graphs used in Figure~\ref{fig:maxerror-query-add-sensitivity} and Figure~\ref{fig:l1error-query-sensitivity}, from left to right, show decreasingly unbalancedness of the edge weight distribution. The rationale here is as the data dimension increases, the distance between pairwise data points becomes more and more similar, resulting in balanced edge weight distribution. 
Accordingly, as seen in Figure~\ref{fig:maxerror-query-add-sensitivity} and Figure~\ref{fig:l1error-query-sensitivity}, we observe that the gaps between the trade-off lines of \edgepush and \localpush gradually reduce from left to right, which is consistent with our analysis.

\vspace{-1mm}
\section{Conclusion} 
\label{sec:conclusion}
In this paper, we propose a novel edge-based local push method \edgepush for approximating the SSPPR vector on weighted graphs. 
\edgepush decomposes the \lpush operation in \localpush into separate \epush operations, each of which can be performed in $O(1)$ amortized time. 
We show that when the source node is randomly chosen according to the node degree distribution, the expected running time complexity of \edgepush is never worse than that of \localpush within certain $\ell_1$-error and normalized additive error. 
In particular, when the graph is dense and the edge weights are unbalanced, \edgepush can achieve a time complexity sub-linear to $m$, and can outperform \localpush by up to a $O(n)$ factor, where $n$ and $m$ are the numbers of nodes and edges in the graph.
{\hz Our experimental results show that when achieving the same approximation error, 
\edgepush outperforms \localpush on large-scale real-world datasets by orders of magnitude in terms of efficiency. }



\section{Acknowledgements} 
\label{sec:ack}
{\crc Zhewei Wei works at Gaoling School of Artificial Intelligence, Beijing Key Laboratory of Big Data Management and Analysis Methods, MOE Key Lab DEKE, Renmin University of China and Peng Cheng Laboratory.} 
This research was supported in part by National Natural Science Foundation of China (No. 61972401, No. 61932001, No. 61832017, No. 62072458 and No. 61932004), by Beijing Outstanding Young Scientist Program No. BJJWZYJH012019100020098, by Alibaba Group through Alibaba Innovative Research Programm, by CCF-Baidu Open Fund (No.2021PP15002000), by China Unicom Innovation Ecological Cooperation Plan and by the Huawei-Renmin University joint program on Information Retrieval. Additionally, Junhao Gan was supported in part by Australian Research Council (ARC) Discovery Early Career Researcher Award (DECRA) DE190101118. Hanzhi Wang was supported by the Outstanding Innovative Talents Cultivation Funded Programs 2020 of Renmin Univertity of China. We also wish to acknowledge the support provided by Intelligent Social Governance Interdisciplinary Platform, Major Innovation \& Planning Interdisciplinary Platform for the ``Double-First Class" Initiative, Public Policy and Decision-making Research Lab, Public Computing Cloud, Renmin University of China.


\bibliographystyle{ACM-Reference-Format}
\bibliography{paper}

\appendix



\vspace{-2mm}
\section{Proofs} 
\subsection{Proof of Lemma~\ref{lem:invariant_localpush}[Invariant by \localpush]}
The Invariant~\eqref{eqn:invariant-nodepush} can be formally proved by induction. According to Algorithm~\ref{alg:APPR}, initially, the node residue $\r$ is set as $\r=\bm{e}_{s}$ that $\r(s)=1$ and $\r(u)=0$ for $\forall u\neq s$. Thus, we have $\vpi(t)=0+1\cdot \vpi_s(t)$. Recall that we denote $\vpi_s(t)=\vpi(t)$ by default. Thus, the invariant holds at the initial stage. 

Now we assume the Invariant~\eqref{eqn:invariant-nodepush} holds before a \lpush operation on node $u$. We will show the correctness of  Invariant~\eqref{eqn:invariant-nodepush} after this \lpush. Let $\epi$ and $\r$ denote the reserve and residue vector before this \lpush operation. After this \lpush, according to the procedure of one \lpush operation (shown in Section~\ref{subsec: localpush_alg}), the reserve of node $u$ increases by $\alpha\cdot \r(u)$. In the meantime, the residue of $u$'s each neighbor increases by $\frac{(1-\alpha)\A_{uv}\r(u)}{d(u)}$, while the residue of node $u$ decreases to $0$. Let $\Delta \text{R.H.S}$ denote the total change on the right side of Invariant~\eqref{eqn:invariant-nodepush}. Accordingly, we have: 
\begin{align*}
\vspace{-2mm}
\Delta \text{R.H.S}=\hspace{-0.5mm}\alpha \r(u)\hspace{-0.5mm}\cdot \hspace{-0.5mm} I\{u\hspace{-0.5mm}=\hspace{-0.5mm}t\}\hspace{-0.5mm}-\hspace{-0.5mm}\r(u)\hspace{-0.5mm}\cdot \hspace{-0.5mm}\vpi_u(t)+\hspace{-1mm}\sum_{v\in N_u}\hspace{-1.5mm}\frac{(1\hspace{-0.5mm}-\hspace{-0.5mm}\alpha)\A_{uv}\r(u)}{d(u)}\cdot \vpi_v(t), 
\vspace{-2mm}
\end{align*}
where $\text{R.H.S}$ denotes the right hand side of Invariant~\eqref{eqn:invariant-nodepush}. $I\{u=t\}$ is an indicator variable that ${I}\{u=t\}=1$ if $u=t$ and $0$ otherwise. By Equation (5) in~\cite{FOCS06_FS}, $\vpi_u(t)$ can be rewritten as below by definition.  
\begin{align}\label{eqn:property-nodepush}
\vspace{-2mm}
\vpi_u(t)=\alpha \cdot I\{u=t\}+\sum_{v\in N_u}\hspace{-1mm}\frac{(1-\alpha)\A_{uv}}{d(u)}\cdot \vpi_v(t). 
\vspace{-2mm}
\end{align}
Plugging into $\Delta \text{R.H.S}$, we can derive $\Delta \text{R.H.S}=\r(u)\cdot 0=0$. Consequently, Invariant~\eqref{eqn:invariant-nodepush} still holds after the \lpush on $u$, which follows the lemma by induction. 

\header{\bf Remark [Intuition on Equation~\eqref{eqn:property-nodepush}]. } To understand Equation~\eqref{eqn:property-nodepush} intuitively, recall that $\vpi_u(t)$ equals to the probability that an $\alpha$-random walk starting from $u$ terminates at $t$. According to the definition of $\alpha$-random walk, at the first step, the walk either stops at $u$, or moves to one of $u$'s neighbor $v$ with $\frac{(1-\alpha)\A_{uv}}{d(u)}$ probability. Thus, the probability that an $\alpha$-random walk from $u$ stops at $t$ can be divided into two parts: i) if $u=t$, terminates at $u$ (i.e. $t$) with $\alpha$ probability at the first step; ii) or walks to one of $u$'s neighbors $v$ with $\frac{(1-\alpha)\A_{uv}}{d(u)}$ probability, then walks to $t$ from $v$ (i.e. $\vpi_v(t)$). 

\vspace{-2mm}
\subsection{Proof of Fact~\ref{thm:bound-local}}
We first show that the $\ell_1$-error of \localpush can be bounded by $\e$. By Invariant~\eqref{eqn:invariant-nodepush}, we have
\begin{align*}
\vspace{-2mm}
    \left|\vpi(t)-\epi(t)\right|=\vpi(t)-\epi(t)=\sum_{u\in V}\r(u)\cdot \vpi_u(t). 
\vspace{-2mm}
\end{align*}
According to Algorithm~\ref{alg:APPR}, after the \localpush process, the residue of each node $u\in V$ satisfies $\r(u)\le d(u)\cdot \theta$, following $\left|\vpi(t)-\epi(t)\right| \le \theta \cdot \sum_{u\in V}d(u)\cdot \vpi_u(t)$. 
To sum up $\vpi(t)-\epi(t)$ for $\forall t\in V$, the $\ell_1$-error can be bounded as
\begin{align*}
    \sum_{t\in V}\left|\vpi(t)-\epi(t)\right|\le \theta \cdot \hspace{-1mm} \sum_{u\in V}\hspace{-1mm} d(u)\cdot \sum_{t\in V}\vpi_u(t)=\theta \cdot \hspace{-1mm} \sum_{u\in V}\hspace{-1mm} d(u)=\theta \cdot \|\A\|_1. 
\end{align*}
In the last equality, we apply the fact $\sum_{t\in V}\vpi_u(t)=1$. By setting $\theta=\frac{\e}{\|\A\|_1}$, we can further bound the $\ell_1$-error as 
\begin{align*}
\vspace{-2mm}
\|\vpi-\epi\|=\sum_{t\in V}\left|\vpi(t)-\epi(t)\right|\le \e, 
\vspace{-2mm}
\end{align*}
following the $\ell_1$-error bound shown in Fact~\ref{thm:bound-local}. 

Next, we present the proof of the expected time cost of \localpush. Before analyzing the cost of \localpush, we first introduce Fact~\ref{fact:pi}: 
\begin{fact}\label{fact:pi}
When the source node $s$ is randomly chosen from the degree distribution of the nodes, i.e., $\bm{e}_s \sim \frac{\D \bm{1}}{\|\A\|_1}$, where $\bm{1}$ is the {\em all-one} vector, the expected SSPPR vector $\mathbb{E}[\vpi] = \frac{\D \bm{1}}{\|\A\|_1}$. Specifically, for each node $u\in V$, the expected $\mathbb{E}[\vpi(u)] = \frac{d(u)}{\|\A\|_1}$.
\end{fact}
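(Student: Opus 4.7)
The plan is to show that the degree-proportional distribution $\pi^\ast \triangleq \D\bm{1}/\|\A\|_1$ is the stationary distribution of the $\alpha$-random walk's transition matrix $\P=\A\D^{-1}$ on the undirected weighted graph, and then invoke the power-series expansion of PPR together with linearity of expectation.

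First, I would observe that since $\bm{e}_s$ is sampled so that $s=v$ with probability $d(v)/\|\A\|_1$, we have $\mathbb{E}[\bm{e}_s]=\pi^\ast$ entry-wise. Then, starting from the closed-form expansion established in Equation~\eqref{eqn:ppr_expansion},
\begin{align*}
\vpi = \sum_{i=0}^\infty \alpha(1-\alpha)^i \P^i \bm{e}_s,
\end{align*}
applying linearity of expectation yields
\begin{align*}
\mathbb{E}[\vpi] = \sum_{i=0}^\infty \alpha(1-\alpha)^i \P^i \mathbb{E}[\bm{e}_s] = \sum_{i=0}^\infty \alpha(1-\alpha)^i \P^i \pi^\ast.
\end{align*}

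The core step, and really the only substantive one, is to verify that $\P \pi^\ast = \pi^\ast$. For each node $u\in V$, using $\P_{uv} = \A_{uv}/d(v)$ and the symmetry $\A_{uv}=\A_{vu}$ on undirected graphs, I would compute $(\P\pi^\ast)(u) = \sum_{v\in N(u)} \frac{\A_{uv}}{d(v)}\cdot \frac{d(v)}{\|\A\|_1} = \frac{1}{\|\A\|_1}\sum_{v\in N(u)}\A_{uv} = \frac{d(u)}{\|\A\|_1} = \pi^\ast(u)$. By induction, $\P^i \pi^\ast = \pi^\ast$ for all $i\ge 0$.

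Plugging this back in gives $\mathbb{E}[\vpi] = \left(\sum_{i=0}^\infty \alpha(1-\alpha)^i\right)\pi^\ast = \pi^\ast$, since the geometric series sums to $1$. Reading off the $u$-th coordinate then produces $\mathbb{E}[\vpi(u)] = d(u)/\|\A\|_1$, as claimed. There is no genuine obstacle here: the only point requiring care is the stationarity check, which relies crucially on the undirectedness of $G$ (so that $\A$ is symmetric and the row sums $d(u)$ appear correctly when multiplying $\P$ by $\D\bm{1}$); on directed graphs the statement would not hold with this choice of distribution.
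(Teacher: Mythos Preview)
Your proposal is correct and follows essentially the same approach as the paper: both invoke the power-series expansion of PPR, verify that $\D\bm{1}/\|\A\|_1$ is a fixed point of $\P$ (the paper does this via the one-line matrix identity $\A\D^{-1}\cdot\D\bm{1}=\A\bm{1}=\D\bm{1}$, whereas you do it entry-wise), and then sum the geometric series. Your explicit mention of linearity of expectation and the role of undirectedness is a nice touch, but the structure is identical.
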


\vspace{-1mm}
\begin{proof}
Recall that in Equation~\eqref{eqn:ppr_expansion}, we present a power series expansion to calculate the SSPPR vector $\vpi$. 
That is, $\vpi$ can be expressed as:  
$\vpi=\sum_{\ell=0}^\infty \alpha (1-\alpha)^\ell \P^\ell \bm{e}_s$, where $\P=\A \D^{-1}$ is the transition matrix. We note that matrix $\P$ has an eigenvector $\frac{\D \bm{1}}{\|\A\|_1}$ with $1$ as the corresponding eigenvalue. To see this, note that $\P\cdot \frac{\D \bm{1}}{\|\A\|_1}=\A \D^{-1}\cdot \frac{\D \bm{1}}{\|\A\|_1}= \frac{\A \bm{1}}{\|\A\|_1}=\frac{\D \bm{1}}{\|\A\|_1}$.  
Hence, when the source node $s$ is chosen from the node degree distribution, i.e., $\bm{e}_s \sim \frac{\D \bm{1}}{\|\A\|_1}$,
the expected SSPPR vector $\vpi$ can be rewritten as 
\begin{align}\nonumber
\vspace{-2mm}
	\mathbb{E}[\vpi]=\sum_{\ell=0}^\infty \alpha (1-\alpha)^\ell \P^\ell \cdot \frac{\D \bm{1}}{\|\A\|_1}=\sum_{\ell=0}^\infty \alpha (1-\alpha)^\ell \cdot \frac{\D \bm{1}}{\|\A\|_1}=\frac{\D \bm{1}}{\|\A\|_1}\,. 
\vspace{-2mm}
\end{align}
For each node $u\in V$, we have $\mathbb{E}[\vpi_u] =\frac{d_u}{\|\A\|_1}$, following the fact. 
\end{proof}

Based on Fact~\ref{fact:pi}, we can analyze the expected time cost of \localpush when the source node is chosen according to the degree distribution, shown in Lemma~\ref{lem:cost-localpush}. 

\begin{lemma}\label{lem:cost-localpush}
The overall running time of Localpush is bounded by $O\left(\sum_{u\in V}\frac{n(u)\cdot \vpi(u)}{\alpha  \theta \cdot d(u)}\right)$. In particular, when the source node is randomly chosen according to the degree distribution, the expected time cost of LocalPush is bounded by $O\left(\frac{m}{\alpha \theta \cdot \|\A\|_1}\right)$. 
\end{lemma}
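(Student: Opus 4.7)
\textbf{Proof Proposal for Lemma~\ref{lem:cost-localpush}.} My plan is to charge the cost of every \lpush operation to the corresponding growth of the reserve at the pushing node, and then aggregate using the invariant of Lemma~\ref{lem:invariant_localpush} together with Fact~\ref{fact:pi}. Concretely, I will first observe that a single \lpush operation performed at a node $u$ touches exactly its $n(u)$ neighbors (see the \textbf{for} loop in Algorithm~\ref{alg:APPR}), so it contributes $O(n(u))$ to the total running time. Consequently, it suffices to upper bound $N(u)$, the total number of pushes ever performed at $u$ during the execution of \localpush, and then sum $N(u) \cdot n(u)$ over all $u\in V$.

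The key observation is an amortized accounting on the reserve: whenever \localpush performs a push at $u$, the precondition $\r(u) \ge d(u)\cdot \theta$ is satisfied, and the first line of the push updates $\epi(u) \leftarrow \epi(u) + \alpha \cdot \r(u) \ge \epi(u) + \alpha \cdot d(u) \cdot \theta$. Hence, every push at $u$ increases $\epi(u)$ by at least $\alpha \cdot d(u) \cdot \theta$. Since the reserve is monotonically non-decreasing and, by Lemma~\ref{lem:invariant_localpush} together with the fact that all residues and all $\vpi_u(t)$ values are non-negative, $\epi(u) \le \vpi(u)$ holds throughout the execution, we obtain
\begin{equation*}
    N(u) \le \frac{\vpi(u)}{\alpha \cdot d(u) \cdot \theta}.
\end{equation*}
Summing $N(u) \cdot n(u)$ over all $u \in V$ yields the deterministic bound $O\!\left(\sum_{u \in V} \frac{n(u) \cdot \vpi(u)}{\alpha \theta \cdot d(u)}\right)$, which is the first part of the lemma.

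To obtain the expected-time bound, I take the expectation with respect to the random choice of the source node $s$ drawn according to the degree distribution. By Fact~\ref{fact:pi}, $\mathbb{E}[\vpi(u)] = \frac{d(u)}{\|\A\|_1}$ for every $u \in V$. Plugging this into the deterministic bound gives
\begin{equation*}
    \mathbb{E}[\text{cost}] \;\le\; O\!\left(\sum_{u \in V} \frac{n(u) \cdot d(u)}{d(u) \cdot \alpha \theta \cdot \|\A\|_1}\right) \;=\; O\!\left(\frac{\sum_{u \in V} n(u)}{\alpha \theta \cdot \|\A\|_1}\right) \;=\; O\!\left(\frac{m}{\alpha \theta \cdot \|\A\|_1}\right),
\end{equation*}
using $\sum_{u \in V} n(u) = 2m$ for an undirected graph.

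The proof is essentially an amortized argument and I do not foresee a substantial obstacle. The only step that demands care is the monotonicity claim $\epi(u) \le \vpi(u)$ at all times, which relies on the fact that all residues stay non-negative throughout the push process (each push only resets one residue to zero and redistributes a non-negative share to each neighbor), so that the invariant of Lemma~\ref{lem:invariant_localpush} upper-bounds $\epi(u)$ by $\vpi(u)$. This non-negativity can be established by a straightforward induction on the number of push operations performed.
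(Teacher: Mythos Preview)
Your proposal is correct and follows essentially the same argument as the paper's own proof: bound the number of pushes at each node $u$ by $\frac{\vpi(u)}{\alpha d(u)\theta}$ via the per-push reserve increment $\alpha d(u)\theta$ and the invariant $\epi(u)\le\vpi(u)$, multiply by the $O(n(u))$ per-push cost, and then apply Fact~\ref{fact:pi} for the expected bound. The only cosmetic difference is that you explicitly flag the non-negativity of residues needed for $\epi(u)\le\vpi(u)$, which the paper leaves implicit.
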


\begin{proof}
According to Algorithm~\ref{alg:APPR}, for each node $u\in V$ that $\r(u)\ge d(u) \cdot \theta$, we perform a \lpush operation on $u$. Specifically, we convert $\alpha \r(u)$ to $\epi(u)$ and distribute the rest $(1-\alpha)\r(u)$ to the residues of $u$'s neighbors. Hence, after one \lpush operation on $u$, the reserve $\epi(u)$ is increased by $\alpha \r(u)\ge \alpha d(u)\cdot \theta$. Thus, when Algorithm~\ref{alg:APPR} terminates, the total number of \lpush operations on node $u$ is upper bounded by $\frac{\epi(u)}{\alpha d(u)\cdot \theta}$. Note that according to Invariant~\eqref{eqn:invariant-nodepush}, the reserve $\epi(u)$ is an underestimate of SSPPR value $\vpi(u)$. Thus, the total number of \lpush operations on node $u$ can be further bounded by $\frac{\vpi(u)}{\alpha d(u) \cdot \theta}$, following the total number of \lpush in the whole \localpush process bounded by $\sum_{u\in V}\frac{\vpi(u)}{\alpha d(u)\cdot \theta}$. On the other hand, for $\forall u\in V$, one \lpush operation on $u$ costs $O(n(u))$ to update the residues of all $u$'s neighbors, and $O(1)$ to increase $\epi(u)$. Consequently, the overall running time of \localpush is bounded by $O \left( \sum_{u\in V}\frac{n(u)\cdot \vpi(u)}{\alpha d(u) \cdot \theta}\right)$. In particular, when the source node $s$ is chosen according to the degree distribution that $\bm{e}_s \sim  \frac{\D \bm{1}}{\|\A\|_1}$, according to Fact~\ref{fact:pi} that $\mathbb{E}\left[\vpi(u)\right]=\frac{d(u)}{\|\A\|_1}$, the expected time cost of Algorithm~\ref{alg:APPR} is bounded by $O\left(\frac{m}{\alpha \theta \cdot \|\A\|_1}\right)$. 
\end{proof}

By Lemma~\ref{lem:cost-localpush}, when $\theta=\frac{\e}{\|\A\|_1}$, the upper bound of the expected time cost of \localpush is $O\left(\frac{m}{\alpha \e}\right)$, which follows this fact.

\subsection{Proof of Fact~\ref{thm:bound-local-addErr}}
Analogous to the proof of Fact~\ref{thm:bound-local}, we can bound the normalized additive error of each $t\in V$ based on Invariant~\eqref{eqn:invariant-nodepush}. More precisely, Invariant~\eqref{eqn:invariant-nodepush} indicates 
\begin{align}\label{eqn:adderr-local-1st}
\left|\vpi(t)-\epi(t)\right|=\vpi(t)-\epi(t)=\sum_{u\in V}\r(u)\cdot \vpi_u(t). 
\end{align}
After the \localpush process, the residue of each node $u$ satisfies $\r(u)\le d(u)\cdot \theta$. Plugging into Equation~\eqref{eqn:adderr-local-1st}, we have 
\begin{align}\label{eqn:adderr-local-2nd}
\vspace{-2mm}
\vpi(t)\hspace{-0.5mm}-\hspace{-0.5mm}\epi(t)\hspace{-1mm}\le \theta\cdot \hspace{-2mm}\sum_{u\in V}\hspace{-1mm}d(u)\cdot \vpi_u(t)=\theta\cdot \hspace{-1mm}\sum_{u\in V}\hspace{-1mm} d(t)\hspace{-0.5mm} \cdot \hspace{-0.5mm} \vpi_t(u)\hspace{-0.5mm}=\hspace{-0.5mm}\theta \hspace{-0.5mm}\cdot \hspace{-0.5mm} d(t). 
\vspace{-2mm}
\end{align}
In the first equality, we apply a property of PPR on undirected graphs that $d(u)\cdot \vpi_u(t)=d(t)\cdot \vpi_t(u)$. In addition, in the last equality, we use the fact that $\sum_{u\in V}\vpi_u(t)=1$. 
Furthermore, based on Equation~\eqref{eqn:adderr-local-2nd}, we have $\frac{\vpi(t)}{d(t)}-\frac{\epi(t)}{d(t)} \le \theta$. By setting $\theta=r_{\max}$, we can further bound the normalized additive error  for $\forall t\in V$ that $\frac{\vpi(t)}{d(t)}-\frac{\epi(t)}{d(t)} \le r_{\max}$. 

On the other hand, according to Lemma~\ref{lem:cost-localpush}, the time cost of \localpush with normalized additive error $r_{\max}$ can be bounded by $O\left(\sum_{u\in V}\frac{n(u)\cdot \vpi(u)}{\alpha \theta \cdot d(u)}\right)$. By setting $\theta=r_{\max}$. In particular, when the source node is randomly chosen according to the degree distribution, the expected time cost of \localpush with normalized additive error is bounded by $O\left(\frac{m}{\alpha r_{\max}\|\A\|_1}\right)$, which follows the fact.

\subsection{Proof of Lemma~\ref{lem:further-inv} [Invariant by \edgepush]}
Analogous to the proof of Lemma~\ref{lem:invariant_localpush}, we prove this invariant by mathematical induction. Initially, according to Algorithm~\ref{alg:edge-search}, the node income vector $\q$ is set as $\q = \bm{e}_s$ and the edge residue matrix $\Q=\bm{0}_{n\times n}$. Thus, according to Equation~\eqref{eqn:edge-relation}, the edge residue $\R_{sx}=\frac{(1-\alpha)\A_{sx}}{d(s)}$, where $\la s,x\ra \in \bar{E}$. On the other hand, for the other edges $\forall \la u,v\ra \in \bar{E}$ that $u\neq s$, the edge residue $\R_{uv}=0$. 
By plugging into the right hand side (dubbed as R.H.S.) of Equation~\eqref{eqn:invariant_edgepush}, we can derive
\vspace{-3mm}
\begin{align}\label{eqn:onestepFP}
\vspace{-2mm}
	\text{R.H.S. } = \alpha\cdot {I}\{t=s\}+\sum_{\la s,x \ra \in \bar{E}}\frac{(1-\alpha)\A_{sx}}{d(s)} \cdot \vpi_x(t), 
\vspace{-3mm}
\end{align}
which always holds according to the Equation (5) in~\cite{FOCS06_FS}. 
The intuition behind is that 
the $\alpha$-random walk from $s$ either stops at node $s$ with probability $\alpha$, or moves to one of its neighbors $x\in N(s)$ with probability $(1-\alpha)$. Up to now, we have proved that Invariant~\eqref{eqn:invariant_edgepush} holds at the initial stage.

Next, we assume the invariant holds before an \epush on edge $\la u, v \ra \in \bar{E}$. We want to show that the invariant still holds after this \epush. Specifically, let $y$ denote the edge residue $\R_{uv}=(1-\alpha)\q(u) \cdot \frac{\A_{uv}}{d(u)}-\Q_{uv}$ before this \epush operation. Because of this \epush, we have $\q(v) \gets \q(v)+y$ and $\Q_{uv}\gets \Q_{uv}+y$. In the meantime, according to Equation~\eqref{eqn:edge-relation}, the changes of edge residues satisfy: $\R_{uv}=\R_{uv}-y$ and $\R_{vw}=\R_{vw}+(1-\alpha)y\cdot \frac{\A_{vw}}{d(v)}$ for each $w\in N(v)$. Let $\Delta \text{R.H.S.}$ denote the total change on the right hand side of Invariant~\eqref{eqn:invariant_edgepush}. We have:  
\vspace{-2mm}
\begin{align}\label{eqn:deltaRHS}
\vspace{-2mm}
\Delta \text{R.H.S.}
\hspace{-0.5mm}=\hspace{-0.5mm} \alpha \hspace{-0.5mm}\cdot \hspace{-0.5mm}{I}\{t = v\}\hspace{-0.5mm}\cdot \hspace{-0.5mm}y \hspace{-0.5mm}- \hspace{-0.5mm}y \cdot \vpi_v(t)\hspace{-0.5mm}+ \hspace{-3mm}\sum_{w\in N(v)}\hspace{-3.5mm}\frac{(1\hspace{-0.5mm}-\hspace{-0.5mm}\alpha)y \hspace{-0.5mm}\cdot\A_{vw}}{d(v)} \hspace{-0.5mm}\cdot \hspace{-0.5mm}\vpi_w(t),
\vspace{-2mm}
\end{align}
\vspace{-1mm}
By Equation~\eqref{eqn:onestepFP} with $v$ being the source node, we have:
\begin{align}\nonumber
\vspace{-2mm}
    \vpi_v(t)=\alpha \cdot {I}\{t=v\}+\sum_{w\in N(v)} \frac{(1-\alpha)\A_{vw}}{d(v)}\cdot \vpi_w(t). 
\vspace{-1mm}
\end{align}
Plugging into Equation~\eqref{eqn:deltaRHS}, we have $\Delta\text{R.H.S.}=y\cdot 0=0$, which follows the invariant by induction. 

\vspace{-1mm}
\subsection{Proof of Lemma~\ref{lem:EdgePushCost}}
According to Algorithm~\ref{alg:edge-search}, for each edge $\la u,v \ra\in \bar{E}$ in the candidate set $\mathbf{C}$, 
we perform a \epush operation on edge $\la u,v\ra$. Specifically, we 
increase the edge expense $\Q_{uv}$ and node income $\q_v$ by at least $\theta(u,v)$. Thus, when Algorithm~\ref{alg:edge-search} terminates, we can use $\frac{\Q_{uv}}{\theta(u,v)}$ to bound the the number of \epush on edge $\la u,v \ra$. Furthermore, by Equation~\eqref{eqn:edge-relation}, $\Q_{uv}$ can be upper bounded as 
$\Q_{uv} \le (1-\alpha)\q(u) \cdot \frac{\A_{uv}}{d(u)} $. 
Note that by Invariant~\eqref{eqn:invariant_edgepush}, $\alpha \q(u)$ is an underestimate of the PPR value $\vpi(u)$ that $\q(u) \le \frac{\vpi(u)}{\alpha}$. Therefore, the number of \epush on edge $\la u,v\ra$ can be further bounded by $(1-\alpha)\cdot \frac{\vpi(u) \cdot \A_{uv}}{\alpha d(u) \theta(u,v)}$, following the upper bound of the total number of \epush in the whole \edgepush as $\sum_{\la u,v \ra\in \bar{E}}\frac{(1-\alpha)\cdot\vpi(u) \cdot \A_{uv}}{\alpha d(u) \theta(u,v)}$. According to Theorem~\ref{thm:cost-per-push}, the time cost of each \epush is bounded by $O(1)$ amortized. Thus, the overall running time of \edgepush is bounded by $O\left(\sum_{\la u,v \ra\in \bar{E}}\frac{(1-\alpha)\vpi(u) \A_{uv}}{\alpha \cdot d(u) \cdot \theta(u,v)}\right)$. When the source node is chosen according to the degree distribution, according to Fact~\ref{fact:pi}, 
the expected PPR value of node $u\in V$ satisfies
$\mathbb{E}[\vpi(u)] =\frac{d(u)}{\|\A\|_1}$. Thus, the expected overall running time of \edgepush can be bounded by $O\left(\sum_{\la u,v\ra \in \bar{E}}\frac{(1-\alpha)\A_{uv}}{\alpha \|\A\|_1 \cdot \theta(u,v)}\right)$, and the lemma follows.  

\subsection{Proof of Lemma~\ref{lem:EdgePushErr}} 
Note that in Algorithm~\ref{alg:edge-search}, we use $\alpha \q$ as the estimator of the SSPPR vector $\vpi$. Thus, the $\ell_1$-error can be computed as $\|\alpha \q -\vpi\|_1=\sum_{t\in V} \left|\alpha \q(t)-\vpi(t)\right|$. According to Invariant~\eqref{eqn:invariant_edgepush}, for each node $t\in V$, we have $\alpha \q(t)-\vpi(t)=\sum_{\la u,v \ra \in \bar{E}} \R_{uv}\cdot \vpi_v(t)$. 
Thus, the $\ell_1$-error can be further expressed as:
\begin{equation}
\begin{aligned}\nonumber
\vspace{-2mm}
    &\|\alpha \q -\vpi\|_1=\sum_{t\in V}\sum_{\la u,v \ra \in \bar{E}} \R_{uv}\cdot \vpi_v(t)=\sum_{\la u,v \ra \in \bar{E}} \R_{uv}\cdot \left(\sum_{t\in V}\vpi_v(t)\right). 
\vspace{-2mm}
\end{aligned}
\end{equation}
Note that $\sum_{t\in V}\vpi_v(t)=1$, 
following $\|\alpha \q -\vpi\|_1=\sum_{\la u,v \ra \in \bar{E}} \R_{uv}$. According to Algorithm~\ref{alg:edge-search}, after the \edgepush process, the edge residue of every edge $\la u,v\ra$ satisfies $\R_{uv}\ge \theta(u,v)$. Thus, we can derive $\|\alpha \q -\vpi\|_1\le \sum_{\la u,v \ra \in \bar{E}} \theta(u,v)$, and the lemma follows. 

\subsection{Proof of Lemma~\ref{lem:EdgePushErr-ad}}
Recall that in the proof of Lemma~\ref{lem:EdgePushErr}, we have $\alpha \q(t)-\vpi(t)=\sum_{\la u,v \ra \in \bar{E}} \R_{uv}\cdot \vpi_v(t)$ for $\forall t\in V$. Also, we note that the edge residue $\R_{uv}$ can be bounded by $\theta(u,v)$ after the \edgepush process, following $\alpha \q(t)-\vpi(t)=\sum_{\la u,v \ra \in \bar{E}} \theta(u,v)\cdot \vpi_v(t)$. Thus, for $\forall t\in V$, the normalized additive error can be bounded as 
\begin{align*}
\frac{1}{d(t)}\cdot \left(\alpha \q(t)-\vpi(t) \right)\le \frac{1}{d(t)}\cdot \sum_{\la u,v \ra \in \bar{E}} \theta(u,v)\cdot \vpi_v(t), 
\end{align*}
which follows the lemma.

\vspace{-1mm}
\subsection{Proof of Theorem~\ref{thm:edge-efficiency-l1}}
First, recall that in Lemma~\ref{lem:EdgePushErr}, we bound the $\ell_1$-error of \edgepush by $\sum_{\la u, v \ra \in \bar{E}} \theta(u,v)$. Thus, the claim that the overall $\ell_1$-error is at most $\e$ follows from the calculations below:
\begin{align*}
	\sum_{\la u, v \ra \in \bar{E}} \theta(u,v) = \sum_{\la u, v \ra \in \bar{E}} \frac{\e \cdot \sqrt{\A_{uv}}}{\sum_{\la x,y \ra \in \bar{E}}\sqrt{\A_{xy}}} = \e\,. 
\end{align*}
Second, Let $Cost$ denote the expected overall running time of \edgepush with specified $\ell_1$-error $\e$. According to Lemma~\ref{lem:EdgePushCost}, we have $Cost=\sum_{\la u,v\ra \in \bar{E}}\hspace{-0.5mm}\frac{(1-\alpha)\A_{uv}}{\alpha\|\A\|_1  \cdot \theta(u,v)}$. By substituting the setting of $\theta(u,v)$'s to the quantity $Cost$, 
we have:
\begin{align*}
	Cost = \hspace{-3mm}\sum_{\la u,v\ra \in \bar{E}}\hspace{-0.5mm}\frac{(1-\alpha)\A_{uv}}{\alpha\|\A\|_1 \hspace{-0.5mm} \cdot \hspace{-0.5mm} \theta(u,v)}\hspace{-0.5mm} = \hspace{-0.5mm} \frac{(1-\alpha)}{\alpha \e \|\A\|_1} \cdot \left(\sum_{\la u,v \ra\in \bar{E}}\hspace{-3mm}\sqrt{\A_{uv}}\right) \cdot \left(\hspace{-0.5mm}\sum_{\la x, y \ra \in \bar{E}}\hspace{-3mm} \sqrt{\A_{xy}} \right)\,,
\end{align*}
which can be further rewritten as $\frac{(1-\alpha)}{\alpha \e \|\A\|_1}\cdot \left(\sum_{\la u,v \ra\in \bar{E}}\sqrt{\A_{uv}}\right)^2$. 
Moreover, we note that, with this setting of $\theta(u,v)$ as suggested by theorem~\ref{thm:edge-efficiency-l1}, $Cost$ is indeed minimized.
This is because, by Cauchy-Schwarz Inequality (shown in Fact~\ref{fact:cauchy}, it can be verified that
$\e \cdot Cost = \left(\sum_{\la u,v \ra \in \bar{E}} \theta(u,v) \right) \cdot \left(\sum_{\la u,v\ra \in \bar{E}}\frac{(1-\alpha)\A_{uv}}{\alpha\cdot \|\A\|_1 \cdot \theta(u,v)} \right) \hspace{-1mm}\geq \frac{(1-\alpha)}{\alpha \|\A\|_1}\cdot \left(\sum_{\la u,v \ra\in \bar{E}}\sqrt{\A_{uv}}\right)^2$.

\subsection{Proof of Theorem~\ref{thm:edge-efficiency-add}}
Before proving Theorem~\ref{thm:edge-efficiency-add}, we first present Lemma~\ref{lem:localerr}. 
\begin{lemma}\label{lem:localerr}
	If $\sum_{u\in N(v)}\theta(u,v) \le r_{\max}\cdot d(v)$ holds for all $v \in V$, the EdgePush algorithm (i.e., Algorithm~\ref{alg:edge-search}) achieves a normalized additive error $r_{\max}$ for every node. 
\end{lemma}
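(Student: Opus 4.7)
The plan is to derive Lemma~\ref{lem:localerr} as an immediate corollary of Lemma~\ref{lem:EdgePushErr-ad} together with the standard PPR symmetry on undirected graphs. Recall that Lemma~\ref{lem:EdgePushErr-ad} already bounds the normalized additive error at any target node $t$ by
\[
\frac{1}{d(t)}\cdot \sum_{\la u,v \ra\in \bar{E}}\theta(u,v)\cdot \vpi_v(t)\,.
\]
So it suffices to show that under the hypothesis $\sum_{u\in N(v)}\theta(u,v)\le r_{\max}\cdot d(v)$ for every $v\in V$, this quantity is at most $r_{\max}$ for every $t\in V$.

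First, I would regroup the sum over directed edges $\la u,v\ra\in \bar{E}$ by their endpoint $v$, giving
\[
\sum_{\la u,v \ra\in \bar{E}}\theta(u,v)\cdot \vpi_v(t)=\sum_{v\in V}\vpi_v(t)\cdot \Bigl(\sum_{u\in N(v)}\theta(u,v)\Bigr).
\]
Applying the hypothesis termwise bounds this by $r_{\max}\cdot \sum_{v\in V} d(v)\cdot \vpi_v(t)$.

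The key step is then the PPR reversibility identity on undirected weighted graphs: $d(v)\cdot \vpi_v(t)=d(t)\cdot \vpi_t(v)$ for every pair $v,t\in V$. (This is the standard consequence of the detailed balance $\A_{uv}=\A_{vu}$ together with the power-series expansion in Equation~\eqref{eqn:ppr_expansion}; I would either invoke it directly or provide a one-line justification.) Substituting yields
\[
\sum_{v\in V} d(v)\cdot \vpi_v(t)=d(t)\cdot \sum_{v\in V}\vpi_t(v)=d(t),
\]
since $\vpi_t$ is a probability distribution. Dividing by $d(t)$ gives the desired bound of $r_{\max}$.

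The only mild obstacle is justifying the reversibility identity $d(v)\cdot \vpi_v(t)=d(t)\cdot \vpi_t(v)$ if it has not yet been stated explicitly in the paper; if so, I would add a brief lemma-style remark deriving it from the symmetry of $\A$ and the expansion in Equation~\eqref{eqn:ppr_expansion}. Everything else is routine rearrangement, and the argument uses only the invariant and the termination condition that were already established for \edgepush.
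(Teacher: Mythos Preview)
Your proposal is correct and mirrors the paper's own proof essentially step for step: the paper also invokes Lemma~\ref{lem:EdgePushErr-ad}, groups the edge sum by the endpoint $v$, applies the hypothesis $\sum_{u\in N(v)}\theta(u,v)\le r_{\max}\cdot d(v)$, uses the undirected-graph PPR symmetry $d(v)\cdot \vpi_v(t)=d(t)\cdot \vpi_t(v)$, and finishes with $\sum_{v}\vpi_t(v)=1$. Your remark about possibly needing to justify the reversibility identity is apt---the paper simply cites it as a known property without proof.
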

As shown in Lemma~\ref{lem:localerr}, to achieve a normalized additive error within $r_{\max}$ for $\forall v \in V$ for SSPPR queries, we can further decompose the original error constraint into a finer-grained local level. 

\begin{proof}[Proof of Lemma~\ref{lem:localerr}]
Recall that in Lemma~\ref{lem:EdgePushErr-ad}, we bound the normalized additive error of \edgepush by $\frac{1}{d(t)}\cdot \hspace{-1mm}\sum\limits_{v\in V}\sum\limits_{u\in N(v)}\hspace{-1mm}\theta(u,v)\cdot \vpi_v(t)$ for any node $t\in V$. By ensuring $\sum_{u\in N(v)}\theta(u,v) \le r_{\max}\cdot d(v)$, the normalized additive error for $t$ can be further bounded by
\begin{align*}
\frac{1}{d(t)}\cdot \hspace{-1mm}\sum_{v\in V}\hspace{-0.5mm}r_{\max}\cdot d(v)\cdot \vpi_v(t) = \frac{1}{d(t)}\cdot r_{\max}\cdot \hspace{-1mm} \sum_{v\in V} d(t)\cdot \hspace{-1mm} \vpi_t(v)=r_{\max}. 
\end{align*}
In the first equality, we apply the property of SSPPR on unweighted graphs that $d(v)\cdot \vpi_v(t)=d(t)\cdot \vpi_t(v)$, 
while the last equality is due to the fact of SSPPR that $\sum_{v\in V}\vpi_t(v)=1$. 
The lemma follows. 
\end{proof}


Note that Lemma~\ref{lem:localerr} actually presents a local constraint on the approximation error. Likewise, we can decompose the overall time cost into a local level. More precisely, let $Cost(v)$ denote the expected running time of all the \epush operations on $\forall \la u,v \ra \in \bar{E}$. As shown in the proof of Lemma~\ref{lem:EdgePushCost}, the expected number of \epush operations on edge $\la u,v \ra$ is at most $\frac{(1-\alpha)\cdot \A_{uv}}{\alpha \|\A\|_1 \cdot \theta(u,v)}$. Thus, $Cost(v)$ can be bounded as $Cost(v) \le \sum_{u \in N(v)} \frac{(1-\alpha)\cdot \A_{uv}}{\alpha \|\A\|_1 \cdot \theta(u,v)}$. 

Based on Lemma~\ref{lem:localerr} and the bound of $Cost(v)$, we can present the proof of Theorem~\ref{thm:edge-efficiency-add}. 

\begin{proof}[Proof of Theorem~\ref{thm:edge-efficiency-add}]
First, it can be verified that by setting $\theta(u,v)=\frac{r_{\max} \cdot d(v)\sqrt{\A_{uv}}}{\sum_{x\in N(v)}\sqrt{\A_{xv}}}$ for each $\la u,v \ra\in \bar{E}$ suggested by Theorem~\ref{thm:edge-efficiency-add}, the local error constraint required by Lemma~\ref{lem:localerr} (i.e. $\sum_{u\in N(v)}\theta(u,v) \le r_{\max}\cdot d(v)$) is satisfied. 
Second, by substituting the setting of $\theta(u,v)$ in the bound of $Cost(v)$ (i.e. $Cost(v) \le \sum_{u \in N(v)} \frac{(1-\alpha)\cdot \A_{uv}}{\alpha \|\A\|_1 \cdot \theta(u,v)}$), we have $Cost(v) = \frac{(1 - \alpha)}{\alpha \cdot r_{\max} \|A\|_1} \cdot \frac{\left( \sum_{x \in N(v)} \sqrt{A_{xv}} \right)^2}{d(v)}$, leading to the expected overal running time of \edgepush as 
$O(\sum_{v \in V} Cost(v)) = O\left( \frac{(1 - \alpha)}{\alpha \cdot r_{\max} \|A\|_1} \cdot \sum_{v \in V} \frac{\left( \sum_{x \in N(v)} \sqrt{A_{xv}} \right)^2}{d(v)} \right)$. And the theorem follows. 

Likewise, it can be verified with this setting of $\theta(u,v)$ suggested by Theorem~\ref{thm:edge-efficiency-add}, $\e_v \cdot Cost(v)$
is indeed minimized, where $\e_v=\sum_{u\in N(v)}\theta(u,v)$. Specifically, we have
\begin{align}\nonumber
\e_v \hspace{-0.5mm}\cdot \hspace{-0.5mm} Cost(v)\hspace{-1mm}=\hspace{-1mm}\left(\sum_{u\in N(v)}\hspace{-3mm}\theta(u,v)\hspace{-0.5mm} \right)  \cdot \left(\hspace{-0.5mm}\sum_{u \in N(v)} \hspace{-3mm}\frac{(1-\alpha)\hspace{-0.5mm}\cdot \hspace{-0.5mm}\A_{uv}}{\alpha \|\A\|_1 \hspace{-0.5mm} \cdot \hspace{-0.5mm}\theta(u,v)}\hspace{-0.5mm}\right) \hspace{-1mm}\ge \hspace{-1mm} \frac{(1\hspace{-0.5mm}-\hspace{-0.5mm}\alpha)}{\alpha \|\A\|_1}\hspace{-0.5mm}\cdot\hspace{-0.5mm} \left(\hspace{-0.5mm}\sum_{u\in N(v)}\hspace{-3.5mm}\sqrt{\A_{uv}}\right)^2. 
\end{align}

\end{proof}

\vspace{-2mm}
\subsection{Proof of Lemma~\ref{lem:cos-l1} 
}

To prove Lemma~\ref{lem:cos-l1}, we first introduce a new form of Cauchy-Schwarz Inequality, which offers a geometric explanation for the Cauchy-Schwarz Inequality.

\begin{fact}[Geometry Explanation of Cauchy-Schwarz Inequality~\cite{steele2004cauchy}]\label{fact:cauchy2}
	Given two vectors $\z=\{\z(1),\z(2),...,\z(m)\}\in \mathbb{R}^{m}$, $\x=\{\x(1),\x(2),...,\x(m)\}\in \mathbb{R}^{m}$, the Cauchy-Schwarz Inequality states that:
	\begin{align}\label{eqn:cauchy-vec-2}
		\la \z,\x\ra^2 = \|\z\|^2 \cdot \|\x\|^2 \cdot \cos^2 \p \le \|\z\|^2\cdot \|\x\|^2, 
	\end{align}
	where $\la \z,\x\ra$ denotes the inner product of vectors $\z$ and $\x$. The angle $\p$ is defined as the angle between $\z$ and $\x$. The equality holds if the vector $\z$ is in the same or opposite direction as the vector $\x$, or if one of them is the zero vector. 
\end{fact}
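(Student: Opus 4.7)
The plan is to establish the identity and the inequality in Fact~\ref{fact:cauchy2} in two steps. First, I would recall the standard definition of the angle $\p$ between two non-zero vectors $\z, \x \in \mathbb{R}^m$, namely $\cos \p \triangleq \la \z, \x\ra / (\|\z\| \cdot \|\x\|)$. Rearranging this definition and squaring both sides yields the equality $\la \z, \x\ra^2 = \|\z\|^2 \cdot \|\x\|^2 \cdot \cos^2 \p$ directly. Thus, the only non-trivial content of the fact is (i) the inequality $\la \z, \x\ra^2 \le \|\z\|^2 \cdot \|\x\|^2$ (equivalently, $\cos^2 \p \le 1$, which guarantees that the angle $\p$ is well-defined), together with (ii) its equality condition.

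For the inequality, I would use the classical ``non-negative quadratic'' argument. Consider the real-valued function $f(t) = \|\z - t\x\|^2$ for $t \in \mathbb{R}$. Since $f$ is a squared norm, $f(t) \ge 0$ for every $t$. Expanding using the bilinearity of the inner product gives $f(t) = \|\x\|^2 \, t^2 - 2 \la \z, \x\ra \, t + \|\z\|^2$. When $\x \ne \bm{0}$, this is a quadratic in $t$ with positive leading coefficient, and $f(t) \ge 0$ for all $t$ forces the discriminant to be non-positive: $4\la \z, \x\ra^2 - 4\|\z\|^2 \|\x\|^2 \le 0$, which is exactly the desired inequality. The degenerate case $\x = \bm{0}$ (or symmetrically $\z = \bm{0}$) makes both sides of the inequality equal to $0$, so the statement holds trivially.

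Finally, for the equality condition, I would observe that the discriminant vanishes if and only if there exists some $t^\star \in \mathbb{R}$ with $f(t^\star) = 0$, i.e., $\z = t^\star \x$. This means $\z$ and $\x$ are scalar multiples of each other (parallel if $t^\star > 0$, anti-parallel if $t^\star < 0$), which matches the equality condition stated in Fact~\ref{fact:cauchy2}. Together with the trivial degenerate case where one of the vectors is zero, this characterizes exactly when equality is attained.

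I do not expect any substantive obstacle, as this is a classical and well-known result; the only minor care needed is to treat the degenerate case $\x = \bm{0}$ separately (so that dividing by $\|\x\|^2$ in the definition of $\cos \p$ is avoided) and to explicitly relate the vanishing of the discriminant to the existence of a root $t^\star$ that witnesses linear dependence.
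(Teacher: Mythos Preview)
Your argument is correct and is the standard discriminant proof of Cauchy--Schwarz; there is no gap. However, note that the paper does not actually give its own proof of Fact~\ref{fact:cauchy2}: it is stated as a cited fact from~\cite{steele2004cauchy} and then used as a black box in the proof of Lemma~\ref{lem:cos-l1}. So there is nothing to compare against beyond observing that your derivation supplies a complete justification where the paper simply appeals to the reference.
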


Based on Fact~\ref{fact:cauchy2}, we can present the proof of Lemma~\ref{lem:super-l1}. Note that Equation~\eqref{eqn:impro-l1} can be also expressed as 
\begin{align}\label{eqn:cauchy-3}
\vspace{-2mm}
  \left(\sum_{\la u,v \ra \in \bar{E}} \sqrt{\A_{uv}}\right)^2= 2m \cdot \|\A\|_1 \cdot \cos^2 \p.  
\vspace{-2mm}
\end{align}
To prove the correctness of Equation~\eqref{eqn:cauchy-3}, we can rewrite the right hand side of Equation~\eqref{eqn:cauchy-3} as: 
\begin{align*}
\vspace{-2mm}
    &2m \cdot \|\A\|_1 \cdot \cos^2 \p  =\left( \sum_{\la u,v \ra \in \bar{E}}1 \right)\cdot \left( \sum_{\la u,v \ra \in \bar{E}}\A_{uv} \right) \cdot \cos^2 \p \\
    &= \|\x\|^2 \cdot \|\z\|^2 \cdot \cos^2 \p = \la \x, \z \ra^2 =  \left(\sum_{\la u,v \ra \in \bar{E}} \sqrt{\A_{uv}}\right)^2, 
\vspace{-2mm}
\end{align*}
following Equation~\eqref{eqn:cauchy-3}. Note that in the first equality, we apply the fact that $2m=\sum_{\la u,v \ra \in \bar{E}}1$ and $\|\A\|_1=\sum_{\la u,v \ra \in \bar{E}}\A_{uv}$. In the second equality, we use the property of vectors $\z$ and $\x$ that $\|\x\|^2=\sum_{\la u,v \ra \in \bar{E}}1$ and $\|\z\|^2=\sum_{\la u,v \ra \in \bar{E}}\A_{uv}$. Note that we present the definitions of vectors $\x$ and $\z$ in Definition~\ref{def:vec}. And in the third equality, we apply Fact~\ref{fact:cauchy2} (i.e. Equation~\eqref{eqn:cauchy-vec-2}). Finally, in the last equality, we use the definition of the inner product between the vectors $\x$ and $\z$. Thus, the lemma follows. 

\subsection{Proof of Lemma~\ref{lem:cos-add} 
}
Analogous to the proof of Lemma~\ref{lem:cos-l1}, we can rewrite the equation in Lemma~\ref{lem:cos-add} as: \begin{align}\label{eqn:super-proof-rewrite}
\vspace{-2mm}
\sum_{v\in V}\frac{\left(\sum_{x\in N(v)}\sqrt{\A_{xv}}\right)^2 }{d(v)}= \sum_{v\in V} n(v) \cdot \cos^2 \p_v, 
\vspace{-2mm}
\end{align}
We note that if we can show 
\begin{align}\label{eqn:super-proof-rewrite2}
\vspace{-2mm}
\left(\sum_{x\in N(v)}\sqrt{\A_{xv}}\right)^2 = d(v)\cdot n(v) \cdot \cos^2 \p_v    
\vspace{-2mm}
\end{align}
holds for each node $v\in V$, then Equation~\eqref{eqn:super-proof-rewrite} follows. By rewriting the right hand side of Equation~\eqref{eqn:super-proof-rewrite2}, we can derive: 
\begin{align*}
\vspace{-2mm}
d(v)\cdot n(v)\cdot \cos^2 \p_v=\left(\sum_{x\in N(v)}\hspace{-2.5mm}\A_{xv}\right)\cdot \left(\sum_{x\in N(v)}\hspace{-2mm}1\right) \cdot \cos^2 \p_v=\la \z_v,\x_v\ra^2. 
\vspace{-2mm}
\end{align*}
In the first equality, we apply the fact that $d(v)=\left(\sum_{x\in N(v)}\A_{xv}\right)$ and $n(v)=\left(\sum_{x\in N(v)}1\right)$. In the last equality, we use the property of vectors $\z_v$ and $\x_v$ that $\left(\sum_{x\in N(v)}\A_{xv}\right)=\|\z_v\|^2$ and $\sum_{x\in N(v)}1=\|\x_v\|^2$, where vectors $\z_v$ and $\x_v$ are defined in Definition~\ref{def:vec}. Additionally, in the last equality, we also apply the equation shown in Fact~\ref{fact:cauchy2}. Note that $\la \z_v,\x_v\ra^2=\left(\sum_{x\in N(v)}\sqrt{\A_{xv}}\right)^2$ by definition. Thus, Equation~\eqref{eqn:super-proof-rewrite2} follows, as well as Lemma~\ref{lem:cos-add}. 

\subsection{Proof of Lemma~\ref{lmm:ab}}
Recall that a node $v$ is $(a,b)$-unbalanced if $a\cdot n(v)$ of its edges taking $b\cdot d(v)$ edge weights. Motivated by this definition, we can divide the neighbors of node $\forall v\in V$ into two sets: $N^L(v)$ and $N^S(v)$. Specifically, the set $N^L(v)$ contains $a\cdot n(v)$ neighbors of $v$ which take $b\cdot d(v)$ fraction of $v$'s edge weights. Analogously, the set $N^S(v)$ contains the other neighbors that $N^S(v)=N(v)\setminus N^L(v)$. 
Thus, to bound $\sum_{v\in N(u)}\sqrt{\A_{uv}}\,$, we have 
\begin{align}\label{eqn:ab-sum}
\sum_{u\in N_v}\sqrt{\A_{uv}} =\sum_{u\in N^L(v)}\sqrt{\A_{uv}}+\sum_{u\in N^S(v)}\sqrt{\A_{uv}}.    
\end{align}
Moreover, according to the AM-GM Inequality, we can derive: 
\begin{align*}
\sum_{u\in N^L(v)}\hspace{-2mm}\sqrt{\A_{uv}}\le \left(a \cdot n(v) \right)\cdot \sqrt{\frac{\sum_{u\in N^L(v)}\A_{uv}}{a\cdot n(v)}} = \sqrt{a \cdot n(v) \cdot b \cdot d(v)}. 
\end{align*}
Likewise, $\sum_{u\in N_v^S}\sqrt{\A_{uv}}\le \sqrt{(1-a) n(v) \cdot (1-b) d(v)}$. Plugging the two inequalities into Equation~\eqref{eqn:ab-sum}, we have: 
\begin{align}\nonumber
\vspace{-2mm}
\sum_{u\in N_v}\hspace{-1mm}\sqrt{\A_{uv}} 
\le \left(\sqrt{ab}+\sqrt{(1-a)(1-b)}\right)\cdot \sqrt{n(v)  d(v)}\,, 
\vspace{-2mm}
\end{align}
which follows the lemma. 

\subsection{Proof of Lemma~\ref{lem:super-l1}}
We observe that the inequality shown in Lemma~\ref{lmm:ab} holds for each node $v\in V$. Thus, we can derive: 
\begin{align}\label{eqn:ab-ineq1}
\vspace{-4mm}
\left(\sum_{v\in V}\sum_{u \in N(v)}\hspace{-2mm}\sqrt{\A_{uv}}\right)^2
\hspace{-2mm}\le \left(\sqrt{ab} +\sqrt{(1-a)(1-b)}\right)^2 \cdot \left(\sum_{v\in V}\sqrt{n(v) d(v)}\right)^2.     
\vspace{-4mm}
\end{align}
Moreover, we apply Cauchy-Schwarz Inequality shown in Fact~\ref{fact:cauchy} to further bound $\left(\sum_{v\in V}\sqrt{n(v) d(v)}\right)^2$. Specifically, we have
\begin{align*}
\vspace{-2mm}
    \left(\sum_{v\in V}\sqrt{n(v) d(v)}\right)^2 \le \left(\sum_{v\in V}n(v)\right)\cdot \left(\sum_{v\in V}d(v)\right)=(2m)\cdot \|\A\|_1. 
\vspace{-2mm}
\end{align*}
Plugging into inequality~\eqref{eqn:ab-ineq1}, we can derive
\begin{align*}
\vspace{-2mm}
\left( \sum_{\la u,v \ra \in \bar{E}\in V}\hspace{-2mm}\sqrt{\A_{uv}}\right)^2
\hspace{-2mm}\le \left(\sqrt{ab} +\sqrt{(1-a)(1-b)}\right)^2 \cdot 2m \cdot \|\A\|_1,      
\vspace{-2mm}
\end{align*}
or equivalently, 
\begin{align*}
\frac{(1-\alpha)}{\alpha \e \|\A\|_1}\cdot \left(\sum_{\la u,v \ra \in \bar{E}}\sqrt{\A_{uv}}\right)^2 
\hspace{-2mm}\leq \left(\sqrt{ab} +\sqrt{(1-a)(1-b)}\right)^2 \cdot \frac{2m}{\alpha \e}\,,
\end{align*}
which follows the lemma. 

\subsection{Proof of Lemma~\ref{lem:super-add}}
Analogously, the inequality shown in Lemma~\eqref{lmm:ab} can be rewritten as
\begin{align*}
\left(\sum_{u \in N(v)}\hspace{-2mm}\sqrt{\A_{uv}}\right)^2
\hspace{-2mm}\le \left(\sqrt{ab} +\sqrt{(1-a)(1-b)}\right)^2 \cdot \left(\sqrt{n(v) d(v)}\right)^2. 
\end{align*}
It follows:  
\begin{align*}
\vspace{-2mm}
\frac{1}{d(v)}\left(\sum_{u \in N(v)}\hspace{-2mm}\sqrt{\A_{uv}}\right)^2
\hspace{-2mm}\le \left(\sqrt{ab} +\sqrt{(1-a)(1-b)}\right)^2 \cdot n(v). 
\vspace{-2mm}
\end{align*}
Note that the above inequality holds for each $v\in V$. Thus, we have
\begin{align*}
\vspace{-2mm}
\sum_{v\in V}\frac{1}{d(v)}\left(\sum_{u \in N(v)}\hspace{-2mm}\sqrt{\A_{uv}}\right)^2
\hspace{-2mm}\le \left(\sqrt{ab} +\sqrt{(1-a)(1-b)}\right)^2 \cdot \sum_{v\in V}n(v), 
\vspace{-2mm}
\end{align*}
where $\sum_{v\in V}n(v)=2m$ by definition, which follows the lemma. 

\end{document}